\documentclass[11pt]{article}
\usepackage{amsmath,amsfonts,amsthm,amssymb,color}
\usepackage{fancybox}
\usepackage{graphics}
\usepackage{graphicx}
\usepackage{tikz}
\usepackage{hyperref,url}

\usepackage{algorithm,algorithmic}
\usepackage{mathtools}

\usepackage[margin=1in]{geometry}

\newtheorem{theorem}{Theorem}[section]

\newtheorem{lemma}[theorem]{Lemma}

\newtheorem{claim}[theorem]{Claim}
\newtheorem{fact}[theorem]{Fact}

\theoremstyle{definition}
\newtheorem{definition}[theorem]{Definition}

\newenvironment{fminipage}%
  {\begin{Sbox}\begin{minipage}}%
  {\end{minipage}\end{Sbox}\fbox{\TheSbox}}

\def\pleq{\preccurlyeq}
\def\pgeq{\succcurlyeq}

\def\Span#1{\textbf{Span}\left(#1  \right)}

\def\defeq{\stackrel{\mathrm{def}}{=}}
\def\setof#1{\left\{#1  \right\}}
\def\sizeof#1{\left|#1  \right|}

\def\union{\cup}

\def\abs#1{\left|#1  \right|}

\def\norm#1{\left\| #1 \right\|}

\def\calA{\mathcal{A}}
\def\calC{\mathcal{C}}

\def\calG{\mathcal{G}}

\def\calM{\mathcal{M}}
\def\calS{\mathcal{S}}
\def\calT{\mathcal{T}}
\def\calV{\mathcal{V}}

\newcommand\bz{\mathbb Z}

\newcommand\rea{\mathbb R}

\newcommand\PPi{\boldsymbol{\Pi}}

\def\aa{\pmb{\mathit{a}}}
\newcommand\bb{\boldsymbol{\mathit{b}}}
\newcommand\cc{\boldsymbol{\mathit{c}}}
\newcommand\dd{\boldsymbol{\mathit{d}}}
\newcommand\ee{\boldsymbol{\mathit{e}}}
\newcommand\ff{\boldsymbol{\mathit{f}}}
\renewcommand\gg{\boldsymbol{\mathit{g}}}

\newcommand\pp{\boldsymbol{\mathit{p}}}
\newcommand\qq{\boldsymbol{\mathit{q}}}

\newcommand\rr{\boldsymbol{\mathit{r}}}
\renewcommand\ss{\boldsymbol{\mathit{s}}}
\def\tt{\boldsymbol{\mathit{t}}}
\newcommand\uu{\boldsymbol{\mathit{u}}}
\newcommand\vv{\boldsymbol{\mathit{v}}}
\newcommand\ww{\boldsymbol{\mathit{w}}}
\newcommand\yy{\boldsymbol{\mathit{y}}}
\newcommand\zz{\boldsymbol{\mathit{z}}}
\newcommand\xx{\boldsymbol{\mathit{x}}}

\renewcommand\AA{\boldsymbol{\mathit{A}}}
\newcommand\BB{\boldsymbol{\mathit{B}}}
\newcommand\CC{\boldsymbol{\mathit{C}}}
\newcommand\DD{\boldsymbol{\mathit{D}}}

\newcommand\GG{\boldsymbol{\mathit{G}}}
\newcommand\HH{\boldsymbol{\mathit{H}}}
\newcommand\II{\boldsymbol{\mathit{I}}}

\newcommand\NN{\boldsymbol{\mathit{N}}}
\newcommand\MM{\boldsymbol{\mathit{M}}}
\newcommand\LL{\boldsymbol{\mathit{L}}}
\newcommand\PP{\boldsymbol{\mathit{P}}}
\newcommand\QQ{\boldsymbol{\mathit{Q}}}

\renewcommand\SS{\boldsymbol{\mathit{S}}}
\newcommand\TT{\boldsymbol{\mathit{T}}}

\newcommand\WW{\boldsymbol{\mathit{W}}}

\newcommand\AAhat{\boldsymbol{\widehat{\mathit{A}}}}

\newcommand\BBhat{\boldsymbol{\widehat{\mathit{B}}}}

\newcommand\BBtil{\boldsymbol{\widetilde{\mathit{B}}}}

\newcommand\MMhat{\boldsymbol{\widehat{\mathit{M}}}}

\newcommand\WWhat{\boldsymbol{\widehat{\mathit{W}}}}

\newcommand\WWtil{\boldsymbol{\widetilde{\mathit{W}}}}

\newcommand\cctil{\boldsymbol{\tilde{\mathit{c}}}}

\newcommand\Otil{\widetilde{O}}

\newcommand{\one}{\mathbf{1}}
\newcommand{\zero}{\mathbf{0}}
\newcommand{\diag}{\textsc{diag}}

\DeclareMathOperator*{\argmin}{arg\,min}

\DeclareMathOperator*{\im}{im}

\newcommand{\eps}{\epsilon}

\newcommand{\LSD}{\textsc{lsd}}
\newcommand{\LSA}{\textsc{lsa}}

\newcommand{\YES}{\textsc{yes}}
\newcommand{\NO}{\textsc{no}}

\newcommand{\asoln}{approximate solution}

\newcommand{\proj}{\boldsymbol{\mathit{\Pi}}}


\newcommand{\trp}{\top}
\newcommand{\pinv}{\dagger}

\newcommand{\matclass}{\calM}
\newcommand{\algo}{\calA}

\newcommand{\CN}{\kappa} 

\newcommand{\orthog}{\bot}

\DeclareMathOperator*{\nnz}{nnz}

\DeclareMathOperator*{\nulls}{null}

\DeclareMathOperator*{\polylog}{polylog}

\DeclareMathOperator*{\poly}{poly}

\DeclareMathOperator*{\anzmin}{min_{+}}

\newcommand{\assign}{\leftarrow}

\newcommand{\genCl}{\calG}
\newcommand{\genZCl}{\calG_{\text{z}}}
\newcommand{\genZtwoCl}{\calG_{\text{z},2}}
\newcommand{\mctwoCl}{\calM\calC_{2}}
\newcommand{\mctwostrictCl}{{\calM\calC_{2}^{>0}}}
\newcommand{\mctwostrictintCl}{\calM\calC_{{2,\bz}}^{>0}}
\newcommand{\trusstwoCl}{\calT_{2}}
\newcommand{\tvtwoCl}{\calV_{2}}

\usepackage{placeins}

\title{
Hardness Results for Structured Linear Systems
}


\author{
  Rasmus Kyng\thanks{Supported by ONR Award N00014-16-1-2374.} \\
Yale University\\
rasmus.kyng@yale.edu
  \and
  Peng Zhang\thanks{Partially supported by the NSF under Grant No. 1637566.}\\
Georgia Institute of Technology\\
pzhang60@gatech.edu
}

\begin{document}

\begin{titlepage}
\maketitle

\thispagestyle{empty}

\begin{abstract}
We show that if the nearly-linear time solvers for Laplacian matrices and their generalizations can be extended to solve just slightly larger families of linear systems, then they can be used to quickly solve all systems of linear equations over the reals.  This result can be viewed either positively or negatively: either we will develop nearly-linear time algorithms for solving all systems of linear equations over the reals, or progress on the families we can solve in nearly-linear time will soon halt.
\end{abstract}
  
\end{titlepage}

\tableofcontents



\newcommand{\AAZtwo}{\AA^{\text{Z},2}}
\newcommand{\AAZ}{\AA^{\text{Z}}}
\newcommand{\BBtwostrict}{\BB^{>0}}
\newcommand{\BBtwostrictint}{\BB^{>0, \mathbb{Z}}}
\newcommand{\ccz}{\cc^{\text{Z}}}
\newcommand{\xxz}{\xx^{\text{Z}}}
\newcommand{\ccztwo}{\cc^{\text{Z},2}}
\newcommand{\xxztwo}{\xx^{\text{Z},2}}
\newcommand{\ddtwostrict}{\dd^{>0}}
\newcommand{\ddtwostrictint}{\dd^{>0, \mathbb{Z}}}

\section{Introduction}

We establish a dichotomy result for the families of linear equations that can be solved in nearly-linear time.
If nearly-linear time solvers exist for a slight generalization of the families for which they are currently known, 
then nearly-linear time solvers exist for all linear systems over the
reals.

This type of reduction is related to the successful research program of
fine-grained complexity, such as the result~\cite{WilW10} which
showed that the existence of a ``truly subcubic'' time algorithm for
All-Pairs Shortest Paths Problem
is equivalent to the existence of ``truly subcubic'' time
algorithm for a wide range of other problems.
For any
constant $a \geq 1$, our result establishes for 2-commodity matrices, and several other
classes of graph structured linear systems, that we can solve a
linear system in a matrix of this type with $s$ nonzeros in time $\Otil(s^a)$ if
and only if we can solve linear systems in all matrices with
polynomially bounded integer entries in time $\Otil(s^a)$.

In the RealRAM model, given a matrix $\AA \in \rea^{n \times n}$ and a
vector $\cc \in \rea^{n}$, we can solve the linear system $\AA \xx = \cc$
in $O(n^{\omega})$ time, where $\omega$ is the matrix multiplication
constant, for which the best currently known bound is $\omega <
2.3727$ \cite{Str69, Wil12}.
Such a running time bound is cost prohibitive for the large sparse
matrices often encountered in practice.
Iterative methods~\cite{Saad03:book}, first order methods~\cite{BoydV04:book},
and matrix sketches~\cite{Woodruff14} can all be viewed as ways of obtaining
significantly better performance in cases where the matrices have additional
structure.

\sloppy In contrast, when $\AA$ is an $n \times n$ Laplacian matrix with $m$ non-zeros,
and polynomially bounded entries, the linear system $\AA \xx = \cc$ can be
solved approximately to $\eps$-accuracy\ in $O((m+n) \log^{1/2+o(1)} n\log(1/\eps))$
time~\cite{SpielmanTengSolver:journal, CKMPPRX14}.
This result spurred a series of major developments in fast graph
algorithms, sometimes referred to as ``the Laplacian Paradigm''
of designing graph algorithms~\cite{Teng10:survey}.
The asymptotically fastest known algorithms for
Maximum Flow in directed unweighted graphs \cite{Mad13,Mad16},
Negative Weight Shortest Paths and Maximum Weight Matchings~\cite{CMSA17},
Minimum Cost Flows and Lossy Generalized Flows~\cite{LeeS14,DaitchS08}
all rely on fast Laplacian linear system solvers.

The core idea of the Laplacian paradigm can be viewed as showing that the
linear systems that arise from interior point algorithms, or second-order
optimization methods, have graph structure, and can be preconditioned and
solved using graph theoretic techniques.
These techniques could potentially be extended to a range of other problems,
provided fast solvers can be found for the corresponding linear systems.
Here a natural generalization is in terms of the number of labels per vertex:
graph Laplacians correspond to graph labeling problems where each vertex
has one label, and these labels interact pairwise via edges.
Multi-label variants of these exist in Markov random fields~\cite{SzeliskiZSVKATR08},
image processing~\cite{WangYYZ08}, Euclidean embedding of
graphs~\cite{CohenLMPS16},
 data processing for cryo-electron microscopy
  \cite{singer2011three,shkolnisky2012viewing,zhao2014rotationally}, phase retrieval \cite{alexeev2014phase,marchesini2014alternating},
  and many image processing problems
  (e.g. \cite{stable2015,arie2012global}).
Furthermore, linear systems with multiple labels per vertex arise when solving multi-commodity flow
problems using primal-dual methods.
Linear systems related to multi-variate labelings of graphs have been formulated
as the quadratically-coupled flow problem~\cite{KelnerMP12} and
Graph-Structured Block Matrices~\cite{Spi16}.
They also occur naturally in linear elasticity problems for simulating
the effect of forces on truss systems~\cite{DS07}.

Due to these connections, a central question in the Laplacian paradigm of designing
graph algorithms is whether all Graph-Structured Block Matrices can be solved in
(approximately) nearly-linear time. Even obtaining  subquadratic
running time would be constitute significant progress.
There has been some optimism in this direction due to the existence of faster
solvers for special cases:
nearly-linear time solvers for Connection Laplacians~\cite{KPSS16},
1-Laplacians of collapsible 3-D simplicial complexes~\cite{CohenFMNPW14},
and algorithms with runtime about $n^{5/4}$ for 2D planar truss
stiffness matrices~\cite{DS07}.
Furthermore, there exists a variety of faster algorithms for approximating
multi-commodity flows to $(1 + \epsilon)$
accuracy in time that scales as
$\poly(\eps^{-1})$~\cite{GargK96,Madry10a,Fleischer00,LeightonMPSTT91},
even obtaining nearly-linear running times when the graph is
undirected~\cite{Sherman13,KelnerLOS14,Peng16}.

The subquadratic variants of these routines also interact naturally
with tools that in turn utilize Laplacian solvers~\cite{KelnerMP12,LeightonMPSTT91}.
These existing tight algorithmic connections, as well as the solver for Connection
Laplacians, and the fact that combinatorial preconditioners partly originated
from speeding up interior point methods through preconditiong Hessians~\cite{Vaidya89},
together provide ample reason to hope that one could develop nearly-linear time solvers for linear systems
related to multicommodity flows.
Any algorithm that solves such systems to high accuracy in
$m^{1 + \alpha}$ time would in turn imply multicommodity flow algorithms that
run in about $n^{1/2} m^{1 + \alpha}$ time~\cite{LeeS14}, while the current
best running times are about $n^{2.5}$~\cite{LeeS15}.

Unfortunately, we show that if linear systems in general 2D truss stiffness matrices or
2-commodity Laplacians can be solved approximately in nearly-linear time, then all
linear systems in matrices with polynomially bounded integer entries
can be solved in nearly-linear time.
In fact, we show in a strong sense that any progress made in developing
solvers for these classes of matrices will translate directly into
similarly fast solvers for all matrices with polynomially bounded
integer entries. 
Thus developing faster algorithms for these systems will be
as difficult as solving \emph{all} linear systems faster.

Since linear system solvers used inside Interior Point Methods play a
central role in the Laplacian paradigm for designing high-accuracy algorithms, this
may suggest that in the high-accuracy regime the paradigm will not extend to most problems that
require multiple labels/variables per edge or vertex.
Alternatively, an algorithmic optimist might view our result as a
road-map for solving all linear systems via reductions to fast linear
system solvers for Graph-Structured Block Matrices.






\subsection{Our Results }

Fast linear system solvers for Laplacians, Connection Laplacians,
Directed Laplacians, and 2D Planar Truss Stiffness matrices are all based on
iterative methods and only produce approximate solutions.
The running time for these solvers scales logarithmically with
the error parameter $\eps$, i.e. as $\log(1/\eps)$.
Similarly, the running time for iterative methods usually depends on the condition number of
the matrix, but for state-of-the-art solvers for Laplacians, Connection Laplacians,
and Directed Laplacians, the dependence is logarithmic.
Consequently, a central open question is whether fast approximate
solvers exist for other structured linear systems, with running times
that depend logarithmically on the condition number and the accuracy parameter.

{\bf Integral linear systems are reducible to Graph-Structured
  Block Matrices.} 
Our reductions show that if fast approximate linear system solvers
exist for multi-commodity Laplacians, 2D Truss Stiffness, or Total
Variation (TV) Matrices, then 
fast approximate linear system solvers exist for any matrix, in the very
general sense of minimizing $\min_{\xx} \norm{\AA \xx
  - \cc}^2_2$. Thus our result also applies to singular matrices,
where we solve the pseudo-inverse problem to high
accuracy.
Theorem~\ref{thm:informalHardness} gives an informal statement of our
main result. The result is stated formally in
Section~\ref{sec:results} as Theorem~\ref{thm:runningTimeHardness}.

\begin{theorem}[Hardness for Graph-Structured Linear Systems
  (Informal)]
\label{thm:informalHardness}
  We consider three types of Graph-Structured Block Matrices: Multi-commodity Laplacians,
Truss Stiffness Matrices, and Total Variation Matrices.
Suppose that for one or more of these classes, the linear system $\AA \xx = \cc$ in a
  matrix $\AA$ with $s$
  non-zeros can be solved in time $\Otil(s^a)$, for some constant $a
  \geq 1$,
  with the running time
  having logarithmic dependence on condition number and
  accuracy\footnote{This is the kind of running time guarantee
    established for Laplacians, Directed Laplacians, Connection
    Laplacians, and bounded-weight planar 2D Truss Stiffness matrices.}.
Then linear systems in \emph{all matrices} with polynomially bounded integer entries and
condition number can be solved to high accuracy in time
$\Otil(s^a)$, where again $s$ is the number of non-zero entries of the
matrix.
\end{theorem}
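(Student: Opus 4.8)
The plan is to reduce solving an arbitrary linear system $\AA \xx = \cc$ (with polynomially bounded integer entries and condition number) to solving a linear system in one of the three graph-structured classes, via a chain of intermediate reductions that preserve sparsity up to polylogarithmic factors and preserve condition number up to polynomial factors. The key conceptual point is that the hard instances already live inside these classes: a general integer matrix can be encoded by a graph-structured block matrix whose blocks implement the individual matrix entries, with the ``labels per vertex'' of the block structure used to carry the coordinates of $\xx$. Concretely, I would first reduce solving $\AA\xx = \cc$ to the least-squares / pseudoinverse problem $\min_\xx \norm{\AA\xx - \cc}_2^2$, which is equivalent to solving the PSD system $\AA^\top \AA \xx = \AA^\top \cc$; this is the form that makes it legitimate to target symmetric graph-structured matrices, and it is why the theorem is stated for singular $\AA$ as well.

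Next I would introduce a canonical ``universal'' intermediate class --- one can think of it as arbitrary symmetric block matrices with $2\times 2$ blocks, or equivalently linear systems built from a small fixed set of gadget matrices --- and show two things: (i) every polynomially-bounded integer system reduces to an instance of this class with a near-linear blowup in nonzeros and polynomial blowup in condition number, by writing each entry $a_{ij}$ in binary and simulating multiplication-by-$a_{ij}$ with $O(\log)$ gadget blocks; and (ii) this universal class in turn embeds into each of Multi-commodity Laplacians, Truss Stiffness Matrices, and Total Variation matrices. Step (ii) is where the specific structure of each target class is exploited: for multi-commodity Laplacians one uses two commodities to encode a $2$-dimensional label and chooses edge weights/commodity couplings so the resulting quadratic form realizes an arbitrary prescribed $2\times2$ block; for Truss Stiffness matrices one places fictitious bars with prescribed directions and stiffnesses so the stiffness quadratic form realizes the desired coupling; for TV matrices one similarly picks the $\ell_1$-type gadget weights. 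In each case I would verify that the reduction is computable in nearly-linear time, that a solution to the structured system can be read back to an (approximate) solution of the original, and that the error and condition-number degradation are at most polynomial, so that the $\log(1/\eps)$ and $\log\kappa$ dependence of the hypothesized solver survives.

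Finally I would assemble the chain: given the hypothesized $\Otil(s^a)$ solver for one structured class, compose it with the reduction to get an $\Otil(s^a)$ high-accuracy solver for all polynomially-bounded integer systems, taking care that each reduction step multiplies $s$ by at most $\polylog$ and multiplies $\kappa$ and $1/\eps$ by at most $\poly$, so the final running time is still $\Otil(s^a)$ with the claimed guarantees.

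The main obstacle I expect is step (ii) --- the faithful embedding of an arbitrary $2\times2$ block into each of the three physically-constrained classes. Multi-commodity Laplacians, truss stiffness matrices, and TV matrices each carry sign, symmetry, and diagonal-dominance-like restrictions (e.g. truss stiffness blocks are rank-one PSD outer products $\ww\ww^\top$ along bar directions; Laplacian blocks have nonpositive off-diagonals), so a single gadget cannot directly produce an arbitrary coupling. The technical heart of the proof will be designing composite gadgets (combining several constrained blocks, possibly with auxiliary variables that are pinned or eliminated via Schur complement) whose effective quadratic form on the variables of interest is an arbitrary target block, while keeping the gadget $O(1)$-size and $\poly$-conditioned; and then checking that Schur-complementing out the auxiliary variables does not blow up the condition number beyond a polynomial factor.
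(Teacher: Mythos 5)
Your high-level plan is right in its broad strokes---pass to the normal equations $\AA^\trp\AA\xx = \AA^\trp\cc$, reduce through a chain of intermediate classes while controlling sparsity, condition number, and accuracy polynomially, and then compose with the hypothesized structured solver. But your step (ii), where you propose to embed an \emph{arbitrary} symmetric $2\times 2$ block into each of the three target classes via composite gadgets, is not what the paper does and is the part most likely to fail. The paper never needs arbitrary $2\times 2$ couplings. Its reduction realizes only a single very specific constraint, the averaging equation $x + x' = 2x''$, as a small $\mctwoCl$ gadget (ten 2-commodity equations on seven auxiliary vertices). The route from a general integer row to this gadget is a pair-and-replace scheme on the binary expansions of the coefficients, which \emph{requires} a preprocessing step you do not mention: first transform $\AA$ so each row has zero sum and the positive coefficients sum to a power of two ($\genCl \to \genZtwoCl$). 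Without that normalization the pairing step is not guaranteed to proceed, so your ``simulate multiplication by $a_{ij}$ with $O(\log)$ gadget blocks'' is underspecified. The sign and diagonal-dominance restrictions you flag as an obstacle are real---a single $\mctwoCl$ block is a symmetric diagonally dominant $2\times 2$ matrix with nonpositive off-diagonal, a truss block is a rank-one outer product $\ww\ww^\trp$---and the paper's answer is not to beat those restrictions gadget-by-gadget but to route the entire reduction through one realizable constraint. So ``embed an arbitrary block'' is the wrong subgoal.

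You also do not anticipate the two other places where the paper has to work hard. First, when $\AA$ is singular (the pseudo-inverse case), relating the minimizers of $\norm{\BB\yy - \dd}_2$ back to minimizers of $\norm{\AA\xx - \cc}_2$ requires a careful choice of per-row weights $w_i = \alpha m_i$ so that the Cauchy--Schwarz step becomes an equality and $\frac{\alpha}{\alpha+1}\AA^\trp\AA$ is \emph{exactly} the Schur complement of $\BB^\trp\BB$; this is not a generic ``Schur complement should be fine'' observation but a designed identity. Second, passing from $\mctwoCl$ to the strict class $\mctwostrictCl$ (which is what IPMs actually produce) adds tiny-weight edges of all three types everywhere, which \emph{changes the null space} of the system. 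The paper has to argue that despite this change the approximate solutions of the two normal-equation problems remain close, which needs both a carefully tuned additive weight $\wstrict$ and the integrality lower bound $\norm{\PPi_{\AA}\cc}_2 \geq 1/\sigma_{\max}(\AA)$. Finally, for the truss class the blocks $\SS = w(\ss^i - \ss^j)^\trp$ are constrained by a \emph{global} geometric placement of vertices in the plane, not chosen freely per edge; the paper needs a randomized-placement argument (and a combinatorial claim about bit patterns of convex-combination coefficients) to ensure the construction is realizable, which is qualitatively different from your ``place fictitious bars with prescribed directions.'' So while the shape of your argument resembles the paper's, the concrete missing ingredients---the zero-sum/power-of-two normalization, the single averaging gadget, the $\alpha$-weighting for exact Schur complements, the null-space argument for strictness, and the geometric realizability argument for trusses---are the technical core, and the ``arbitrary $2\times 2$ block'' framing would steer you into constructions that the target classes cannot support.
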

Our results can easily be adapted to show that if
fast exact linear system solvers exist for multi-commodity Laplacians,
then exact solvers exist for all non-singular integer matrices.
However, this is of less interest since there is less
evidence that would suggest we should expect 
fast exact solvers to
exist.

The notion of approximation used throughout this paper is the same as
that used in the Laplacian solver literature (see
Section~\ref{sec:apxprelim}). 
To further justify the notion of approximate solutions to linear
systems, we show that it lets us solve a natural decision problem for
linear systems:

{\bf We show that deciding if a vector is approximately in the image
  of a matrix can be reduced to approximately solving linear systems.} 
We show this in Section~\ref{sec:lsd}.
We also show that the exact image decision problem
    requires working with exponentially small numbers, even when the
    input has polynomially bounded integral entries and 
  condition number. This means that in fixed-point arithmetic, we can
  only hope to solve an approximate version of the problem.
%
  The problem of approximately solving general linear systems
  can be reduced the problem of approximately solving Graph-Structured
  Block Matrix linear systems.
  Together, these statements imply that we can also reduce the problem of deciding
  whether a vector is approximately in the image of a general matrix to the problem
  of approximately solving Graph-Structured
  Block Matrix linear systems.


{ \bf We establish surprising separations between many
  problems known to have fast solutions and problems that are
  as hard solving general linear systems.} 
Our results trace out several interesting dichotomies: restricted
cases of 2D truss stiffness matrices have fast solvers, but
fast solvers for all 2D truss stiffness matrices would imply equally fast
solvers for all linear systems.
TV matrices can be solved quickly in the anisotropic case, but in the
isotropic case imply solvers for all linear systems.
Fast algorithms exist for multi-commodity problems in the low accuracy
regime, but existing approaches for the high accuracy regime seem to
require fast solvers for multi-commodity linear systems, which again
would imply fast solvers for all linear systems.

Our reductions only require the simplest cases of the classes
we consider: 2-Commodity Laplacians are sufficient, as are (non-planar) 2D Truss
Stiffness matrices, and Total Variation Matrices with 2-by-2
interactions.
Linear systems of these three classes have many applications, and faster solvers
for these would be useful in all applications. Trusses have been studied as the canonical multi-variate problem,
involving definitions such as Fretsaw extensions~\cite{ShklarskiT08}
and factor widths~\cite{BomanCPT05}, and fast linear system solvers
exist for the planar 2D case with bounded weights~\cite{DS07}.
Total Variation Matrices are widely used in image denoising~\cite{ChanS05:book}.
The anisotropic version can be solved using nearly-linear time linear system solvers~\cite{KoutisMT11},
while the isotropic version has often been studied using linear
systems for which fast solvers are not
known~\cite{GoldfarbY04,WohlbergRodriguez07,ChinMMP13}.
Multi-commodity flow problems have been the subject of extensive
study, with significant progress on algorithms with low
accuracy~\cite{GargK96,Madry10a,Fleischer00,LeightonMPSTT91,
  Sherman13,KelnerLOS14,Peng16}, while high accuracy approaches use
slow general linear system solvers.

\subsection{Approximately Solving Linear Systems and Normal Equations}

The simplest notion of solving a linear system $\AA \xx =\cc$, is to
seek an $\xx$ s.t. the equations are exactly satisfied.
More generally, if the system is not guaranteed to have a solution, we can
ask for an $\xx$ which minimizes $\norm{\AA \xx - \cc}^2_2$.
An $\xx$ which minimizes this always exists. In general, it
may not be unique. 
Finding an $\xx$ which minimizes $\norm{\AA \xx - \cc}^2_2$ is
equivalent to solving the linear system ${\AA^{\trp} \AA \xx
=\AA^{\trp} \cc}$, which is referred to as the normal equation for
the linear system $\AA \xx = \cc$ (see \cite{trefethen1997numerical}).
The problem of solving the normal equations (or equivalently, minimizing
$\norm{\AA \xx - \cc}^2_2$), is a generalization of the problem of
linear system solving, since the approach works when $\AA$ is
non-singular, while also giving meaningful results when $\AA$ is
singular.
The normal equation problem can also be understood in terms of the
Moore-Penrose pseudo-inverse of a matrix $\MM$, which is denoted
$\MM^{\pinv}$ as ${\xx =(\AA^{\trp} \AA)^{\pinv} \AA^{\trp} \cc}$ is a
solution to the normal equations.
Taking the view of linear system solving as minimizing $\norm{\AA \xx
  - \cc}^2_2$ also gives sensible ways to define an approximate
solution to a linear system: It is an $\xx$ that ensures $\norm{\AA \xx
  - \cc}^2_2$ is close to $\min_{\xx} \norm{\AA \xx
  - \cc}^2_2$.
In Section~\ref{sec:prelim}, we formally define several notions of
approximate solutions to linear systems that we will use throughout
the paper.

An important special case of linear systems is when the matrix of
coefficients of the system is positive semi-definite. 
Since $\AA^{\trp} \AA$ is always positive semi-definite, solving the
normal equations for a linear system falls into this case.
Linear systems over positive semi-definite matrices can be solved
(approximately) by approaches known as iterative methods, which often
lead to much faster algoritms than the approaches used for general
linear systems.
Iterative methods inherently produce approximate
solutions\footnote{A seeming counterexample to this is Conjugate
  Gradient which is an iterative method
  that produces exact solutions in the RealRAM model. But it requires
  extremely high precision calculations to exhibit this behaviour in
  finite precision arithmetic, and so Conjugate Gradient is also best
  understood as an approximate method.}.



\subsection{Graph-Structured Block Matrices}

Graph-Structured Block Matrices are a type of linear system that
arise in many applications.
Laplacian matrices and Connection Laplacians both fall in this
category.

Suppose we have a collection of $n$ disjoint sets of variables $X_{1}, \ldots, X_{n}$, with each set
having the same size, $\sizeof{X_{i}} = d$.  
Let $\xx^{i}$ denote the vector\footnote{We use superscripts to index a sequence of vectors or matrices, and we use subscripts to denote entries of a vector or matrix, see Section~\ref{sec:prelim}.}
 of variables in $X_{i}$,
and consider an equation of the form $\SS \xx^{i} - \TT \xx^{j} = {\bf 0}$, 
where $\SS$ and $\TT$ are both $r \times d$ matrices. 
Now we form a linear system $\BB \xx = \zero$ by stacking $m$ equations of
the form given above as the rows of the system.
Note that, very importantly, we allow a different choice of $\SS$ and
$\TT$ for every pair of $i$ and $j$.
This matrix $\BB \in \rea^{mr \times nd}$ we refer to as a
Incidence-Structured Block Matrix (ISBM), while we refer to $\BB^{\trp}
\BB$ as a Graph-Structured Block Matrix (GSBM).
Note that $\BB$ is not usually PSD, but $\BB^{\trp} \BB$ is.
The number of non-zeros in $\BB^{\trp} \BB$ is $O(m d^2)$.
GSBMs come up in many applications, where we typically want to solve a
linear system in the normal equations of $\BB$.

Laplacian matrices are GSBMs where $d = 1$ and $\SS = \TT =w$,
where $w$ is a real number, and we allow different $w$ for each pair of
$i$ and $j$.  The corresponding ISBM for Laplacians is called an
edge-vertex incidence matrix.
Connection Laplacians are GSBMs where $d = O(1)$ and $\SS =
\TT^{\trp} = w\QQ$, for some rotation matrix $\QQ$ and
a real number $w$.
Again, we allow a different rotation matrix and scaling for every edge.
For both Laplacians and Connection Laplacians, there exist linear
system solvers that run in time $O( m \polylog(n,\eps^{-1}))$ and
produce $\eps$ approximate solutions to the corresponding normal
equations.

We now introduce several classes of ISBMs and their associated
GSBMs. Our Theorem~\ref{thm:runningTimeHardness} shows that fast
linear system solvers for any of these classes would imply fast linear
system solvers for all matrices with polynomially bounded entries and
condition number.
\begin{definition}[2-commodity incidence matrix]
A \emph{2-commodity incidence matrix} is an ISBM where $d = 2$ and $r = 1$,
and $\SS = \TT $, and we allow three types of $\SS$:
$\SS=
w
\begin{pmatrix}
  1 & 0
\end{pmatrix}
$,
$\SS=
w
\begin{pmatrix}
  0 &1
\end{pmatrix}
$
and
$\SS=
w \begin{pmatrix}
  1 &- 1
\end{pmatrix}
$, where in each case $w$ is a real number which may depend on the pair
$i$ and $j$. 
We denote the set of all 2-commodity incidence matrices by $\mctwoCl$. 
The corresponding GSBM is called a 2-commodity Laplacian.
The ISBM definition is equivalent to requiring the GSBM to
have the form
\[
\LL^1 \otimes \left( \begin{array}{cc}
1 & 0 \\
0 & 0
\end{array} \right)
+ \LL^2 \otimes \left( \begin{array}{cc}
0 & 0 \\
0 & 1
\end{array} \right)
+ \LL^{1+2} \otimes \left( \begin{array}{cc}
1 & -1 \\
-1 & 1
\end{array} \right)
\]
where $\otimes$ is the tensor product and  $\LL^1$, $\LL^2$, and
$\LL^{1+2}$ are all Laplacian matrices.

We adopt a convention that  the first variable in a set $X_i$ is
labelled $\uu_i$ and the second is labelled $\vv_i$.
Using this convention, given a 2-commodity incidence matrix $\BB$,
the equation $\BB \xx = \zero$ must consist of 
scalings of the following three types of equations: 
$\uu_i - \uu_j= 0$, $\vv_i - \vv_j = 0$, and $\uu_i - \vv_i - (\uu_j - \vv_j) = 0$.
\label{def:MC2}
\end{definition}

\begin{definition}[Strict 2-commodity incidence matrix]
\label{def:MC2Strict}
A \emph{strict 2-commodity incidence matrix} is a 2-commodity
incidence matrix where the corresponding 2-commodity Laplacian has the
property that $\LL^1$, $\LL^2$, and
$\LL^{1+2}$ all have the same non-zero pattern.
We denote the set of all strict 2-commodity incidence matrices by
$\mctwostrictCl$. 
We denote the set of all strict 2-commodity incidence matrices with
\emph{integer entries} by $\mctwostrictintCl$.
\end{definition}
Linear systems in $\mctwostrictCl$ are exactly the systems that one
has to solve to when solving 2-commodity problems using Interior Point
Methods (IPMs).
For readers unfamiliar with 2-commodity problems or IPMs, we provide a brief explanation of
why this is the case in Section~\ref{sec:ipm}.
The $\mctwostrictCl$ is more restrictive than $\mctwoCl$, 
and $\mctwostrictintCl$ in turn is even more restrictive. 
One could hope that fast linear system solvers exist for
$\mctwostrictCl$ or $\mctwostrictintCl$, even
if they do not exist for $\mctwoCl$.
However, our reductions show that even getting a fast approximate solver for
$\mctwostrictintCl$ with polynomially bounded entries and
condition number will lead to a fast solver for all matrices with polynomially bounded entries and
condition number.

The next class we consider is 2D Truss Stiffness Matrices. They have
been studied extensively in the numerical linear algebra
community~\cite{ShklarskiT08,BomanCPT05}. 
For \emph{Planar} 2D Trusses with some bounds on ranges of edges, Daitch
and Spielman obtained linear system solvers that run in time
$\Otil(n^{5/4} \log(1/\eps))$.

\begin{definition}[2D Truss Incidence Matrices]
\label{def:trusses}
Let $G = (V, E)$ be a graph whose vertices are $n$ points in
2-dimension: $\ss^{1}, \ldots, \ss^{n} \in \mathbb{R}^2$.
Consider $X_{1}, \ldots, X_{n}$ where $d = 2$.
A \emph{2D Truss Incidence Matrix} is an ISBM where $d = 2$ and $r = 1$,
and for each $i$ and $j$,
 we have $\SS = \TT$ and
$\SS=
w
(\ss^{i} - \ss^{j})^{\trp}
$,
and $w$ is a real number that may depend on the pair $i$ and $j$, but
$\ss^{i}$ depends only on $i$ and vice versa for $\ss^{j}$.
We denote the class of all 2D Truss Incidence Matrices by $\trusstwoCl$. 
\end{definition}
Another important class of matrices is Total Variation Matrices (TV matrices).
TV matrices come from Interior Point Methods for solving total
variation minimization problem in image, see for example~\cite{GY05} and~\cite{CMMP13}.
Not all TV matrices are GSBMs, but many GSBMs can be
expressed as TV matrices.

\begin{definition}[TV matrix and 2-TV Incidence Matrices]
\label{def:TV}
Let $E_1 \cup \ldots \cup E_s$ be a partition of the edge set of a graph. 
For each $1 \le i \le s$, 
let $\BB^i$ be the edge-vertex incidence matrix of $E_i$,
$\WW^i$ be a diagonal matrix of edge weights, and $\rr^{i}$ be a vector satisfying $\WW^i\pgeq \rr^{i} (\rr^{i})^{\trp}$.
Given  these objects, the associated \emph{total variation matrix} (TV matrix) is a matrix $\MM$ defined as
\begin{align*}
\MM = \sum_{1 \le i \le s} (\BB^i)^{\trp} \left( \WW^i - \rr^{i} (\rr^{i})^{\trp}
  \right) \BB^i
.
\end{align*}
A \emph{2-TV Incidence Matrix} is defined as any ISBM whose
corresponding GSBM is a
TV matrix with $\WW^i\in \rea^{2 \times 2}$ and $\rr^{i} \in \rea^{2}$.
We denote the class of all 2-TV incidence matrices by $\tvtwoCl$.
\end{definition}

\subsection{Our Reduction: Discussion and an Example}
%
%
In this section we give a brief sketch of the ideas behind our
reduction from general linear systems, over matrices in $\genCl$, to
multi-commodity linear systems, over matrices in $\mctwoCl$, and we demonstrate the most important
transformation 
 through an example.

The starting point for our approach is the folklore idea that any
linear system can be written as a factor-width 3 system by introducing
a small number of extra variables.
Using a set of multi-commodity constraints, we are able to express one
particular factor-width 3 equation, namely $ 2x'' =  x + x' $.
After a sequence of preprocessing steps, we are then able to
efficiently express arbitrary linear systems over integer matrices
using constraints of this form.
A number of further issues arise when the initial matrix does not have
full column rank, requiring careful weighting of the constraints we introduce.

Given a matrix $\AA$ with polynomially bounded integer entries and condition number,
we reduce the linear system
$\AA \xx = \cc$ to a linear system $\BB \yy = \dd$,
where $\BB$ is a
strict multi-commodity edge-vertex incidence matrix 
with integer entries (i.e. in $\mctwostrictintCl$),
with polynomially bounded entries and condition number.
More precisely, we reduce $\AA^{\trp}\AA \xx = \AA^{\trp}\cc$ to
$\BB^{\trp}\BB \yy = \BB^{\trp}\dd$. These systems always have a solution.
We show that we can find an $\eps$-approximate solution to the linear
system $\AA^{\trp}\AA \xx = \AA^{\trp}\cc$ by a simple mapping on any 
$\yy$ that $\eps'$-approximately solves the linear system
$\BB^{\trp}\BB \yy = \BB^{\trp}\dd$, where $\eps'$ is only
polynomially smaller than $\eps$.
If $\AA$ has $s$ non-zero entries and the maximum absolute value of an
entry in $\AA$ is $U$, then $\BB$ will have $O(s\log(sU))$ non-zero
entries and our algorithm computes the reduction in time $O(s\log(sU))$.
Note that $\BB^{\trp}\BB $ has $r = O(s\log(sU))$ non-zeros, because
every row of $\BB$ has $O(1)$ entries. 
All together, this means that getting a solver for $\BB^{\trp}\BB \xx
= \BB^{\trp}\dd$ with running time $\Otil(r^{a} \log(1/\eps))$ will give
a solver for $\AA$ with $\Otil(s^{a} \log(1/\eps))$ running time.

We achieve this through a chain of reductions. Each reduction produces a new
matrix and vector, as well as a new error parameter giving the
accuracy required in the new system to achieve the accuracy desired in
the original system.
\begin{enumerate}
\item We get a new linear system $\AAZtwo \xxztwo = \ccztwo$ where 
$\AAZtwo$ has integer entries, and the entries of each row of $\AAZtwo$ sum to zero,
i.e. $\AAZtwo \one = \zero$, and finally in every row the sum of the
positive coefficients is a power of two.
\item
\label{enm:GZTtoMCT}
$\AAZtwo \xxztwo = \ccztwo$ is then transformed to $\BB \yy = \dd$, 
where $\BB$ is a 2-commodity edge-vertex incidence
matrix.
\item $\BB \yy = \dd$ is then transformed to $\BBtwostrict \yy = \ddtwostrict$, 
where $\BBtwostrict$ is a strict 2-commodity edge-vertex incidence
matrix.
\item $\BBtwostrict \yy = \ddtwostrict$ is then transformed to $\BBtwostrictint \yy = \ddtwostrictint$, 
where $\BBtwostrictint$ is a 2-commodity edge-vertex incidence
matrix with integer entries.
\end{enumerate}
We will demonstrate step~\ref{enm:GZTtoMCT}, the main
transformation,  by example. 
When the context is clear, we drop the superscripts of matrices for simplicity.
The reduction handles each row (i.e. equation) of the linear system
independently, so we focus on the reduction for a single row.

Consider a linear system $\AA \xx =
\cc$, and let us pick a single row (i.e. equation) $\AA_{i}  \xx =
\cc_{i}$ \footnote{We use $\AA_i$ to denote the $i$th row of $\AA$, and $\cc_i$ to denote the $i$th entry of $\cc$, see Section~\ref{sec:prelim}.}.
We will repeatedly pick pairs of existing variables of $\xx$, say $x$ and $x'$, based on their
current coefficients in $\AA_{i}  \xx = \cc_{i}$, and modify the row
by adding $C (2x'' - (x + x') )$ to the left hand side where $x''$ is a new variable and
$C$ is a real number we pick.
As we will see in a moment, we can use this pair-and-replace operation
to simplify the row until it eventually becomes a 2-commodity
equation.
At the same time as we modify $\AA_{i}$, we also store an auxiliary equation $C (x + x' -
2x'' ) = 0$.
Suppose initially that $\AA_{i}  \xx = \cc_{i}$ is satisfied.
After this modification of $\AA_{i}  \xx = \cc_{i}$, if the
auxiliary equation is satisfied, $\AA_{i}  \xx = \cc_{i}$ is 
still satisfied by the same values of $x$ and $x'$.
Crucially, we can express the auxiliary equation $C (x + x' -
2x'' ) = 0$ by a set of ten 2-commodity equations, i.e. a
``2-commodity gadget'' for this equation.
Our final output matrix \emph{will not} contain the equation $C (x + x' -
2x'' ) = 0$ as a row, but will instead contain 10 rows of 2-commodity
equations from our gadget construction.
Eventually, our pair-and-replace scheme will also transform the row
$\AA_{i}  \xx = \cc_{i}$ into a 2-commodity equation on just two variables.

Next, we need to understand how the pair-and-replace scheme makes
progress.
The pairing handles the positive and the negative coefficients of
$\AA_{i}$ separately, and eventually ensures that $\AA_{i}  \xx =
\cc_{i}$ has only a single positive and a single negative coefficient
in the modified row $\AA_{i}  \xx = \cc_{i}$, in particular it is of the form $a x
- a x' = \cc_{i}$ for two variables $x$ and $x'$ that appear in the
modified vector of variables $\xx$, i.e. it is a 2-commodity equation.

To understand the pairing scheme, it is helpful to think about the
entries of $\AA$ written using binary (ignoring the sign of the entry).
The pairing scheme proceeds in a sequence of rounds: In the first
round we pair variables whose 1st (smallest) bit is 1. There must be
an even number of variables with smallest bit 1, as the sum of the positive
(and respectively negative) coefficients is a power of 2.
We then replace the terms
corresponding to the  1st bit of the pair with a new single variable
with a coefficient of 2. 
After the first round, every coefficient has zero in the 1st bit.
In the next round, we pair variables whose 2nd bit is 1, and replace the terms
corresponding to the the 2nd bit of the pair with a new single variable
with a coefficient of 4, and so on.
Because the positive coefficients sum to a power of two, we are able
to guarantee that pairing is always possible.
It is not too hard to show that we do not create a large number of new
variables or equations using this scheme.


For example, let us consider an equation
\begin{alignat*}{3}
  && 3 \xx_{1} + 5 \xx_{2} + 4 \xx_{3} + 4 \xx_{4} - 16 \xx_{5} &= 1 &
\\
\ArrowBetweenLines[\downarrow]
&&&&\text{ Replace } \xx_{1} + \xx_{2} \text{ by } 2 \xx_{6}.  
\\
&&&&
\text{ Add auxiliary equation } \xx_{1} + \xx_{2} - 2 \xx_{6} = 0.
\\
&& 2 ( \xx_{1} + 2 \xx_{2} + \xx_{6} + 2 \xx_{3} + 2 \xx_{4} ) - 16 \xx_{5} &= 1 
\\
\ArrowBetweenLines[\downarrow]
&&&&\text{ Replace } 2(\xx_{1} + \xx_{6}) \text{ by } 4 \xx_{7}.  
\\
&&&&
\text{ Add auxiliary equation } 2 (\xx_{1} + \xx_{6} - 2 \xx_{7}) = 0.
\\
&& 4 ( \xx_{2} + \xx_{7} +  \xx_{3} + \xx_{4} ) - 16 \xx_{5} &= 1 &
\\
\ArrowBetweenLines[\downarrow]
&&&&\text{ Replace } 4(\xx_{2} + \xx_{7}) \text{ by } 8 \xx_{8},  
\\
&&&&\text{ and } 4(\xx_{3} + \xx_{4}) \text{ by } 8 \xx_{9}.  
\\
&&&&
\text{ Add auxiliary equations } 4 (\xx_{2} + \xx_{7} - 2 \xx_{8}) = 0,
\\
&&&&
\text{                                  and } 4 (\xx_{3} + \xx_{4} - 2 \xx_{9}) = 0.
\\
&& 8 ( \xx_{8} + \xx_{9} ) - 16 \xx_{5} &= 1 &
\\
\ArrowBetweenLines[\downarrow]
&&&&\text{ Replace } 8 (\xx_{8} + \xx_{9}) \text{ by } 16 \xx_{10}.  
\\
&&&&
\text{ Add auxiliary equation } 8 ( \xx_{8} + \xx_{9} - 2 \xx_{10} ) = 0.
\\
&& 16 \xx_{10} - 16 \xx_{5} &= 1 &
\end{alignat*}

In this way, we process $\AA \xx = \cc$ to produce a new
set of equations $\BB \yy = \dd$ where $\BB$ is a 2-commodity matrix.
If $\AA \xx = \cc$ has an exact solution, this solution can be obtained
directly from an exact solution to $\BB \yy = \dd$.
We also show that an approximate solution to $\BB \yy = \dd$ leads to
an approximate solution for $\AA \xx = \cc$, and we show that $\BB$
does not have much larger entries or condition number than $\AA$.

The situation is more difficult when $\AA \xx = \cc$ does not have a
solution and we want to obtain an approximate minimizer
$\argmin_{\xx \in \rea^{n}} \norm{ \AA \xx - \cc}_2^{2}$ from an
approximate solution to $\argmin_{\yy \in \rea^{n'}} \norm{ \BB \yy - \dd}^{2}_2$.
This corresponds to approximately applying the Moore-Penrose pseudo-inverse of $\AA$
to $\cc$.
We deal with the issues that arise here using a carefully chosen
scaling of each auxiliary constraint to ensure a strong relationship
between different solutions.

In order to switch from a linear system in a general 2-commodity
matrix to a linear system in a \emph{strict} 2-commodity matrix, we
need to reason very carefully about the changes to the null space that
this transformation inherently produces.
By choosing sufficiently small weights, we are nonetheless able to establish a strong relationship between
the normal equation solutions despite the change to the null space.




\section{Preliminaries}
\label{sec:prelim}

%

We use subscripts to denote entries of a matrix or a vector:
let $\AA_i$ denote the $i$th row of matrix $\AA$ and $\AA_{ij}$ denote the $(i,j)$th entry of $\AA$;
let $\xx_i$ denote the $i$th entry of vector $\xx$ and $\xx_{i:j}$ ($i < j$) denote the vector of entries 
$\xx_i, \xx_{i+1}, \ldots, \xx_j$.
We use superscripts to index a sequence of matrices or vectors, e.g., $\AA^1, \AA^2, \ldots,$ and $\xx^1, \xx^2, \ldots$, except when some other meaning is clearly stated.

We use $\AA^\pinv$ to denote the Moore-Penrose pseudo-inverse of a
matrix $\AA$.
We use $\im(\AA)$ to denote the image of a matrix $\AA$.
We use $\norm{\cdot}_2$ to denote the Euclidean norm on vectors and the
spectral norm on matrices. 
When $\MM$ is an $n \times n$ positive semidefinite matrix, 
we define a norm on vectors $\xx \in \rea^{n}$ by
 $\norm{\xx}_{\MM} \defeq \sqrt{\xx^{\trp} \MM \xx }$.
We let $\nnz(\AA)$ denote the number of non-zero entries in a matrix $\AA$.
We define $\norm{\AA}_{\max} = \max_{i,j} \abs{\AA_{ij}} $, $\norm{\AA}_1 = \max_{j} \sum_i \abs{\AA_{ij}}$ and $\norm{\AA}_{\infty} = \max_i \sum_j \abs{\AA_{ij}}$.
We let
$
\anzmin(\AA) = \min_{i,j \text{ s.t. } \AA_{ij} \neq 0} \abs{\AA_{ij}} 
$.
Given a matrix $\AA \in \rea^{m \times n}$ and a vector $\cc \in
\rea^{m}$ for some $m,n$, we call the tuple $(\AA,\cc)$ a linear
system.
Given matrix $\AA \in \rea^{m \times n}$, 
let $\proj_{\AA} \defeq \AA (\AA \AA^{\trp})^{\pinv} \AA^{\trp} $,
i.e. the orthogonal projection onto $\im(\AA)$. 
Note that $\proj_{\AA} = \proj_{\AA}^{\trp}$ and $\proj_{\AA} = \proj_{\AA}^{2}$.

\subsection{Approximately Solving A Linear System}
\label{sec:apxprelim}
In this section we formally define the notions of approximate
solutions to linear systems that we work with throughout this paper.

\begin{definition}[Linear System Approximation Problem, \LSA]
\label{def:lsapx}
Given linear system $(\AA,\cc)$, 
where $\AA \in \rea^{m \times n}$,
and $\cc \in \rea^{m}$, and given a scalar $0 \leq \eps \leq 1$, 
we refer to the $\LSA$ problem for the triple
$(\AA,\cc,\eps)$ as the problem of finding 
$\xx \in \rea^{n}$ s.t.
\[
  \norm{ \AA \xx - \proj_{\AA} \cc }_2 \leq \eps \norm{ \proj_{\AA} \cc
  }_2
,
\]
and we say that such an $\xx$ is a solution to the $\LSA$ instance $(\AA,\cc,\eps)$.
\end{definition}
This definition of a {\LSA} instance and solution has several advantages: when
$\im(\AA) =\rea^{m}$, we get $\proj_{\AA} = \II$, and it reduces to the
natural condition ${\norm{ \AA \xx - \cc }_2 \leq \eps \norm{ \cc }}_2$,
which because $\im(\AA) =\rea^{m}$, can be satisfied for any
$\epsilon$, and for $\eps = 0$ tells us that $ \AA \xx = \cc $.

When $\im(\AA)$ does not include all of $\rea^{m}$, the vector
$\proj_{\AA} \cc$ is exactly the projection of $\cc$ onto $\im(\AA)$,
and so a solution can still be obtained for any $\epsilon$.
Further, as $(\II - \proj_{\AA}) \cc$ is orthogonal to $\proj_{\AA}
\cc$ and $\AA \xx$, it follows that 
\[
\norm{ \AA \xx - \cc}^{2}_2
=
\norm{(\II - \proj_{\AA}) \cc}^{2}_2
+
\norm{ \AA \xx - \proj_{\AA} \cc }^{2}_2
.
\]
Thus, when $\xx$ is a solution to the
{\LSA} instance $(\AA,\cc,\eps)$, then $\xx$ also gives an $\eps^{2} \norm{
  \proj_{\AA} \cc }^{2}_2$ additive approximation to 
\begin{equation}
\min_{\xx \in \rea^{n}} \norm{ \AA \xx - \cc}^{2}_2  = \norm{(\II - \proj_{\AA}) \cc}^{2}_2
.\label{eqn:minquad}
\end{equation}
Similarly, an $\xx$ which gives an additive $\eps^{2} \norm{\proj_{\AA}
  \cc }^{2}_2$ approximation to Problem~\eqref{eqn:minquad} is always a
solution to the {\LSA} instance
$(\AA,\cc,\eps)$.
These observations prove the following (well-known) fact:

\begin{fact}
\label{fac:projIsQFmin} Let
$
\xx^{*} \in \argmin_{\xx \in \rea^{m}} \norm{ \AA \xx - \cc}^{2}_2
$,
then for every $\xx$,
\[
\norm{ \AA \xx - \cc}^{2}_2 \leq
\norm{ \AA\xx^{*} - \cc}^{2}_2 + \eps^{2} \norm{\proj_{\AA} \cc }^{2}_2
\]
if and only if $\xx$ is a solution to
the {\LSA} instance $(\AA,\cc,\eps)$.
\end{fact}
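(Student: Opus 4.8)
The plan is to prove Fact~\ref{fac:projIsQFmin} directly from the orthogonal decomposition of $\norm{\AA\xx - \cc}^2_2$ that was derived just above the statement, and the definition of the \LSA{} problem (Definition~\ref{def:lsapx}). The only nontrivial ingredient is the Pythagorean identity
\[
\norm{\AA\xx - \cc}^2_2 = \norm{(\II - \proj_{\AA})\cc}^2_2 + \norm{\AA\xx - \proj_{\AA}\cc}^2_2,
\]
which holds because $(\II - \proj_{\AA})\cc$ lies in the orthogonal complement of $\im(\AA)$, while both $\AA\xx$ and $\proj_{\AA}\cc$ lie in $\im(\AA)$; hence their difference $\AA\xx - \proj_{\AA}\cc$ is orthogonal to $(\II - \proj_{\AA})\cc$. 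I would state this identity as the first step, justifying orthogonality from $\proj_{\AA} = \proj_{\AA}^2 = \proj_{\AA}^{\trp}$.

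Next I would observe that the minimum in $\min_{\xx}\norm{\AA\xx - \cc}^2_2$ is attained precisely when the second term $\norm{\AA\xx - \proj_{\AA}\cc}^2_2$ vanishes, i.e. when $\AA\xx = \proj_{\AA}\cc$, which is feasible since $\proj_{\AA}\cc \in \im(\AA)$. Consequently, for any minimizer $\xx^{*}$ we have $\norm{\AA\xx^{*} - \cc}^2_2 = \norm{(\II - \proj_{\AA})\cc}^2_2$, and for an arbitrary $\xx$,
\[
\norm{\AA\xx - \cc}^2_2 - \norm{\AA\xx^{*} - \cc}^2_2 = \norm{\AA\xx - \proj_{\AA}\cc}^2_2.
\]

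The proof then finishes by a direct equivalence: the left-hand inequality in the Fact, namely $\norm{\AA\xx - \cc}^2_2 \le \norm{\AA\xx^{*} - \cc}^2_2 + \eps^2\norm{\proj_{\AA}\cc}^2_2$, is by the displayed identity equivalent to $\norm{\AA\xx - \proj_{\AA}\cc}^2_2 \le \eps^2\norm{\proj_{\AA}\cc}^2_2$, and taking square roots (both sides nonnegative) this is equivalent to $\norm{\AA\xx - \proj_{\AA}\cc}_2 \le \eps\norm{\proj_{\AA}\cc}_2$, which is exactly the condition defining a solution to the \LSA{} instance $(\AA,\cc,\eps)$. Since every step is an equivalence, both directions of the ``if and only if'' follow simultaneously.

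There is no real obstacle here; the statement is essentially a restatement of the orthogonal decomposition already spelled out in the text, and the ``proof'' amounts to chaining the equivalences carefully. The only point requiring a line of care is confirming that $\argmin_{\xx}\norm{\AA\xx - \cc}^2_2$ is nonempty and that every minimizer achieves the value $\norm{(\II - \proj_{\AA})\cc}^2_2$ (rather than, say, some minimizers achieving a larger value) — but this is immediate from the Pythagorean identity, since the first term is a constant independent of $\xx$ and the second is a nonnegative quantity that is zero for $\xx = \AA^{\pinv}\proj_{\AA}\cc$.
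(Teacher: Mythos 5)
Your argument is correct and is exactly the paper's argument: the Pythagorean decomposition $\norm{\AA\xx - \cc}^2_2 = \norm{(\II - \proj_{\AA})\cc}^2_2 + \norm{\AA\xx - \proj_{\AA}\cc}^2_2$, combined with the observation that the minimum equals $\norm{(\II-\proj_{\AA})\cc}^2_2$, reduces the stated inequality to $\norm{\AA\xx-\proj_{\AA}\cc}_2 \le \eps\norm{\proj_{\AA}\cc}_2$, which is Definition~\ref{def:lsapx}. The paper derives this chain in the prose immediately preceding the Fact and then cites those observations as the proof, so there is nothing to add.
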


When the linear system $\AA \xx = \cc$ does not have a solution, a
natural notion of solution is any minimizer of
Problem~\eqref{eqn:minquad}.
A simple calculation shows that this is equivalent to requiring that
$\xx$ is a solution to the linear system $\AA^{\trp} \AA \xx =
\AA^{\trp} \cc$, which always has a solution even when $\AA \xx = \cc$
does not.
The system $\AA^{\trp} \AA \xx = \AA^{\trp} \cc$ is referred to as the
\emph{normal equation} associated with $\AA \xx = \cc$ (see \cite{trefethen1997numerical}).
\begin{fact}
  $
\xx^{*} \in \argmin_{\xx \in \rea^{n}} \norm{ \AA \xx - \cc}^{2}_2
$, if and only if $\AA^{\trp} \AA \xx^{*} = \AA^{\trp} \cc$, and this
linear system always has a solution.
\end{fact}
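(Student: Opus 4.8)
The plan is to reduce the Fact to the orthogonal decomposition established immediately above, namely
\[
\norm{\AA\xx - \cc}^2_2 = \norm{(\II - \proj_{\AA})\cc}^2_2 + \norm{\AA\xx - \proj_{\AA}\cc}^2_2 .
\]
In this identity the first term does not depend on $\xx$ and the second is nonnegative, so $\xx^*$ minimizes $\norm{\AA\xx - \cc}^2_2$ if and only if $\AA\xx^* = \proj_{\AA}\cc$. Hence it suffices to prove two things: (i) $\AA\xx^* = \proj_{\AA}\cc$ if and only if $\AA^\trp\AA\xx^* = \AA^\trp\cc$; and (ii) some $\xx^*$ with $\AA\xx^* = \proj_{\AA}\cc$ exists. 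Claim (ii) is immediate, since $\proj_{\AA}\cc \in \im(\AA)$ by definition of the orthogonal projection onto $\im(\AA)$, so it has a preimage under $\AA$.

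For claim (i), the forward direction multiplies $\AA\xx^* = \proj_{\AA}\cc$ on the left by $\AA^\trp$ and uses $\AA^\trp\proj_{\AA} = \AA^\trp$ to get $\AA^\trp\AA\xx^* = \AA^\trp\cc$. That identity holds because $\proj_{\AA}$ fixes $\im(\AA)$ pointwise and is symmetric, so $\proj_{\AA}\AA = \AA$ and therefore $\AA^\trp\proj_{\AA} = (\proj_{\AA}\AA)^\trp = \AA^\trp$. For the reverse direction, suppose $\AA^\trp\AA\xx^* = \AA^\trp\cc$, i.e. $\AA^\trp(\AA\xx^* - \cc) = \zero$, so $\AA\xx^* - \cc \in \nulls(\AA^\trp) = \im(\AA)^\orthog$. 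Writing $\AA\xx^* - \proj_{\AA}\cc = (\AA\xx^* - \cc) + (\II - \proj_{\AA})\cc$, both summands lie in $\im(\AA)^\orthog$ (the second because $\II - \proj_{\AA}$ is the orthogonal projection onto $\im(\AA)^\orthog$), while the left-hand side lies in $\im(\AA)$; a vector lying in both $\im(\AA)$ and $\im(\AA)^\orthog$ must be $\zero$, so $\AA\xx^* = \proj_{\AA}\cc$.

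There is essentially no obstacle here — this is a classical fact, and each step is a one-line consequence of the projection identities $\proj_{\AA}\AA = \AA$ and $\nulls(\AA^\trp) = \im(\AA)^\orthog$. The only point requiring a little care is that these identities are being applied to a possibly rank-deficient $\AA$; this is justified by the Preliminaries, where $\proj_{\AA} = \AA(\AA\AA^\trp)^\pinv\AA^\trp$ is introduced precisely as the orthogonal projector onto $\im(\AA)$, with $\proj_{\AA} = \proj_{\AA}^\trp = \proj_{\AA}^2$. An alternative route that avoids $\proj_{\AA}$ entirely is to note that $\xx \mapsto \norm{\AA\xx - \cc}^2_2$ is convex with gradient $2\AA^\trp(\AA\xx - \cc)$, so a point is a global minimizer iff its gradient vanishes, which is exactly the normal equation; existence then follows from $\AA^\trp\cc \in \im(\AA^\trp) = \im(\AA^\trp\AA)$, the last equality being the standard $\nulls(\AA^\trp\AA) = \nulls(\AA)$.
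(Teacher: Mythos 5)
Your proof is correct. The paper actually does not supply a proof of this Fact at all — it is stated as a well-known consequence of the preceding discussion (the sentence "A simple calculation shows that this is equivalent to..." is the extent of the paper's justification). Your argument fills that gap cleanly, and it is exactly in the spirit of the surrounding text: you build directly on the orthogonal decomposition $\norm{\AA\xx-\cc}_2^2 = \norm{(\II-\proj_{\AA})\cc}_2^2 + \norm{\AA\xx-\proj_{\AA}\cc}_2^2$ that the paper establishes just above, reduce the optimality condition to $\AA\xx^* = \proj_{\AA}\cc$, and then verify the equivalence with the normal equation via $\AA^\trp\proj_{\AA}=\AA^\trp$ and $\nulls(\AA^\trp)=\im(\AA)^\orthog$. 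Both this route and the alternative gradient/$\nulls(\AA^\trp\AA)=\nulls(\AA)$ route you sketch are standard and correct; either would serve as the paper's "simple calculation."
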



This leads to a natural question: Suppose we want to approximately
solve the linear system $\AA^{\trp} \AA \xx = \AA^{\trp} \cc$.
Can we choose our notion of approximation to be equivalent to that of
a solution to the {\LSA} instance $(\AA,\cc,\eps)$?

A second natural question is whether we can choose a notion of
distance between a proposed solution $\xx$ and an optimal solution
$\xx^* \in \argmin_{\xx \in \rea^{n}} \norm{ \AA \xx - \cc}^{2}_2$ s.t. this distance being small is equivalent to $\xx$ being a solution to the {\LSA} instance $(\AA,\cc,\eps)$?
The answer to both questions is yes, as demonstrated by the following facts:
\begin{fact}
\label{fac:voltageErrorEquiv}
\noindent
Suppose $\xx^* \in \argmin_{\xx \in \rea^{n}} \norm{ \AA \xx - \cc}^{2}_2$ then
  \begin{enumerate}
  \item 
    $\norm{\AA^{\trp} \AA \xx - \AA^{\trp}
      \cc}_{(\AA^{\trp}\AA)^{\pinv}} 
    = \norm{ \AA \xx - \proj_{\AA} \cc }_2 
    = \norm{ \xx - \xx^* }_{\AA^{\trp}\AA}
    $.
   \item The following statements are each equivalent to $\xx$ being a solution
    to the {\LSA} instance $(\AA,\cc,\eps)$:
    \begin{enumerate}
    \item $\norm{\AA^{\trp} \AA \xx - \AA^{\trp}
      \cc}_{(\AA^{\trp}\AA)^{\pinv}} \leq \epsilon \norm{\AA^{\trp}
      \cc}_{(\AA^{\trp} \AA)^{\pinv}}$ if and only if $\xx$ is a solution
    to the {\LSA} instance $(\AA,\cc,\eps)$.
  \item $\norm{ \xx - \xx^* }_{\AA^{\trp}\AA} \leq \epsilon \norm{ \xx^* }_{\AA^{\trp}\AA}$ if and only if $\xx$ is a solution
    to the {\LSA} instance $(\AA,\cc,\eps)$.
    \end{enumerate}
  \end{enumerate}
\end{fact}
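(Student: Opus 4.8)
The plan is to prove Fact~\ref{fac:voltageErrorEquiv} by establishing the triple equality in part~(1) first, since parts~(2a) and~(2b) then follow almost immediately by dividing through by the appropriate norms and recalling Definition~\ref{def:lsapx}. For the equalities in part~(1), I would work entirely with the singular value decomposition of $\AA$, or equivalently with the spectral decomposition of $\AA^{\trp}\AA$ restricted to its image. The key algebraic facts I would lean on are: $\AA^{\trp}\proj_{\AA} = \AA^{\trp}$ (since $\proj_{\AA}$ is the orthogonal projector onto $\im(\AA)$ and every row of $\AA^{\trp}$, i.e. column of $\AA$, lies in $\im(\AA)$); that $(\AA^{\trp}\AA)^{\pinv}\AA^{\trp}\AA$ is the orthogonal projector onto $\im(\AA^{\trp}) = \ker(\AA)^{\orthog}$; and the identity $\AA(\AA^{\trp}\AA)^{\pinv}\AA^{\trp} = \proj_{\AA}$, which identifies the "hat matrix'' with the projector $\proj_{\AA}$ defined in the preliminaries.

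For the first equality, $\norm{\AA^{\trp}\AA\xx - \AA^{\trp}\cc}_{(\AA^{\trp}\AA)^{\pinv}} = \norm{\AA\xx - \proj_{\AA}\cc}_2$, I would expand the left-hand side as $(\AA^{\trp}(\AA\xx-\cc))^{\trp}(\AA^{\trp}\AA)^{\pinv}(\AA^{\trp}(\AA\xx-\cc))$ and substitute $\AA(\AA^{\trp}\AA)^{\pinv}\AA^{\trp} = \proj_{\AA}$, getting $(\AA\xx-\cc)^{\trp}\proj_{\AA}(\AA\xx-\cc)$; then since $\proj_{\AA}$ is an idempotent symmetric projector and $\proj_{\AA}\AA\xx = \AA\xx$, this equals $\norm{\proj_{\AA}(\AA\xx-\cc)}_2^2 = \norm{\AA\xx - \proj_{\AA}\cc}_2^2$. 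For the second equality, $\norm{\AA\xx - \proj_{\AA}\cc}_2 = \norm{\xx-\xx^*}_{\AA^{\trp}\AA}$, I would use that $\xx^*$ solving the normal equation means $\AA^{\trp}\AA\xx^* = \AA^{\trp}\cc$, hence $\AA\xx^* = \proj_{\AA}\AA\xx^* $... more carefully: $\AA\xx^* = \AA(\AA^{\trp}\AA)^{\pinv}\AA^{\trp}\cc = \proj_{\AA}\cc$, so $\AA(\xx-\xx^*) = \AA\xx - \proj_{\AA}\cc$, and squaring gives $\norm{\xx-\xx^*}_{\AA^{\trp}\AA}^2 = (\xx-\xx^*)^{\trp}\AA^{\trp}\AA(\xx-\xx^*) = \norm{\AA(\xx-\xx^*)}_2^2 = \norm{\AA\xx - \proj_{\AA}\cc}_2^2$.

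For part~(2), I would apply the equalities from part~(1) to both the quantity being bounded and the right-hand side. Taking $\xx = \zero$ in part~(1) shows $\norm{\AA^{\trp}\cc}_{(\AA^{\trp}\AA)^{\pinv}} = \norm{\proj_{\AA}\cc}_2 = \norm{\xx^*}_{\AA^{\trp}\AA}$ (using $\AA\cdot\zero = \zero$ and that $-\xx^*$ is just the displacement; one should check $\norm{\zero - \xx^*}_{\AA^\trp\AA} = \norm{\xx^*}_{\AA^\trp\AA}$, which is immediate). Then each of the inequalities in~(2a) and~(2b) is literally the inequality $\norm{\AA\xx - \proj_{\AA}\cc}_2 \leq \eps\norm{\proj_{\AA}\cc}_2$ after substitution, which is precisely the defining condition of the {\LSA} instance $(\AA,\cc,\eps)$. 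I do not anticipate a serious obstacle here; the only mild subtlety is making sure all the pseudo-inverse identities are applied in the right order and that $\xx^*$'s defining property is invoked correctly (in particular that $\AA\xx^* = \proj_{\AA}\cc$ regardless of which minimizer $\xx^*$ is chosen, since $\AA\xx^*$ is uniquely determined even when $\xx^*$ is not). The cleanest write-up routes everything through the single master identity $\AA(\AA^{\trp}\AA)^{\pinv}\AA^{\trp} = \proj_{\AA}$ and the observation $\proj_{\AA}\AA = \AA$.
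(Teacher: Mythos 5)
Your proof is correct and follows essentially the same route as the paper's: both establish the first equality by expanding the $(\AA^{\trp}\AA)^{\pinv}$-norm and invoking the identity $\AA(\AA^{\trp}\AA)^{\pinv}\AA^{\trp} = \proj_{\AA}$ (which coincides with the paper's $\proj_\AA = \AA(\AA\AA^\trp)^\pinv\AA^\trp$ since both equal the orthogonal projector onto $\im(\AA)$), both use $\AA\xx^* = \proj_{\AA}\cc$ for the second equality, and both derive part~(2) by specializing part~(1) at $\xx=\zero$. Your derivation of the second equality via $\AA(\xx-\xx^*) = \AA\xx - \proj_{\AA}\cc$ is marginally more direct than the paper's term-by-term expansion of the square, but this is a cosmetic difference, not a different argument.
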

For completeness, we prove Fact~\ref{fac:voltageErrorEquiv} in
Appendix~\ref{sec:linalg}.
Fact~\ref{fac:voltageErrorEquiv} explains connection between our
Definition~\ref{def:lsapx}, and the usual convention for measuring
error in the Laplacian solver literature~\cite{SpielmanTengSolver:journal}.
In this setting, we consider a Laplacian matrix $\LL$, which can be written
as $\LL = \AA^{\trp} \AA \in \rea^{n \times n}$, and a vector $\bb$ s.t. $\PPi_{ \AA^{\trp} \AA }
\bb = \bb$.
This condition on $\bb$ is easy to verify in the case of Laplacians,
since for the Laplacian of a connected graph, 
$\PPi_{ \AA^{\trp} \AA } = \II - \frac{1}{n} \one \one^{\trp}$.
Additionally, it is also equivalent to the condition that there exists $\cc$
s.t. $\bb = \AA^{\trp} \cc$.
For Laplacians it is possible to
compute both $\AA$ and a vector $\cc$ s.t. $\bb = \AA^{\trp} \cc$ in
time linear in $\nnz(\LL)$. 
For Laplacian solvers, the approximation error of an approximate solution $\xx$
is measured by the $\eps$ s.t. 
$\norm{\AA^{\trp} \AA \xx - \bb}_{(\AA^{\trp}\AA)^{\pinv}} \leq
\epsilon \norm{\bb}_{(\AA^{\trp} \AA)^{\pinv}}$.
By Fact~\ref{fac:voltageErrorEquiv}, we see that this is exactly
equivalent to $\xx$ being a solution to the {\LSA} instance $(\AA,\cc,\eps)$.


\subsection{Measuring the Difficulty of Solving a Linear System}

Running times for iterative linear system solvers generally depend on
the number of non-zeros in the input matrix, the condition number of
the input matrix, the accuracy, and the bit complexity.

In this section, we formally define several measures of complexity of
the linear systems we use.
This is crucial, because we want to make sure that our reductions do
not rely on mapping into extremely ill-conditioned matrices, and so we
use these measures to show that this is in fact not the case.
\begin{definition}
\label{def:complexity_parameters}
  \noindent
  \begin{enumerate}
  \item Given a matrix $\AA \in \rea^{m \times n}$, we define the
    maximum singular value $\sigma_{\max}(\AA) $ in the usual way as
    $
      \sigma_{\max}(\AA) = \max_{\xx \in \rea^{n} , \xx \neq \zero }
      \sqrt{ \frac{ \xx^{\trp} \AA^{\trp} \AA \xx } {\xx^{\trp} \xx }
      }
      .
    $
  \item Given a matrix $\AA \in \rea^{m \times n}$ which is not all
      zeros, we define the
    minimum non-zero singular value $ \sigma_{\min}(\AA)$ as
    $
      \sigma_{\min}(\AA) = \min_{\xx \in \rea^{n} , \xx \orthog
        \nulls(\AA) } \sqrt{ \frac{ \xx^{\trp} \AA^{\trp} \AA \xx }
        {\xx^{\trp} \xx } }
      .
    $
    \item Given a matrix $\AA \in \rea^{m \times n}$ which is not all
      zeros, we define the non-zero condition number of $\AA$
      as
    $
      \CN(\AA) =
      \frac{ \sigma_{\max}(\AA) }
      { \sigma_{\min}(\AA)} 
      .
    $
  \end{enumerate}
\end{definition}

\begin{definition}
The sparse parameter complexity of an $\LSA$ instance $(\AA,\cc,\eps)$ where ${\AA \in
\bz^{m \times n}}$ and $\nnz(\AA) \geq \max(m,n)$, and $\eps > 0$, is 
\[
\calS(\AA,\cc,\eps) 
\defeq 
\left(\nnz(\AA),
\max\left(
\norm{\AA}_{\max},\norm{\cc}_{\max}, \frac{1}{\anzmin(\AA)},
  \frac{1}{\anzmin(\cc)}
\right),
\CN(\AA),
\eps^{-1}
\right)
.
\]
\end{definition}
Note in the definition above that when $\AA \neq \zero$ and $\cc
\neq \zero$ have only integer
entries, we trivially have $\anzmin(\AA) \geq 1$
and $\anzmin(\cc) \geq 1$. However, including
$\frac{1}{\anzmin(\AA)},$ and
  $\frac{1}{\anzmin(\cc)}$ in the definition stated above is useful
when working with intermediate matrices whose entries are not integer valued.

\subsection{Matrix Classes and Reductions Between Them}
We use the term \emph{matrix class} to refer to an infinite set of
matrices $\matclass$. 
In this section, we formally define a notion of efficient reduction between
linear systems in different classes of matrices.



\begin{definition}[Efficient $f$-reducibility]
Suppose we have two matrix classes $\matclass_{1}$ and
$\matclass_{2}$, and there exist two algorithms $\algo_{1\to2}$
and $\algo_{1\leftarrow2}$
s.t. given an $\LSA$ instance $(\MM^{1},\cc^{1},\eps)$,
 where $\MM^{1} \in \matclass_{1}$, the call
$\algo_{1\to2}(\MM^{1},\cc^{1},\eps_{1})$ returns an $\LSA$ instance
$(\MM^{2},\cc^{2},\eps_{2})$
s.t. if $\xx^{2}$ is a solution to $\LSA$ instance
$(\MM^{2},\cc^{2},\eps_{2})$ then
 $\xx^{1} =
\algo_{1\leftarrow2}(\MM^{1},\MM^{2},\xx^{2})$ is a
solution to $\LSA$ instance $(\MM^{1},\cc^{1},\eps_{1})$.

Consider a function of $f : \rea^{4}_+ \to \rea^{4}_+$ s.t. every
output coordinate is an increasing function of every input coordinate. 
Suppose that we always have
\[
\calS(\MM^{2},\cc^{2},\eps_{2}) \leq f(\calS(\MM^{1},\cc^{1},\eps_{1}))
,
\]
and the running times of $\algo_{1\to2}(\MM^{1},\cc^{1},\eps_{1})$ and
$\algo_{1\leftarrow2}(\MM^{1},\MM^{2},\xx^{2})$ are both bounded
by $O(\nnz(\MM^{1}))$.

Then we say that $\matclass_{1}$ is efficiently $f$-reducible to
$\matclass_{2}$, which we also write as
\[
\matclass_{1} \leq_{f}
\matclass_{2}
.
\]
\end{definition}

\begin{lemma}
\label{lem:composeReductions}
  Suppose $\matclass_{1} \leq_{f}
\matclass_{2}$ and $\matclass_{2} \leq_{g}
\matclass_{3}$.
Then $\matclass_{1} \leq_{g \circ f} \matclass_{3}$.
\end{lemma}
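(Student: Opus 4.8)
The plan is to compose the four algorithms supplied by the two reductions in the obvious way and check that all the required properties survive composition. Concretely, suppose $\matclass_1 \leq_f \matclass_2$ is witnessed by algorithms $\algo_{1\to2}, \algo_{1\leftarrow2}$ and growth function $f$, and $\matclass_2 \leq_g \matclass_3$ is witnessed by $\algo_{2\to3}, \algo_{2\leftarrow3}$ and growth function $g$. I would define the forward algorithm $\algo_{1\to3}$ on input $(\MM^1,\cc^1,\eps_1)$ to first run $\algo_{1\to2}(\MM^1,\cc^1,\eps_1)$, obtaining $(\MM^2,\cc^2,\eps_2)$, and then run $\algo_{2\to3}(\MM^2,\cc^2,\eps_2)$, obtaining $(\MM^3,\cc^3,\eps_3)$, which is the output. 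The backward algorithm $\algo_{1\leftarrow3}$ on input $(\MM^1,\MM^3,\xx^3)$ needs the intermediate matrix $\MM^2$; I would have it recompute $\MM^2$ by rerunning $\algo_{1\to2}$ on $\MM^1$ (we do not have $\cc^1,\eps_1$ here, but the forward map's action on the matrix part can be recovered from $\MM^1$ alone, or alternatively one simply stipulates that $\MM^2$ is passed along — either way this is a cosmetic point), then set $\xx^2 = \algo_{2\leftarrow3}(\MM^2,\MM^3,\xx^3)$ and return $\xx^1 = \algo_{1\leftarrow2}(\MM^1,\MM^2,\xx^2)$.

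The correctness of the solution-lifting is a two-step chaining argument. If $\xx^3$ solves the {\LSA} instance $(\MM^3,\cc^3,\eps_3)$, then by the guarantee of the second reduction $\xx^2 = \algo_{2\leftarrow3}(\MM^2,\MM^3,\xx^3)$ solves $(\MM^2,\cc^2,\eps_2)$; then by the guarantee of the first reduction $\xx^1 = \algo_{1\leftarrow2}(\MM^1,\MM^2,\xx^2)$ solves $(\MM^1,\cc^1,\eps_1)$, as required. For the complexity bound, we have $\calS(\MM^2,\cc^2,\eps_2) \leq f(\calS(\MM^1,\cc^1,\eps_1))$ and $\calS(\MM^3,\cc^3,\eps_3) \leq g(\calS(\MM^2,\cc^2,\eps_2))$; since every output coordinate of $g$ is an increasing function of every input coordinate, the coordinatewise inequality $\calS(\MM^2,\cc^2,\eps_2) \leq f(\calS(\MM^1,\cc^1,\eps_1))$ is preserved by applying $g$, giving $\calS(\MM^3,\cc^3,\eps_3) \leq g(f(\calS(\MM^1,\cc^1,\eps_1))) = (g\circ f)(\calS(\MM^1,\cc^1,\eps_1))$. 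One also notes that $g\circ f$ is again a function $\rea_+^4 \to \rea_+^4$ with every output coordinate increasing in every input coordinate, so it is an admissible growth function.

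Finally, the running time: $\algo_{1\to3}$ costs $O(\nnz(\MM^1))$ for the first call, plus $O(\nnz(\MM^2))$ for the second. Here one uses that $\nnz(\MM^2)$ is the first coordinate of $\calS(\MM^2,\cc^2,\eps_2)$, which is bounded by the first coordinate of $f(\calS(\MM^1,\cc^1,\eps_1))$ — a fixed function of $\calS(\MM^1,\cc^1,\eps_1)$, hence $O(\nnz(\MM^1))$ in the sense used throughout (the $O(\cdot)$ hiding dependence on the other complexity parameters, as is standard in this setting). The same accounting applies to $\algo_{1\leftarrow3}$. I expect the only mildly delicate point to be the bookkeeping around $\algo_{1\leftarrow3}$ needing access to the intermediate instance $(\MM^2,\cc^2,\eps_2)$ — specifically the matrix $\MM^2$ — which is not among its formal arguments; this is handled by observing that the forward reduction's action is determined by $\MM^1$ and can be recomputed within the allotted $O(\nnz(\MM^1))$ budget, so no genuine obstacle arises.
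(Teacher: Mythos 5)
Your proof is correct and takes the same approach as the paper, which dismisses the lemma with a single sentence ("by the trivial composition of the two reductions"); you simply spell out what "trivial" means, including the coordinatewise monotonicity of $g$ needed to chain the $\calS$-bounds and the bookkeeping around $\MM^2$ in the backward map, which the paper glosses over but which is handled exactly as you suggest.
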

\begin{proof}
  The proof is simply by the trivial composition of the two reductions.
\end{proof}

\begin{definition}
We let $\genCl$ denote the class of all matrices with integer
    valued entries s.t. there is at least one non-zero entry in every
    row and column\footnote{If there is a row or column with only
      zeros, then it can always be handled trivially in the context of
    solving linear systems}.
\end{definition}




\section{Main Results}
\label{sec:results}
In this section, we use the notions of sparse parameter complexity and
matrix class reductions to prove our main technical result,
Theorem~\ref{thm:genToMc:param}, which shows that linear systems in
general matrices with integer entries can be efficiently reduced to
linear systems in several different classes of Incidence Structured
Block Matrices.
From this result, we derive as corollary our main result,
Theorem~\ref{thm:runningTimeHardness}, which states that fast high
accuracy solvers for several types of ISBMs imply fast high accuracy
solvers for all linear systems in general matrices with integer
entries.

\begin{theorem}
\label{thm:genToMc:param}
 Let $f(s,U,K,\eps) = (O(s\log(sU)), \poly(U K \eps^{-1} s)  ,  \poly(U K \eps^{-1} s)  , \poly(U K \eps^{-1} s)  )$,
then
\begin{enumerate}
\item 
$
\genCl \leq_{f} \mctwostrictintCl
$.
\item 
$
\genCl \leq_{f} \trusstwoCl
$.
\item 
$
\genCl \leq_{f} \tvtwoCl
$.
\end{enumerate}

\end{theorem}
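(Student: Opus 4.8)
The plan is to prove each of the three reductions by a short chain of intermediate matrix classes, glued together with the composition Lemma~\ref{lem:composeReductions}, so that it suffices to verify a handful of elementary transformations --- each changing $\nnz$ by at most an $O(\log(sU))$ factor (in fact only the main transformation incurs the logarithm) and the remaining three sparse-complexity parameters by at most $\poly(UK\eps^{-1}s)$ factors. The first link, common to all three parts, reduces $\genCl$ to an intermediate class $\genZtwoCl$ of integer matrices whose coefficient rows sum to zero ($\AAZtwo\one = \zero$) and in which the positive coefficients of each row sum to a power of two. This is done by appending to each row a constant number of fresh padding variables, constrained by suitably weighted auxiliary rows to vanish, whose coefficients are chosen first to zero out the row sum and then to pad the positive (hence also the negative) part up to the next power of two; the right-hand side is left untouched. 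Reading off the original coordinates is the reverse map, and since the padding perturbs entry magnitudes, $\anzmin(\cdot)$, and the condition number only polynomially, this link fits $f$.

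The main link is the transformation from $\genZtwoCl$ to $\mctwoCl$, which processes each row independently by the pair-and-replace scheme of the overview. Writing the coefficients in binary, round $t$ pairs up the variables whose lowest set bit is bit $t$ and replaces each pair $\{x,x'\}$ by a new variable $x''$ of twice that bit's weight, recording an auxiliary equation $C(x + x' - 2x'') = 0$; since the positive (and hence negative) coefficient sum stays a power of two, the number of variables with a given lowest set bit is always even, so pairing never stalls, and after $O(\log U)$ rounds the row has collapsed to a single $2$-commodity equation $ax - ax' = \cc_i$. Each recorded auxiliary equation is realized not as a row but by a fixed $10$-row ``$2$-commodity gadget'' on a constant number of fresh variables; stacking the collapsed rows and all the gadget rows gives $\BB$. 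One then checks: (i) $O(s\log(sU))$ new variables and rows in total; (ii) exact solutions correspond in both directions under restriction to the original coordinates; (iii) an $\eps_2$-approximate {\LSA} solution of the normal equation $\BB^{\trp}\BB\yy = \BB^{\trp}\dd$ restricts to an $\eps_1$-approximate {\LSA} solution of $\AA^{\trp}\AA\xx = \AA^{\trp}\cc$ with $\eps_1$ only polynomially smaller than $\eps_2$; and (iv) $\norm{\BB}_{\max}$, $1/\anzmin(\BB)$, and $\CN(\BB)$ are all $\poly(UK\eps^{-1}s)$ --- this is where the constant $C$ and the gadget weights must be chosen with care.

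For part (1) two further links close the chain: forcing the component Laplacians $\LL^1,\LL^2,\LL^{1+2}$ to share a common non-zero pattern, by adding the missing edges with a tiny weight $\delta$, reduces $\mctwoCl$ to $\mctwostrictCl$, and clearing denominators reduces $\mctwostrictCl$ to $\mctwostrictintCl$. Both are near-identities on the solution, but the weight-$\delta$ edges enlarge the column space of $\BB$ (shrink its null space), so $\delta$ must be polynomially small enough that the strict system's normal-equation solution lies within the required {\LSA} tolerance of the non-strict one's, yet large enough that $\CN$ stays $\poly(UK\eps^{-1}s)$. Composing the four links via Lemma~\ref{lem:composeReductions} and bounding the composite by $f$ yields $\genCl \leq_f \mctwostrictintCl$. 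For parts (2) and (3), observe that a row of a strict $2$-commodity incidence matrix --- a scaling of $\uu_i - \uu_j$, $\vv_i - \vv_j$, or $\uu_i - \vv_i - (\uu_j - \vv_j)$ --- is literally a row $w(\ss^i - \ss^j)^{\trp}(\xx^i - \xx^j)$ of a $2$D truss incidence matrix once the points are placed so the relevant difference $\ss^i - \ss^j$ lies along one of $(1,0)$, $(0,1)$, $(1,-1)$ (introducing auxiliary vertices where no single global placement realizes every required direction at once), and admits an analogous realization as a $2$-TV incidence matrix for a suitable choice of the $2\times 2$ weight matrices and the vectors $\rr^i$; these embeddings cost only constant factors, so parts (2) and (3) follow from part (1) by one more application of Lemma~\ref{lem:composeReductions}.

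The step I expect to be the main obstacle is (iii), together with its analogue in the reduction $\mctwoCl \to \mctwostrictCl$: because these are least-squares / pseudo-inverse problems rather than invertible systems, an approximate solution of the target normal equation lives in a space strictly larger than the image we care about, and restricting it back can a priori amplify the residual. Controlling this forces the auxiliary constraints --- and the weight $\delta$ --- to be scaled so that the spurious directions of the target's null space are either inherited exactly from the source or suppressed by a polynomially small factor, and then one must track how that factor propagates through the definition of {\LSA} and through the bound on $\CN$. Once the weighting is pinned down, the combinatorics of pair-and-replace, the explicit $10$-row gadget for $C(x+x'-2x'')=0$, and the truss/TV embeddings are all routine.
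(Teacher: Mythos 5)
Your chain for part (1) matches the paper's ($\genCl \leq_f \genZtwoCl \leq_f \mctwoCl \leq_f \mctwostrictCl \leq_f \mctwostrictintCl$, glued with Lemma~\ref{lem:composeReductions}), and you correctly single out the two real technical hurdles there: controlling the {\LSA} error under restriction to the original coordinates (Lemma~\ref{lem:mctogenerr}, via the integrality lower bound of Lemma~\ref{lem:ProjLengthLower}), and taming the null-space change in $\mctwoCl \to \mctwostrictCl$ by choosing $\delta$ polynomially small. Your route for part (3) is also valid, if slightly indirect: since $\mctwostrictintCl$ is a subclass of $\mctwoCl$, you may tack the $\mctwoCl \to \tvtwoCl$ embedding (the paper's Lemma~\ref{lem:McTwoStrictToTV}, realizing the $1{+}2$ block as $\inc^{\trp}(\WW - \rr\rr^{\trp})\inc$) onto the end of part (1); the paper instead branches at $\mctwoCl$, which saves some factors but is not materially different.

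Part (2) has a genuine gap. You claim that a strict $2$-commodity row is ``literally'' a truss row after a suitable placement of the vertices, with auxiliary vertices absorbing any directional conflicts at constant cost. But for a matrix in $\mctwostrictCl$ (and in particular $\mctwostrictintCl$) \emph{every} connected pair $(i,j)$ carries all three edge types with nonzero weight, so its $2\times 2$ off-diagonal block in the GSBM is generically rank $2$, whereas a $2$D truss contributes at most a rank-one block $w(\ss^i - \ss^j)(\ss^i - \ss^j)^{\trp}$ per pair $(i,j)$ regardless of how many parallel truss members you add between them. No placement of the original vertices, with or without parallel edges, can therefore realize a strict $2$-commodity matrix as a truss on the same vertex set. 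Subdividing edges through fresh vertices produces a \emph{larger} truss whose Schur complement onto the original coordinates is (hopefully) the target --- that is a full reduction with its own error analysis, not a constant-factor re-writing. The paper avoids the rank obstruction entirely by proving $\genZtwoCl \leq_f \trusstwoCl$ directly (Lemma~\ref{lem:McTwoStrictToTruss}, despite its name): the output $\BB$ of $\algGZTtoMCT$ connects each vertex pair by at most one edge type (the gadget of Algorithm~\ref{alg:MC2gadget} uses ten distinct vertex pairs, and Claim~\ref{clm:ipm_distinct_rows}-style merging handles collisions in $\mathcal{A}$), so each underlying edge has a unique prescribed direction; even then a consistent planar embedding requires the random coordinate assignment, Lemma~\ref{lem:trussCoord}, and the bit-disjointness argument of Claim~\ref{clm:trussConvexBits} to rule out degenerate (collinear) configurations. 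Your proposal skips both the rank obstruction and the non-degeneracy argument, and the chain it asserts ($\mctwostrictintCl \to \trusstwoCl$) is precisely the one the structure of the problem blocks.
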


\begin{theorem}
\label{thm:runningTimeHardness}
Suppose we have an algorithm which solves every Linear System
Approximation Problem $(\AA,\cc,\eps)$ with sparse parameter
complexity
$\calS(\AA,\cc,\eps) \leq (s,U,K,\eps^{-1})$ in time $O( s^{a}
 \polylog(s,U,K,\eps^{-1}))$ for some $a \geq 1$, 
whenever  $\AA \in \mathcal{R}$ for at least one of
 $\mathcal{R} \in
\setof{  \mctwostrictintCl, \trusstwoCl, \tvtwoCl }$.
I.e. we have a ``fast'' solver\footnote{The reduction requires only a single linear system
   solve, and uses the solution in a black-box way.
   So the reduction also applies if the solver for the class $\mathcal{R}$
   only works with high probability or only has running time
   guarantees in expectation.} for one of the matrix classes
$\mctwostrictintCl, \trusstwoCl,$
  or $\tvtwoCl$.
 Then every Linear System Approximation Problem $(\AA,\cc,\eps)$ where $\AA \in \genCl$  
 with sparse parameter complexity $\calS(\AA,\cc,\eps) \leq (s,U,K,\eps^{-1})$ can be
 solved
in time $O( s^{a}
 \polylog(s,U,K,\eps^{-1}))$.
\end{theorem}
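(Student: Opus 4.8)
The plan is to deduce Theorem~\ref{thm:runningTimeHardness} directly from Theorem~\ref{thm:genToMc:param} together with the definition of efficient $f$-reducibility, so that all of the real work is already encapsulated in the reductions of Theorem~\ref{thm:genToMc:param}. Fix a Linear System Approximation instance $(\AA,\cc,\eps)$ with $\AA \in \genCl$ and $\calS(\AA,\cc,\eps) \leq (s,U,K,\eps^{-1})$, and let $\mathcal{R}$ be whichever of $\mctwostrictintCl, \trusstwoCl, \tvtwoCl$ the assumed solver handles. By Theorem~\ref{thm:genToMc:param}, $\genCl \leq_{f} \mathcal{R}$ with $f(s,U,K,\eps^{-1}) = (O(s\log(sU)), \poly(UK\eps^{-1}s), \poly(UK\eps^{-1}s), \poly(UK\eps^{-1}s))$, so we are given algorithms $\algo_{1\to2}$ and $\algo_{1\leftarrow2}$ with the properties in that definition.

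First I would run $\algo_{1\to2}(\AA,\cc,\eps)$; by definition this takes $O(\nnz(\AA)) = O(s)$ time and returns an LSA instance $(\MM,\dd,\eps')$ with $\MM \in \mathcal{R}$ and $\calS(\MM,\dd,\eps') \leq f(\calS(\AA,\cc,\eps))$. Setting $s' = \nnz(\MM)$, this gives $s' = O(s\log(sU))$, while $\norm{\MM}_{\max}$, $\norm{\dd}_{\max}$, $1/\anzmin(\MM)$, $1/\anzmin(\dd)$, $\CN(\MM)$ and $(\eps')^{-1}$ are all bounded by $\poly(sUK\eps^{-1})$.

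Next I would call the assumed fast solver on $(\MM,\dd,\eps')$. By hypothesis it runs in time $O((s')^{a}\, \polylog(s', \poly(sUK\eps^{-1})))$; since $a$ is a fixed constant we have $(s')^{a} = O((s\log(sU))^{a}) = O(s^{a}\, \polylog(s,U))$ and $\polylog(\poly(sUK\eps^{-1})) = \polylog(s,U,K,\eps^{-1})$, so this call runs in $O(s^{a}\, \polylog(s,U,K,\eps^{-1}))$ time and returns a solution $\xx^{2}$ to the LSA instance $(\MM,\dd,\eps')$. Finally I would compute $\xx = \algo_{1\leftarrow2}(\AA,\MM,\xx^{2})$ in $O(\nnz(\AA)) = O(s)$ time; by the definition of $f$-reducibility $\xx$ is a solution to the original instance $(\AA,\cc,\eps)$. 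The running time is dominated by the solver call, giving the claimed $O(s^{a}\, \polylog(s,U,K,\eps^{-1}))$ bound, and because the reduction invokes the solver as a single black-box call (cf. the footnote in the theorem statement), the conclusion is unaffected if the solver is randomized or only has expected-time guarantees.

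I do not anticipate a genuine obstacle, since all the substance is in Theorem~\ref{thm:genToMc:param}; the one thing that requires care is verifying that the parameter blow-ups compose without changing the exponent. Concretely, the $\log(sU)$ factor in the nonzero count must contribute only a $\polylog$ overhead once raised to the power $a$, and the polynomial growth of $\norm{\cdot}_{\max}$, $\CN$, and $\eps^{-1}$ must stay inside the solver's $\polylog$ factor. Both are immediate because $a$ is a fixed constant and $f$ is polynomial in all four arguments.
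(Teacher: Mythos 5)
Your proof is correct and takes essentially the same approach as the paper, which simply states that Theorem~\ref{thm:runningTimeHardness} is an immediate corollary of Theorem~\ref{thm:genToMc:param}. You have unwound the definition of efficient $f$-reducibility to make explicit the three steps (map the instance forward, call the black-box solver, map the solution back) and verified that the $O(s\log(sU))$ sparsity blow-up and the $\poly$ parameter blow-ups stay within the allowed $\polylog$ overhead, which is precisely what the paper leaves implicit.
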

\begin{proof}
  The theorem is a immediate corollary of Theorem~\ref{thm:genToMc:param}.
\end{proof}
\begin{definition}
We let $\genCl_{\text{z},2}$ denote the class of all matrices with integer
    valued entries s.t. there is at least one non-zero entry in every
    row and column, and every row has zero row sum, and for each row, 
    the sum of the positive coefficients is a power of 2.
\end{definition}
%
\begin{lemma}
\label{lem:genToZeroSumTwo}
 Let $f(s,U,K,\eps) = \left( O(s), O\left( \epsilon^{-1} s^{9/2}U^3 \right),  O\left( \epsilon^{-1} s^8 U^3 K \right), O \left( s^{5/2} U^2 \epsilon^{-1} \right)  \right)$,
then 
\[
\genCl \leq_{f} \genZtwoCl
.
\]
\end{lemma}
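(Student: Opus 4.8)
The plan is to prove Lemma~\ref{lem:genToZeroSumTwo} by constructing the two algorithms $\algo_{1\to 2}$ and $\algo_{1\leftarrow 2}$ for an efficient $f$-reduction from $\genCl$ to $\genZtwoCl$, and then tracking how the sparse parameter complexity changes. Given an $\LSA$ instance $(\AA,\cc,\eps_1)$ with $\AA \in \genCl$, the transformation proceeds in two conceptual steps applied row by row. First, I make each row sum to zero: for row $i$ with entries $\AA_{i}$ and right-hand side $\cc_i$, I introduce one fresh variable $z_i$ with coefficient $-(\sum_j \AA_{ij})$, so the new row $\AA_i \xx - (\sum_j \AA_{ij}) z_i = \cc_i$ has zero row sum. (If the row sum is already zero, no new variable is needed.) Second, in each such zero-sum row, I want the sum of the positive coefficients to be a power of two. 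The sum $P_i$ of positive coefficients equals the absolute value of the sum of negative coefficients. I scale the entire row (both sides) by a rational factor $2^{\lceil \log_2 P_i \rceil}/P_i \in [1,2)$; after scaling, $P_i$ becomes exactly $2^{\lceil \log_2 P_i \rceil}$. Since different rows get different scalings, the resulting matrix $\AAZtwo$ has rational — not integer — entries at this point; but the definition of $\genZtwoCl$ as quoted requires integer entries, so I would instead clear denominators per row: multiply row $i$ by $P_i$ (so the power-of-two condition reads "sum of positive coefficients $= 2^{\lceil \log_2 P_i\rceil} \cdot P_i / \ldots$" — careful here) — actually the clean move is to scale row $i$ by $2^{\lceil \log_2 P_i \rceil}$ (an integer) and note that then the positive part sums to $2^{\lceil \log_2 P_i\rceil} P_i$, which is $2^{\lceil \log_2 P_i \rceil}$ times an integer, not itself a power of two. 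The honest fix is: the per-row rescaling by $2^{\lceil \log_2 P_i\rceil}/P_i$ must be combined with a global argument that we may assume WLOG each $P_i$ already divides a common power of two — or, more simply, I would reinterpret step 1 of the chain (the first bullet in the numbered list on page 5) as allowing rational intermediate matrices and only requiring integrality at the very end via $\BBtwostrictint$. I will follow the paper's own outline and treat $\AAZtwo$ as having entries in $\frac{1}{P}\bz$ for a bounded $P$; the bound $\norm{\AAZtwo}_{\max}$ grows by at most a factor $2$, and $\anzmin(\AAZtwo)^{-1}$ grows by a factor at most $2\max_i P_i \le 2 s U$.

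The correctness of the reduction maps — that is, defining $\algo_{1\leftarrow 2}$ — is the part requiring care. Writing $\BB$ for the matrix of the zero-sum, power-of-two system and $\dd$ for its right-hand side, I claim that $\AA^{\trp}\AA\xx = \AA^{\trp}\cc$ and $\BB^{\trp}\BB\yy = \BB^{\trp}\dd$ are related by a linear map on the variables: the original variables $\xx$ are a subset of the coordinates of $\yy$, and the fresh variables $z_i$ are determined (in any exact solution) by $z_i = \frac{1}{\sum_j \AA_{ij}}\AA_i\xx$ when $\sum_j\AA_{ij}\neq 0$. The subtlety flagged in the paper is that $\AA$ need not have full column rank, so I cannot simply say "the systems are equivalent"; instead I must verify that $\proj_{\BB}\dd$ relates correctly to $\proj_{\AA}\cc$ and that an $\eps_2$-approximate $\yy$ projects down to an $\eps_1$-approximate $\xx$ with $\eps_1$ only polynomially larger than $\eps_2$. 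Concretely I would: (i) express $\BB$ in block form as $\BB = D \begin{pmatrix} \AA & -\Delta \end{pmatrix}$ where $D$ is the diagonal matrix of per-row scalings and $\Delta$ is the diagonal matrix with $\Delta_{ii} = \sum_j \AA_{ij}$ placing the $z_i$ variables; (ii) show $\nulls(\BB)$ restricted to the $z$-coordinates forces $z_i$ to track $\AA_i\xx$, so $\im(\BB) = \im(D\AA)$ up to the obvious correspondence; (iii) use Fact~\ref{fac:voltageErrorEquiv} to convert the $\LSA$ guarantee into a $\norm{\yy-\yy^*}_{\BB^{\trp}\BB}$ bound and push it through the bounded-norm linear map that extracts $\xx$ from $\yy$, losing factors controlled by $\sigma_{\max},\sigma_{\min}$ of that map — which are in turn bounded by $\poly(sU)$ and $1$ respectively because the map has bounded integer-ish entries. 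This is where the $s^{9/2}U^3$, $s^8 U^3 K$, and $s^{5/2}U^2$ factors in $f$ come from: roughly, the accuracy degrades by $\sigma_{\max}/\sigma_{\min}$ of the extraction map times $\CN(\BB)/\CN(\AA)$, and the condition number of $\BB$ is bounded using $\sigma_{\max}(\BB) \le \norm{\BB}_F \le \poly(sU)\cdot\norm{\AA}_F$ and $\sigma_{\min}(\BB) \ge \sigma_{\min}(\AA)\cdot(\text{smallest scaling})/\poly(s)$, an eigenvalue interlacing / Weyl-type estimate for the augmented matrix.

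The running-time claims are immediate: each row is processed in time proportional to its number of nonzeros plus $O(1)$ for the fresh variable, so $\algo_{1\to2}$ runs in $O(\nnz(\AA))$, and $\algo_{1\leftarrow 2}$ just reads off the $\xx$-coordinates of $\yy$, also $O(\nnz(\AA))$; the new matrix has $\nnz(\BB) = O(s)$ since we add one nonzero per row. The parameter-growth bookkeeping gives $\norm{\BB}_{\max} = O(U)$ (actually $O(U)$ after one doubling, or $O(sU)$ if we clear denominators the crude way — either is absorbed by $f$), $\anzmin(\BB)^{-1} = O(sU)$, $\norm{\dd}_{\max} = O(\norm{\cc}_{\max}\cdot\text{scaling}) = O(sU)$, and $\CN(\BB) = \poly(sUK)$, and finally the required accuracy $\eps_2^{-1} = O(s^{5/2}U^2\eps_1^{-1})$ from the accuracy-loss analysis above — all of which are coordinatewise bounded by $f(s,U,K,\eps_1)$ as stated, with $f$ manifestly monotone increasing in each argument.

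The main obstacle I expect is step (ii)–(iii) of the second paragraph: cleanly relating the pseudoinverse solutions of $\AA^{\trp}\AA\xx = \AA^{\trp}\cc$ and $\BB^{\trp}\BB\yy=\BB^{\trp}\dd$ when $\AA$ is rank-deficient, and in particular showing the accuracy loss is only polynomial. The danger is that augmenting with the $z_i$ variables enlarges the null space in a way that lets an approximate $\yy$ hide large error in directions that, after projecting out $z$, become large error in $\xx$; ruling this out requires the quantitative eigenvalue bounds on the augmented matrix and is the technical heart of the lemma. I would handle it by exhibiting an explicit right inverse (a bounded-norm matrix $M$ with $\xx = M\yy$ sending solutions to solutions) and bounding $\norm{M}_2$ together with a lower bound on how much $\BB^{\trp}\BB$ "sees" the $\xx$-part of any vector orthogonal to $\nulls(\BB)$, i.e. $\norm{M^{\trp}\BB^{\trp}\BB M}$-type estimates, rather than trying to reason about the null spaces abstractly.
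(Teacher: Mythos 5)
Your construction for the zero-row-sum step has a fundamental flaw that makes the reduction vacuous. You introduce a fresh variable $z_i$ per row $i$ whose column in the new matrix $\BB$ is a scalar multiple of $\ee_i$ (a single nonzero in the $i$th row). Whenever $\sum_j \AA_{ij} \neq 0$ for all $i$, these new columns span all of $\rea^m$, so $\im(\BB) = \rea^m$ and $\proj_{\BB} = \II$: the least-squares problem $\min_{\yy} \norm{\BB\yy - \cc}_2$ has optimum \emph{zero}, achieved by \emph{any} $\xx$ (simply set $z_i = (\AA_i\xx - \cc_i)/\sum_j \AA_{ij}$). An $\eps$-approximate solution of the target instance therefore carries no information about $\argmin_{\xx}\norm{\AA\xx - \cc}_2$, and no choice of the extraction map $\algo_{1\leftarrow 2}$ can recover it. (Even when some row sums vanish, the image is still enlarged in a way that changes the minimizer; and your formula $z_i = \AA_i\xx / \sum_j\AA_{ij}$ for the ``exact solution'' is not what minimizes the residual — the minimizer zeroes the residual entirely, not leaves it at $-\cc_i$.) The paper's Algorithm~\ref{alg:GtoGZ} instead appends a \emph{single} column $-\AA\one$, which is a linear combination of the existing columns, so $\im(\AAZ) = \im(\AA)$ exactly; the zero-sum property comes for free from $\AAZ\one = \zero$, and the minimizer is preserved up to the easy shift $\xx = \xxz_{1:n} - \xxz_{n+1}\one$. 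Preserving the image is the key invariant you lose by adding one slack per row.

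The power-of-two step also diverges from the paper. You scale each row by $2^{\lceil \log_2 P_i\rceil}/P_i$, which is rational, and then you acknowledge you do not have a clean way to restore integrality. The paper never rescales rows: Algorithm~\ref{alg:GZtoGZT} appends two integer columns $\aa$ and $-\aa$ (with $\aa_i = 2^{k^*} - \norm{\AAZ_i}_1/2$, which is an integer because $\AAZ$ has zero row sums so $\norm{\AAZ_i}_1$ is even) plus one row $(\zero,\ w,\ -w)$ with a carefully chosen large weight $w$ that pins the two new variables together, so that the normal equations of the augmented system track those of $\AAZ$ up to a controllable Schur-complement error. This avoids rationals entirely and, crucially, the extra row is what makes the analysis of Lemma~\ref{lem:genToPower2_exact_reduction} and Lemma~\ref{lem:genToPower2_eps} go through — it is the analogue, for this step, of the image-preservation you lost in step one. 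Your general strategy in the closing paragraph (exhibit a bounded-norm right inverse and lower-bound how much $\BB^{\trp}\BB$ sees the $\xx$-part) is a reasonable instinct, but it cannot succeed on a construction where $\BB^{\trp}\BB$ does not, in fact, see the $\xx$-part at all.
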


\begin{lemma}
\label{lem:zeroSumTwoToMcTwo}
 Let $f(s,U,K,\eps) = \left( O(s \log (sU)), O \left( s^{3/2}U \log^{1/2}(sU) \right),  O\left( K s^{4} U^2 \log^2 (sU)  \right), O \left(sU^2 \eps^{-1} \right)  \right)$,
then 
\[
 \genZtwoCl \leq_{f} \mctwoCl
.
\]
\end{lemma}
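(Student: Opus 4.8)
The plan is to realise $\algo_{1\to2}$ as the bitwise ``pair\text{-}and\text{-}replace'' scheme applied to each row of $\MM^1\in\genZtwoCl$ independently. Writing a row as $\MM^1_i\xx=\cc^1_i$, we repeatedly pick a pair of current variables $x,x'$ appearing with a common bit set in their (integer) coefficients, introduce a fresh variable $x''$, subtract $C(x+x'-2x'')$ from the row (so that the bit in question is replaced by a single term with a doubled coefficient), and record the auxiliary equation $C(x+x'-2x'')=0$; we then replace each auxiliary equation by a constant\text{-}size gadget of two\text{-}commodity equations (described next), scaled by a carefully chosen weight. The output matrix $\BB$ consists of the finished rows (each now a two\text{-}variable equation) together with all gadget rows, and the right\text{-}hand side $\dd$ agrees with $\cc^1$ on the finished rows and is zero on the gadget rows. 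The reverse map $\algo_{1\leftarrow2}(\MM^1,\BB,\yy)$ simply reads off the coordinates of $\yy$ belonging to the original variables.

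\textbf{The gadget.} The one genuinely new ingredient is expressing the single equation $x+x'=2x''$ using only the three scaled equation types of Definition~\ref{def:MC2}, namely $\uu_i-\uu_j$, $\vv_i-\vv_j$, and $(\uu_i-\vv_i)-(\uu_j-\vv_j)$. Keeping every original and pairing variable in the $\uu$\text{-}slot of its own vertex, say $x=\uu_a$, $x'=\uu_b$, $x''=\uu_c$, introduce four fresh vertices $p_1,p_2,q_1,q_2$ and impose $\uu_{p_1}=\uu_a$, $\uu_{q_1}=\uu_b$, $\uu_{p_2}=\uu_{q_2}=\uu_c$ (four type\text{-}1 equations), $(\uu_{p_1}-\vv_{p_1})=(\uu_{p_2}-\vv_{p_2})$ and $(\uu_{q_1}-\vv_{q_1})=(\uu_{q_2}-\vv_{q_2})$ (two type\text{-}3 equations), and $\vv_{p_1}=\vv_{q_2}$, $\vv_{p_2}=\vv_{q_1}$ (two type\text{-}2 equations). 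Eliminating the $\vv$\text{-}variables forces $x-x''=\vv_{p_1}-\vv_{p_2}=\vv_{q_2}-\vv_{q_1}=-(x'-x'')$, i.e. $x+x'=2x''$, and conversely every $(x,x',x'')$ with $x+x'=2x''$ extends to a solution of the gadget. This is eight equations on eight new variables; at most ten suffice even when one must also cover the cases where some of $x,x',x''$ occupy $\vv$\text{-}slots, so every gadget contributes $O(1)$ rows, each with $O(1)$ nonzeros.

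\textbf{Pairing, counting, and the exact case.} Because $\MM^1\in\genZtwoCl$, the positive (and, since the row sum is zero, the negative) coefficients of each row sum to a power of two; hence in each bit\text{-}round the number of variables with that bit set is even, so a perfect pairing always exists, and the process terminates with a row of the form $2^{k_i}(x-x')=\cc^1_i$, which is a valid two\text{-}commodity equation. A bit\text{-}incidence argument (each original coefficient contributes $O(\log(sU))$ set bits, each pairing variable is born with a single set bit, and each merge consumes two set bits) shows the scheme adds $O(\nnz(\MM^1_i)\log(sU))$ variables and equations to row $i$, so $\nnz(\BB)=O(s\log(sU))$ and the reduction is computed in time linear in this quantity; this gives the first coordinate of $f$. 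In the exact case, any solution of $\MM^1\yy=\cc^1$ lifts to a solution of $\BB\yy=\dd$ by setting each pairing variable to the average of its pair and each gadget\text{-}internal variable to its forced value, and conversely restricting any exact solution of $\BB\yy=\dd$ recovers an exact solution of $\MM^1\yy=\cc^1$ because all auxiliary equations then hold, so every finished row evaluates to its original value.

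\textbf{Approximate case and the main obstacle.} Define the lift $\Psi(\xx)$ exactly as in the exact case (pairing variables $=$ averages, gadget variables $=$ forced values). Since $\Psi(\xx)$ satisfies every gadget equation exactly, the finished rows evaluate to $\MM^1\xx-\cc^1$, so $\norm{\BB\Psi(\xx)-\dd}_2=\norm{\MM^1\xx-\cc^1}_2$; in particular $\norm{(\II-\proj_\BB)\dd}_2\le\norm{(\II-\proj_{\MM^1})\cc^1}_2$ and, using $\norm{\dd}_2=\norm{\cc^1}_2$, we get $\norm{\proj_\BB\dd}_2\ge\norm{\proj_{\MM^1}\cc^1}_2$. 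For a general $\yy$, reading off the residual on the gadget rows and using that each gadget is a fixed, well\text{-}conditioned constant\text{-}size system shows that each auxiliary equation is violated by at most a $O(1)$ multiple of the corresponding gadget residual divided by the gadget weight, which bounds how far $\MM^1(\algo_{1\leftarrow2}(\yy))-\cc^1$ departs from $\BB\yy-\dd$ on the finished rows. The hard part, and the source of the polynomial blow\text{-}ups in the second and third coordinates of $f$, is making this quantitative when $\MM^1$ is column\text{-}rank deficient and $\cc^1\notin\im(\MM^1)$: then $\BB$ and $\MM^1$ genuinely have different null spaces, $\proj_\BB\dd\neq\dd$, and a naive analysis would let an approximate minimizer of $\norm{\BB\yy-\dd}_2$ use slack in the gadget constraints to cancel the component of $\cc^1$ outside $\im(\MM^1)$, amplifying the reconstructed residual relative to $\norm{\proj_{\MM^1}\cc^1}_2$ (which, for integer data, is at least $1/\poly(sU)$ whenever it is nonzero). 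Controlling this requires choosing the gadget weights carefully so that manufacturing such slack is provably no cheaper than the saving it could produce — keeping $\norm{\proj_\BB\dd}_2$ within a $\poly(sU)$ factor of $\norm{\proj_{\MM^1}\cc^1}_2$ — together with a precise description of $\ker(\BB)$ in terms of $\ker(\MM^1)$ and the gadget\text{-}internal null directions (a clean way to set this up is to first pass to an equivalent instance whose right\text{-}hand side lies in the column space). With these relationships in hand, translating between $\ell_2$\text{-}residuals and the $\proj$\text{-}normalised residuals of Definition~\ref{def:lsapx} via Fact~\ref{fac:voltageErrorEquiv}, and bounding $\sigma_{\max}(\BB)$, $\sigma_{\min}(\BB)$, $\anzmin(\BB)$ from the counting and weight choices, yields the stated $f$; the bitwise pairing and the gadget are routine by comparison.
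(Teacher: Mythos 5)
Your high-level plan matches the paper's (pair-and-replace bit scheme, a constant-size two-commodity gadget encoding $x + x' = 2x''$, gadget rows reweighted to make the exact and approximate analyses work), and your counting argument for $\nnz(\BB) = O(s\log(sU))$ is sound. Your 8-equation gadget is a legitimate alternative to the paper's 10-equation one, but one property you don't verify is essential: with the signs as you wrote them, summing the gadget rows does \emph{not} telescope to $2\uu_c - \uu_a - \uu_b$, so the invariant that the gadget rows plus the modified main row sum to the original row residual (the paper's Equation~\eqref{eqn:eqnInvariant}) would fail. The signs can be flipped to make this work, but you must choose them deliberately and check the telescoping, since both the exact reduction and the approximate error analysis hinge on it.

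The larger gap is that you leave the quantitative heart of the lemma undone. The paper's proof requires: (i) the exact two-sided relationship $\norm{\AA\xxa - \cca}_2^2 = \frac{\alpha+1}{\alpha}\min_{\xxaux}\norm{\BB(\xxa;\xxaux)-\ccb}_2^2$, obtained by a Cauchy-Schwarz bound with the specific weight choice $w_i = \alpha\,\abs{S_i}$ \emph{and} an explicit construction (Claim~\ref{clm:errorsObtainable}) assigning auxiliary variables so that all gadget residuals are equal, which is needed to attain the equality case — you show only that your lift $\Psi$ gives $\norm{\BB\Psi(\xxa)-\dd}_2 = \norm{\AA\xxa-\cca}_2$, which yields just an upper bound on the Schur-complement quadratic form, not the tight identity needed to transfer approximate solutions via Fact~\ref{fac:voltageErrorEquiv}; (ii) the Schur-complement identity $\frac{\alpha}{\alpha+1}\AA^{\trp}\AA = (\BB^{\trp}\BB)_S$; and (iii) the kernel characterization of $\BB$ and the $\lambda_{\max}$, $\lambda_{\min}$ bounds on $\BB^{\trp}\BB$, which together give the second, third, and fourth coordinates of $f$ and are where most of the work lies. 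You explicitly flag this as ``the hard part'' and list the ingredients (weight tuning, kernel description, the integer projection lower bound of Lemma~\ref{lem:ProjLengthLower}), but listing ingredients is not the proof; the claimed $f$ is not established without carrying out the spectral and residual bounds. Also minor: the solution map must special-case $\AA^{\trp}\cca = \zero$ (return $\zero$), otherwise the restricted vector need not satisfy the $\LSA$ condition when $\PPi_\AA\cca = \zero$.
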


\begin{lemma}
\label{lem:McTwoToMcTwoStrict}
 Let $f(s,U,K,\eps) = \left( O(s), O \left( \eps^{-1}U^2 K \right), O\left( \eps^{-1} s^2 U^2 K \right), O(\epsilon^{-1})  \right)$,
then 
\[
\mctwoCl \leq_{f} \mctwostrictCl
.
\]
\end{lemma}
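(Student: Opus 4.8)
The plan is to take a linear system $\BB\yy = \dd$ where $\BB \in \mctwoCl$ and add a small set of extra ``padding'' equations so that the three Laplacians $\LL^1,\LL^2,\LL^{1+2}$ underlying $\BB^{\trp}\BB$ share a common non-zero pattern, turning $\BB$ into a strict 2-commodity incidence matrix $\BBtwostrict \in \mctwostrictCl$. Concretely, for every edge (i.e.\ pair $\{i,j\}$) on which $\BB$ has \emph{some} 2-commodity equation but is missing one or two of the three types ($\uu_i-\uu_j$, $\vv_i-\vv_j$, $(\uu_i-\vv_i)-(\uu_j-\vv_j)$), I would append the missing equations with a tiny weight $\eta > 0$ and a right-hand side of $0$. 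After this, $\LL^1,\LL^2,\LL^{1+2}$ all have exactly the same support, so $\BBtwostrict$ is strict. The number of added rows is $O(\nnz(\BB))$, so $\nnz(\BBtwostrict) = O(s)$, giving the first coordinate of $f$. The forward map $\algo_{1\to2}$ just appends these rows; the backward map $\algo_{1\leftarrow2}$ is the identity on the $\yy$ coordinates (the variable set is unchanged), and both run in $O(\nnz(\BB))$ time.

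The substance of the argument is showing that solving the padded system $\BBtwostrict\yy = \ddtwostrict$ approximately yields an approximate solution of $\BB\yy = \dd$, with only a polynomial loss in the accuracy parameter and controlled blow-up of the condition number. I would exploit $\BBtwostrict^{\trp}\BBtwostrict = \BB^{\trp}\BB + \eta^2 \NN$, where $\NN$ is the (fixed, structured) Gram matrix of the added unit-weight rows. The key linear-algebra facts to establish are: (i) $\im(\BBtwostrict^{\trp}) \supseteq \im(\BB^{\trp})$, and in fact, because $\ddtwostrict$ agrees with $\dd$ on the old rows and is zero on the new ones, $\proj_{\BBtwostrict}\ddtwostrict$ relates cleanly to $\proj_{\BB}\dd$; (ii) for $\eta$ small enough — polynomially small in $s, U, K$ — the normal-equation solution $\yy^*$ of the padded system is within the required $\LSA$ tolerance of the normal-equation solution of the original system, since the perturbation $\eta^2\NN$ moves the minimizer of $\norm{\BBtwostrict\yy - \ddtwostrict}_2^2$ by an amount bounded in terms of $\eta^2 \norm{\NN}_2$, $\sigma_{\min}(\BB)$, and $\norm{\proj_{\BB}\dd}_2$. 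I would use Fact~\ref{fac:voltageErrorEquiv} to convert between the residual-norm formulation and the $\norm{\yy - \yy^*}_{\BB^{\trp}\BB}$ formulation, which is the natural one for tracking how a perturbation of the operator perturbs the solution.

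The main obstacle — and the reason the lemma's statement flags ``reason very carefully about the changes to the null space'' — is that adding the padding rows \emph{shrinks} the null space of $\BB$: a vector $\yy$ that was in $\nulls(\BB)$ need not be in $\nulls(\BBtwostrict)$. So $\im(\BBtwostrict^{\trp})$ is strictly larger than $\im(\BB^{\trp})$, and an approximate solution to the padded normal equations may have a component in this new part of the image that has no counterpart in the original problem. The fix is to argue that this new component is suppressed by a factor of $\eta^2$ in the padded system (it only appears through the $\eta^2\NN$ term), so by choosing $\eta = \poly(\eps/(sUK))^{-1}$ sufficiently small, projecting the returned $\yy$ back — which here is just reading off the same coordinates, since no variables were added — still lands inside the $\LSA$ ball for $(\BB,\dd,\eps)$. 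Quantitatively this forces the accuracy parameter $\eps_2$ of the target instance to be $O(\eps_1)$ (matching the last coordinate of $f$), and the condition number blow-up to $O(\eps^{-1}s^2U^2K)$ comes from bounding $\sigma_{\min}(\BBtwostrict)$ from below by $\eta\,\sigma_{\min}(\text{padding rows})$ and $\sigma_{\max}(\BBtwostrict) \le \sigma_{\max}(\BB) + \eta\,\sigma_{\max}(\text{padding})$. I would carry out the construction first, then the support-equality check, then the null-space/image comparison, and finally the perturbation bound that fixes the value of $\eta$ and reads off $f$.
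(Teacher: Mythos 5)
Your construction (pad every existing edge so it carries all three equation types at a tiny weight $\eta$, keep the variable set, map solutions back via the identity) is exactly the one the paper uses, and the high-level perturbation framework — writing $\BBtwostrict^{\trp}\BBtwostrict = \BB^{\trp}\BB + \eta^2\NN$, splitting via the triangle inequality, and using $\BB^{\trp}\BB \preccurlyeq (\BBtwostrict)^{\trp}\BBtwostrict$ — is also the paper's. However, two steps that actually make the parameter bookkeeping close are missing or understated.

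First, your plan produces a bound on the minimizer shift in terms of $\eta$, $\kappa(\BB)$, and $\norm{\dd}_2$ (equivalently $\norm{\ccb}_2$), which is an \emph{absolute} error. To turn this into the \emph{relative} $\LSA$ guarantee you need a lower bound on $\norm{\PPi_{\BB}\dd}_2$, and there is no a priori reason this can't be tiny compared to $\norm{\dd}_2$. The paper gets this from the integrality inherited through the chain of reductions: $\norm{\BB^{\trp}\dd}$ is integral, hence $\norm{\PPi_{\BB}\dd}_2 \ge 1/\sigma_{\max}(\BB)$ (Lemma~\ref{lem:ProjLengthLower}). This lower bound is precisely what fixes the magnitude of $\eta$ (the paper takes $\eta = \eps_1/(100\kappa(\BB)\sigma_{\max}(\BB)\norm{\ccb}_2)$) and allows $\eps_2 = \Theta(\eps_1)$; without it your choice of $\eta$ is not well-defined.

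Second, the claim that $\sigma_{\min}(\BBtwostrict) \gtrsim \eta\,\sigma_{\min}(\text{padding rows})$ does not go through: the padding rows on their own have a large null space (they only constrain each edge, not the whole graph), so their minimum singular value controls nothing by itself. The paper's actual argument (Lemma~\ref{lem:MC2ToMC2StrictLambdaMin}) compares $(\BBtwostrict)^{\trp}\BBtwostrict$ against $\delta^2\,\LL_G \otimes \CC$, where $\LL_G$ is the unit-weight Laplacian of the underlying graph and $\CC$ is the $2\times 2$ matrix obtained by summing the three edge-type blocks; the crucial observation is that $\CC$ is \emph{strictly} positive definite ($\lambda_{\min}(\CC)=1$), so the bound reduces to the Laplacian $\lambda_{\min}$, which is $\Omega(1/\nb^2)$. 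This ``the three types together form a PD block'' observation is the structural reason the strictness assumption helps, and it is what yields the $O(\eps^{-1}s^2U^2K)$ condition-number coordinate of $f$. You should carry out the exact-minimizer perturbation bound with the explicit requirement $\xxbopt \perp \nulls(\BB)$ (so that $\norm{\BBtil\xxbopt}_2$ can be bounded via $\lambda_{\min}(\BB^{\trp}\BB)$), then combine with these two ingredients to read off the advertised $f$.
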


\begin{lemma}
\label{lem:McTwoStrictToMcTwoStrictInt}
 Let $f(s,U,K,\eps) = (s, \eps^{-1}sU,  2K  , O(\epsilon^{-1}) )$,
then 
\[
\mctwostrictCl \leq_{f} \mctwostrictintCl
.
\]
\end{lemma}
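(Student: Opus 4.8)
The plan is to take a strict 2-commodity incidence matrix $\BB \in \mctwostrictCl$ with (possibly non-integer) rational entries and rescale and round it to produce a matrix $\BBtwostrictint \in \mctwostrictintCl$ with integer entries, while controlling how the $\LSA$ solution transforms. The first step is a global rescaling: since $\BB$ has only three types of rows, each of the form $w(\uu_i-\uu_j)$, $w(\vv_i-\vv_j)$, or $w(\uu_i-\vv_i-\uu_j+\vv_j)$ for a real weight $w$, multiplying the entire matrix $\BB$ by a single positive scalar $t$ does not change the solution set of the $\LSA$ problem at all: $\norm{t\BB\xx - \proj_{t\BB}(t\dd)}_2 = t\norm{\BB\xx - \proj_{\BB}\dd}_2$ and $\proj_{t\BB} = \proj_{\BB}$, so $(\BB,\dd,\eps)$ and $(t\BB, t\dd, \eps)$ have exactly the same solutions, and the condition number is unchanged. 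So $\algo_{1\leftarrow2}$ can simply return $\xx^2$ unchanged, and it remains only to choose $t$ and a rounding so that $t\BB$ (after rounding) is integral, strict, and has controlled parameters.

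The key step is the choice of scaling and the argument that rounding is harmless. Because $\BB$ arises from our earlier reductions, its nonzero entries are rationals with bounded numerators and denominators; writing each weight as a ratio of integers, let $t$ be (a bound on) the least common multiple of all denominators, so that $t\BB$ has integer entries. Here I would appeal to the fact — which must be tracked through the earlier lemmas, or argued directly from the structure of the construction — that the weights produced have a common denominator of size $\poly(sU\eps^{-1})$, which yields $t = \poly(sU\eps^{-1})$ and hence $\norm{t\BB}_{\max} \leq t\norm{\BB}_{\max} = \poly(sU\eps^{-1})$, giving the claimed bound $\eps^{-1}sU$ on the second coordinate of $f$ (after absorbing constants and using that the input complexity already bounds $U$ and $\eps^{-1}$). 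Crucially, scaling by an integer $t$ and the fact that all entries of $t\BB$ are already integers means no rounding error is introduced at all, so the $\LSA$ instance is preserved exactly with $\eps_2 = \eps_1$ — consistent with the fourth coordinate $O(\eps^{-1})$ of $f$. Multiplying every row by the common factor $t$ preserves the strict property (the nonzero pattern of each of $\LL^1$, $\LL^2$, $\LL^{1+2}$ is untouched by a uniform positive scaling), and $\nnz$ is unchanged, matching the first coordinate $s$. The condition number satisfies $\CN(t\BB) = \CN(\BB) \leq K$, but the stated bound is $2K$, giving slack; I suspect the factor $2$ is there to absorb the case where the earlier reduction leaves a few entries that are not exactly representable and must be perturbed slightly, in which case a standard eigenvalue-perturbation bound ($\sigma_{\min}$ and $\sigma_{\max}$ each move by at most a relative $\poly(\eps)$ amount under a tiny entrywise perturbation) keeps $\CN$ within a factor $2$ of $K$ and changes the $\LSA$ error by at most a constant factor.

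The main obstacle I anticipate is not any single inequality but the bookkeeping: verifying that the weights emerging from Lemmas~\ref{lem:genToZeroSumTwo}, \ref{lem:zeroSumTwoToMcTwo}, and \ref{lem:McTwoToMcTwoStrict} are rationals with a provable common denominator bound of the right polynomial size, and that the error-preservation argument for the $\LSA$ problem goes through cleanly — in particular that $\proj$ and the normal-equation solution behave well under the scaling, using Fact~\ref{fac:voltageErrorEquiv} to translate between the $\LSA$ error and the $\AA^\trp\AA$-norm error if a perturbation argument is needed. Since $\algo_{1\leftarrow2}$ is the identity map (or a trivial rescaling of $\xx^2$), it runs in $O(\nnz(\BB))$ time, and computing $t\BB$ is also $O(\nnz(\BB))$, so the efficiency requirement of $f$-reducibility is immediate. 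The composition with the previous three lemmas via Lemma~\ref{lem:composeReductions} then completes the chain $\genCl \leq_f \mctwostrictintCl$.
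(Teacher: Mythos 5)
Your central assumption — that the entries of $\BBtwostrict$ are rationals with a provably small common denominator, so that a single integer scaling $t$ makes $t\BBtwostrict$ integral with no rounding error — does not hold, and this is not a side case but the heart of the matter. First, the lemma asserts $\mctwostrictCl \leq_f \mctwostrictintCl$ as a reduction between matrix classes, so the map must handle \emph{every} matrix in $\mctwostrictCl$, whose weights $w$ are arbitrary reals (Definition~\ref{def:MC2}); there need not be any finite common denominator. Second, even restricting attention to matrices produced by the earlier reductions, the weights are generically irrational: Algorithm~\ref{alg:GZ2toMC2} applies factors $w_i^{1/2} = \sqrt{\alpha\,|\mathcal{B}_i|}$, and Algorithm~\ref{alg:MCTtoMCTS} sets $\wstrict = \eps_1/(100\,\kappa(\BB)\,\sigma_{\max}(\BB)\,\norm{\ccb}_2)$, which involves singular values and Euclidean norms and has no reason to be rational. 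So "no rounding error is introduced" and $\eps_2 = \eps_1$ are simply false, and the $2K$ and $O(\eps^{-1})$ entries of $f$ are not slack to absorb a corner case — they are the whole point.

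The paper's proof does what you relegate to a suspicion at the end: it multiplies by a power of two $2^k$ with $k = \lceil \log_2(s/\eps_1)\rceil + 2$, then rounds each entry up to the nearest integer. The actual content is then a PSD comparison: writing $\MM = (\BBtwostrict)^\trp\BBtwostrict$ and $\MMhat$ for the rounded-and-rescaled normal-equations matrix, one shows $\alpha\MM \pleq \MMhat \pleq \beta\MM$ with $\alpha\ge 1$ and $\beta\le 1+2^{-k+2}$ (this uses that the smallest nonzero weight is at least $\wstrict$, a quantity $\ge 1$ after the uniform scaling, so that the rounding of a single entry is a small relative perturbation). From that sandwich one gets $\kappa(\MMhat)\le(1+2^{-k+2})\kappa(\MM)\le 2K$, a bound $\norm{\yy^*-\xx^*}_{\MM} \le 2^{-k+2}\norm{(\BBtwostrict)^\trp\ccbtwostrict}_{\MM^\dagger}$ on the shift of the exact solution, and then a triangle-inequality argument in the $\MM$-norm translating an $\eps_2$-accurate solution of the integral system into an $(\eps_2 + 2^{-k+2}(1+\eps_2))$-accurate solution of the original, which is $\le\eps_1$ for the chosen $k$. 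You would need to supply all of this to close the gap; "a standard eigenvalue-perturbation bound keeps $\CN$ within a factor $2$" is not enough because you have not exhibited the perturbation, chosen the scale, or tied the perturbation magnitude to $\eps_1$, $s$, and the lower bound on the entries, all of which the ceiling-plus-power-of-two construction is engineered to do.

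The parts of your write-up that are correct are the easy ones: scaling $\BB\mapsto t\BB$ (with $t>0$) preserves $\proj_{\AA}$, the $\LSA$ solution set, $\nnz$, and the condition number; $\algo_{1\leftarrow2}$ can be the identity; and the running time is linear in $\nnz$. But these observations do not get you to an integer matrix from a real one — only rounding does, and the rounding analysis is missing.
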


\begin{lemma}
\label{lem:McTwoStrictToTruss}
 Let $f(s,U,K, \eps)$ be as defined in Lemma~\ref{lem:zeroSumTwoToMcTwo}
then 
\[
 \genZtwoCl \leq_{f} \trusstwoCl
.
\]
\end{lemma}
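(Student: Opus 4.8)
\emph{Proof plan.}
The plan is to re-run the reduction of Lemma~\ref{lem:zeroSumTwoToMcTwo} essentially unchanged, replacing only the target class of the constant-size gadgets. Recall that in that reduction each row of the input $\genZtwoCl$-system is processed independently by the pair-and-replace scheme: one repeatedly picks a pair of live variables $x,x'$, introduces a fresh variable $x''$ together with an auxiliary equation $C(2x''-x-x')=0$, and substitutes, so that after $O(\log(sU))$ rounds the row collapses to a single two-term equation $ax-ax'=\cc_{i}$. The only places where the target class $\mctwoCl$ was used are (i) realizing each auxiliary equation $C(2x''-x-x')=0$ by a constant-size gadget of target-class equations over $O(1)$ fresh variables, and (ii) observing that the collapsed equation $ax-ax'=\cc_{i}$ is itself a target-class equation. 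For $\trusstwoCl$, part (ii) is again immediate, since $ax-ax'=\cc_{i}$ is the 2D-truss equation with $\ss^{i}-\ss^{j}=(1,0)$. So everything reduces to producing a truss gadget for $C(2x''-x-x')=0$ and then showing that all the truss edges introduced across the whole construction can be realized by one polynomially bounded placement of the vertices in $\rea^{2}$.

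For the gadget, maintain the convention that each live variable $z$ is the first coordinate $\xx^{V_{z}}_{1}$ of a dedicated vertex $V_{z}$. To realize $C(2x''-x-x')=0$ I would introduce three fresh vertices $A,B,C$, declare $x'':=\xx^{C}_{1}$, and add five truss edges, all scaled by $C$: two ``copy'' edges $V_{x}$--$A$ and $V_{x'}$--$B$ with direction $(1,0)$, which force $\xx^{A}_{1}=x$ and $\xx^{B}_{1}=x'$, and three ``triangle'' edges $A$--$B$, $A$--$C$, $B$--$C$ with directions $(2,1)$, $(1,1)$, $(3,1)$. Writing $p=\xx^{A}_{2},q=\xx^{B}_{2},r=\xx^{C}_{2}$, the triangle equations read $p-q=2(x'-x)$, $p-r=x''-x$, $q-r=3(x''-x')$; eliminating $p,q,r$ leaves exactly $2x''=x+x'$, and conversely any assignment with $2x''=x+x'$ extends to a solution of the gadget with $(p,q,r)$ free up to a common additive shift. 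Hence the gadget enforces precisely the auxiliary equation and contributes exactly one extra null-space direction, just as in Lemma~\ref{lem:zeroSumTwoToMcTwo}; with the scalings $C$ chosen as there, the control of the change in the null space, the forward/backward solution maps, and the bookkeeping on $\nnz$, $\norm{\cdot}_{\max}$, $\anzmin$, and $\CN$ then go through by the same arguments, yielding the same $f$. (All-zero columns arising from unused second coordinates of live-variable vertices are handled trivially, as permitted by the definition of $\genCl$.)

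The remaining, and main, task is to exhibit a consistent placement $\ss$ of all vertices into $\rea^{2}$ realizing every prescribed edge direction, with coordinates of magnitude $\poly(sU)$ and no degenerate edge; this is the real content, since $\trusstwoCl$ literally requires the matrix to arise from such a placement. I would build $\ss$ by processing the pairings in the order performed. Original-variable vertices are placed first on a common vertical line, at heights chosen generically so that all heights arising anywhere in the construction are pairwise distinct and of magnitude $\poly(sU)$. When a pairing of $x,x'$ is processed, $V_{x},V_{x'}$ are already placed at distinct heights $h_{1}\neq h_{2}$; place $A$ at height $h_{1}$ and $B$ at height $h_{2}$ with horizontal offsets chosen generically so that the copy edges are nondegenerate and $\ss^{A}-\ss^{B}$ is parallel to $(2,1)$ (which forces the horizontal separation of $A,B$ to equal $2(h_{1}-h_{2})\neq 0$), and then $\ss^{C}$ is pinned as the unique intersection of the line through $A$ of direction $(1,1)$ with the line through $B$ of direction $(3,1)$; a short computation shows this is consistent with the $A$--$B$ edge and forces $C$'s height to be the average of $h_{1}$ and $h_{2}$. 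Since the only cycles in the edge graph are the gadget triangles, whose direction constraints are simultaneously satisfiable, the construction never reaches an inconsistency, and a generic choice of the $O(1)$ free parameters at each step avoids the finitely many coincidences that would make an edge degenerate. As each row is processed in $O(\log(sU))$ rounds and every new height is an average of two earlier ones, all heights are dyadic rationals with denominator at most $2^{O(\log(sU))}=\poly(sU)$; clearing denominators makes $\ss$ integral with direction entries of magnitude $\poly(sU)$, which keeps every sparse-parameter-complexity coordinate within the bound prescribed by $f$. The hard part will be exactly this last step: certifying that a single global placement consistent with all prescribed directions exists and can be chosen both nondegenerate and polynomially bounded, while the gadget algebra and error analysis are routine adaptations of Lemma~\ref{lem:zeroSumTwoToMcTwo}.
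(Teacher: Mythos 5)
Your approach is genuinely different from the paper's. The paper builds no new gadget at all: it observes that the matrix $\BB$ output by $\algGZTtoMCT$ is \emph{already} a 2D truss incidence matrix, because the three types of two-commodity equation map to truss edges in the fixed directions $(1,0)$, $(0,1)$, $(1,-1)$, and the $\mctwoCl$-gadget of Algorithm~\ref{alg:MC2gadget} has the geometric realization drawn in Figure~\ref{fig:trussGadget}. With that observation Lemma~\ref{lem:McTwoStrictToTruss} inherits the $f$ of Lemma~\ref{lem:zeroSumTwoToMcTwo} with no extra quantitative work; the only real content of Section~\ref{sec:trusses} is constructing a consistent nondegenerate placement, which the paper does by choosing original heights randomly on a large sphere and rounding to bounded precision, letting new heights be forced by averaging, and proving the combinatorial bit-string Claim~\ref{clm:trussConvexBits} together with Lemma~\ref{lem:trussCoord} so that, with high probability, the two heights involved in every pairing are distinct. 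You instead re-run the pair-and-replace scheme with a bespoke five-equation truss gadget over directions $(1,0)$, $(2,1)$, $(1,1)$, $(3,1)$. The gadget algebra you give is correct: eliminating the free second coordinates leaves exactly $2x''=x+x'$ with a one-dimensional kernel. This buys a smaller, cleaner gadget, but gives up the ``for free'' transfer of quantitative guarantees: Claim~\ref{clm:errorsObtainable}, Lemmas~\ref{lem:GZToMC2LambdaMax}, \ref{lem:GZToMC2LambdaMin}, and the null-space description of Lemma~\ref{lem:GZtoMC2Null} were all proved for the specific $\mctwoCl$-gadget, and each would need to be re-derived for yours; asserting ``the same arguments'' is plausible but substantially more work than the paper incurs by reusing the identical matrix.

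Two of your steps do not go through as written. First, the placement argument is too weak: you ask for heights ``chosen generically so that all heights arising anywhere in the construction are pairwise distinct,'' but the heights that arise are specific dyadic convex combinations of the original heights, with denominators and numerators dictated by the binary expansions of the row coefficients, and genericity alone yields neither the particular nonvanishing conditions you need nor the polynomial precision bound. This is exactly what Claim~\ref{clm:trussConvexBits} is for, and your proposal has nothing playing its role. Second, the claim that ``the only cycles in the edge graph are the gadget triangles'' is false: the final main-equation edge for each row closes a cycle between the positive- and negative-side gadget subtrees, and original vertices shared by several rows close longer cycles threading through the gadget trees of different rows; the direction consistency around such cycles is a genuine constraint that a greedy, row-by-row placement does not obviously satisfy. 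Finally, a point that deserves explicit attention in both your version and the paper's: the collapsed equation $ax-ax'=\cc_i$ is realized as a $(1,0)$-direction truss edge, which forces its two endpoints to lie at \emph{equal} height, and this appears to be in tension with the pairwise-distinctness of heights that the randomized placement is engineered to guarantee. You should reconcile this explicitly rather than treat part (ii) as immediate.
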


\begin{lemma}
\label{lem:McTwoStrictToTV}
 Let $f(s,U,K, \eps) = (s,U,K,\eps^{-1})$,
then 
\[
 \mctwoCl \leq_{f} \tvtwoCl
.
\]
\end{lemma}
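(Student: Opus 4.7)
The plan is to show $\mctwoCl \subseteq \tvtwoCl$ in the sense that the GSBM of every 2-commodity incidence matrix admits a TV decomposition with $2\times 2$ blocks, so that the reduction becomes the identity map on the LSA instance. Concretely, I would take $\algo_{1\to 2}(\BB,\cc,\eps) = (\BB,\cc,\eps)$ and $\algo_{1\leftarrow 2}(\BB,\BBtil,\yy) = \yy$, which preserves every coordinate of the sparse parameter complexity exactly and hence satisfies $f(s,U,K,\eps) = (s,U,K,\eps^{-1})$. The only work is to exhibit, for each $\BB \in \mctwoCl$, a valid partition together with class-wise data $(\WW^k, \rr^k)$ realising $\BB^{\trp}\BB$ in the form of Definition~\ref{def:TV}.

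To build this decomposition I work row by row on a multigraph whose vertices are the $2n$ scalar variables $\uu_1, \vv_1, \ldots, \uu_n, \vv_n$, creating fresh edges per row so that the partition property is automatic. For a type-1 row $\bb_k$ representing $w(\uu_i - \uu_j)$ (type 2 is analogous), whose GSBM contribution is $w^2 \bb_k \bb_k^{\trp}$, I place two parallel copies of the edge $(\uu_i,\uu_j)$ into one class $E_k$ with $\WW^k = (w^2/2)\II_2$ and $\rr^k = \zero$, giving $(\BB^k)^{\trp}\WW^k\BB^k = w^2 \bb_k \bb_k^{\trp}$. For a type-3 row $\bb_k$ representing $w(\uu_i - \vv_i - \uu_j + \vv_j)$, I write $\bb_k = w(\pp - \qq)$, where $\pp, \qq$ are the ordinary $\pm 1$ edge-vertex rows for the edges $(\uu_i,\uu_j)$ and $(\vv_i,\vv_j)$, put both edges in one class $E_k$ with $\WW^k = 2w^2 \II_2$ and $\rr^k = (w,w)^{\trp}$, and check by a short expansion that $(\BB^k)^{\trp}(\WW^k - \rr^k(\rr^k)^{\trp})\BB^k = w^2(\pp - \qq)(\pp - \qq)^{\trp} = w^2 \bb_k \bb_k^{\trp}$, together with the PSD condition $\WW^k \pgeq \rr^k(\rr^k)^{\trp}$, which follows from $\WW^k - \rr^k(\rr^k)^{\trp} = w^2\left(\begin{smallmatrix}1 & -1 \\ -1 & 1\end{smallmatrix}\right)$.

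Summing over rows realises $\BB^{\trp}\BB$ as a TV matrix with $2\times 2$ blocks, so $\BB \in \tvtwoCl$. Since the reduction is literally the identity ($\BBtil = \BB$, $\cctil = \cc$, $\eps' = \eps$), LSA correctness is immediate from Definition~\ref{def:lsapx}, and the sparse parameter complexity is preserved coordinate-wise, giving the claimed $f$. I do not foresee any substantive obstacle: the entire argument reduces to the one-line algebraic identity for the type-3 rank-one correction and a trivial PSD check, with the type-1/2 case amounting to rewriting a single weighted edge as two half-weight parallel edges.
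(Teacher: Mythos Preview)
Your proposal is correct and follows essentially the same approach as the paper: both use the identity reduction and verify that every 2-commodity Laplacian is a TV matrix with $2\times 2$ blocks, with the key algebraic identity for the type-$(1{+}2)$ edge being $\WW - \rr\rr^{\trp} = w^2\bigl(\begin{smallmatrix}1 & -1\\ -1 & 1\end{smallmatrix}\bigr)$ (the paper's Claim in Section~\ref{subsec:IPMIsotropicTV} does exactly this computation for unit weight). Your treatment is in fact slightly more careful than the paper's: you explicitly handle the type-1/type-2 rows by doubling the edge to force a $2\times 2$ block, whereas the paper simply remarks that these ``correspond to individual edges'' without spelling out how they fit the $\tvtwoCl$ definition.
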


\begin{proof}[Proof of Theorem~\ref{thm:genToMc:param}]
Follows by appropriate composition (Lemma~\ref{lem:composeReductions})
applied to the the Lemmas above,
i.e. \ref{lem:genToZeroSumTwo}, 
\ref{lem:zeroSumTwoToMcTwo}, \ref{lem:McTwoToMcTwoStrict}, 
\ref{lem:McTwoStrictToMcTwoStrictInt}, 
\ref{lem:McTwoStrictToTruss}
and
\ref{lem:McTwoStrictToTV}.



\end{proof}

\subsection{Outline of Remaining Sections}

In Section~\ref{sec:GZ2toMC2} is presents the proof of
Lemma~\ref{lem:zeroSumTwoToMcTwo}, i.e. $\genZtwoCl \leq_{f} \mctwoCl$. This
statement is our most important reduction.
In Section~\ref{sec:McTwoToMcTwoStrict}, we prove
Lemma~\ref{lem:zeroSumTwoToMcTwo}, the surprising reduction 
$\mctwoCl \leq_{h} \mctwostrictCl$, which shows that we can solve
\emph{normal equations} even while changing null-spaces involved
substantially.
In Section~\ref{sec:McTwoToMcTwoStrict}, we show how to round weights
in a 2-commodity problem to integers, proving Lemma~\ref{lem:McTwoStrictToMcTwoStrictInt}.
In Section~\ref{sec:genToZeroSumTwo} is presents the proof of
Lemma~\ref{lem:genToZeroSumTwo}. This is a
simpler reduction that we use to establish the properties that our
Lemma~\ref{lem:zeroSumTwoToMcTwo} relies on.
In Section~\ref{sec:trusses}, we present the proof of
Lemma~\ref{lem:McTwoStrictToTruss}.
Section~\ref{sec:ipm} describes how Interior
Point Methods give rise to multi-commodity and Total
Variation linear systems.
Section~\ref{sec:ipm} also contains a proof of Lemma~\ref{lem:McTwoStrictToTV}.


\newcommand{\algGZTtoMCT}{\ensuremath{\textsc{Reduce\,}\genZtwoCl\textsc{to}\mctwoCl}}
\newcommand{\algMCTtoGZTsoln}{\ensuremath{\textsc{MapSoln\,}\mctwoCl\textsc{to}\genZtwoCl}}
\newcommand{\algMCToGZGadget}{\ensuremath{\mctwoCl\textsc{Gadget}}}
\newcommand{\xxa}{\xx^{\text{A}}}
\newcommand{\xxb}{\xx^{\text{B}}}
\newcommand{\xxaux}{\xx^{\text{aux}}}
\newcommand{\cca}{\cc^{\text{A}}}
\newcommand{\ccb}{\cc^{\text{B}}}
\newcommand{\epsa}{\eps^{\text{A}}}
\newcommand{\epsb}{\eps^{\text{B}}}
\newcommand{\xxbopt}{\xx^{\text{B}*}}
\newcommand{\xxaopt}{\xx^{\text{A}*}}
\newcommand{\xxatil}{\widetilde{\xx}^{\text{A}}}
\newcommand{\xxauxopt}{\xx^{\text{aux}*}}
\newcommand{\nb}{n_B}
\newcommand{\na}{n_A}
\newcommand{\vecppi}{\pp^i}
\newcommand{\qqj}{\qq^j}
\newcommand{\ppxa}{\pp(\xxa)}
\newcommand{\yyb}{\yy^{\text{B}}}
\newcommand{\yya}{\yy^{\text{A}}}
\newcommand{\yyaux}{\yy^{\text{aux}}}
\newcommand{\zzb}{\zz^{\text{B}}}
\newcommand{\zza}{\zz^{\text{A}}}
\newcommand{\zzaux}{\zz^{\text{aux}}}

\section{Reducing Zero-Sum Power Two Linear Systems to Two-Commodity
  Linear Systems}
\label{sec:GZ2toMC2}

To prove Lemma~\ref{lem:zeroSumTwoToMcTwo},
we need to provide mapping algorithms $\algo_{\genZtwoCl \to \mctwoCl}$
for mapping linear system approximation (\LSA) instances over matrices
in $\genZtwoCl$ to $\LSA$ problem instances over matrices in $\mctwoCl$,
as well as $\algo_{\mctwoCl \leftarrow \genZtwoCl}$ for mapping
the resulting solutions back.
These leads to the following main components:
\begin{itemize}
\item 
Algorithm~\ref{alg:GZ2toMC2} states the pseudo-code for the
  algorithm~\algGZTtoMCT, which implements the desired mapping of
  problem instances. 
Given $\LSA$ problem instance $(\AA,\cca,\epsa)$ where
$\AA \in \genZtwoCl$,
the call  \algGZTtoMCT$(\AA,\cca,\epsa)$ returns an $\LSA$
problem instance $(\BB,\cc^{B},\epsb)$
where $\BB \in \mctwoCl$.
(Strictly speaking, $\algGZTtoMCT$ also has a parameter $\alpha$ which
we will set before using the algorithm.)
\item
 Algorithm~\ref{alg:MC2gadget} provides the pseudo-code for
 $\mctwoCl$Gadget, a short subroutine used in \algGZTtoMCT to
 represent the equation $a + b = 2 c$ using two-commodity constraints.
 
\item
  Algorithm~\ref{alg:GZ2toMC2solnback} provides the (trivial)
  pseudo-code for {\algMCTtoGZTsoln} used to map solutions to $\LSA$
  problems over $\mctwoCl$ back to solutions over $\genZtwoCl$
  by restricting onto the original variables.
\end{itemize}

Pseducode of the key reduction routine that creates the new linear system,
$\algGZTtoMCT$, is shown in Algorithm~\ref{alg:GZ2toMC2}.
Note that given two singleton multi-sets each containing a single equation, e.g.
$\setof{\aa_{1}^{\trp} \xx = c_1}$ and $\setof{\aa_{2}^{\trp} \xx  =
  c_2}$ where $\aa_1, \aa_2$ are vectors, we define
$\setof{\aa_{1}^{\trp} \xx = c_1} +\setof{\aa_{2}^{\trp} \xx  = c_2} 
=
\setof{\aa_{1}^{\trp} \xx +  \aa_{2}^{\trp} \xx  =c_1+ c_2}$
and we define 
$\setof{\aa_{1}^{\trp} \xx = c_1} \union \setof{\aa_{2}^{\trp} \xx  = c_2} 
=
\setof{\aa_{1}^{\trp} \xx = c_1, \aa_{2}^{\trp} \xx  = c_2}$.

\begin{algorithm}[H]
  \renewcommand{\algorithmicrequire}{\textbf{Input:}}
  \renewcommand\algorithmicensure {\textbf{Output:}}
  \caption{\label{alg:GZ2toMC2}\algGZTtoMCT}
  \begin{algorithmic}[1]
    \REQUIRE{$(\AA,\cca,\eps_{1})$ where $\AA \in \genZtwoCl$ is an $m \times
      n$ matrix, $\cca \in \rea^n$, $0 <
      \eps_{1} < 1$, and 
      $\alpha > 0$.
    }
    \ENSURE{$(\BB,\ccb,\eps_{2})$ where $\BB \in \mctwoCl$ is an $m \times
      n$ matrix, $\ccb \in \rea^n$, and $0 <
      \eps_{2} < 1$.}
    \STATE $
    \epsilon_{2} \assign 
    \eps_1  \left(1 + \frac{1}{\alpha} \right)^{-1/2}
      \left( 1 + \frac{\norm{\cca}_2^2 \sigma^2_{\max}(\AA)}{\alpha+1}
      \right)^{-1/2}
    $
    \STATE $\mathcal{X} \assign \{\uu_1, \ldots, \uu_n, \vv_1,\ldots,
    \vv_n\}$\hfill \COMMENT{$\mctwoCl$ variables and index of new variables}
    \STATE Let $\xx$ be the vector of variables corresponding to the
    set of variables $\mathcal{X}$
    \STATE
    $t \assign n+1$
    
    \STATE Initialize $\mathcal{A} \assign \emptyset$,
    $\mathcal{B} \assign \emptyset$
    \hfill \COMMENT{Multisets of main and $\mctwoCl$ auxiliary equations respectively}
    \FOR {each equation $1 \leq i \leq m$ in $\AA$} 
    \STATE Let $\mathcal{I}^{+1} \assign \{j: \AA_{ij} >
    0\}, \mathcal{I}^{-1} \assign \{j: \AA_{ij} < 0 \}$
    \label{lin:mcAlready}
    \IF {$\abs{\mathcal{I}^{+1}} = 1$ and $\abs{\mathcal{I}^{-1}} = 1$}
    \STATE Let the only elements in $\mathcal{I}^{+1}$ and $\mathcal{I}^{-1}$
    be $j_+$ and $j_-$ respectively.
    \hfill    \COMMENT{Note $\AA_{i j_+} = -\AA_{i j_-}$}
    \STATE 
    $\mathcal{A} \assign \mathcal{A} \cup
    \{\AA_{i j_+} \uu_t - \AA_{i j_+} \uu_{j_-} =
    \cc_i\}$
    \STATE
    $w_{i} \assign \alpha $
    \STATE
    $\mathcal{B} \assign \mathcal{B} \cup
    w_{i}^{1/2} \cdot \{\AA_{i j_+} \uu_{j_+} - \AA_{i j_+}\uu_t = 0 \}$
    \STATE $\mathcal{X} \assign \mathcal{X} \cup \{\uu_t, \vv_t\}$, update $\xx$ accordingly
    \STATE$t \assign t+1$
    \ELSE
    \STATE $\mathcal{A}_{i} \assign \setof{\AA_{i} \uu = \cc_{i}}$ \label{lin:mainConstraint}
    \STATE $\mathcal{B}_{i} \assign \emptyset $
    \FOR {$s = -1,+1$}
    \STATE $r \assign 0$
    \WHILE{$\mathcal{A}_i$ has strictly more than $1$ coefficient with sign $s$} \label{lin:varReplace}
    \STATE For each $j$, let $\AAhat_{ij}$ be the coefficient of
    $\uu_j$ in $\mathcal{A}_{i}$.
    \STATE $\mathcal{I}_{odd}^{s} \assign \{j \in
    \mathcal{I}^{s}: \lfloor |\AAhat_{ij}|  / 2^{r} \rfloor \mbox{ is
      odd} \}$
    \STATE Pair the indices of $\mathcal{I}_{odd}^{s}$ into $k$
    disjoint pairs $(j_{k},l_{k})$
    \FOR {each pair of indices $(j_{k},l_{k})$}
    \STATE $\mathcal{A}_{i} \leftarrow \mathcal{A}_{i}+s\cdot 2^{r}
    \setof{ \left( 2 \uu_{t} - \left( \uu_{j_{k}} + \uu_{l_{k}}
      \right)\right) = 0}$  
    \STATE $\mathcal{B}_{i} \assign
    \mathcal{B}_{i} \cup -s\cdot 2^{r} \cdot
    \mctwoCl\textsc{Gadget}(\uu_{j_k}, \uu_{l_k}, t)$ 
    \label{lin:callGadget}
    \STATE $\mathcal{X} \assign \mathcal{X}
    \cup \{\uu_t, \ldots, \uu_{t+6}, \vv_{t}, \ldots,
    \vv_{t+6}\}$, update $\xx$ accordingly
    \STATE $t \assign t+7$
    \STATE $r \assign r+1$
    \ENDFOR
    \ENDWHILE
    \ENDFOR
    \STATE
    $w_{i} \assign \alpha \sizeof{\mathcal{B}_{i}}$
    \STATE $\mathcal{B} \assign
    \mathcal{B} \cup w_{i}^{1/2}  \cdot \mathcal{B}_{i}$  
    \label{lin:weight}
    \STATE $\mathcal{A} \assign \mathcal{A}
    \cup \mathcal{A}_{i}$.
    \ENDIF
    \ENDFOR
    \RETURN $\eps_{2}$ and $\BB, \cc$ s.t. $\BB\xx = \cc$ corresponds to the equations in $\mathcal{A} \cup \mathcal{B}$,
    on the variable set $\mathcal{X}$.
  \end{algorithmic}
\end{algorithm}

The central object created by Algorithm~\ref{alg:GZ2toMC2} is the matrix
$\BB$, which contains both new equations and new variables.
We will superscript the variables with $^{A}$ to
distinguish variables appear in th original equation $\AA \xxa =
\cca$ from new variables.
We will term the new variables as $\xxaux$, and write a vector
solution for the new problem, $\xxb$, as:
\begin{equation}
\xxb = \left( \begin{array}{c}
\xxa \\ 
\xxaux
\end{array} \right).
\label{eq:MC2BDecomposeX}
\end{equation}

Let $\na$ be the dimension of $\xxa$, and $\nb$ be the dimension of $\xxb$, respectively.
We order the variables so that for an appropriately chosen index $h$
we have
\begin{enumerate}
\item $\xxaux_{1:h}$ corresponds to the $\uu$-coordinates of the
  auxiliary variables created in $\mctwoCl$-gadgets. 
\item $\xxaux_{h+1:\nb-\na}$ corresponds to the $\vv$-coordinates of the auxiliary variables created in $\mctwoCl$-gadgets.
\end{enumerate}
With this ordering $\xxb_{t} = \uu_{t}$ for $t \leq \na + h$.

Furthermore, we will distinguish the equations in $\BB$ into ones
formed from manipulating $\AA$, i.e. the equations added to
the set  $\mathcal{A}$, from the auxiliary equations, i.e. the
equations added to the set $\mathcal{B}$.
We use $\WW^{1/2} = \diag_{i}(w_{i}^{1/2}) $ to refer to the diagonal matrix of
weights $w_{i}$ applied to the auxiliary equations $\mathcal{B}$ in
Algorithm~\ref{alg:GZ2toMC2}.
In Algorithm~\ref{alg:GZ2toMC2}, a real value $\alpha > 0$ is set
initially and used when computing the weights $w_{i}^{1/2}$.
For convenience, thoughout most of this section, we will treat
$\alpha$ as an arbitrary constant, and only eventually substitute in
its value to complete our main proof.

This leads to the following representation of $\BB$
and $\ccb$ which we will use throughout our analysis of the algorithm:
\begin{equation}
\BB
= \left( \begin{array}{c}
\AAhat \\
\WW^{1/2} \BBhat
\end{array} \right).
\label{eq:MC2BDecomposeB}
\end{equation}
Here the equations of $\AAhat$
corresponds to $\mathcal{A}$ in \algGZTtoMCT, and
$\BBhat$ corresponds to the auxiliary constraints,
i.e. equations of $\mathcal{B}$ in \algGZTtoMCT.
Also, the vector $\ccb$ created is simply an extension of $\cca$:
\begin{equation}
\ccb =
\left( \begin{array}{c}
	 \cca\\ 
	\zero
\end{array} \right).
\label{eq:MC2BDecomposeC}
\end{equation}

Finally, as Algorithm~\ref{alg:GZ2toMC2} creates new equations for
each row of $\AA$ independently, we will use 
$S_i$ to denote the subset of indices of the rows of $\BBhat$
that's created from $\AA_{i, :}$ aka. the the auxiliary constraints generated
from the call to $\algMCToGZGadget$ upon processing $\AA_{i, :}$.
We will also denote the size of these using
\[
m_i \defeq \abs{S_i},
\]
and use $\BBhat_{i}$
 to denote these part of $\BBhat$ that corresponds
to these rows.
The Gadget routine used in the reduction, and the (trivial)
solution mapper are stated below.


\begin{algorithm}[H]
\renewcommand{\algorithmicrequire}{\textbf{Input:}}
\renewcommand\algorithmicensure {\textbf{Output:}}
    \caption{\label{alg:MC2gadget}\algMCToGZGadget} 
    \begin{algorithmic}[1]
    \REQUIRE Scalar variables $\uu_{j_k}, \uu_{l_k}$, integer $t$.
    \ENSURE Set of equations for a multi-commodity gadget
	    that computes the average of $\uu_{j_k}$ and $\uu_{l_k}$.
    \RETURN  
    \begin{align*}
\{ \uu_{t+3} - \vv_{t+3}  - ( \uu_{t+4} - \vv_{t+4} ) &= 0,\\
\uu_{t} - \uu_{t+3} &= 0 ,\\
\uu_{t+4}  - \uu_{j_k} &=0 , \\
\vv_{t+3} - \vv_{t+1}   &=0,\\
\vv_{t+2} - \vv_{t+4} &=0,\\
\uu_{t+5} - \vv_{t+5}  - (\uu_{t+6} - \vv_{t+6}) &=0,\\
\uu_{t} - \uu_{t+5}   &=0,\\
\uu_{t+6}  - \uu_{l_k} &=0,\\
\vv_{t+5} -  \vv_{t+2}   &=0,\\
\vv_{t+1} - \vv_{t+6} &=0 \}
    \end{align*}
    \end{algorithmic}
\end{algorithm}

\begin{algorithm}[H]
\renewcommand{\algorithmicrequire}{\textbf{Input:}}
\renewcommand\algorithmicensure {\textbf{Output:}}
    \caption{\label{alg:GZ2toMC2solnback}\algMCTtoGZTsoln} 
    \begin{algorithmic}[1]
    \REQUIRE $m \times n$ matrix $\AA \in \genZtwoCl$, 
    $m' \times n'$ matrix $\BB \in \mctwoCl$,
    vector $\cca \in \rea^{m}$,
    vector $\xxb \in \rea^{n'}$.
    \ENSURE Vector $\xxa \in \rea^{n}$.
    \IF {$\AA^{\trp} \cca = {\bf 0}$}
    	\RETURN $\xxa \assign {\bf 0}$
    \ELSE
    	\RETURN $\xxa \assign \xxb_{1:n}$	
    \ENDIF
    \end{algorithmic}
\end{algorithm}


We upper bound the number of nonzero entries of $\BB$ by the number of nonzero entries of $\AA$.
\begin{lemma}
\noindent
\label{lem:GZToMC2NNZ}
\begin{itemize}
\item We have $\nnz(\BB) = O \left( \nnz(\AA) \log \norm{\AA}_{\infty}
  \right)$.
\item  Both dimensions of $\BB$ are $O \left( \nnz(\AA) \log
  \norm{\AA}_{\infty} \right)$.
\item The number of iterations required by Algorithm~\ref{alg:GZ2toMC2} 
to process row $\AA_i$ is at most $\log \norm{\AA_i}_1$. 
\item 
  When processing row $\AA_i$, every new variable that appears in the $\mathcal{A}_{i}$
  equation is paired in the following iteration of the while-loop in Line~\ref{lin:varReplace} (unless it is the final iteration), and then
  disappears from $\mathcal{A}_{i}$.
\end{itemize}

\end{lemma}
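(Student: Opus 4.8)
The plan is to obtain all four bullets from one inductive analysis of the pair-and-replace while-loop (Line~\ref{lin:varReplace}) run on a single row $\AA_i$ with a fixed sign $s \in \setof{-1,+1}$. Let $2^{p_i}$ be the sum of the positive coefficients of $\AA_i$; this is a power of two by the definition of $\genZtwoCl$, and because $\AA_i$ has zero row sum the absolute values of its negative coefficients also sum to $2^{p_i}$. Throughout, I read $\mathcal{I}^{s}_{odd}$ off the \emph{current} coefficient vector of $\mathcal{A}_i$ (including variables created while processing $\AA_i$). I would prove, by induction on the round counter $r$: (i) the sum of the absolute values of the sign-$s$ coefficients of $\mathcal{A}_i$ is always $2^{p_i}$; and (ii) after round $r$, every sign-$s$ coefficient of $\mathcal{A}_i$ is a multiple of $2^{r+1}$. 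For (i): the update $\mathcal{A}_i \leftarrow \mathcal{A}_i + s\cdot 2^{r}\setof{2\uu_t - (\uu_{j_k}+\uu_{l_k}) = 0}$ lowers two sign-$s$ coefficients by $2^{r}$ each and creates one fresh sign-$s$ coefficient equal to $\pm 2^{r+1}$, a net change of zero; consequently $\sizeof{\mathcal{I}^{s}_{odd}}$ is even at every round, which is exactly what makes the pairing step well-defined. For (ii): a sign-$s$ coefficient not in $\mathcal{I}^{s}_{odd}$ is already a multiple of $2^{r+1}$, a selected coefficient $2^{r}b$ with $b$ odd becomes $2^{r}(b-1)$, and the fresh variable gets coefficient $\pm 2^{r+1}$.

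The third bullet then follows: if two or more sign-$s$ coefficients survive round $r$, (i) and (ii) give $2^{p_i} \ge 2\cdot 2^{r+1}$, so $r \le p_i - 2$; hence the while-loop for each sign runs at most $p_i \le \log\norm{\AA_i}_1$ times (over both signs it is at most twice that, which does not change the asymptotics). For the fourth bullet, a fresh variable $\uu_t$ added in round $r$ enters $\mathcal{A}_i$ with coefficient $\pm 2^{r+1}$ and is untouched through the end of round $r$ (the remaining pairs of round $r$ involve only indices already fixed in $\mathcal{I}^{s}_{odd}$ and fresh indices $\ne t$); so at the start of round $r+1$ it satisfies $\lfloor 2^{r+1}/2^{r+1}\rfloor = 1$, is therefore selected into $\mathcal{I}^{s}_{odd}$, gets paired, and has $\pm 2^{r+1}$ subtracted from its coefficient, zeroing it out --- unless round $r+1$ does not occur, i.e. round $r$ was the final iteration.

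For the first two bullets I would count. Let $T_i$ be the number of pairs formed (equivalently, calls to $\algMCToGZGadget$) while processing $\AA_i$. By the lifecycle property the only variables ever paired are the at most $\nnz(\AA_i)$ variables originally present in $\AA_i$ together with the fresh ones; each fresh variable is paired at most once, and each original variable at most once per round (the pairs partition $\mathcal{I}^{s}_{odd}$) and hence at most $p_i$ times. Each pair contributes two pairings and creates one fresh variable, so there are $T_i$ fresh variables and at most $T_i$ pairings of them; thus the number of pairings of original variables is at least $2T_i - T_i = T_i$ and at most $\nnz(\AA_i)\,p_i$, giving $T_i \le \nnz(\AA_i)\log\norm{\AA_i}_1$ and $\sum_i T_i = O(\nnz(\AA)\log\norm{\AA}_\infty)$. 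Now $\mathcal{A}$ holds exactly one equation per row of $\AA$ (the set $\mathcal{A}_i$ stays a singleton, and the already-two-commodity branch also emits one equation), so $\sizeof{\mathcal{A}} = m$; $\mathcal{B}$ holds $10$ equations per gadget call plus one per already-two-commodity row, so $\sizeof{\mathcal{B}} \le 10\sum_i T_i + m$; every row of $\BB \in \mctwoCl$ is a two-commodity equation with $O(1)$ nonzeros; and each gadget call (resp. already-two-commodity row) introduces $O(1)$ fresh variables. Putting these together yields $\nnz(\BB) = O(\nnz(\AA)\log\norm{\AA}_\infty)$ and the same bound on both dimensions of $\BB$, using $m,n \le \nnz(\AA)$ since $\AA$ has no all-zero row or column.

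I expect the only delicate point to be the fourth bullet --- the new-variable lifecycle --- and specifically the care needed to ensure $\mathcal{I}^{s}_{odd}$ is formed from the current coefficients of $\mathcal{A}_i$ (the variables just created included), not from $\AA_i$; once that is settled, the invariants (i)--(ii) and all of the counting are routine bookkeeping. A minor point worth flagging in the final write-up is that the bound in the third bullet is most naturally $\log\norm{\AA_i}_1$ per sign, so the ``number of iterations to process $\AA_i$'' matches the stated bound up to a factor $2$.
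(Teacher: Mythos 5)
Your proof is correct and follows the same overall plan as the paper's: establish the one-round lifecycle of auxiliary variables, bound the number of rounds per row by $\log\norm{\AA_i}_1$, and count variables and equations per row. Your explicit arithmetic invariants---the sign-$s$ coefficient sum of $\mathcal{A}_i$ stays $2^{p_i}$, and after round $r$ every sign-$s$ coefficient is a multiple of $2^{r+1}$---are a nice refinement over the paper's set-cardinality accounting: they simultaneously certify that $\sizeof{\mathcal{I}_{odd}^{s}}$ is even (so the pairing step is always well-defined, a point the paper's proof leaves implicit) and give the round bound $p_i \leq \log\norm{\AA_i}_1$ more rigorously than the informal ``pulls out a factor $2$'' remark, while your factor-of-$2$ caveat on the third bullet is a fair reading of the pseudocode.
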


\begin{proof} 
Since Algorithm~\ref{alg:GZ2toMC2} constructs new equations for each row of $\AA$ independently, we bound the number of new variables and new equations (that is, the size of the submatrix $\left(\begin{array}{c}
\AAhat_i \\ \BBhat_i 
\end{array} \right)$) for each row $i$ of $\AA$ separately.

Let $n_i$ be the number of nonzero entries of the $i$th row of $\AA$.
We count the number of variables in the equation $\mathcal{A}_i$, at each iteration of the while-loop in line~\ref{lin:varReplace} of Algorithm~\ref{alg:GZ2toMC2}.
Let $X^{(r)}$ be the subset of variables with nonzero coefficients in $\mathcal{A}_i$, at the end of iteration $r$.
Let $X_{aux}^{(r)}$ be the subset of $X^{(r)}$ containing all auxiliary variables created in iteration $r$.

Note in each iteration, Algorithm~\ref{alg:GZ2toMC2} replaces two variables by a new auxiliary variable. It gives that
\[
\abs{X_{aux}^{(1)}} \le \frac{n_i}{2},
\mbox{ and }
\abs{X^{(1)} \setminus X_{aux}^{(1)}} \le n_i.
\]
Since each auxiliary variable in $X_{aux}^{(1)}$ has coefficient $2$, in the 2nd iteration, together with another variable of coefficient $2$, 
it must be replaced by a new auxiliary variable.
Thus, at the end of the 2nd iteration, all auxiliary variables $X_{aux}^{(1)}$ will not appear in the equation $\mathcal{A}_i$. 
This implies that
\[
\abs{X_{aux}^{(2)}} \le \frac{n_i}{2} + \frac{n_i}{4},
\mbox{ and }
\abs{X^{(2)} \setminus X_{aux}^{(2)}} \le n_i.
\]
Similarly, at the end of the $t^{\text{th}}$ iteration, we have
\[
\abs{X_{aux}^{(r)} } \le \sum_{1\le s\le r} \frac{n_i}{2^{s}}
\le n_i,
\]
and 
\[
\abs{X^{(r)} \setminus X_{aux}^{(r)}}  \le n_i.
\]

Since Algorithm~\ref{alg:GZ2toMC2} pulls out a factor 2 in each iteration, the total number of iterations is at most $\log \norm{\AA_i}_1$. 
Since each auxiliary variable in $\mathcal{A}_i$ during the construction corresponds to $O(1)$ auxiliary variables and $O(1)$ equations in $\BBhat_i$,
the total number of auxiliary variables and equations for row $i$ of $\AA$ is 
\[
O\left( n_i \log \norm{\AA_i}_1 \right).
\]
Therefore, the number of variables and the number of equations in $\BB$ (that is, the both dimensions of $\BB$) are
\[
O \left( \nnz(\AA) \log \norm{\AA}_{\infty} \right).
\]
Since each row of $\BB$ has $O(1)$ nonzero coefficients, we have
\[
\nnz(\BB) = O \left( \nnz(\AA) \log \norm{\AA}_{\infty} \right).
\]
This completes the proof.
\end{proof}

\subsection{Reduction Between Exact Solvers}
\label{subsec:GZToMCExactSolver}
The most important relation between $\AA$ and $\BB$ is given by the
following claim.

\begin{claim}[Reduction between exact solvers]
\label{clm:exactReduction}
Fix any $\xxa \in \rea^{n}$. Then
\[
\norm{\AA\xxa - \cca}_2^2
=
\frac{\alpha+1}{\alpha} \min_{\xxaux} 
\norm{\BB \left( \begin{array}{c}
\xxa\\ 
\xxaux
\end{array} \right) - \ccb}_2^2.
\]
\end{claim}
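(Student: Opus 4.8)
The plan is to analyze the minimization over the auxiliary variables $\xxaux$ for a fixed $\xxa$, exploiting the fact that Algorithm~\ref{alg:GZ2toMC2} processes each row of $\AA$ independently, so the auxiliary variables and auxiliary equations split into blocks, one per row $i$. First I would observe that $\norm{\BB \xxb - \ccb}_2^2$ decomposes as $\norm{\AAhat \xxb - \cca}_2^2 + \norm{\WW^{1/2}\BBhat \xxb}_2^2$, and that each of these two terms further decomposes as a sum over rows $i$ of $\AA$. For a fixed row $i$, the contribution is a function only of $\xxa$ restricted to the variables appearing in $\AA_i$ together with the block of auxiliary variables $\xxaux$ created while processing row $i$ (both the $\mctwoCl$-gadget auxiliary variables and, in the trivial case $\abs{\mathcal{I}^{+1}}=\abs{\mathcal{I}^{-1}}=1$, the single new variable $\uu_t$).

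Next I would establish the key structural fact about one row's gadget construction: the auxiliary equations $\mathcal{B}_i$ (the $\mctwoCl$-gadget constraints, suitably weighted) have the property that, given any values of the ``input'' variables, there is a unique setting of the internal gadget variables making all gadget equations \emph{exactly} satisfied, and with that setting the modified main equation $\mathcal{A}_i$ equals the original $\AA_i \xxa = \cc_i$. Concretely, the gadget of Algorithm~\ref{alg:MC2gadget} enforces $\uu_t = \tfrac12(\uu_{j_k} + \uu_{l_k})$ by chaining equalities, and one checks by direct substitution that there is a unique completion of $\uu_{t},\ldots,\uu_{t+6},\vv_t,\ldots,\vv_{t+6}$ to zero out every gadget equation. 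Hence for fixed $\xxa$, choosing the gadget variables to satisfy all auxiliary equations exactly gives $\BBhat_i \xxb = \zero$, and then the $\AAhat_i$-block contributes exactly $(\AA_i\xxa - \cc_i)^2$; so taking this choice simultaneously for every row shows $\min_{\xxaux}\norm{\BB\xxb-\ccb}_2^2 \le \norm{\AA\xxa-\cca}_2^2$ — but I expect this is \emph{not tight}, which is exactly why the factor $\tfrac{\alpha+1}{\alpha}$ appears. The right approach is therefore not to zero out the auxiliary equations but to solve the per-row least-squares problem exactly.

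The heart of the argument is this per-row least-squares computation. Fix row $i$ and let $y$ denote the residual $\AA_i\xxa - \cc_i$ that the unmodified main equation would have. Writing out how a single pair-and-replace step with new variable $\uu_t$ and coefficient $C = s\cdot 2^r$ modifies things: $\mathcal{A}_i$ gains a term $C(2\uu_t - (\uu_{j_k}+\uu_{l_k}))$ and $\mathcal{B}_i$ gains the weighted gadget for $a+b=2c$ with weight factor $-s\cdot 2^r$ (and overall scaling $w_i^{1/2}$ with $w_i = \alpha\abs{\mathcal{B}_i}$). The gadget equations are all of the form (linear combination of gadget variables) $=0$, and because of the chaining they collectively behave, after eliminating internal variables, like a single constraint forcing $2\uu_t - (\uu_{j_k}+\uu_{l_k})$ to take a prescribed value. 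So one can show that minimizing over $\xxaux$ is equivalent to: choose, for each auxiliary constraint, a ``slack'' $\delta$; the main-equation block then reads $y + \sum (\text{something linear in the }\delta\text{'s}) $ and the auxiliary block reads $w_i \sum \delta^2$ (roughly), and the structure is such that all the $\delta$'s enter the main equation with the same effective coefficient. Minimizing $(y + D)^2 + \alpha m_i (D/m_i)^2 \cdot m_i$-type expression — i.e. distributing a total adjustment $D$ optimally across $m_i$ unit-cost constraints each weighted $w_i = \alpha m_i$ — gives $\min = \frac{\alpha}{\alpha+1} y^2$, which is exactly the claimed identity after multiplying by $\frac{\alpha+1}{\alpha}$. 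The choice $w_i = \alpha \abs{\mathcal{B}_i}$ in Line~\ref{lin:weight} is precisely what makes the per-row optimum independent of how many gadgets row $i$ used.

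The main obstacle I anticipate is the bookkeeping in the previous paragraph: verifying rigorously that, after eliminating the internal gadget variables, the net effect of the $\mathcal{B}_i$ constraints on the objective is exactly a term of the form $w_i \cdot (\text{quadratic in one scalar adjustment } D)$ with the adjustment entering $\mathcal{A}_i$ linearly with coefficient $1$, and that the cross-terms between different pair-and-replace steps and between the $\uu$- and $\vv$-coordinates vanish or combine cleanly. I would handle this by setting up, for a single row, an explicit change of variables: let the free parameters be $\xxa$ (fixed) plus one scalar per auxiliary gadget measuring its violation, express both blocks of $\BB\xxb - \ccb$ in these coordinates, observe the resulting quadratic form is block-diagonal except for the rank-one coupling through the main equation, and then invoke the standard fact that $\min_{D_1,\ldots,D_{m_i}} \big(y + \sum_k D_k\big)^2 + \alpha m_i \sum_k D_k^2 = \frac{\alpha m_i}{\alpha m_i + m_i} y^2 = \frac{\alpha}{\alpha+1} y^2$. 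Summing the per-row identities over all $i$ and recombining the $\AAhat$ and $\WW^{1/2}\BBhat$ blocks yields the claim. One should also separately (and easily) check the trivial branch where $\abs{\mathcal{I}^{+1}} = \abs{\mathcal{I}^{-1}} = 1$: there $\mathcal{B}_i$ is a single equation with weight $w_i = \alpha$, and the same one-parameter computation gives $\frac{\alpha}{\alpha+1}(\AA_i\xxa-\cc_i)^2$, consistent with the general case.
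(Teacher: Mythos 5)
Your proposal is correct and matches the paper's proof in all its essential steps: the row-wise decomposition $\norm{\BB\xxb-\ccb}_2^2 = \sum_i \bigl(\hat{\delta}_i^2 + w_i\sum_{j\in S_i}\delta_j^2\bigr)$, the invariant $\hat{\delta}_i + \sum_{j\in S_i}\delta_j = \AA_i\xxa - \cca_i$, the optimization $\min\bigl(y+\sum_k D_k\bigr)^2 + \alpha m_i\sum_k D_k^2 = \tfrac{\alpha}{\alpha+1}y^2$, and the recognition that the hard part is showing the gadget can actually realize the optimal residuals (the paper's Claim~\ref{clm:errorsObtainable}). The one cosmetic difference is that the paper applies Cauchy--Schwarz directly to get the $\geq \tfrac{\alpha}{\alpha+1}\eps_i^2$ lower bound for \emph{any} choice of $\xxaux$ and then separately exhibits an $\xxaux$ attaining it, which avoids having to argue that the per-equation slacks are genuinely free coordinates (they are not, since the gadget's $10$ equations share only $11$ internal variables); your change-of-variables framing would need that extra care, which is exactly the ``bookkeeping'' obstacle you flag.
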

As a Corollary of Claim~\ref{clm:exactReduction}, we observe the
following:
\begin{lemma}
  Given $\LSA$ problem instance $(\AA,\cca,\epsa)$ where $\AA \in
\genZtwoCl$, suppose
\[
(\BB,\ccb,\epsb) = \algGZTtoMCT(\AA,\cca,\epsa).
\]
Then $\BB \in \mctwoCl$ and if $\xxb = \left( \begin{array}{c}
\xxa\\ 
\xxaux
\end{array} \right)$ is a solution to the \emph{exact} {\LSA}  problem
$(\BB,\ccb,0)$, then $\xxa$ is a solution to the \emph{exact}
{\LSA} problem $(\AA,\cca,0)$.
\end{lemma}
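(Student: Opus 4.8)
The plan is to prove the two assertions of the lemma separately: first that $\BB \in \mctwoCl$, which is a direct structural check on the output of \algGZTtoMCT, and then the transfer of exact solutions, which is essentially a one-line sandwich argument once we combine Claim~\ref{clm:exactReduction} with the unwinding of the $\eps=0$ case of the $\LSA$ definition.

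For membership $\BB \in \mctwoCl$: by Definition~\ref{def:MC2} it suffices to check that every equation that \algGZTtoMCT{} places into $\mathcal{A}$ or $\mathcal{B}$ is a scalar multiple of one of the three allowed types $\uu_a - \uu_b = 0$, $\vv_a - \vv_b = 0$, or $\uu_a - \vv_a - (\uu_b - \vv_b) = 0$ (the nonzero right-hand sides arising in $\mathcal{A}$ are collected into $\ccb$ and are irrelevant to whether $\BB$ itself is a $2$-commodity incidence matrix). In the branch where $\AA_i$ already has one positive and one negative coefficient, the equation added to $\mathcal{A}$ is $\AA_{i j_+}\uu_t - \AA_{i j_+}\uu_{j_-} = \cc_i$ and the one added to $\mathcal{B}$ is a scaling of $\AA_{i j_+}\uu_{j_+} - \AA_{i j_+}\uu_t = 0$, both of the first type. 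In the other branch, the auxiliary equations are scalings of the ten equations returned by $\algMCToGZGadget$, each of which is visibly of the first, second, or third type; and the equation that finally lands in $\mathcal{A}_i$ is, once the while-loop of Line~\ref{lin:varReplace} terminates, of the form $a\uu_a - a\uu_b = \cc_i$ (the loop stops precisely when at most one coefficient of each sign remains, and zero row sum forces the counts to match), again the first type. Hence $\BB \in \mctwoCl$.

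For the exact-solution transfer: by the $\eps = 0$ case of Definition~\ref{def:lsapx}, equivalently Fact~\ref{fac:projIsQFmin}, a vector $\xxb$ solves the exact $\LSA$ problem $(\BB,\ccb,0)$ if and only if $\xxb \in \argmin_{\yy}\norm{\BB\yy - \ccb}_2^2$, and similarly $\xxa$ solves $(\AA,\cca,0)$ exactly iff $\xxa \in \argmin_{\zza}\norm{\AA\zza - \cca}_2^2$. Writing $\xxb = (\xxa;\xxaux)$, I would apply Claim~\ref{clm:exactReduction} twice. Minimizing over all coordinates and splitting the minimization into the $A$- and auxiliary blocks,
\[
\norm{\BB\xxb - \ccb}_2^2 \;=\; \min_{\zza}\min_{\zzaux}\norm{\BB(\zza;\zzaux) - \ccb}_2^2 \;=\; \frac{\alpha}{\alpha+1}\,\min_{\zza}\norm{\AA\zza - \cca}_2^2 ,
\]
while keeping the top block fixed at $\xxa$ gives, again by Claim~\ref{clm:exactReduction},
\[
\norm{\BB\xxb - \ccb}_2^2 \;\geq\; \min_{\zzaux}\norm{\BB(\xxa;\zzaux) - \ccb}_2^2 \;=\; \frac{\alpha}{\alpha+1}\,\norm{\AA\xxa - \cca}_2^2 .
\]
Combining the two displays yields $\min_{\zza}\norm{\AA\zza - \cca}_2^2 \geq \norm{\AA\xxa - \cca}_2^2$, hence equality; so $\xxa \in \argmin_{\zza}\norm{\AA\zza - \cca}_2^2$, i.e. $\xxa$ solves the exact $\LSA$ problem $(\AA,\cca,0)$, which is the claim.

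The genuinely substantive ingredient here is Claim~\ref{clm:exactReduction}, which this lemma takes as given; relative to that, the only care needed is the structural verification that $\BB \in \mctwoCl$ (tracing the gadget and confirming the termination shape of the pair-and-replace loop) and the correct translation of ``exact $\LSA$ solution'' into ``least-squares minimizer'' so the sandwich applies. The constant $\tfrac{\alpha+1}{\alpha} > 0$ is harmless in the exact setting, since rescaling a nonnegative objective by a positive constant preserves its minimizers, so no condition-number or accuracy bookkeeping is needed at this stage. I therefore expect no real obstacle beyond invoking Claim~\ref{clm:exactReduction} itself.
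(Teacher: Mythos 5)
Your proof is correct and follows essentially the same route as the paper: the paper's own proof is a one-liner invoking Claim~\ref{clm:exactReduction} (minimized over $\xxa$ on both sides) together with Fact~\ref{fac:projIsQFmin}, which is exactly your sandwich argument, merely compressed. The structural verification that $\BB \in \mctwoCl$ is left implicit in the paper but is correctly and appropriately included in your write-up.
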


\begin{proof}
 This follows immediately from minimizing over $\xxa$ on both
 sides of the equation established by Claim~\ref{clm:exactReduction}
 and then applying and Fact~\ref{fac:projIsQFmin}.
\end{proof}
Before proving Claim~\ref{clm:exactReduction}
we first note a basic guarantee obtained by \algGZTtoMCT:
\begin{align}
\label{eqn:eqnInvariant}
\one^{\trp} \left( \left( \begin{array}{c}
\AAhat_i \\
\BBhat_i
\end{array} \right)  \xxb
-  \left( \begin{array}{c}
\cca_i \\
{\bf 0}
\end{array} \right) \right)
= 
\AA_i \xxa - \cca_i
.
\end{align}

To verify this guarantee, we consider two cases separately.
The first case is when the condition
$\abs{\mathcal{I}^{+1}} = 1 \text{ AND } \abs{\mathcal{I}^{-1}} = 1$ is true (see
Algorithm~\ref{alg:GZ2toMC2}, Line~\ref{lin:mcAlready}).
In this case, the main constraint in the output
corresponding to row $i$ is  $\{\AA_{i j_+} \xxb_t - \AA_{i
  j_+} \xxb_{j_-} = \cca_i\}$,
while the auxiliary constraints contain only a single row $\{\AA_{i
  j_+} \xxb_{j_+} - \AA_{i j_+}\xxb_t = 0 \}$, and adding these
proves the guarantee for this case.
The second case is when the condition in Algorithm~\ref{alg:GZ2toMC2},
Line~\ref{lin:mcAlready}) is false.
We consider the case $s = +1$. The case $s = -1$ is proved similarly.
Note that we will refer to variables $j_k$ and $l_k$ only in the
context of a fixed value of $t$, which always ensures that they are
unambiguosly defined.
In the case $s = +1$,
each time we modify $\mathcal{A}_i$ by adding
$2^{r} \left( 2 \xxb_{t+1} - \left( \xxb_{j_k} + \xxb_{l_k}
  \right)\right) = 0$,
we we also use $\mctwoCl$Gadget to create auxiliary constraints that sum up to exactly 
$-2^{r} \left( 2 \xxb_{t+1} - \left( \xxb_{j_k} + \xxb_{l_k}
  \right)\right)$, so adding these together will cancel out the changes.

\begin{proof}[Proof of Claim~\ref{clm:exactReduction}]

For convenience, we also write
\begin{enumerate}
\item $\delta_j \defeq \BBhat_j \xxb$,
\item $\hat{\delta}_i \defeq \AAhat_i \xxb - \cca_i$,
\item $\eps_i \defeq \AA_i \xxa - \cca_i$.
\end{enumerate}
Thus
\[
\norm{\left( \begin{array}{c}
\AAhat_i \\
\WW^{1/2}_i \BBhat_i
\end{array} \right)  \xxb
-  \left( \begin{array}{c}
\cca_i \\
{\bf 0}
\end{array} \right)
}_2^2 =
\hat{\delta}_i^2 + w_i \sum_{j \in S_i} \delta_j^2.
\]
Summing over all rows, we get
\[
\norm{\BB\left( \begin{array}{c}
\xxa \\
\xxaux
\end{array} \right) - \ccb
}_2^2
=
\sum_{i} 
\left( \hat{\delta}_i^2 + w_i \sum_{j \in S_i} \delta_j^2 \right).
\]
Similarly, the square of row $i$ of $\AA\xxa - \cca$ is $\eps_i^2$
and summing over all rows we get
\[
\norm{\AA\xxa - \cca}_2^2
=
\sum_i \eps_i^2
.
\]
Recalling Equation~\eqref{eqn:eqnInvariant}, we have
\[
\one^{\trp} \left( \left( \begin{array}{c}
\AAhat_i \\
\BBhat_i
\end{array} \right)  \xxb
-  \left( \begin{array}{c}
\cca_i \\
{\bf 0}
\end{array} \right) \right)
= 
\AA_i \xxa - \cca_i
.
\]
Thus
\[
\hat{\delta}_i + \sum_{j \in S_i} \delta_j = \eps_i.
\label{eqn:sum}
\]
By the Cauchy-Schwarz inequality (applied to two vectors $\aa$ and
$\bb$ given by $\aa_1 = \hat{\delta}_i, \aa_j = \sqrt{w_i} \delta_j, \bb_1 = 1, \bb_j = 1 / \sqrt{w_i}$),
\[
\eps_i^2 = \left( \hat{\delta}_i + \sum_{j \in S_i} \delta_j \right)^2
\le \left( \hat{\delta}_i^2 + w_i \sum_{j \in S_i} \delta_j^2 \right) \left( 1 + \frac{m_i}{w_i} \right).
\label{eqn:cauchy}
\]
the equality holds if and only if
\begin{align}
\label{eqn:cs_condition}
\hat{\delta}_i = w_i \delta_j\qquad \forall j \in S_i.
\end{align}
Note that we have ensured that for all $i$,
$w_i = \alpha m_i$.

By summing over rows we conclude that for every $\xxa$ and every
$\xxaux$, we have 
\begin{align}
\norm{\AA\xxa - \cca}_2^2
\leq 
\frac{\alpha+1}{\alpha}
\norm{\BB\left( \begin{array}{c}
\xxa \\
\xxaux
\end{array} \right) - \ccb
}_2.
\label{eqn:QFineq}
\end{align}
The inequality above will be an equality if Equations~\eqref{eqn:cs_condition} are
satisfied.
We now show that for every fixed $\xxa$, minimizing over
$\xxaux$ ensures that \eqref{eqn:QFineq} holds with
equality.

In particular, we will momentarily prove the following Claim.
\begin{claim}
\label{clm:errorsObtainable}
 For any fixed $\xxa$ and its associated
$\eps_i$ values, for each row $i$ of $\AA$, the linear system 
\begin{align}
\AAhat_i 
\left( \begin{array}{c}
\xxa \\
\xxaux
\end{array} \right)
 & = \cca_i + \frac{\alpha}{\alpha+1} \eps_i,
\label{eqn:error_ls_a} \\
\BBhat_j 
\left( \begin{array}{c}
\xxa  \\
\xxaux
\end{array} \right) & = \frac{1}{(\alpha+1)m_i} \eps_i, \forall j \in S_i.
\label{eqn:error_ls_b}
\end{align} 
has a solution (which may not be unique).
\end{claim}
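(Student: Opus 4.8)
The plan is to solve the linear system \eqref{eqn:error_ls_a}--\eqref{eqn:error_ls_b} constructively, by walking backwards through the sequence of variable-replacement steps that \algGZTtoMCT{} performed on row $i$ of $\AA$, and assigning values to the auxiliary variables $\xxaux$ one gadget at a time. Since the construction handles each row independently and introduces fresh auxiliary variables for each row, it suffices to fix one row $i$, treat $\xxa$ (equivalently, the $\uu_{1:n}$ coordinates) and the target value $\eps_i$ as given, and exhibit values for the auxiliary variables created while processing $\AA_i$ so that all the listed equations hold.

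First I would dispatch the easy case where $\abs{\mathcal{I}^{+1}} = 1$ and $\abs{\mathcal{I}^{-1}} = 1$: here there is a single auxiliary variable $\uu_t$, the single $\AAhat_i$ equation is $\AA_{ij_+}\uu_t - \AA_{ij_+}\uu_{j_-} = \cca_i + \frac{\alpha}{\alpha+1}\eps_i$, and the single $\BBhat$ row $j$ is $\AA_{ij_+}\uu_{j_+} - \AA_{ij_+}\uu_t = \frac{1}{(\alpha+1)}\eps_i$ (note $m_i = 1$ here). Adding these forces $\uu_{j_+} - \uu_{j_-}$ to equal $\frac{1}{\AA_{ij_+}}(\cca_i + \eps_i)$, which is exactly the invariant \eqref{eqn:eqnInvariant} with $\AA_i\xxa - \cca_i = \eps_i$; so we can just solve for $\uu_t$ from either equation and both are simultaneously satisfied. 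In the general case, I would argue in two stages. Stage one: show that one can choose the values $\delta_j := \BBhat_j\xxb$ (for $j \in S_i$) and $\hat\delta_i := \AAhat_i\xxb - \cca_i$ that are \emph{required} by \eqref{eqn:error_ls_a}--\eqref{eqn:error_ls_b}, namely $\hat\delta_i = \frac{\alpha}{\alpha+1}\eps_i$ and $\delta_j = \frac{1}{(\alpha+1)m_i}\eps_i$, and observe these satisfy the compatibility relation $\hat\delta_i + \sum_{j\in S_i}\delta_j = \eps_i$ (from \eqref{eqn:sum}) as well as the Cauchy--Schwarz equality condition \eqref{eqn:cs_condition}, $\hat\delta_i = w_i\delta_j$ with $w_i = \alpha m_i$ — this is precisely why those particular scalings were chosen. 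Stage two: show that for \emph{any} prescribed target values $\hat\delta_i$ for the main equation residual and $\delta_j$ for each auxiliary-equation residual, there is an actual assignment to the auxiliary variables realizing them.

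The heart of stage two — and the main obstacle — is to show that the gadget equations are "surjective enough": given the already-fixed values of the input variables $\uu_{j_k}, \uu_{l_k}$ of a gadget and an arbitrary desired residual for each of its ten rows, one can assign the seven new $\uu$- and seven new $\vv$-coordinates to hit those residuals while also setting $\uu_t$ (the averaging output variable) to any prescribed value. Here I would inspect Algorithm~\ref{alg:MC2gadget} directly: the ten equations split into two groups of five (one for $\uu_{j_k}$, one for $\uu_{l_k}$), each group is a short chain, and each new variable appears in only one or two equations, so the chain can be solved sequentially — pick $\uu_t$, propagate through $\uu_{t+3}, \uu_{t+4}, \vv_{t+3}, \vv_{t+4}$ etc., absorbing the desired residual into whichever variable is "free" at each step. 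Because the variables are processed from the last-created gadget backward, and each new auxiliary variable (other than those inside gadgets) gets paired and eliminated in the following while-loop iteration (Lemma~\ref{lem:GZToMC2NNZ}, last bullet), the replacement $2\uu_t - (\uu_{j_k}+\uu_{l_k}) = 0$ from line~24 is automatically honoured when $\uu_t$ is set to the average; combined with the gadget sums telescoping to exactly $-2^r(2\uu_t - (\uu_{j_k}+\uu_{l_k}))$ as noted before the proof of Claim~\ref{clm:exactReduction}, the only genuinely free residuals are the ones we are allowed to prescribe. Carefully bookkeeping this backward induction over rounds $r$ and over the pairs within each round, and checking that the $\AAhat_i$ equation (the leftover 2-commodity equation on two variables after all replacements) picks up exactly the residual $\hat\delta_i$, completes the argument; existence (not uniqueness) is all that is claimed, so any consistent choice at each free step suffices.
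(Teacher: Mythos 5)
Your proposal takes the same high-level approach as the paper (construct an explicit assignment to $\xxaux$ that realizes the prescribed residuals), but two of the central claims are incorrect, and together they break the argument.

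The crucial error is the assertion that ``the replacement $2\uu_t - (\uu_{j_k}+\uu_{l_k}) = 0$ from line~24 is automatically honoured when $\uu_t$ is set to the average,'' and the companion claim that one can realize any prescribed residuals for the ten gadget rows ``while also setting $\uu_t$ to any prescribed value.'' Writing the ten gadget constraints as $\GG\zz$, a direct calculation (and this is exactly the telescoping noted before the paper's Claim~\ref{clm:exactReduction}) gives $\one^{\trp}\GG\zz = 2\uu_t - (\uu_{j_k}+\uu_{l_k})$. So the sum of the ten residuals is \emph{identically} $2\uu_t - (\uu_{j_k}+\uu_{l_k})$: you cannot choose $\uu_t$ and the residuals independently. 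If you set $\uu_t$ to the exact average, the residuals are forced to sum to zero; but Equation~\eqref{eqn:error_ls_b} requires all ten of them to equal $\frac{1}{(\alpha+1)m_i}\eps_i$, which sum to $\frac{10}{(\alpha+1)m_i}\eps_i\neq 0$ in general. The paper's proof gets this right by setting $\uu_t = \tfrac{1}{2}(\uu_{j_k}+\uu_{l_k}) + 5\eps$ with $\eps$ carefully scaled; that specific offset is precisely what makes $\GG\zz = \eps\one$ possible. The gadget has $11$ active unknowns for $10$ equations, but the one linear dependence fixes $\uu_t$ in terms of the chosen residuals; the remaining degree of freedom sits in the $\vv$-coordinates, not in $\uu_t$.

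Two smaller issues: (i) your stage-two statement that an assignment exists for \emph{any} prescribed pair $(\hat\delta_i, \{\delta_j\})$ cannot be correct as written, since the invariant \eqref{eqn:eqnInvariant} forces $\hat\delta_i + \sum_{j\in S_i}\delta_j = \eps_i$; the right statement is that the $\delta_j$'s can be prescribed freely and $\hat\delta_i$ is then determined. (ii) Walking the gadgets backward (from the last-created one) is not obviously well-defined: the output $\uu_t$ of an earlier gadget is an input $\uu_{j_{k}}$ to a later one, so processing the last gadget first leaves you with unknown inputs whose values are eventually pinned down from below by the fixed original variables. A consistency argument is then needed; the paper avoids this entirely by assigning gadget variables in creation order, so inputs are always available when a gadget is processed.
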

Since every auxiliary variable is associated with only one row $i$ of
$\AA$, Claim~\ref{clm:errorsObtainable} implies that we can choose $\xxaux$ s.t. all these
linear systems are satisfied at once.

Given such a choice of $\xxaux$, we get that
Equations~\eqref{eqn:cs_condition} are satisfied so
\[
\norm{\AA\xxa - \cca}_2^2
=
\frac{\alpha+1}{\alpha}
\norm{\BB\left( \begin{array}{c}
\xxa \\
\xxaux
\end{array} \right) - \ccb
}_2^2
.
\label{eqn:QFineq}
\]
Given Claim~\ref{clm:errorsObtainable}, this completes the proof of
Claim~\ref{clm:exactReduction}.
\end{proof}

\begin{proof}[Proof of Claim~\ref{clm:errorsObtainable}]

We focus on the case of the condition $\abs{\mathcal{I}^{+1}} = 1 \text{
  AND } \abs{\mathcal{I}^{-1}} = 1$ in Algorithm~\ref{alg:GZ2toMC2},
Line~\ref{lin:mcAlready} being false. 
The case when then condition is true is very similar, but easier as it
deals with a set of just two equations.

We will construct an assignment to all the variables of
$\xxaux$ s.t. Equations~\eqref{eqn:error_ls_a}
and~\eqref{eqn:error_ls_b} are satisfied.
We start with an assignment $\xxa$ to the main variables, and we
then assign values to auxiliary variables in the order they are
created by the algorithm~{\algGZTtoMCT}.
Note that we will refer to variables $j_k$ and $l_k$ only in the
context of a fixed value of $t$, which always ensures that they are
unambiguosly defined.
When the algorithm processes pair $\xxb_{j_k},\xxb_{l_k} = \uu_{j_k},\uu_{l_k}  $,
the value of these variables will have been set already,  while
$\xxb_{t} = \uu_t$ and the other newly created auxiliary variables have not.
Every auxiliary variable is associated with only one row, so we never
get multiple assignments to a variable using this procedure.

Recall the constraints created by the $\mctwoCl$Gadget call are
    \begin{align*}
\{ \uu_{t+3} - \vv_{t+3}  - ( \uu_{t+4} - \vv_{t+4} ) &= 0,\\
\uu_{t} - \uu_{t+3} &= 0 ,\\
\uu_{t+4}  - \uu_{j_{k}} &=0 , \\
\vv_{t+3} - \vv_{t+1}   &=0,\\
\vv_{t+2} - \vv_{t+4} &=0,\\
\uu_{t+5} - \vv_{t+5}  - (\uu_{t+6} - \vv_{t+6}) &=0,\\
\uu_{t} - \uu_{t+5}   &=0,\\
\uu_{t+6}  - \uu_{l_{k}}  &=0,\\
\vv_{t+5} -  \vv_{t+2}   &=0,\\
\vv_{t+1} - \vv_{t+6} &=0 \}
    \end{align*}


Let $\zz = \left(\uu_{j_k},\uu_{l_k}, \uu_{t}, \uu_{t+1}, \uu_{t+2},
  \uu_{t+3}, \ldots, \uu_{t+6},  
\vv_{t}, \vv_{t+1},\ldots, \vv_{t+6}\right) \in
\mathbb{R}^{16}$.
Let $\GG \in \mathbb{R}^{10 \times 16} $ be the matrix s.t. $\GG \zz =\zero$ corresponds to the constraints
listed above.
Note that all coefficients of $\uu_{t+1}, \uu_{t+2}$ and $\vv_{t}$ are
zero. We set these three variables to zero.


For some $\eps$, which we will fix later,
we choose $\uu_t$ such that 
\[
\uu_t
=
\frac{1}{2} \left( \uu_{j_k} + \uu_{l_k}\right) + 5\eps.
\]
Again, for the same $\eps$, we fix the following values for 
$\uu_{t+i}, \vv_{t+j}, 3\le i\le 6, 1\le j\le 6$ 
\begin{align*}
\uu_{t+3} & = \uu_{t} - \eps, \\
\uu_{t+4} & =  \uu_{j_k} + \eps, \\
\uu_{t+5} & = \uu_{t} - \eps, \\
\uu_{t+6} & =  \uu_{l_k}  + \eps, \\
\vv_{t+1} &= 0, \\
\vv_{t+2} &= 5\eps -(\uu_{t} - \uu_{j_k}), \\
\vv_{t+3} &= \eps, \\
\vv_{t+4} &= 4\eps -(\uu_{t} - \uu_{j_k}), \\
\vv_{t+5} &= 6\eps -(\uu_{t} - \uu_{j_k}), \\
\vv_{t+6} &= -\eps.
\end{align*}
This ensures $\GG\zz = \eps \one$.
Note that for some $r$,
$\BBhat_{i} \xxb =2^r \GG\zz =  2^r \eps \one$, 
so by choosing
$\eps = 2^{-r} \frac{1}{(\alpha+1)m_i} \eps_i$, we can
ensure~Equations~\eqref{eqn:error_ls_b} are satisfied.



Also
\[
\cca_i + \eps_i = \one^{\trp} \left( \begin{array}{c}
\AAhat_i \\
\BBhat_i
\end{array} \right) \xxb
= \AAhat_i \xxb + \one^{\trp} \BBhat_i \xxb
= \AAhat_i \xxb + \frac{1}{\alpha+1} \eps_i.
\]
Thus, we have
\[
\AAhat_i \xxb = \cca_i + \frac{\alpha}{\alpha+1} \eps_i,
\]
which is~Equation~\ref{eqn:error_ls_a}.

%

This completes the proof of the claim.
\end{proof}

{\bf Remark.} The optimal solutions for $\min_{\xxb} \norm{\BB\xxb -
  \ccb}_2$ and $\min_{\xxa} \norm{\AA\xxa - \cca}_2$ have a
one-to-one map, however, the optimal values are different:
%

From the proof of Claim~\ref{clm:exactReduction} and Equation~\eqref{eqn:minquad}
\[
\norm{\cca - \PPi_{\AA}\cca}_2^2
 =
 \left(1 + \frac{1}{\alpha} \right) 
 \norm{\ccb - \PPi_{\BB} \ccb}_2^2,
\]
where we set the weight $w_i = \alpha m_i, \forall i$.
Note when $\alpha \rightarrow \infty$, the two optimal values approach
the same value.

\subsection{Relationship Between Schur Complements}
\label{subsec:SCRelation}


\begin{definition}[Schur Complement]
Let $\CC \in \mathbb{R}^{n \times n}$ be a symmetric PSD matrix. Write
$\CC = \left(\begin{array}{cc}
\CC_{11} & \CC_{12} \\
\CC_{12}^{\trp} & \CC_{22}
\end{array} \right)$ where $\CC_{11}, \CC_{12}, \CC_{12}$ are block matrices, 
the \emph{Schur complement} of $\CC$ is 
\[
\CC_S \defeq \CC_{11} - \CC_{12}\CC^{-1}_{22}\CC_{12}^{\trp}.
\]
If $\CC_{22}$ is not invertible, then we use pseudo-inverse, that is,
$\CC_S \defeq \CC_{11} - \CC_{12}\CC^{\dagger}_{22}\CC_{12}^{\trp}$.
\end{definition}


The following fact is important for Schur complement.
\begin{fact}[Schur complement]
For any fixed vector $\xx$, 
\[
\min_{\yy} \left( \begin{array}{cc}
\xx^{\trp} & \yy^{\trp}
\end{array} \right) \CC \left( \begin{array}{c}
\xx \\
\yy
\end{array} \right) 
= \xx^{\trp} \CC_S\xx.
\]
\label{clm:schur}
\end{fact}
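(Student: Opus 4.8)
The plan is to prove the identity $\min_{\yy} \begin{pmatrix} \xx^{\trp} & \yy^{\trp} \end{pmatrix} \CC \begin{pmatrix} \xx \\ \yy \end{pmatrix} = \xx^{\trp} \CC_S \xx$ by completing the square in $\yy$. First I would expand the quadratic form using the block structure of $\CC$:
\[
\begin{pmatrix} \xx^{\trp} & \yy^{\trp} \end{pmatrix} \CC \begin{pmatrix} \xx \\ \yy \end{pmatrix}
= \xx^{\trp} \CC_{11} \xx + 2 \xx^{\trp} \CC_{12} \yy + \yy^{\trp} \CC_{22} \yy.
\]
Viewing the right-hand side as a function of $\yy$ with $\xx$ fixed, it is a convex quadratic (since $\CC \pgeq 0$ implies $\CC_{22} \pgeq 0$), so its minimizer is found by setting the gradient to zero: $2 \CC_{22} \yy + 2 \CC_{12}^{\trp} \xx = \zero$, i.e. $\yy = -\CC_{22}^{\pinv} \CC_{12}^{\trp} \xx$ is a minimizer (when $\CC_{22}$ is singular, any $\yy = -\CC_{22}^{\pinv}\CC_{12}^{\trp}\xx + \zz$ with $\zz \in \nulls(\CC_{22})$ works equally well). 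Substituting this choice back in and simplifying the cross terms yields $\xx^{\trp}\CC_{11}\xx - \xx^{\trp}\CC_{12}\CC_{22}^{\pinv}\CC_{12}^{\trp}\xx = \xx^{\trp}\CC_S\xx$, giving the $\leq$ direction, and convexity gives that this stationary point is in fact the global minimum, so we get equality.

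The one technical point that needs care — and the main obstacle — is the case where $\CC_{22}$ is not invertible. Here I must check that the normal equations $\CC_{22}\yy = -\CC_{12}^{\trp}\xx$ are actually consistent, i.e. that $\CC_{12}^{\trp}\xx \in \im(\CC_{22})$, so that a minimizer exists rather than the infimum being $-\infty$. This follows from $\CC \pgeq 0$: if there were a vector $\ww \in \nulls(\CC_{22})$ with $\ww^{\trp}\CC_{12}^{\trp}\xx \neq 0$, then evaluating the quadratic form at $\begin{pmatrix}\xx \\ t\ww\end{pmatrix}$ gives $\xx^{\trp}\CC_{11}\xx + 2t\,\ww^{\trp}\CC_{12}^{\trp}\xx$, which tends to $-\infty$ as $t \to \mp\infty$, contradicting $\CC \pgeq 0$. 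Hence $\CC_{12}^{\trp}\xx \perp \nulls(\CC_{22})$, so $\CC_{12}^{\trp}\xx \in \im(\CC_{22})$ and $\yy = -\CC_{22}^{\pinv}\CC_{12}^{\trp}\xx$ genuinely solves the normal equations; the residual cross term then vanishes and the computation of $\xx^{\trp}\CC_S\xx$ goes through as above. (This identity is standard, so the proof in the paper may simply cite it or give the completion-of-the-square computation; I would include the brief singular-case justification for completeness.)
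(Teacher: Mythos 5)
Your proof is correct and follows essentially the same completion-of-the-square / first-order-conditions route as the paper: expand the block quadratic, set the gradient in $\yy$ to zero, and substitute $\yy = -\CC_{22}^{\pinv}\CC_{12}^{\trp}\xx$ back in. The one place you go further than the paper is the singular case: the paper simply writes down the pseudo-inverse stationary point and substitutes without verifying that the normal equations $\CC_{22}\yy = -\CC_{12}^{\trp}\xx$ are consistent (equivalently, that the infimum over $\yy$ is attained rather than being $-\infty$). Your argument that $\CC \pgeq 0$ forces $\CC_{12}^{\trp}\xx \perp \nulls(\CC_{22})$ — by evaluating the form at $(\xx, t\ww)$ with $\ww \in \nulls(\CC_{22})$ and letting $t \to \mp\infty$ — is exactly the missing justification, and it is the right way to close that gap. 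So same method, but you supply a small piece of rigor the paper leaves implicit.
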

\begin{proof}
We expand the left hand side,
\begin{align}
\left( \begin{array}{cc}
\xx^{\trp} & \yy^{\trp}
\end{array} \right) \CC \left( \begin{array}{c}
\xx \\
\yy
\end{array} \right) 
= \xx^{\trp}\CC_{11}\xx
+ 2\xx^{\trp} \CC_{12}\yy + \yy^{\trp} \CC_{22} \yy.
\label{eqn:schur_lhs}
\end{align}
Taking derivative w.r.t. $\yy$ and setting it to be 0 give that
\[
2\CC_{22}\yy + 2 \CC_{12}^{\trp} \xx = {\bf 0}.
\]
Plugging $\yy = - \CC_{22}^{\dagger} \CC_{12}^{\trp} \xx$ into~\eqref{eqn:schur_lhs}, 
\[
\min_{\yy} \left( \begin{array}{cc}
\xx^{\trp} & \yy^{\trp}
\end{array} \right) \CC \left( \begin{array}{c}
\xx \\
\yy
\end{array} \right) 
= \xx^{\trp}\CC_{11}\xx - \xx^{\trp} \CC_{12}\CC_{22}^{\dagger} \CC_{12}^{\trp} \xx
= \xx^{\trp} \CC_S \xx.
\]
This completes the proof.
\end{proof}

Recall that $\BB$ is the output coefficient matrix of Algorithm~\ref{alg:GZ2toMC2}.
Write $\BB = \left( \begin{array}{cc}
\BB_1 & \BB_2
\end{array} \right) $, where $\BB_1$ is the submatrix corresponding to $\xxa$ and $\BB_2$ is the submatrix corresponding to $\xxaux$. Then,
\[
\BB^{\trp}\BB = \left( \begin{array}{cc}
\BB_1^{\trp} \BB_1 & \BB_1^{\trp} \BB_2 \\
\BB_2^{\trp} \BB_1 & \BB_2^{\trp} \BB_2
\end{array} \right).
\]

\begin{claim}
$\frac{\alpha}{\alpha+1}\AA^{\trp}\AA$ is the Schur complement of $\BB^{\trp}\BB$.
\label{clm:schur_b}
\end{claim}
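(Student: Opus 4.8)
The plan is to apply the Schur complement characterization from Fact~\ref{clm:schur} together with the quadratic-form identity already established in Claim~\ref{clm:exactReduction}. Recall Fact~\ref{clm:schur} says that for a symmetric PSD matrix $\CC$ partitioned into blocks, the Schur complement $\CC_S$ onto the first block of coordinates satisfies $\min_{\yy} (\xx^{\trp}\ \yy^{\trp})\CC(\xx;\yy) = \xx^{\trp}\CC_S\xx$ for every fixed $\xx$. Here I would take $\CC = \BB^{\trp}\BB$ with its natural block structure $\BB = (\BB_1\ \BB_2)$ corresponding to the split of variables into $\xxa$ and $\xxaux$, so that the Schur complement in question is $(\BB^{\trp}\BB)_S = \BB_1^{\trp}\BB_1 - \BB_1^{\trp}\BB_2(\BB_2^{\trp}\BB_2)^{\pinv}\BB_2^{\trp}\BB_1$. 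The goal is to show this equals $\frac{\alpha}{\alpha+1}\AA^{\trp}\AA$, and since both sides are symmetric matrices, it suffices to show their associated quadratic forms agree, i.e. $\xxa^{\trp}(\BB^{\trp}\BB)_S\xxa = \frac{\alpha}{\alpha+1}\xxa^{\trp}\AA^{\trp}\AA\xxa$ for all $\xxa \in \rea^{n}$.

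The key computation runs as follows. For fixed $\xxa$, write $\xxb = (\xxa; \xxaux)$. Then $\norm{\BB\xxb}_2^2 = (\xxa^{\trp}\ \xxaux{}^{\trp})\BB^{\trp}\BB(\xxa;\xxaux)$, so by Fact~\ref{clm:schur},
\[
\min_{\xxaux} \norm{\BB\xxb}_2^2 = \xxa^{\trp}(\BB^{\trp}\BB)_S\xxa.
\]
On the other hand, Claim~\ref{clm:exactReduction} — or more precisely the homogeneous version of it with $\ccb = \zero$ (equivalently, $\cca = \zero$) — gives $\min_{\xxaux}\norm{\BB(\xxa;\xxaux)}_2^2 = \frac{\alpha}{\alpha+1}\norm{\AA\xxa}_2^2 = \frac{\alpha}{\alpha+1}\xxa^{\trp}\AA^{\trp}\AA\xxa$. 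Combining these two identities yields $\xxa^{\trp}(\BB^{\trp}\BB)_S\xxa = \frac{\alpha}{\alpha+1}\xxa^{\trp}\AA^{\trp}\AA\xxa$ for all $\xxa$, and since a symmetric matrix is determined by its quadratic form, we conclude $(\BB^{\trp}\BB)_S = \frac{\alpha}{\alpha+1}\AA^{\trp}\AA$.

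The one subtlety — and the main thing I would be careful about — is that Claim~\ref{clm:exactReduction} is stated for general $\cca$, whereas I need the purely homogeneous statement $\min_{\xxaux}\norm{\BB(\xxa;\xxaux)}_2^2 = \frac{\alpha}{\alpha+1}\norm{\AA\xxa}_2^2$. This is exactly Claim~\ref{clm:exactReduction} specialized to the case $\cca = \zero$ (note $\ccb = (\cca;\zero) = \zero$ in that case by Equation~\eqref{eq:MC2BDecomposeC}), so no new work is required; I would just invoke it with $\cca = \zero$. A secondary point worth a sentence is that $\BB_2^{\trp}\BB_2$ need not be invertible, so the Schur complement must be taken with the pseudo-inverse — but Fact~\ref{clm:schur} is already stated and proved in that generality (the minimizer $\yy = -\CC_{22}^{\pinv}\CC_{12}^{\trp}\xx$ still attains the minimum), so this causes no difficulty. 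Everything else is immediate.
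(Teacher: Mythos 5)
Your proposal is correct and follows exactly the paper's own proof: specialize Claim~\ref{clm:exactReduction} to $\cca=\zero$ (hence $\ccb=\zero$) to get $\min_{\xxaux}\norm{\BB(\xxa;\xxaux)}_2^2 = \frac{\alpha}{\alpha+1}\xxa^{\trp}\AA^{\trp}\AA\xxa$, then invoke Fact~\ref{clm:schur}. You spell out the two side remarks (matching quadratic forms determine a symmetric matrix; the pseudo-inverse case is already covered by Fact~\ref{clm:schur}) that the paper leaves implicit, but the argument is the same.
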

\begin{proof}
By~\eqref{clm:exactReduction}, for any fixed $\xxa$, 
\[
\min_{\xxaux} \xx^{\trp} \BB^{\trp} \BB \xx
= \frac{\alpha}{\alpha+1} \left(\xxa\right)^{\trp} \AA^{\trp} \AA \xxa.
\]
By Claim~\ref{clm:schur}, 
$\frac{\alpha}{\alpha+1}\AA^{\trp}\AA$ is the Schur complement of $\BB^{\trp}\BB$.
\end{proof}



\newcommand{\zzn}{\tilde{\zz}}
\newcommand{\zzp}{\hat{\zz}}

\subsection{Approximate solvers}
\label{subsec:GZ2ToMC2apx}

We now show that approximate solvers for $\BB$ also
translate to approximate solvers for $\AA$.
The following Lemma about the length of projections
involving integral matrices is crucial for our bounds.

\begin{lemma}
\label{lem:ProjLengthLower}
Let $\AA$ and $\cc$ a matrix and a vector 
such that $\norm{\AA\cc}_2^2$ is integral.
If $\AA^{\top} \cc \neq {\bf 0}$,
then
\[
\norm{\PPi_{\AA}\cc}_2^2 
\geq \frac{1}{\sigma^{2}_{\max} \left( \AA \right)}
= \frac{1}{\lambda_{\max}\left( \AA^{\trp} \AA \right)}.
\]
\end{lemma}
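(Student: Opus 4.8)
The plan is to lower bound $\norm{\PPi_{\AA}\cc}_2^2$ by recognizing that it equals $\norm{\AA\cc}_2^2$ up to... wait, that isn't right. Let me reconsider: $\PPi_{\AA}\cc$ is the projection of $\cc$ onto $\im(\AA)$, and the claim is that if this projection is nonzero (equivalently $\AA^{\trp}\cc \neq \zero$, since $\PPi_{\AA}\cc = \AA(\AA\AA^{\trp})^{\pinv}\AA^{\trp}\cc$), then its squared length is at least $1/\sigma_{\max}^2(\AA)$. The key hypothesis is that $\norm{\AA\cc}_2^2$ is an integer — here I read $\AA\cc$ as shorthand consistent with the statement; I would first pin down the exact intended quantity (it should be something whose squared norm is a positive integer whenever the relevant projection is nonzero, e.g. $\norm{\AA^\trp \cc}$ or a related integral pairing). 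Assuming $\norm{\AA^{\trp}\cc}_2^2$ is a positive integer when $\AA^{\trp}\cc \neq \zero$, we get $\norm{\AA^{\trp}\cc}_2^2 \geq 1$.

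The main chain of inequalities I would run is the following. Write $\bb = \AA^{\trp}\cc$, so $\PPi_{\AA}\cc = \AA(\AA\AA^{\trp})^{\pinv}\bb$ and $\bb \in \im(\AA^{\trp})$. Then
\[
\norm{\PPi_{\AA}\cc}_2^2 = \bb^{\trp}(\AA\AA^{\trp})^{\pinv}\AA^{\trp}\AA(\AA\AA^{\trp})^{\pinv}\bb = \bb^{\trp}(\AA^{\trp}\AA)^{\pinv}\bb,
\]
using that $(\AA\AA^{\trp})^{\pinv}\AA^{\trp}\AA(\AA\AA^{\trp})^{\pinv} = (\AA^{\trp}\AA)^{\pinv}$ on $\im(\AA^{\trp})$ — a standard pseudoinverse identity I would verify via the SVD. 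Since $\bb \in \im(\AA^{\trp}) = \im(\AA^{\trp}\AA)$, we have $\bb^{\trp}(\AA^{\trp}\AA)^{\pinv}\bb \geq \norm{\bb}_2^2 / \sigma_{\max}^2(\AA)$, because on the image of $\AA^{\trp}\AA$ the pseudoinverse has smallest eigenvalue $1/\lambda_{\max}(\AA^{\trp}\AA) = 1/\sigma_{\max}^2(\AA)$. Combining with $\norm{\bb}_2^2 \geq 1$ gives exactly the claimed bound.

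The step I expect to be the main obstacle is not any single inequality — each is a one-line SVD computation — but rather correctly identifying the integrality hypothesis and which projection it controls, and making sure the ``$\AA^{\trp}\cc \neq \zero$'' condition is exactly what guarantees the integer is strictly positive (as opposed to zero). I would handle this by diagonalizing: write $\AA = \UU\Sigma\VV^{\trp}$ in reduced SVD form, express everything in the $\UU,\VV$ bases, observe $\PPi_{\AA}\cc = \UU\UU^{\trp}\cc$, and reduce the whole statement to the scalar fact that a nonzero vector with integer squared norm has squared norm $\geq 1$, rescaled by the singular values. Once the bookkeeping of which matrix's image $\bb$ lands in is done carefully, the rest is immediate.
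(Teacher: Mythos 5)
Your proof is correct and follows essentially the same SVD-based route as the paper: diagonalize, write $\norm{\PPi_{\AA}\cc}_2^2 = \sum_{i:\sigma_i\neq 0}\alpha_i^2/\sigma_i^2 \geq \norm{\AA^{\trp}\cc}_2^2/\sigma_{\max}^2(\AA)$, and use integrality to conclude $\norm{\AA^{\trp}\cc}_2^2 \geq 1$. You also correctly spotted that the stated hypothesis should read $\norm{\AA^{\trp}\cc}_2^2$ integral (which is what the paper's own proof uses), and your intermediate display inherits but then silently fixes via the SVD a second typo from the paper's preliminaries, where $\PPi_{\AA}$ is written as $\AA(\AA\AA^{\trp})^{\pinv}\AA^{\trp}$ rather than the type-correct $\AA(\AA^{\trp}\AA)^{\pinv}\AA^{\trp}$.
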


\begin{proof}
As $\norm{\AA\cc}_2^2$ is integral, the condition of
$\AA^{\trp} \cc \neq 0$ implies $\norm{\AA^{\trp}\cc}_2 \ge 1$.
Consider the SVD of $\AA$:
\[
\AA = \sum_i \sigma_i \rr^i (\ss^i)^{\trp}.
\]
Substituting it in for $\AA^{\trp}$ gives
\[
\AA^{\trp} \cc = \sum_i \left( \sigma_i (\rr^i)^{\trp} \cc \right) \ss^i,
\]
and we will use $\alpha_i = \sigma_i (\rr^i)^{\trp} \cc$ to denote
the coefficients of $\cc$ against the singular vectors of $\AA$.
Note that $\sigma_i = 0$ implies $\alpha_i = 0$.
The norm condition on $\AA^{\trp} \cc$ also means
$\sum_i \alpha_i^2 = \sum_{i : \sigma_i \neq 0} \alpha_i^2 \ge 1$, which then gives
\[
\norm{\PPi_{\AA}\cc}_2^2 
= \cc^{\trp}\AA \left( \AA^{\trp}\AA \right)^{\dagger} \AA^{\trp} \cc \\
 = \left( \sum_i \alpha_i (\ss^i)^{\trp} \right)
\left( \sum_{i : \sigma_i \neq 0} \frac{1}{\sigma_i^2} \ss^i (\ss^i)^{\trp} \right)
\left( \sum_i \alpha_i \ss^i \right) \\
 = \sum_{i : \sigma_i \neq 0} \frac{\alpha_i^2}{\sigma_i^2} \\
 \ge \frac{1}{\sigma_{\max}^2(\AA)}.
\]
This completes the proof.
\end{proof}

\begin{lemma}
\label{lem:mctogenerr}
Let $\BB\xxb = \ccb$ be the linear system returned by a call to
$\algGZTtoMCT(\AA, \cca, \epsa)$, and let $\epsb$ be the
the error parameter returned by this call.
Let $\xxb$ be a vector such that
\begin{align}
\norm{\BB\xxb - \PPi_{\BB} \ccb}_2 \le \epsb \norm{\PPi_{\BB}\ccb}_2.
\label{eqn:error_b}
\end{align}
Let $\xxatil$ be the vector returned by a call to $\algMCTtoGZTsoln(\AA, \BB, \cca, \xxb)$. Then,
\[
\norm{\AA\xxatil - \PPi_{\AA} \cca}_2 \le \epsa \norm{\PPi_{\AA} \cca}_2.
\label{eqn:error_a}
\]
\end{lemma}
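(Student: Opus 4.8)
The plan is to chain together the exact-solver correspondence (Claim~\ref{clm:exactReduction}), the identity relating the optimal residuals of the two systems, and the lower bound on projection lengths (Lemma~\ref{lem:ProjLengthLower}), and then check that the definition of $\epsb$ in Algorithm~\ref{alg:GZ2toMC2} was chosen exactly to make the arithmetic work out. First I would split into the two cases handled by $\algMCTtoGZTsoln$. If $\AA^{\trp}\cca = \zero$, then $\PPi_{\AA}\cca = \zero$ (it lies in $\im(\AA)$ and is orthogonal to it), so the right-hand side of the desired inequality is $0$; since $\xxatil = \zero$ gives $\AA\xxatil = \zero = \PPi_{\AA}\cca$, the inequality holds trivially. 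So assume $\AA^{\trp}\cca \neq \zero$, which is the substantive case, and here $\xxatil = \xxb_{1:n} = \xxa$, the restriction of $\xxb$ to the original variables.

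In the substantive case I would argue as follows. By Fact~\ref{fac:projIsQFmin} applied to the $\BB$-system, the hypothesis \eqref{eqn:error_b} is equivalent to $\norm{\BB\xxb - \ccb}_2^2 \le \norm{\BB\xxbopt - \ccb}_2^2 + \epsb^2\norm{\PPi_{\BB}\ccb}_2^2$, where $\xxbopt$ is a minimizer; equivalently $\norm{\BB\xxb - \PPi_\BB\ccb}_2^2 \le \epsb^2 \norm{\PPi_\BB \ccb}_2^2$, and since $\norm{\BB\xxb - \ccb}_2^2 = \norm{(\II - \PPi_\BB)\ccb}_2^2 + \norm{\BB\xxb - \PPi_\BB \ccb}_2^2$, subtracting the optimum $\norm{(\II-\PPi_\BB)\ccb}_2^2$ gives an additive bound. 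Now I invoke Claim~\ref{clm:exactReduction}: for the fixed $\xxa$ we have
\[
\norm{\AA\xxa - \cca}_2^2 = \frac{\alpha+1}{\alpha}\min_{\xxaux}\norm{\BB\tbinom{\xxa}{\xxaux} - \ccb}_2^2 \le \frac{\alpha+1}{\alpha}\norm{\BB\xxb - \ccb}_2^2.
\]
Combining this with the additive bound above, and using the optimal-value identity $\norm{\cca - \PPi_\AA \cca}_2^2 = (1 + \tfrac1\alpha)\norm{\ccb - \PPi_\BB\ccb}_2^2$ noted in the Remark after Claim~\ref{clm:exactReduction}, the $\norm{(\II-\PPi)\ccb}^2$ terms cancel against part of $\norm{\cca - \PPi_\AA\cca}^2$, leaving
\[
\norm{\AA\xxa - \PPi_\AA\cca}_2^2 \le \frac{\alpha+1}{\alpha}\,\epsb^2\,\norm{\PPi_\BB\ccb}_2^2.
\]
(Here I'm using that $\norm{\AA\xxa - \cca}_2^2 = \norm{(\II-\PPi_\AA)\cca}_2^2 + \norm{\AA\xxa - \PPi_\AA\cca}_2^2$.)

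It remains to convert the $\norm{\PPi_\BB\ccb}_2^2$ on the right into $\epsa^2\norm{\PPi_\AA\cca}_2^2$. Using the Remark's identity again, $\norm{\PPi_\BB\ccb}_2^2 = \frac{\alpha}{\alpha+1}\norm{\PPi_\AA\cca}_2^2$, which already kills the $\frac{\alpha+1}{\alpha}$ factor, so it suffices to show $\epsb^2 \le \epsa^2 \cdot \frac{\norm{\PPi_\AA\cca}_2^2}{\norm{\PPi_\BB\ccb}_2^2}\cdot\frac{\alpha}{\alpha+1}$; wait — more carefully, after substituting I need $\frac{\alpha+1}{\alpha}\epsb^2 \cdot \frac{\alpha}{\alpha+1}\norm{\PPi_\AA\cca}_2^2 \le \epsa^2\norm{\PPi_\AA\cca}_2^2$, i.e. just $\epsb^2 \le \epsa^2$, which is immediate from the formula for $\epsb$ in Algorithm~\ref{alg:GZ2toMC2} since both correction factors there are at most $1$. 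However, this looks too lossy to have needed Lemma~\ref{lem:ProjLengthLower}, so the actual role of that lemma (and of the $\norm{\cca}_2^2\sigma_{\max}^2(\AA)$ term in $\epsb$) is presumably to handle the additive-versus-multiplicative conversion without assuming $\norm{\PPi_\AA\cca}$ is known: one replaces $\norm{\PPi_\AA\cca}_2^2$ in the denominator by its lower bound $1/\sigma_{\max}^2(\AA)$ from Lemma~\ref{lem:ProjLengthLower} (valid since $\AA \in \genZtwoCl$ has integer entries so $\norm{\AA\cca}_2^2$ — or rather the relevant integrality — applies, and $\AA^\trp\cca \neq \zero$), and one upper-bounds $\norm{\PPi_\BB\ccb}_2^2 \le \norm{\ccb}_2^2 = \norm{\cca}_2^2$. \textbf{The main obstacle} will be getting these normalizations exactly right: tracking which quantity is bounded multiplicatively versus additively at each step, confirming that the precise constant in the definition of $\epsb$ — namely $\epsa(1+\tfrac1\alpha)^{-1/2}(1 + \tfrac{\norm{\cca}_2^2\sigma_{\max}^2(\AA)}{\alpha+1})^{-1/2}$ — is exactly what makes $\frac{\alpha+1}{\alpha}\epsb^2\norm{\PPi_\BB\ccb}_2^2 \le \epsa^2\norm{\PPi_\AA\cca}_2^2$ go through after applying Lemma~\ref{lem:ProjLengthLower}, and verifying the integrality hypothesis of Lemma~\ref{lem:ProjLengthLower} holds in this application (possibly after noting $\cca$ can be taken integral, or that $\AA^\trp\AA\cca$-type quantities are integral). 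The algebra is routine once the bookkeeping is fixed; the conceptual content is entirely in Claim~\ref{clm:exactReduction}, the residual identity, and Lemma~\ref{lem:ProjLengthLower}.
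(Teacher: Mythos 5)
Your case split and the first half of the argument are sound and essentially parallel the paper's route: you derive $\norm{\AA\xxa - \PPi_\AA\cca}_2^2 \le \frac{\alpha+1}{\alpha}\,\epsb^2\,\norm{\PPi_\BB\ccb}_2^2$ by combining Claim~\ref{clm:exactReduction} with the decomposition of $\norm{\AA\xxa-\cca}_2^2$ and $\norm{\BB\xxb-\ccb}_2^2$ into ``residual to projection'' plus ``residual of projection.'' The paper obtains the same inequality via the Schur-complement relation $\frac{\alpha}{\alpha+1}\AA^\trp\AA = \mathrm{Schur}(\BB^\trp\BB)$ and the equivalent error norm $\norm{\xxa - \xxaopt}_{\AA^\trp\AA}$; your residual-based phrasing is a legitimate alternative formulation of the same step.

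However, the second half breaks down at a specific point. You invoke the identity $\norm{\PPi_\BB\ccb}_2^2 = \frac{\alpha}{\alpha+1}\norm{\PPi_\AA\cca}_2^2$ ``from the Remark,'' but the Remark's identity scales \emph{residuals}, not projections: it says $\norm{\cca - \PPi_\AA\cca}_2^2 = (1+\tfrac1\alpha)\norm{\ccb - \PPi_\BB\ccb}_2^2$. Combined with $\norm{\ccb}_2 = \norm{\cca}_2$ (because $\ccb = (\cca;\zero)$) and Pythagoras, this actually gives the \emph{opposite-direction} relation
\[
\norm{\PPi_\BB\ccb}_2^2 \;=\; \norm{\PPi_\AA\cca}_2^2 \;+\; \tfrac{1}{\alpha+1}\,\norm{\cca - \PPi_\AA\cca}_2^2
\;\ge\; \norm{\PPi_\AA\cca}_2^2,
\]
so $\norm{\PPi_\BB\ccb}_2$ is never smaller than $\norm{\PPi_\AA\cca}_2$, and no factor ``kills'' anything for free. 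You notice something is off and attempt a repair by bounding $\norm{\PPi_\BB\ccb}_2^2 \le \norm{\cca}_2^2$ and $\norm{\PPi_\AA\cca}_2^{-2} \le \sigma_{\max}^2(\AA)$. That crude pair gives $\norm{\PPi_\BB\ccb}_2^2 \le \norm{\cca}_2^2 \sigma_{\max}^2(\AA)\,\norm{\PPi_\AA\cca}_2^2$, and plugging this into your earlier inequality would require the multiplier $\frac{\alpha+1}{\alpha}\cdot\norm{\cca}_2^2\sigma_{\max}^2(\AA)\cdot\epsb^2 / \epsa^2$ to be at most $1$; with the algorithm's $\epsb$ this reduces to requiring $\norm{\cca}_2^2\sigma_{\max}^2(\AA) \le 1 + \tfrac{\norm{\cca}_2^2\sigma_{\max}^2(\AA)}{\alpha+1}$, which fails whenever $\norm{\cca}_2^2\sigma_{\max}^2(\AA) > \tfrac{\alpha+1}{\alpha}$, i.e. generically. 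The missing step is to use the \emph{exact} identity above: keep the leading $\norm{\PPi_\AA\cca}_2^2$ term verbatim and apply Lemma~\ref{lem:ProjLengthLower} only to the additive remainder, giving $\norm{\PPi_\BB\ccb}_2^2 \le \bigl(1 + \tfrac{\norm{\cca}_2^2\sigma_{\max}^2(\AA)}{\alpha+1}\bigr)\norm{\PPi_\AA\cca}_2^2$. That ``$1+$'' is exactly what the algorithm's choice of $\epsb$ anticipates, and dropping it (as your crude bound does) makes the numbers not close.
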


\begin{proof}
We first consider the case $\AA^{\trp} \cca = {\bf 0}$, then Algorithm~\ref{alg:GZ2toMC2solnback} returns $\xxatil = {\bf 0}$, which gives
\[
\norm{\AA\xxatil - \PPi_{\AA}\cca}_2 = {\bf 0},
\]
and completes the proof for this case.
When $\AA^{\trp} \cca \neq \zero$, the
Algorithm~\ref{alg:GZ2toMC2solnback} returns  $\xxatil = \xxa$, where
$\xxb =
\begin{pmatrix}
  \xxa \\
  \xxaux
\end{pmatrix}
$.
Let 
\[
\xxbopt
\defeq
\left( \begin{array}{c}
\xxaopt\\ 
\xxauxopt
\end{array} \right)
\in
\argmin_{\xx} \norm{\BB\xx - \ccb}_2.
\]
By Fact~\ref{fac:projIsQFmin}, the condition \eqref{eqn:error_b} is equivalent to
\[
\norm{\xxb - \xxbopt}_{\BB^{\trp}\BB} \le \epsb \norm{\PPi_{\BB} \ccb}_2.
\label{eqn:diff_b}
\]
By Claim~\eqref{clm:exactReduction}, 
$\xxaopt \in \argmin_{\xx} \norm{\AA\xx - \cca}_2$. 

We now upper bound the error of $\norm{\xxa - \xxaopt}_{\AA^{\trp}\AA}$.
%
Note 
\[
\xxb - \xxbopt
= \left( \begin{array}{c}
\xxa - \xxaopt \\ 
\xxaux - \xxauxopt 
\end{array} \right).
\]
By Claim~\ref{clm:schur}  and Claim~\ref{clm:schur_b}, 
\[
\norm{\xxa - \xxaopt}_{\AA^{\trp}\AA}^2
 \le 
\left(1 + \frac{1}{\alpha} \right) \norm{\xxb - \xxbopt }_{\BB^{\trp}\BB}^2.
\]
Then, we lower bound $\norm{\PPi_{\AA}\cca}_2$.
By Claim~\ref{clm:exactReduction}, 
\[
\norm{\cca - \PPi_{\AA} \cca}_2^2
= \left( 1 + \frac{1}{\alpha} \right) 
\norm{\ccb - \PPi_{\BB} \ccb}_2^2.
\]
Since $\PPi_{\AA} \cca \perp \cca - \PPi_{\AA} \cca$, and $\PPi_{\BB} \ccb \perp \ccb - \PPi_{\BB} \ccb$, we have
\begin{align*}
\norm{\PPi_{\AA}\cca}_2^2 &= \norm{\cca}_2^2 - \norm{\cca - \PPi_{\AA} \cca}_2^2 \\
& = \norm{\ccb}_2^2 - \norm{\ccb - \PPi_{\BB} \ccb}_2^2 - \frac{1}{\alpha+1} \norm{\cca - \PPi_{\AA}\cca}_2^2 \\
& = \norm{\PPi_{\BB} \ccb}_2^2  - \frac{1}{\alpha+1} \norm{\cca - \PPi_{\AA}\cca}_2^2. 
\end{align*}
Thus, 
\[
\norm{\PPi_{\BB} \ccb}_2^2 
= \norm{\PPi_{\AA}\cca}_2^2
+ \frac{1}{\alpha+1} \norm{\cca - \PPi_{\AA}\cca}_2^2.
\]
Since $\AA^{\trp} \cca \neq \zero$,
and because $\AA^{\trp} \cca$ is integral, 
by Lemma~\ref{lem:ProjLengthLower}, we have
\[
\norm{\PPi_{\AA}\cca}_2^2 
 \ge \frac{1}{\sigma_{\max}^2(\AA)}.
\]
Thus,
\[
\norm{\PPi_{\BB} \ccb}_2^2 
\le
\norm{\PPi_{\AA}\cca}_2^2
+ \frac{1}{\alpha+1} \norm{\cca}_2^2
\le
\left( 1 + \frac{\norm{\cca}_2^2 \sigma^2_{\max}(\AA)}{\alpha+1} \right)
\norm{\PPi_{\AA}\cca}_2^2.
\]
Therefore, 
\[
\norm{\AA\xxa - \PPi_{\AA}\cca}_2 
\le \left(1 + \frac{1}{\alpha} \right)^{1/2} \left( 1 + \frac{\norm{\cca}_2^2 \sigma^2_{\max}(\AA)}{\alpha+1} \right)^{1/2}  \epsb
\norm{\PPi_{\AA}\cca}_2.
\]
Finally, by the setting
\[
\epsb =
\frac{\epsa}
{
\left(1 + \frac{1}{\alpha} \right)^{1/2}
\left( 1 + \frac{\norm{\cca}_2^2 \sigma^2_{\max}(\AA)}{\alpha+1}
\right)^{1/2}
} ,
\]
we have
\[
\norm{\AA\xxa - \PPi_{\AA}\cca}_2 
\le \epsa
\norm{\PPi_{\AA}\cca}_2.
\]
This completes the proof.
\end{proof}

\subsection{Bounding Condition Number of the New Matrix}
\label{subsec:GZToMC2ConditionNumber}

In this section, we show that the condition number of $\BB$ is upper bounded by the condition number of $\AA$ with a poly($n$) multiplicative factor.

We first characterize the null space of $\BB$.
%
%
Recall that in Equation~\eqref{eq:MC2BDecomposeX}, we write  $\xxb = \left(  \begin{array}{c}
\xxa \\
\xxaux
\end{array}
\right)$. 
In the following, we will employ a \emph{different indexing} of
$\xxaux$ than the one defined at the beginning of
Section~\ref{sec:GZ2toMC2}.
We also reorder the columns of $\BB$ so that $\BB \xxb
= \zero$ represents the same equations as before.
For appropriately chosen indices $g_1$ and $g_2$, we define
\begin{enumerate}
\item $\xxaux_{1:g_1}$ corresponds to the $\uu$-coordinates of the auxiliary variables created in $\mctwoCl$-gadgets \emph{whose coefficients are nonzero}.
\item $\xxaux_{g_1+1:g_2}$ corresponds to the $\vv$-coordinates of the auxiliary variables created in $\mctwoCl$-gadgets  \emph{whose coefficients are nonzero}.
\item $\xxaux_{g_2+1: \nb-\na}$ corresponds to the coordinates of the auxiliary variables created in $\mctwoCl$-gadgets \emph{whose coefficients are zero}.
\end{enumerate}
Using this ordering, for $0 \le i \le (g_2-g_1)/6$,
$\xxaux_{g_1+6i+1: g_1+6i+6}$ corresponds to the $\vv$-coordinates of
the auxiliary variables with non-zero coefficients in a single $\mctwoCl$-gadget.

Given any fixed $\xxa \in \nulls(\AA)$, we extend $\xxa$ to a vector in dimension $\nb$:
\begin{align}
\ppxa \defeq \left( \begin{array}{c}
\xxa \\
\xxaux
\end{array} \right).
\label{eqn:GZ2toMC2apx_pxa_def}
\end{align}
We assign the values of the auxiliary variables of $\ppxa$ in the order that they are created in Algorithm~\ref{alg:GZ2toMC2}. 
In a $\mctwoCl$-gadget, suppose the values of variables $\uu_{j_k}$ and $\uu_{l_k}$ have already been assigned. Let $\uu_t, \uu_{t+i}, \vv_{t+j}, 3 \le i \le 6, 1\le j \le 6$ be the auxiliary variables created in this gadget.
We assign values as follows,
\begin{align*}
\uu_t & = (\uu_{j_k} + \uu_{l_k}) / 2, \\
\uu_{t+3} & = \uu_t.  \\
\uu_{t+4} & = \uu_{j_k}, \\
\uu_{t+5} & = \uu_t, \\
\uu_{t+6} & = \uu_{l_k}, \\
\vv_{t+1} &= 0, \\
\vv_{t+2} &= \uu_t - \uu_{j_k}, \\
\vv_{t+3} &= 0, \\
\vv_{t+4} &= \uu_t - \uu_{j_k}, \\
\vv_{t+5} &= \uu_t- \uu_{j_k}, \\
\vv_{t+6} &= 0.
\end{align*}
This gives the first $g_2$ entries of $\xxaux$, and we set all the rest of the entries to be 0.

Let $\ee_i \in \mathbb{R}^{(g_2-g_1)/6}$ be the $i$th standard basis vector and $\one$ be the all-one vector in $6$ dimensions.
We define $\vecppi \in \mathbb{R}^{\nb}$ to be a vector whose nonzero entries are given by
\begin{align}
\vecppi_{\na+g_1+1:\na+g_2} = \ee_i \otimes \one.
\label{eqn:GZ2toMC2apx_pi}
\end{align}
Let $\ee_j \in \mathbb{R}^{{\nb -\na - g_2}}$ is the $j$th standard basis vector.
We define $\qqj \in \mathbb{R}^{\nb}$ to be a vector whose nonzero entries are given by 
\begin{align}
\label{eqn:GZ2toMC2apx_qj}
\qqj_{\na+g_2+1:\nb} = \ee_j.
\end{align}

\begin{lemma}
\label{lem:GZtoMC2Null}
$\nulls(\BB) = \Span{ \ppxa, \vecppi, \qqj: \xxa \in \nulls(\AA), 1 \le i \le \frac{g_2-g_1}{6}, 1\le j \le \nb-g_2}$.
\end{lemma}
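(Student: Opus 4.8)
The plan is to prove the two inclusions separately. For the containment $\supseteq$, I would verify directly that each listed generator lies in $\nulls(\BB)$. For the vectors $\qqj$: these are supported on auxiliary variables whose coefficients in $\BB$ are identically zero (the third block of the reordering), so $\BB\qqj = \zero$ is immediate. For the vectors $\vecppi$: each corresponds to adding the all-ones vector on the six $\vv$-coordinates $\vv_{t+1},\ldots,\vv_{t+6}$ of a single $\mctwoCl$-gadget. Inspecting the ten gadget equations in Algorithm~\ref{alg:MC2gadget}, every equation either involves no $\vv_{t+j}$ at all, or involves exactly two of them with opposite signs (e.g. $\vv_{t+3}-\vv_{t+1}=0$, $\vv_{t+2}-\vv_{t+4}=0$, $\vv_{t+5}-\vv_{t+2}=0$, $\vv_{t+1}-\vv_{t+6}=0$, and the two equations of the form $\uu-\vv-(\uu-\vv)$); shifting all six by a constant therefore preserves every equation, and no $\uu$-equation or other gadget is affected since the six coordinates are local to this gadget. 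Hence $\BB\vecppi=\zero$. For $\ppxa$ with $\xxa\in\nulls(\AA)$: by Claim~\ref{clm:exactReduction} applied with $\cca=\zero$, we have $\norm{\AA\xxa}_2^2 = \tfrac{\alpha+1}{\alpha}\min_{\xxaux}\norm{\BB\binom{\xxa}{\xxaux}}_2^2 = 0$, so the minimizing $\xxaux$ yields $\BB\binom{\xxa}{\xxaux}=\zero$; I would then check that the explicit assignment given just before the lemma statement (the averaging assignment $\uu_t=(\uu_{j_k}+\uu_{l_k})/2$, etc.) is exactly such a minimizer — this is a direct substitution into the gadget equations and into the ``main'' auxiliary equations $2\uu_t-(\uu_{j_k}+\uu_{l_k})=0$, all of which are satisfied since the averaging makes each equation hold exactly, not just in the Cauchy--Schwarz-tight sense. (Alternatively one can cite the construction in the proof of Claim~\ref{clm:errorsObtainable} with $\eps=0$.) Since $\nulls(\BB)$ is a subspace, it contains the span of all these generators.

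For the reverse containment $\subseteq$, I would take an arbitrary $\zz\in\nulls(\BB)$, write $\zz=\binom{\zza}{\zzaux}$, and show $\zz$ is a combination of the generators. First, by Claim~\ref{clm:exactReduction} with $\cca=\zero$, $\BB\zz=\zero$ forces $\norm{\AA\zza}_2^2\le\tfrac{\alpha+1}{\alpha}\norm{\BB\zz}_2^2=0$, so $\zza\in\nulls(\AA)$; hence $\ppxa[\zza]$ (the canonical extension) is one of the generators, and it suffices to show $\zz-\ppxa[\zza]\in\Span{\vecppi,\qqj}$. Now $\zz-\ppxa[\zza]$ has zero in the $\xxa$-block and still lies in $\nulls(\BB)$, so I may assume from the start that $\zza=\zero$, i.e. $\zz$ is supported on auxiliary coordinates. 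The coordinates in the third block (zero-coefficient auxiliary variables) are completely unconstrained, so that part of $\zz$ is automatically in $\Span{\qqj}$ and can be subtracted off; assume now $\zz$ is supported only on the first two auxiliary blocks (the nonzero-coefficient $\uu$- and $\vv$-coordinates of the gadgets). The task reduces to a per-gadget computation: since $\zza=\zero$, for each gadget the two ``input'' variables $\uu_{j_k},\uu_{l_k}$ are either original variables (now zero) or $\uu_t$-outputs of previously processed gadgets; I would process gadgets in creation order and show inductively that within each gadget the only freedom left, once the inputs are fixed, is a single scalar — the common shift of the six $\vv$-coordinates — which is exactly the direction $\vecppi$, while all $\uu$-coordinates ($\uu_t,\uu_{t+3},\uu_{t+4},\uu_{t+5},\uu_{t+6}$) are forced to be $0$ by the chain of equations $\uu_{t+4}=\uu_{j_k}$, $\uu_{t+6}=\uu_{l_k}$, $\uu_t=\uu_{t+3}=\uu_{t+5}$ together with the two difference equations forcing $\uu_t$ to the average, hence $\uu_t=0$ when the inputs are $0$. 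Propagating this, every gadget-output $\uu_t$ is $0$, so every downstream gadget also has zero inputs, and the induction closes.

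The main obstacle I anticipate is the per-gadget local analysis in the reverse direction: one must carefully untangle which of the ten equations pin down the $\uu$-coordinates versus which only constrain differences of $\vv$-coordinates, and confirm that the $\vv$-coordinates have exactly a one-dimensional kernel (the uniform shift) rather than more. Concretely, from the four equations $\vv_{t+3}=\vv_{t+1}$, $\vv_{t+2}=\vv_{t+4}$, $\vv_{t+5}=\vv_{t+2}$, $\vv_{t+1}=\vv_{t+6}$ one gets $\vv_{t+1}=\vv_{t+3}=\vv_{t+6}$ and $\vv_{t+2}=\vv_{t+4}=\vv_{t+5}$, a two-dimensional space, and then the two equations $\uu_{t+3}-\vv_{t+3}=\uu_{t+4}-\vv_{t+4}$ and $\uu_{t+5}-\vv_{t+5}=\uu_{t+6}-\vv_{t+6}$ — combined with $\uu_{t+3}=\uu_{t+5}$, $\uu_{t+4}=\uu_{j_k}=0$, $\uu_{t+6}=\uu_{l_k}=0$ — collapse this to the single relation $\vv_{t+1}=\vv_{t+2}$ (after using $\uu_t=0$), leaving the one-dimensional uniform shift. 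I would also need a dimension-count cross-check: verifying that $\dim\nulls(\BB)=\dim\nulls(\AA)+\frac{g_2-g_1}{6}+(\nb-\na-g_2)$, which follows from the inductive argument and confirms no generators are missing. The rest — the $\supseteq$ direction and the reductions peeling off the $\qqj$ and the $\ppxa$ components — is routine.
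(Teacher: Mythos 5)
Your proof is correct and follows the same strategy as the paper's: use Claim~\ref{clm:exactReduction} with $\cca = \zero$ to conclude $\xxa \in \nulls(\AA)$, then a per-gadget analysis showing the $\vv$-coordinates have exactly a one-dimensional residual freedom (the uniform shift $\vecppi$) and the zero-coefficient coordinates are unconstrained ($\qqj$). The paper's version of the per-gadget step simply records $\vv_{t+1}=\vv_{t+3}=\vv_{t+6}=\gamma$, $\vv_{t+2}=\vv_{t+4}=\vv_{t+5}=\theta$ with $\gamma-\theta$ fixed by the $\uu$'s and moves on; you spell out the forcing chain ($\uu_{t+4}=\uu_{j_k}$, $\uu_{t+6}=\uu_{l_k}$, $\uu_t=\uu_{t+3}=\uu_{t+5}$, and the two difference equations pinning $\uu_t$ to the average and collapsing $\gamma=\theta$ once the inputs vanish) together with the induction along gadget creation order that propagates $\uu_t=0$. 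That extra detail is the right rigor to add — it is exactly what the paper's terse ``according to the gadget constraints'' is sweeping under the rug — and your dimension-count sanity check is a sensible cross-verification, though not required.
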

\begin{proof}
Let $\SS$ be the subspace of $\Span{\ppxa, \vecppi, \qqj: \xxa \in \nulls(\AA), 1 \le i \le \frac{g_2-g_1}{6}, 1\le j \le \nb-g_2}$.

According to definitions of the vectors, we can check that $\SS \subseteq \nulls(\BB)$.

It remains to show that $\SS \supseteq \nulls(\BB)$.
Let $\xx \in \nulls(\BB)$.
By Claim~\ref{clm:exactReduction} with $\cca = {\bf 0}$, we have
\[
\AA\xxa = {\bf 0},
\]
that is, $\xxa \in \nulls(\AA)$.
According to the $\mctwoCl$-gadget constraints in Algorithm~\ref{alg:MC2gadget}, we have
\[
\vv_{t+1} = \vv_{t+3} = \vv_{t+6} = \gamma, 
\mbox{ and }
\vv_{t+2} = \vv_{t+4} = \vv_{t+5} = \theta, 
\]
where $\gamma - \theta = (\uu_{j_k} - \uu_{j_l})/2$.
Besides, since all entries in $\xx_{\na+g_2+1:\nb}$ have zero coefficients, they are free to choose.
Thus, $\xx \in \SS$, that is, $\SS \supseteq \nulls(\BB)$.

This completes the proof.
\end{proof}

We upper bound the largest singular value of $\BB$ in the following claim.

\begin{lemma}
\label{lem:GZToMC2LambdaMax}
$\lambda_{\max} \left( \BB^{\trp} \BB \right) = O \left( \alpha \nnz(\AA)^2 \norm{\AA}_{\infty}^2 \log^{2} \norm{\AA}_{\infty} \right)
$.
\end{lemma}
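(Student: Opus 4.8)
The plan is to bound $\lambda_{\max}(\BB^{\trp}\BB) = \sigma_{\max}(\BB)^2 = \norm{\BB}_2^2$ by controlling the Euclidean operator norm of $\BB$ through its decomposition~\eqref{eq:MC2BDecomposeB}, namely $\BB = \left(\begin{smallmatrix}\AAhat \\ \WW^{1/2}\BBhat\end{smallmatrix}\right)$, and then using the crude but sufficient bound $\norm{\BB}_2^2 \le \norm{\AAhat}_2^2 + \norm{\WW^{1/2}\BBhat}_2^2$ (since stacking rows adds the Gram matrices). For each of the two blocks I would estimate the operator norm via $\norm{\MM}_2 \le \sqrt{\norm{\MM}_1 \norm{\MM}_\infty}$, i.e. the geometric mean of the max column-$\ell_1$-norm and max row-$\ell_1$-norm. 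The key structural facts I need are already available from Algorithm~\ref{alg:GZ2toMC2} and Lemma~\ref{lem:GZToMC2NNZ}: every row of $\BBhat$ has $O(1)$ nonzero entries, each of absolute value at most a power of two bounded by $\norm{\AA}_\infty$ (the factor $2^r$ pulled out in iteration $r$, with $r \le \log\norm{\AA_i}_1 \le \log\norm{\AA}_\infty$); every row of $\AAhat$ likewise has $O(1)$ entries, with coefficients that are the original $\AA_{ij_+}$ values or the intermediate coefficients $2^r$, all bounded in absolute value by $\norm{\AA}_\infty$; and the weights satisfy $w_i = \alpha m_i = O(\alpha \cdot \mathrm{poly stuff})$ — more precisely $m_i = O(n_i \log\norm{\AA_i}_1)$, so $w_i = O(\alpha \nnz(\AA)\log\norm{\AA}_\infty)$.

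Concretely, I would argue as follows. For the $\AAhat$ block: each row has $O(1)$ entries each of size $O(\norm{\AA}_\infty)$, so $\norm{\AAhat}_\infty = O(\norm{\AA}_\infty)$; each column (i.e.\ each variable, original or auxiliary) appears in only $O(1)$ equations of $\mathcal{A}$ with coefficients of size $O(\norm{\AA}_\infty)$ — an original variable $\uu_j$ appears in one main equation plus in the replacement steps where it is paired off, but crucially each auxiliary variable is paired exactly once in the next round (Lemma~\ref{lem:GZToMC2NNZ}), so each column has $O(1)$ nonzeros — giving $\norm{\AAhat}_1 = O(\norm{\AA}_\infty)$, hence $\norm{\AAhat}_2^2 = O(\norm{\AA}_\infty^2)$. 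For the $\WW^{1/2}\BBhat$ block: each row of $\BBhat$ has $O(1)$ entries of size $O(\norm{\AA}_\infty)$ and gets scaled by $w_i^{1/2}$, so the row-$\ell_1$-norm is $O(w_i^{1/2}\norm{\AA}_\infty) = O((\alpha\nnz(\AA)\log\norm{\AA}_\infty)^{1/2}\norm{\AA}_\infty)$, giving $\norm{\WW^{1/2}\BBhat}_\infty = O((\alpha\nnz(\AA)\log\norm{\AA}_\infty)^{1/2}\norm{\AA}_\infty)$; for columns, each gadget auxiliary variable appears in $O(1)$ gadget equations all belonging to the same row $i$, so a column's $\ell_1$-norm is $O(w_i^{1/2}\norm{\AA}_\infty)$ as well, giving the same bound for $\norm{\WW^{1/2}\BBhat}_1$. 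Therefore $\norm{\WW^{1/2}\BBhat}_2^2 = O(\alpha\nnz(\AA)\log\norm{\AA}_\infty \cdot \norm{\AA}_\infty^2)$, and since I may want the stated $\nnz(\AA)^2\log^2\norm{\AA}_\infty$ form I can freely weaken $\nnz(\AA)\log\norm{\AA}_\infty \le \nnz(\AA)^2\log^2\norm{\AA}_\infty$. Summing the two blocks and absorbing the $\AAhat$ term (which is lower order when $\alpha \ge 1$, or can be weakened to fit) yields $\lambda_{\max}(\BB^{\trp}\BB) = O(\alpha\nnz(\AA)^2\norm{\AA}_\infty^2\log^2\norm{\AA}_\infty)$.

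The main obstacle I anticipate is not any single estimate but bookkeeping the coefficients of the auxiliary variables precisely: one must be careful that the $2^r$ factors do not compound across rounds (they do not, because each new variable enters with coefficient exactly $2^{r+1}$ and is immediately consumed in round $r+1$, never multiplied again), and that the gadget construction in Algorithm~\ref{alg:MC2gadget} only ever introduces entries of absolute value $1$ before the $\pm s\cdot 2^r$ scaling in Line~\ref{lin:callGadget}, so the per-row coefficient magnitude in $\BBhat_i$ is $O(2^r) = O(\norm{\AA}_\infty)$ and not larger. Once these per-entry and per-row/column sparsity bounds are nailed down, the norm bound follows mechanically from $\norm{\cdot}_2 \le \sqrt{\norm{\cdot}_1\norm{\cdot}_\infty}$; I do not expect to need a sharper spectral argument, since the lemma only asks for a polynomial factor loss.
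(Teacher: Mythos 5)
Your approach — bounding $\norm{\BB}_2^2$ by decomposing into blocks and applying $\norm{\MM}_2 \leq \sqrt{\norm{\MM}_1\norm{\MM}_\infty}$ — is genuinely different from the paper's. The paper expands $\xx^\trp\BB^\trp\BB\xx$ directly as a weighted 2-commodity quadratic form, pulls out the maximum edge weight in the Laplacian $\BB^\trp\BB$ (which is $w_i\cdot(2^r)^2 \leq \alpha\nnz(\BB)\norm{\AA}_\infty^2$), and bounds the remaining unweighted quadratic form by $8\max\{d_1,d_2,d_{1+2}\}\leq 8\nnz(\BB)$ via Cauchy--Schwarz on each edge term, yielding $O(\alpha\nnz(\BB)^2\norm{\AA}_\infty^2)$. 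Your route avoids the Laplacian/degree structure entirely and works on the incidence matrix $\BB$ directly; both are valid, and yours would in fact give a bound with a single $\log\norm{\AA}_\infty$ factor rather than $\log^2$ once the column counts are done correctly.

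However, there is a genuine gap in the column-sparsity claims for both blocks. You assert that every column of $\AAhat$ and of $\WW^{1/2}\BBhat$ has $O(1)$ nonzero entries. This is true for the auxiliary variables (each is confined to the processing of one row $i$, and by Lemma~\ref{lem:GZToMC2NNZ} is consumed in the next iteration), but it fails for the \emph{original} variables. An original variable $\uu_j$ can be chosen as $\uu_{j_k}$ or $\uu_{l_k}$ in the first pairing round of \emph{every} row $i$ with $\AA_{ij}\neq 0$, putting it into that many distinct $\mctwoCl$-gadgets and hence that many rows of $\BBhat$. Likewise, if $\uu_j$ is the unique variable of one sign in many rows of $\AA$, it survives the pairing and appears in that many final main equations $\mathcal{A}_i$ in $\AAhat$. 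In the worst case a column can thus have $\Theta(\nnz(\AA))$ nonzeros, not $O(1)$. The fix is mechanical: with the corrected per-column count you get $\norm{\AAhat}_1 = O(\nnz(\AA)\norm{\AA}_\infty)$ and $\norm{\WW^{1/2}\BBhat}_1 = O(\nnz(\AA)\,w_{\max}^{1/2}\norm{\AA}_\infty)$, and combined with the (correct) row bounds $\norm{\AAhat}_\infty = O(\norm{\AA}_\infty)$ and $\norm{\WW^{1/2}\BBhat}_\infty = O(w_{\max}^{1/2}\norm{\AA}_\infty)$ with $w_{\max}\leq\alpha\nnz(\BB)$, you obtain $\norm{\WW^{1/2}\BBhat}_2^2 = O(\alpha\nnz(\AA)\nnz(\BB)\norm{\AA}_\infty^2) = O(\alpha\nnz(\AA)^2\log\norm{\AA}_\infty\norm{\AA}_\infty^2)$, and the $\AAhat$ block is lower order. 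This still gives the lemma (with room to spare), but the $O(1)$-column claim as written would need to be replaced by the $O(\nnz(\AA))$ bound.
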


\begin{proof}
By the Courant-Fischer Theorem, 
\[
\lambda_{\max} \left(\BB^{\trp} \BB \right)
= \max_{\xx} \frac{\xx^{\trp} \BB^{\trp}  \BB \xx}{\xx^{\trp}\xx}.
\]

We write $\xx$ as $(\uu_1, \vv_2, \ldots, \uu_{\nb}, \vv_{\nb})$, so that each $(\uu_i, \vv_i)$ corresponds to the two coordinates of vertex $i$ in the graph. 
Expanding the right hand side, we get
\[
\frac{\sum_{(i,j) \in E_1} w_{(i,j)}(\uu_i - \uu_j)^2 + \sum_{(i,j) \in E_2} w_{(i,j)}(\vv_i - \vv_j)^2+ \sum_{(i,j) \in E_{1+2}} w_{(i,j)}(\uu_i - \vv_i - (\uu_j - \vv_j))^2 }{\sum_i \uu_i^2 + \vv_i^2}
\]
where $w_{(i,j)}$'s are the edge weights.
Let 
\[
w_{\max} = \max_{(i,j) \in E_1 \cup E_2 \cup E_{1+2}} w_{(i,j)}.
\]
According to our construction of $\BB$ in Algorithm~\ref{alg:GZ2toMC2}, 
\begin{align}
w_{\max} \le \alpha \nnz(\BB) \norm{\AA}_{\infty}^2.
\label{eqn:eig_b_max_weight}
\end{align}
Thus, $\lambda_{\max} \left( \BB^{\trp}  \BB \right)$ is at most 
\[
 w_{\max} \max_{\uu}  
\frac{\sum_{(i,j) \in E_1} (\uu_i - \uu_j)^2 + \sum_{(i,j) \in E_2}(\vv_i - \vv_j)^2+ \sum_{(i,j) \in E_{1+2}} (\uu_i - \vv_i - (\uu_j - \vv_j))^2 }{\sum_i \uu_i^2 + \vv_i^2}.
\]
We upper bound each term of the numerator.
Let $d_1, d_2, d_{1+2}$ be the maximum vertex degree of the graphs $G_1, G_2, G_{1+2}$, respectively.
By Cauchy-Schwarz inequality,
\[
\sum_{(i,j) \in E_1} (\uu_i - \uu_j)^2
\le 2 \sum_{(i,j) \in E_1} \uu_i^2 + \uu_j^2
\le 2 d_1  \left( \sum_i \uu_i^2 + \vv_i^2 \right),
\]
\[
\sum_{(i,j) \in E_2} (\vv_i - \vv_j)^2
\le 2 d_2  \left( \sum_i \uu_i^2 + \vv_i^2 \right),
\]
and
\begin{align*}
\sum_{(i,j) \in E_{1+2}} (\uu_i - \vv_i - (\uu_j - \vv_j))^2 
& \le 4  \sum_{(i,j) \in E_{1+2}} \uu_i^2 + \vv_i^2 + \uu_j^2 + \vv_j^2 \\
& \le 4d_{1+2} \left( \sum_i  \uu_i^2 + \vv_i^2 \right).
\end{align*}
Plugging the above inequalities to the expansion, we have
\begin{align*}
\lambda_{\max}\left( \BB^{\trp}  \BB \right)
&\le w_{\max} \cdot 8 \max \{ d_1, d_2, d_{1+2}\} \\
&\le 8w_{\max} \nnz(\BB).
\end{align*}
By the upper bound of $\nnz(\BB)$ in Lemma~\ref{lem:GZToMC2NNZ} and the upper bound of $w_{\max}$ in Equation~\eqref{eqn:eig_b_max_weight}, we have
\[
\lambda_{\max}\left( \BB^{\trp}  \BB \right)
\le O \left( \alpha \nnz(\AA)^2 \norm{\AA}_{\infty}^2 \log^{2} \norm{\AA}_{\infty} \right).
\]
This completes the proof.
\end{proof}


Recall that, for a vector $\xx \in \mathbb{R}^{\nb}$,
$\xx_{\na+1:\na+g_1}$ corresponds to the auxiliary $\uu$-variables
with non-zero coefficients in $\mctwoCl$-gadgets, $\xx_{\na+1+g_1+6i:
  \na+6+g_1+6i}$ for $0 \le i \le (g_2-g_1)/6$ corresponds to the
auxiliary $\vv$-variables with non-zero coefficients in a single $\mctwoCl$-gadget, and
$\xx_{n+g_2+1:\nb}$ corresponds to the variables with zero coefficient.

\begin{lemma}
Let $0 \le \epsilon \le \left( 100 \nb \log \norm{\AA}_{\infty} \right)^{-1}$.
Let $\xxb \in \mathbb{R}^{\nb}$ satisfying
\begin{enumerate}
\item $\xxb_{\na+g_2+1:\nb} = {\bf 0}$,
\item $\xxb_{\na+g_1+6i+1} = 0, \forall 0 \le i \le (g_2-g_1)/6$, and
\item $\norm{\BB\xxb}_2 \le \epsilon \norm{\xxb}_2$.
\end{enumerate}
Then,
\[
\norm{\AA\xxa}_2 \le 
4 \left(1 + \frac{1}{\alpha} \right)^{1/2} \nb^{1/2} \epsilon
\norm{\xxa}_2.
\]
\label{lem:GZ2toMC2apx_small_eig}
\end{lemma}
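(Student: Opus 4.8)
The goal is to show that if $\xxb$ is nearly in the null space of $\BB$ (i.e., $\norm{\BB\xxb}_2 \le \epsilon\norm{\xxb}_2$) and satisfies the two normalization conditions on the auxiliary $\vv$-variables and the zero-coefficient variables, then the restriction $\xxa$ is nearly in the null space of $\AA$. The natural route is to combine three earlier facts: (i) Claim~\ref{clm:schur_b}, which says $\frac{\alpha}{\alpha+1}\AA^{\trp}\AA$ is the Schur complement of $\BB^{\trp}\BB$ onto the $\xxa$-coordinates, so that $\norm{\AA\xxa}_2^2 \le \left(1+\frac{1}{\alpha}\right)\min_{\yy}\norm{\BB\binom{\xxa}{\yy}}_2^2 \le \left(1+\frac{1}{\alpha}\right)\norm{\BB\xxb}_2^2$; and (ii) the hypothesis $\norm{\BB\xxb}_2 \le \epsilon\norm{\xxb}_2$. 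Chaining these gives $\norm{\AA\xxa}_2^2 \le \left(1+\frac1\alpha\right)\epsilon^2\norm{\xxb}_2^2$, so the entire task reduces to showing $\norm{\xxb}_2 \le O(\nb^{1/2})\norm{\xxa}_2$ — i.e., that under the stated normalization conditions, the auxiliary part of $\xxb$ cannot be much larger than $\xxa$.

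First I would bound $\norm{\xxaux}_2$ in terms of $\norm{\xxa}_2$. The zero-coefficient variables $\xxb_{\na+g_2+1:\nb}$ are $\zero$ by hypothesis~1, so they contribute nothing. For the $\uu$-auxiliary variables (with indices up to $\na+g_1$), the $\mctwoCl$-gadget constraints force chains of equalities like $\uu_t = \uu_{t+3} = \uu_{t+5}$, $\uu_{t+4} = \uu_{j_k}$, $\uu_{t+6} = \uu_{l_k}$, and $\uu_t$ is (approximately) the average $(\uu_{j_k}+\uu_{l_k})/2$. Since the gadgets are created in a tree-like order with each new $\uu_t$ an average of two previously-defined $\uu$-variables, each auxiliary $\uu$-variable is — up to an error of order $\epsilon\norm{\xxb}_2$ accumulated over at most $O(\log\norm{\AA}_\infty)$ levels (Lemma~\ref{lem:GZToMC2NNZ}) — a convex combination of the original $\xxa$-coordinates, hence bounded in absolute value by $\max_k|\xxa_k| + O(\log\norm{\AA}_\infty)\epsilon\norm{\xxb}_2 \le \norm{\xxa}_2 + O(\log\norm{\AA}_\infty)\epsilon\norm{\xxb}_2$. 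For the $\vv$-auxiliary variables, hypothesis~2 pins the ``$\vv_{t+1}$-type'' coordinate of each gadget to $0$, and the gadget constraints then express the other five $\vv$-variables of that gadget as (approximately) $\pm(\uu_t - \uu_{j_k})$, again up to $O(\epsilon\norm{\xxb}_2)$ slack; since $|\uu_t - \uu_{j_k}| \le |\uu_{j_k}| + |\uu_{l_k}| \le 2\norm{\xxa}_2$ plus accumulated error, each $\vv$-auxiliary coordinate is $O(\norm{\xxa}_2) + O(\log\norm{\AA}_\infty)\epsilon\norm{\xxb}_2$. Summing the squares over all $\nb$ coordinates gives $\norm{\xxb}_2^2 \le O(\nb)\norm{\xxa}_2^2 + O(\nb\log^2\norm{\AA}_\infty)\epsilon^2\norm{\xxb}_2^2$; the choice $\epsilon \le (100\,\nb\log\norm{\AA}_\infty)^{-1}$ makes the second term at most $\frac12\norm{\xxb}_2^2$ (after absorbing the $O(\cdot)$ constant into the $100$), so it can be moved to the left-hand side, yielding $\norm{\xxb}_2 \le O(\nb^{1/2})\norm{\xxa}_2$, and in fact with the constants one gets $\norm{\xxb}_2 \le 2\nb^{1/2}\norm{\xxa}_2$ or so.

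Combining: $\norm{\AA\xxa}_2 \le \left(1+\frac1\alpha\right)^{1/2}\epsilon\norm{\xxb}_2 \le \left(1+\frac1\alpha\right)^{1/2}\epsilon \cdot 2\nb^{1/2}\norm{\xxa}_2 \le 4\left(1+\frac1\alpha\right)^{1/2}\nb^{1/2}\epsilon\norm{\xxa}_2$, which is exactly the claimed bound (the slack of $4$ versus $2$ in the constant comfortably absorbs any constants lost in the error-propagation counting).

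\textbf{Main obstacle.} The delicate part is making the error-propagation argument for $\norm{\xxb}_2 \le O(\nb^{1/2})\norm{\xxa}_2$ precise. One has to argue that the gadget equalities are satisfied only \emph{approximately} — each violated by at most $\norm{\BB\xxb}_2 \le \epsilon\norm{\xxb}_2$ in absolute value — and that these per-equation errors accumulate at most linearly in the depth of the gadget-creation tree, which is $O(\log\norm{\AA}_\infty) \le O(\log\nb)$. Since $\norm{\xxb}_2$ itself appears on the right-hand side of the per-coordinate bound, the whole thing is a fixed-point/bootstrapping argument that only closes because $\epsilon$ is chosen small relative to $(\nb\log\norm{\AA}_\infty)^{-1}$; getting the bookkeeping of constants to fit under the factor $4$ is the one place requiring care, though no single step is deep. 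Hypotheses~1 and~2 are exactly what is needed to prevent the homogeneous null-space directions $\qqj$ and $\ppi$ (from Lemma~\ref{lem:GZtoMC2Null}) from letting $\norm{\xxb}_2$ blow up while $\norm{\xxa}_2$ stays bounded, so it is important to invoke them precisely where the $\vv$- and zero-coefficient coordinates are bounded.
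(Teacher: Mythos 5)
Your plan is correct and follows essentially the same route as the paper: use the gadget constraints plus hypotheses~1 and~2 to bound each auxiliary coordinate by $\norm{\xxa}_\infty$ plus an accumulated error of order $(\log\norm{\AA}_\infty)\,\epsilon\norm{\xxb}_2$, close the resulting bootstrap with the smallness of $\epsilon$ to get $\norm{\xxb}_2 \le 4\sqrt{\nb}\,\norm{\xxa}_2$, and then combine with Claim~\ref{clm:exactReduction} (with $\ccb=\zero$, equivalently the Schur-complement fact you invoke) to transfer $\norm{\BB\xxb}_2\le\epsilon\norm{\xxb}_2$ into the desired bound on $\norm{\AA\xxa}_2$. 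The only cosmetic difference is that the paper runs the bootstrap entirely in the $\ell_\infty$-norm (establishing $\norm{\xxb}_\infty \le 3\norm{\xxa}_\infty + 12\epsilon\sqrt{\nb}\,\norm{\xxb}_\infty\log\norm{\AA}_\infty$ and concluding $\norm{\xxa}_\infty \ge \tfrac14\norm{\xxb}_\infty$ before converting to $\ell_2$), whereas you sum squares directly; both yield the same $4\sqrt{\nb}$ factor.
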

\begin{proof}
We first show that under the conditions of the Lemma, 
\begin{align}
 \norm{\xxa}_{\infty} \ge \frac{1}{4} \norm{\xxb}_{\infty}.
\label{eqn:GZ2toMC2_xa}
\end{align}

Let $\delta \defeq \epsilon \norm{\xxb}_2$.
By the 3rd condition, each entry of $\BB\xxb$ has absolute value at most $\delta$.

We first show that all $\uu$-variables in $\xxaux$  
cannot be large. 
According to Algorithm~\ref{alg:GZ2toMC2}, for each row of $\BB$, all nonzero coefficients have same absolute value, which is at least 1. 
Based on this fact and the $\mctwoCl$-gadget constructed in Algorithm~\ref{alg:MC2gadget}, we have
\begin{align*}
\frac{\uu_{j_k} + \uu_{l_k}}{2} - 5 \delta
&\le  \uu_t \le \frac{\uu_{j_k} + \uu_{l_k}}{2} + 5 \delta \\
\frac{\uu_{j_k} + \uu_{l_k}}{2} - 6\delta 
&\le \uu_{t+3} \le \frac{\uu_{j_k} + \uu_{l_k}}{2} + 6\delta \\
\uu_{j_k} - \delta & \le \uu_{t+4} \le \uu_{j_k} + \delta \\
\frac{\uu_{j_k} + \uu_{l_k}}{2} - 6\delta 
&\le \uu_{t+5} \le \frac{\uu_{j_k} + \uu_{l_k}}{2} + 6\delta \\
\uu_{l_k} - \delta & \le \uu_{t+6} \le \uu_{l_k} + \delta 
\end{align*}
where $\uu_{j_k}, \uu_{l_k}$ being paired-and-replaced,
and all others are entries of $\xxaux$.
By the triangle inequality,
\[
\abs{\uu_t}, \abs{\uu_{t+k}} \le \frac{1}{2} \left( \abs{\uu_{j_k}} + \abs{\uu_{l_k}} \right) + 6 \delta, \quad \forall 3 \le k \le 6.
\]
Note that the sum of coefficients on both sides are equal. 
We can repeat this type of substitution on the right hand side until $\uu_{j_k}$ and $\uu_{l_k}$ are variables of $\xxa$. 
At the $i$th iteration of Algorithm~\ref{alg:GZ2toMC2} line~\ref{lin:varReplace}, we have
\[
\abs{\uu_t}, \abs{\uu_{t+k}} \le 
\sum_j \alpha_j \abs{\xxa_j} + 6r \delta,
\qquad  \forall 3 \le k \le 6,
\]
where $\alpha_j \ge 0$ and $\sum_j \alpha_j = 1$.
By the H\"{o}lder inequality,
\begin{align}
\abs{\uu_t}, \abs{\uu_{t+k}} \le 
\norm{\xxa}_{\infty} + 6r \delta,
\qquad \forall 3 \le k \le 6.
\label{eqn:GZ2toMC2apx_xi}
\end{align}

We then argue that all $\vv$-variables in $\xxaux$ 
cannot be large. 
Note that by the 2nd condition, in the $\mctwoCl$-gadget, we have $\vv_{t+1} = 0$.
According to the equations in Algorithm~\ref{alg:MC2gadget}, we have
\begin{align*}
\uu_{t+3} - \uu_{t+4} - 3\delta & \le \vv_{t+2} \le \uu_{t+3} - \uu_{t+4} + 3\delta \\
-\delta & \le \vv_{t+3}  \le \delta \\
\uu_{t+3} - \uu_{t+4} - 2\delta & \le \vv_{t+4} \le \uu_{t+3} - \uu_{t+4} + 2\delta \\
\uu_{t+3} - \uu_{t+4} - 4\delta & \le \vv_{t+5} \le \uu_{t+3} - \uu_{t+4} + 4\delta \\
-\delta & \le \vv_{t+6} \le \delta
\end{align*}
By the triangle inequality,
\[
\abs{\vv_{t+k}} \le \abs{\uu_{t+3}} + \abs{\uu_{t+4}} + 4 \delta, 
\qquad \forall 2 \le k \le 6.
\]
By~\eqref{eqn:GZ2toMC2apx_xi}, at the $r$th iteration of Algorithm~\ref{alg:GZ2toMC2} line~\ref{lin:varReplace},
\[
\abs{\vv_{t+k}} \le 2\norm{\xxa}_{\infty} + 12r\delta, \qquad \forall 2 \le k \le 6.
\]
Since there are at most $\log \norm{\AA}_{\infty}$ iterations, the above inequality together with Equation~\eqref{eqn:GZ2toMC2apx_xi} implies 
\[
\norm{\xxaux}_{\infty} \le 2\norm{\xxa}_{\infty} + 12 \delta \log \norm{\AA}_{\infty}.
\]
Adding $\norm{\xxa}_{\infty}$ on both sides and substituting $\delta = \epsilon \norm{\xxb}_2$ gives
\begin{align*}
\norm{\xxb}_{\infty} &\le 3\norm{\xxa}_{\infty} + 12 \epsilon \norm{\xxb}_2 \log \norm{\AA}_{\infty} \\
&\le 3 \norm{\xxa}_{\infty}
 + 12 \epsilon \sqrt{\nb} \norm{\xxb}_{\infty} \log \norm{\AA}_{\infty}.
\end{align*}
Given $\epsilon \le \left( 100 \sqrt{\nb} \log \norm{\AA}_{\infty} \right)^{-1}$, we have Equation~\eqref{eqn:GZ2toMC2_xa}.
This further implies 
\begin{align}
\norm{\xxb}_2 \le 4 \sqrt{\nb} \norm{\xxa}_2.
\label{eqn:GZ2toMC2apx_norm}
\end{align}
%
%

By Claim~\ref{clm:exactReduction} with $\ccb = {\bf 0}$, we have
\[
\norm{\AA\xxa}_2 \le \sqrt{\frac{\alpha+1}{\alpha}} \norm{\BB\xxb}_2.
\]
By the 3rd condition and the bound~\eqref{eqn:GZ2toMC2apx_norm}, we have
\[
\norm{\AA\xxa}_2 \le 4 \left(1 + \frac{1}{\alpha} \right)^{1/2} \nb^{1/2} \epsilon  \norm{\xxa}_2.
\]
This completes the proof.
\end{proof}

We use the above lemma to lower bound the smallest nonzero eigenvalue of $\BB^{\trp}\BB$.

\begin{lemma}
$\lambda_{\min} \left(\BB^{\trp}\BB \right) = \Omega \left(\frac{\min\{1, \lambda_{\min}(\AA^{\trp}\AA) \}}{\nb^2} \right)$.
\label{lem:GZToMC2LambdaMin}
\end{lemma}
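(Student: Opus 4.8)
The plan is to lower bound $\lambda_{\min}(\BB^{\trp}\BB)$ by exhibiting, for every $\xxb \perp \nulls(\BB)$, a lower bound on $\norm{\BB\xxb}_2^2 / \norm{\xxb}_2^2$. First I would use the characterization of $\nulls(\BB)$ from Lemma~\ref{lem:GZtoMC2Null} together with the explicit vectors $\ppxa, \vecppi, \qqj$: a vector $\xxb$ orthogonal to the null space must in particular be orthogonal to all the $\qqj$, forcing $\xxb_{\na+g_2+1:\nb} = \zero$, and orthogonal to all the $\vecppi$, which forces the constraint $\sum_{k=1}^{6}\xxb_{\na+g_1+6i+k} = 0$ in each gadget block. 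From the latter I can extract a vector satisfying a condition like $\xxb_{\na+g_1+6i+1}=0$ after first subtracting off a suitable multiple of the relevant $\vecppi$ — but since $\xxb \perp \vecppi$ this shift is free and doesn't change $\norm{\BB\xxb}_2$ (the $\vecppi$ are in $\nulls(\BB)$), while it changes $\norm{\xxb}_2$ by at most a constant factor. This is the maneuver that lets us invoke Lemma~\ref{lem:GZ2toMC2apx_small_eig}.

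Next I would argue the contrapositive: suppose $\lambda_{\min}(\BB^{\trp}\BB)$ were smaller than the claimed bound, witnessed by some $\xxb \perp \nulls(\BB)$ with $\norm{\BB\xxb}_2 \le \epsilon\norm{\xxb}_2$ for $\epsilon$ below the claimed threshold (and in particular below the cutoff $(100\nb\log\norm{\AA}_\infty)^{-1}$ required by Lemma~\ref{lem:GZ2toMC2apx_small_eig}, which we may assume since otherwise the target bound is already implied by that cutoff being $\Omega(1/\nb^2)$-ish after the reductions). Having put $\xxb$ into the normal form demanded by Lemma~\ref{lem:GZ2toMC2apx_small_eig} (conditions 1 and 2 via the null-space orthogonality as above, condition 3 by hypothesis), that lemma gives
\[
\norm{\AA\xxa}_2 \le 4\left(1+\tfrac{1}{\alpha}\right)^{1/2}\nb^{1/2}\epsilon\,\norm{\xxa}_2.
\]
Now I need a lower bound on $\norm{\AA\xxa}_2$ relative to $\norm{\xxa}_2$. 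Here the key point is that $\xxa$ is not in $\nulls(\AA)$: if it were, then $\ppxa \in \nulls(\BB)$ (again by Lemma~\ref{lem:GZtoMC2Null}), and $\xxb - \ppxa$ would have the same $\xxa$-part zeroed out, and tracing through the gadget assignments one sees $\xxb$ would then lie in $\nulls(\BB)$, contradicting $\xxb\perp\nulls(\BB)$, $\xxb\ne\zero$ — unless $\xxb=\zero$. So $\xxa$ has a nonzero component orthogonal to $\nulls(\AA)$, and since $\AA$ is integral the analysis underlying $\sigma_{\min}$ (and the same integrality trick as in Lemma~\ref{lem:ProjLengthLower}) gives $\norm{\AA\xxa}_2 \ge \sigma_{\min}(\AA)\norm{\xxa^{\perp}}_2$ where $\xxa^\perp$ is the projection of $\xxa$ off $\nulls(\AA)$. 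Combining, and using $\norm{\xxa^\perp}_2$ versus $\norm{\xxa}_2$ versus $\norm{\xxb}_2$ bounds (from \eqref{eqn:GZ2toMC2apx_norm} in the proof of Lemma~\ref{lem:GZ2toMC2apx_small_eig}, and from the constant-factor distortion of the null-space shifts), I get $\sigma_{\min}(\AA) = O(\nb^{3/2}\epsilon)$, i.e. $\epsilon = \Omega(\sigma_{\min}(\AA)/\nb^{3/2})$, hence $\lambda_{\min}(\BB^{\trp}\BB) = \epsilon^2 = \Omega(\lambda_{\min}(\AA^{\trp}\AA)/\nb^{3})$; a more careful bookkeeping of which norm ($\xxa$ vs $\xxa^\perp$) the factor $\nb^{1/2}$ attaches to should recover the stated $\nb^2$ in the denominator. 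The $\min\{1,\cdot\}$ appears because the Lemma~\ref{lem:GZ2toMC2apx_small_eig} threshold on $\epsilon$ independently caps things at $\Omega(1/\nb^2\log^2)$ regardless of how large $\lambda_{\min}(\AA^{\trp}\AA)$ is relative to $1$.

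The main obstacle I anticipate is the bookkeeping around the null space: cleanly showing that orthogonality of $\xxb$ to $\nulls(\BB)$ lets us simultaneously (i) zero out the tail $\xxb_{\na+g_2+1:\nb}$, (ii) normalize each gadget's $\vv$-block to satisfy $\xxb_{\na+g_1+6i+1}=0$, and (iii) conclude $\xxa \notin \nulls(\AA)$ — all without losing more than constant factors in $\norm{\xxb}_2$, and while keeping $\norm{\BB\xxb}_2$ unchanged (which is automatic since we only add null-space vectors). A secondary subtlety is tracking precisely which vector norm ($\norm{\xxa}_2$, $\norm{\xxa^\perp}_2$, $\norm{\xxb}_2$) carries the polynomial-in-$\nb$ factors, since this is what determines whether the final exponent is $2$ or $3$; I expect the sharper $\nb^2$ bound to come from noticing that the $\nb^{1/2}$ in Lemma~\ref{lem:GZ2toMC2apx_small_eig} multiplies $\norm{\xxa}_2$ on the right, while \eqref{eqn:GZ2toMC2apx_norm} already charges one $\sqrt{\nb}$ for $\norm{\xxb}_2 \le 4\sqrt{\nb}\norm{\xxa}_2$, so only one additional $\sqrt{\nb}$ (not two) is incurred when passing back from $\norm{\xxa}_2$ to $\norm{\xxb}_2$.
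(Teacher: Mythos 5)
Your overall strategy matches the paper's: characterize $\nulls(\BB)$, argue by contradiction, shift by null-space vectors $\vecppi$ (and $\ppxa$) to put the witness into the normal form required by Lemma~\ref{lem:GZ2toMC2apx_small_eig}, and then use that lemma to pull back a small Rayleigh quotient for $\BB^{\trp}\BB$ to a small Rayleigh quotient for $\AA^{\trp}\AA$. But there is a genuine gap in how you handle the decomposition $\xxa = \zzn + \zzp$ with $\zzn \in \nulls(\AA)$ and $\zzp \perp \nulls(\AA)$.

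You correctly note that the lower bound you obtain from $\sigma_{\min}$ is $\norm{\AA\xxa}_2 \geq \sigma_{\min}(\AA)\,\norm{\zzp}_2$, while the upper bound from Lemma~\ref{lem:GZ2toMC2apx_small_eig} has $\norm{\xxa}_2$ on the right. Dividing gives a useful bound on $\epsilon$ only when $\norm{\zzp}_2$ is not much smaller than $\norm{\xxa}_2$. You then assert you can close the gap using Equation~\eqref{eqn:GZ2toMC2apx_norm} and ``constant-factor distortion of the null-space shifts,'' but there is no such control: $\norm{\zzp}_2/\norm{\xxa}_2$ can be arbitrarily small, and Equation~\eqref{eqn:GZ2toMC2apx_norm} relates $\norm{\xxb}_2$ to $\norm{\xxa}_2$, not $\norm{\zzp}_2$ to $\norm{\xxa}_2$. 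This is exactly the difficulty the paper's Case 3 is devoted to: it splits on whether $\norm{\zzp}_2^2 > \norm{\zzn}_2^2/(100\nb)$, and in the bad subcase it performs a second shift, subtracting the extension $\pp(\zzn)$ of the null component together with another combination of $\vecppi$'s, re-applies Lemma~\ref{lem:GZ2toMC2apx_small_eig}, and derives a contradiction by comparing $\norm{\zzb}_2$ against $\norm{\zzn}_2$ from two sides. Your outline simply doesn't address this regime. (This is also where the exponent $\nb^2$ rather than $\nb$ or $\nb^3$ actually comes from; your worry about bookkeeping $\nb^{3/2}$ vs.\ $\nb$ is a symptom of the missing case analysis, not a sign that ``more careful bookkeeping'' of the same steps will fix it.)

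A secondary, smaller issue: your argument that $\xxa \notin \nulls(\AA)$ (``tracing through the gadget assignments one sees $\xxb$ would then lie in $\nulls(\BB)$'') is not a purely combinatorial fact. Even if $\xxa \in \nulls(\AA)$ exactly, $\xxaux$ need not equal the canonical extension $\ppxa$, so $\xxb - \ppxa - \rr$ is not automatically zero; one must invoke the quantitative conclusion of Lemma~\ref{lem:GZ2toMC2apx_small_eig} (specifically the $\norm{\cdot}_\infty$ bound of Equation~\eqref{eqn:GZ2toMC2_xa}) to conclude that this residual vanishes, which is what the paper's Case 2 does. So both the ``pure null'' case and the ``mixed'' case need the lemma, and the mixed case additionally needs the two-step shift and the threshold comparison that your outline omits.

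Finally, one minor remark on the shifting step: you write that adding a multiple of $\vecppi$ ``changes $\norm{\xxb}_2$ by at most a constant factor.'' That's not the right justification and isn't generally true — the correct observation (and the one the paper uses) is that $\xxb \perp \vecppi$ because $\vecppi \in \nulls(\BB)$ and $\xxb \perp \nulls(\BB)$, so the shift can only increase $\norm{\xxb}_2$, which only helps since $\norm{\BB\xxb}_2$ is unchanged.
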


\begin{proof}

Let 
\begin{align}
\delta \defeq \min\{1, \lambda_{\min}(\AA^{\trp}\AA) \},
\label{eqn:GZ2toMC2apx_delta_def}
\end{align}
and
\begin{align}
\epsilon \defeq \frac{\delta}{1616(1+\alpha^{-1}) \nb^2}.
\label{eqn:GZ2toMC2apx_eps_def}
\end{align}

The goal is to prove that $\lambda_{\min} (\BB^{\trp}\BB) \ge \epsilon$.
Assume by contradiction, there exists an $\xx \perp \nulls(\BB)$ such that
\begin{align}
(\xxb)^{\trp}\BB^{\trp}\BB\xxb < \epsilon (\xxb)^{\trp}\xxb.
\label{eqn:abc}
\end{align}
We show a contradiction by case analysis according to whether $\xxa$ is orthogonal to the null space of $\AA$.

\noindent 
{\bf Case 1:}
Suppose $\xxa \perp \nulls(\AA)$.

Recall that $\xxb_{\na+1: \na + g_1}$ corresponds to the auxiliary $\uu$-variables in the constraints, $\xxb_{\na+g_1+1: \na+g_2}$ corresponds to the auxiliary $\vv$-variables in the constraints, and $\xxb_{\na + g_2+1: \nb}$ corresponds to the variables with zero coefficient.
By Lemma~\ref{lem:GZtoMC2Null}, 
we have $\xxb \perp \qqj, \forall 1 \le j \le \na - g_2$, where $\qqj$ is defined in~\eqref{eqn:GZ2toMC2apx_qj}.
Thus, $\xx_{\na+g_2+1:\nb} = {\bf 0}$.

Let $\rr \defeq \sum_{0 \le i \le (g_2-g_1)/6} \xxb_{\na+g_1+6i+1} \vecppi$, where $\vecppi$ is defined in~\eqref{eqn:GZ2toMC2apx_pi}.
By Lemma~\ref{lem:GZtoMC2Null}, $\rr \in \nulls(\BB)$. 
Let $\yyb \defeq \xxb - \rr$.
Similarly, we write 
\[
\yyb = \left( \begin{array}{c}
\yya \\
\yyaux
\end{array} \right),
\]
where $\yya$ corresponds to original variables and $\yyaux$ corresponds to auxiliary variables.
Note $\yya = \xxa$.
Since $\rr \in \nulls(\BB)$, we have
\[
(\yyb)^{\trp}\BB^{\trp}\BB\yyb = (\xxb)^{\trp}\BB^{\trp}\BB\xxb.
\]
Since $\xxb \perp \rr$, we have
\[
(\yyb)^{\trp}\yyb  
= (\xxb)^{\trp}\xxb + \rr^{\trp}\rr
\ge (\xxb)^{\trp}\xxb.
\]
By assumption~\eqref{eqn:abc}, we have
\[
(\yyb)^{\trp}\BB^{\trp}\BB\yyb < \epsilon (\yyb)^{\trp}\yyb.
\]
By our definition of $\yyb$, $\yyb$ satisfies the conditions of Lemma~\ref{lem:GZ2toMC2apx_small_eig}. By Lemma~\ref{lem:GZ2toMC2apx_small_eig} and the definition of $\epsilon$ in Equation~\eqref{eqn:GZ2toMC2apx_eps_def}, we have
\[
\left( \xxa \right)^{\trp} \AA^{\trp} \AA \xxa <
\delta \left( \xxa \right)^{\trp} \xxa.
\]
Since $\xxa \perp \nulls(\AA)$ and the definition of $\delta$ in Equation~\ref{eqn:GZ2toMC2apx_delta_def}, we have
\[
\lambda_{\min} \left( \AA^{\trp} \AA \right)
< \delta = \min\left\{ 1, \lambda_{\min}(\AA^{\trp}\AA) \right\},
\]
which is a contradiction.

\noindent
{\bf Case 2:}
Suppose $\xxa \in \nulls(\AA)$. 

Recall that $\ppxa$ is the extension of vector $\xxa$ defined in Equation~\eqref{eqn:GZ2toMC2apx_pxa_def}.
Let $\rr = \sum_{0 \le i \le (g_2-g_1)/6} \alpha_i \vecppi$ be a vector such that 
\[
\yyb \defeq \xxb - \ppxa - \rr
\] 
satisfying $\yyb_{\na+g_1+6i+1} = {\bf 0}, \forall 0 \le i < (g_2-g_1)/6$.
By this definition, we have 
\begin{align}
\yya = {\bf 0}.
\label{eqn:GZ2toMC2apx_xa2}
\end{align}
Since $\ppxa + \rr$ is in the null space of $\BB$,
by assumption~\eqref{eqn:abc},
\[
(\yyb)^{\trp}\BB^{\trp}\BB\yyb < \epsilon (\yyb)^{\trp}\yyb.
\]
Note $\yyb$ satisfies the conditions of Lemma~\ref{lem:GZ2toMC2apx_small_eig}.
By Equations \eqref{eqn:GZ2toMC2_xa}
and~\eqref{eqn:GZ2toMC2apx_xa2}, we have 
\[
\norm{\yyb}_{\infty} = 0.
\]
This implies that $\xxb = \ppxa + \rr$ is in the null space of $\BB$, which contradicts that $\xxb \perp \nulls(\BB)$.

\noindent
{\bf Case 3:}
Suppose $\xxa = \zzn + \zzp$, where $\zzn \neq {\bf 0}$ is in $\nulls(\AA)$ and $\zzp \neq {\bf 0}$ is orthogonal to $\nulls(\AA)$. 

Similarly,
let $\rr_1 = \sum_i \alpha_i \vecppi$ such that $\yyb \defeq \xxb - \rr_1$ satisfying $\yyb_{\na+g_1+6i+1} = 0, \forall i$.
By this definition, we have 
\[
\yya = \xxa.
\]
Then by assumption~\eqref{eqn:abc},
\[
(\yyb)^{\trp}\BB^{\trp}\BB\yyb < \epsilon (\yyb)^{\trp}\yyb.
\]
$\yyb$ satisfies the conditions of Lemma~\ref{lem:GZ2toMC2apx_small_eig}.
By Lemma~\ref{lem:GZ2toMC2apx_small_eig}, we have 
\[
(\yya)^{\trp} \AA^{\trp} \AA \yya \le 16 \left(1 + \frac{1}{\alpha} \right) \nb \epsilon (\yya)^{\trp} \yya.
\]
Since $\yya = \zzn + \zzp$ with $\zzn \in \nulls(\AA)$ and $\zzp \perp \nulls(\AA)$, we have
\[
\zzp^{\trp} \AA^{\trp}\AA \zzp \le  16 \left(1 + \frac{1}{\alpha} \right) \nb \epsilon \left( \zzn^{\trp}\zzn + \zzp^{\trp}\zzp \right).
\]
If $\zzp^{\trp}\zzp > \zzn^{\trp}\zzn / (100 \nb)$,  then we have a contradiction with~\eqref{eqn:GZ2toMC2apx_delta_def} and we have done. 
Otherwise, we have
\[
\norm{\zzp}_2^2 \le \frac{\norm{\zzn}_2^2}{100 \nb}.
\]
Let $\rr_2 = \sum_i \beta_i \vecppi$ such that 
\[
\zzb \defeq \yyb - \pp(\zzn) - \rr_2
\] 
satisfies $\zzb_{\na+g_1+6i+1} = 0, \forall i$.
By assumption~\eqref{eqn:abc},
\[
(\zzb)^{\trp}\BB^{\trp}\BB\zzb < \epsilon (\zzb)^{\trp} \zzb.
\]
Vector $\zzb$ satisfies the conditions of Lemma~\ref{lem:GZ2toMC2apx_small_eig}.
By Equation~\eqref{eqn:GZ2toMC2_xa}, we have
\[
\norm{\zzb}_2 \le 4 \sqrt{\nb} \norm{\zzp}_2. 
\]
Since $\zzp^{\trp} \zzp \le \zzn^{\trp}\zzn / (100 \nb)$,
we have 
\[
\norm{\zzb}_2 \le \norm{\zzn}_2.
\]
On the other hand,
\begin{align*}
\norm{\zzb}_2^2
&= \norm{\xxb}_2^2 + \norm{\pp(\zzn)+ \rr_1 + \rr_2}_2^2  \\
& > \norm{\pp(\zzn) + \rr_1 + \rr_2}_2^2 \\
& \ge \norm{\xxa}_2^2 \\
& = \norm{\zzn}_2^2 + \norm{\zzp}_2^2 \\
& >  \norm{\zzn}_2^2.
\end{align*}
The first equality is due to $\xxb \perp \left( \pp(\zzn) + \rr_1 + \rr_2 \right)$, 
the third inequality is due to the $\xxa$ part of $\rr_1 + \rr_2$ is {\bf 0},
and the fourth equality is due to that $\zzn \perp \zzp$.
Thus, we get a contradiction.

This completes the proof.
\end{proof}

Lemma~\ref{lem:GZToMC2LambdaMax} and Lemma~\ref{lem:GZToMC2LambdaMin} immediately imply the following lemma.

\begin{lemma}
$\kappa(\BB) = O \left( \frac{\nnz(A)^2 \norm{\AA}_{\infty} \log^2 \norm{\AA}_{\infty}}{ \min \{1, \sigma_{\min}(\AA) \}} \right)$.
\label{lem:GZ2toMC2apx_cond}
\end{lemma}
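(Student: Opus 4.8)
\textbf{Proof plan for Lemma~\ref{lem:GZ2toMC2apx_cond}.}
The plan is to simply combine the two eigenvalue bounds already established. Recall that $\kappa(\BB) = \CN(\BB) = \sigma_{\max}(\BB)/\sigma_{\min}(\BB)$, and since $\sigma_{\max}(\BB)^2 = \lambda_{\max}(\BB^{\trp}\BB)$ and $\sigma_{\min}(\BB)^2 = \lambda_{\min}(\BB^{\trp}\BB)$ (the smallest \emph{nonzero} eigenvalue), we have
\[
\kappa(\BB)^2 = \frac{\lambda_{\max}(\BB^{\trp}\BB)}{\lambda_{\min}(\BB^{\trp}\BB)}.
\]
First I would substitute the upper bound from Lemma~\ref{lem:GZToMC2LambdaMax}, namely $\lambda_{\max}(\BB^{\trp}\BB) = O(\alpha\, \nnz(\AA)^2 \norm{\AA}_{\infty}^2 \log^2\norm{\AA}_{\infty})$, into the numerator, and the lower bound from Lemma~\ref{lem:GZToMC2LambdaMin}, namely $\lambda_{\min}(\BB^{\trp}\BB) = \Omega(\min\{1,\lambda_{\min}(\AA^{\trp}\AA)\}/\nb^2)$, into the denominator.

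Next I would clean up the resulting expression. By Lemma~\ref{lem:GZToMC2NNZ} both dimensions of $\BB$, and in particular $\nb$, are $O(\nnz(\AA)\log\norm{\AA}_{\infty})$, so $\nb^2 = O(\nnz(\AA)^2\log^2\norm{\AA}_{\infty})$; multiplying through gives
\[
\kappa(\BB)^2 = O\!\left(\frac{\alpha\, \nnz(\AA)^4 \norm{\AA}_{\infty}^2 \log^4\norm{\AA}_{\infty}}{\min\{1,\lambda_{\min}(\AA^{\trp}\AA)\}}\right).
\]
Taking square roots, using $\sigma_{\min}(\AA) = \sqrt{\lambda_{\min}(\AA^{\trp}\AA)}$ together with the identity $\sqrt{\min\{1,\lambda_{\min}(\AA^{\trp}\AA)\}} = \min\{1,\sigma_{\min}(\AA)\}$, and recalling that $\alpha$ is treated as a fixed constant throughout this section, yields
\[
\kappa(\BB) = O\!\left(\frac{\nnz(\AA)^2 \norm{\AA}_{\infty} \log^2\norm{\AA}_{\infty}}{\min\{1,\sigma_{\min}(\AA)\}}\right),
\]
which is the claimed bound.

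There is no real obstacle here: all the substantive work — in particular the delicate null-space analysis and case split in the proof of Lemma~\ref{lem:GZToMC2LambdaMin} — is already done. The only things to be careful about are the routine bookkeeping of exponents when passing from $\kappa(\BB)^2$ to $\kappa(\BB)$, the fact that pushing the $\min\{1,\cdot\}$ through a square root is legitimate because both arguments are nonnegative, and confirming that $\alpha$ may indeed be absorbed into the $O(\cdot)$ since it is fixed before the final proof. If one preferred, the dependence on $\alpha$ could be kept explicit as a multiplicative $\alpha^{1/2}$ factor, but it is not needed for the statement as given.
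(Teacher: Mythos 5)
Your proposal is correct and matches the paper's intent exactly: the paper's proof of this lemma is the single sentence "Lemma~\ref{lem:GZToMC2LambdaMax} and Lemma~\ref{lem:GZToMC2LambdaMin} immediately imply the following lemma," and your write-up simply supplies the arithmetic that the paper leaves implicit — combining the two eigenvalue bounds, substituting $\nb = O(\nnz(\AA)\log\norm{\AA}_\infty)$ from Lemma~\ref{lem:GZToMC2NNZ}, taking square roots, and using $\sqrt{\min\{1,\lambda_{\min}(\AA^{\trp}\AA)\}}=\min\{1,\sigma_{\min}(\AA)\}$. The bookkeeping checks out, and your remark that $\alpha$ is treated as a fixed constant (it is set to $1$ in the eventual proof of Lemma~\ref{lem:zeroSumTwoToMcTwo}) is correct.
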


\subsection{Putting it All Together}

\begin{proof}[Proof of Lemma~\ref{lem:zeroSumTwoToMcTwo}]
We set $\alpha = 1$ in Algorithm~\ref{alg:GZ2toMC2}.

By Lemma~\ref{lem:GZToMC2NNZ} and Claim~\ref{clm:para_infty_norm}, we have
\[
\nnz(\BB) = O \left( s \log(sU) \right).
\]
According to our reduction in Algorithm~\ref{alg:GZ2toMC2}, the largest entry and the smallest nonzero entry of the right hand side vector does not change.
Besides, all nonzero entries of $\BB$ have absolute value at least 1.
The largest entry of $\BB$ is upper bounded by Equation~\eqref{eqn:eig_b_max_weight}
\[
\sqrt{w_{\max}} = O \left( \sqrt{\nnz(\BB)} \norm{\AA}_{\infty} \right).
\]
By Claim~\ref{clm:para_infty_norm}, we have
\[
\sqrt{w_{\max}} = O \left( s^{3/2} U \log^{1/2} (sU) \right).
\]

By Lemma~\ref{lem:GZ2toMC2apx_cond} and Lemma~\ref{clm:para_infty_norm}, \ref{clm:para_eig}, we have
\[
\kappa(\BB) = O \left( \frac{\nnz(A)^2 \norm{\AA}_{\infty} \log^2 \norm{\AA}_{\infty}}{ \min \{1, \sigma_{\min}(\AA) \}} \right)
= O \left( s^4 U^2 K \log^2 (sU) \right).
\]

By Lemma~\ref{lem:mctogenerr}, we have
\[
(\epsb)^{-1} = (\epsa)^{-1} O \left( 1+ \norm{\cca}_2 \sigma_{\max}(\AA) \right)
= (\epsa)^{-1}  O \left( sU^2 \right).
\]
This completes the proof.
\end{proof}

\newcommand{\algMCTtoMCTS}{\ensuremath{\textsc{Reduce}\mctwoCl\textsc{To}\mctwostrictCl}}
\newcommand{\algMCTStoMCTsoln}{\ensuremath{\textsc{MapSoln}\mctwostrictCl\textsc{To}\mctwoCl}}
\newcommand{\ccbtwostrict}{\cc^{\text{B}^{>0}}}
\newcommand{\xxbtwostrictopt}{\xx^{\text{B}^{>0}*}}
\newcommand{\wstrict}{\delta}

\section{$\mctwoCl$ Efficiently Reducible to $\mctwostrictCl$}
\label{sec:McTwoToMcTwoStrict}

The matrices generated by interior point methods (see Section~\ref{sec:ipm} for details)
are more restrictive than $\mctwoCl$: for every edge present, the weights
of all three types of matrices are non-zero.
Formally, the class of matrices $\mctwostrictCl$ consists of matrices of
the form
\[
\LL^1 \otimes \left( \begin{array}{cc}
1 & 0 \\
0 & 0
\end{array} \right)
+ \LL^2 \otimes \left( \begin{array}{cc}
0 & 0 \\
0 & 1
\end{array} \right)
+ \LL^{1+2} \otimes \left( \begin{array}{cc}
1 & -1 \\
-1 & 1
\end{array} \right),
\]
where the non-zero support of all three matrices, $\LL^{1}$, $\LL^{2}$,
and $\LL^{1 + 2}$ are the same (but the weights may vary greatly).
On the other hand, the matrices that we generate in Section~\ref{sec:GZ2toMC2}
can be transformed into such a matrix by adding a small value,
$\wstrict$ to the weight of all edges where one of the types of edges
 have non-zero support.
We will describe this construction in
Section~\ref{subsec:MCTwoToMCTwoStrictConstruction},
and bound its condition number in Section~\ref{subsec:CondBoundMCTwoStrict}.

\subsection{Construction}
\label{subsec:MCTwoToMCTwoStrictConstruction}

In this section, we show the construction from an instance of $\mctwoCl$ to an instance of $\mctwostrictCl$.
The strategy is to add extra edges with a sufficiently small weight, such that $\LL^1, \LL^2, \LL^{1+2}$ have identical nonzero stricture and the solution of the linear system does not change much.

%
%
%

The reduction from $\mctwoCl$ to $\mctwostrictCl$,
with pseudo-code in Algorithm~\ref{alg:MCTtoMCTS},
simply adds edges with weight $\wstrict$ to all the missing edges.
The transformation of solutions of the corresponding instance of
$\mctwostrictCl$ back to a solution of the original $\mctwoCl$ instance
in Algorithm~\ref{alg:MCTStoMCTsolnback} simply returns the same vector $\xx$.

\begin{algorithm}[H]
\renewcommand{\algorithmicrequire}{\textbf{Input:}}
\renewcommand\algorithmicensure {\textbf{Output:}}
\caption{\label{alg:MCTtoMCTS}\algMCTtoMCTS}
    \begin{algorithmic}[1]
\REQUIRE{$(\BB,\ccb,\eps_{1})$ where $\BB \in \mctwoCl$ is an $m \times
  n$ matrix, $\ccb \in \mathbb{R}^m$, and $0 < \eps_{1} < 1$.}
\ENSURE{$(\BBtwostrict,\ccbtwostrict,\eps_{2})$ where $\BBtwostrict \in \mctwostrictCl$ is an $m' \times
  n'$ matrix, $\ccbtwostrict \in \mathbb{R}^{m'}$, and $0 < \eps_{2} < 1$.}
\STATE $\wstrict \leftarrow  \frac{\eps_1}{100\kappa(\BB)\sigma_{\max}(\BB) \norm{\ccb}_2}$
\label{lin:delta}
\STATE $\widehat{\mathcal{B}} \leftarrow \emptyset$  \hfill
\COMMENT{New equations for $\mctwostrictCl$}
\FOR {each pair of vertices $i$, $j$ whose blocks are involved in some equation in $\BB$}
\IF{$\BB$ does not contain a type $1$ equation between $i$ and $j$}
\STATE $\widehat{\mathcal{B}}\leftarrow \widehat{\mathcal{B}}
	\cup \{ \uu_i - \uu_j = 0\}$.
\ENDIF
\IF{$\BB$ does not contain a type $2$ equation betwen $i$ and $j$}
\STATE $\widehat{\mathcal{B}} \leftarrow \widehat{\mathcal{B}}
	\cup \{ \vv_i - \vv_j = 0\}$.
\ENDIF
\IF{$\BB$ does not contain a type $1 + 2$ equation betwen $i$ and $j$}
\STATE $\widehat{\mathcal{B}} \leftarrow \widehat{\mathcal{B}}
	\cup \{\uu_i + \vv_i - (\uu_j + \vv_j) = 0 \}$.
\ENDIF
\ENDFOR 
\STATE Let $\BBtil$ be the coefficient matrix of equations
in $\widehat{\mathcal{B}}$.
\RETURN
\[
\BBtwostrict =
\left( \begin{array}{c}
\BB \\
\delta \BBtil
\end{array}  \right),
\qquad
\ccbtwostrict =
\left( \begin{array}{c}
\ccb \\
{\bf 0}
\end{array}  \right),
\qquad
\epsilon_2 = \frac{\eps_1}{100}.
\]
\end{algorithmic}
\end{algorithm}

\begin{algorithm}[H]
\renewcommand{\algorithmicrequire}{\textbf{Input:}}
\renewcommand\algorithmicensure {\textbf{Output:}}
    \caption{\label{alg:MCTStoMCTsolnback}\algMCTStoMCTsoln}
    \begin{algorithmic}[1]
	\REQUIRE $m \times n$ matrix $\BB \in \mctwoCl$, 
    $m' \times n'$ matrix $\BBtwostrict \in \mctwostrictCl$, 
    vector $\ccb \in \mathbb{R}^m$
    vector $\xx \in \mathbb{R}^n$.
    \ENSURE Vector $\yy \in \mathbb{R}^{n}$.
    \IF {$\BB^{\trp} \ccb = {\bf 0}$}
    	\RETURN $\yy \assign {\bf 0}$
    \ELSE
	    \RETURN $\yy \assign \xx$
	\ENDIF
\end{algorithmic}
\end{algorithm}

Let $\BBtwostrict\xx = \ccbtwostrict$ be the linear system returned by a call to $\algMCTtoMCTS(\BB, \ccb, \eps_1)$. 
As we only add up to two edges per original edge in $\BB$,
and do not introduce any new variables,
the size of $\BBtwostrict$ is immediate from this routine.
According to the algorithms, $\ccbtwostrict$ is simply \[
\ccbtwostrict = \left( \begin{array}{c}
\ccb \\
 {\bf 0}
\end{array} \right) 
= \left( \begin{array}{c}
\cca \\
{\bf 0}
\end{array} \right).
\]
Note the two zero vectors in the above equation have different dimensions.

For a given matrix $\BB$, we will choose the additive term
to ensure non-zeros to be:
\begin{equation}
\label{eq:wEpsilon}
\wstrict
\defeq
\frac{\eps_1}{100\kappa(\BB) \sigma_{\max}(\BB) \norm{\ccb}_2}.
\end{equation}
Note that we can bound $\kappa(\BB), \sigma_{\max}(\BB)$ by condition number of $\AA$ (i.e, the linear system instance of $\genCl$).

\begin{lemma}
	\label{lem:MCTwoStrictSize}
	$\nnz(\BBtwostrict) = O\left( \nnz(\BB) \right)$.
By our setting of $\wstrict$ in~\eqref{eq:wEpsilon}, the largest entry of $\BBtwostrict$ does not change, the smallest entry of $\BBtwostrict$ is at least $\wstrict$.
\end{lemma}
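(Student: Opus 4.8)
The plan is to prove the three assertions — the nonzero bound $\nnz(\BBtwostrict) = O(\nnz(\BB))$, the fact that the largest entry is unchanged, and the fact that the smallest nonzero entry is at least $\wstrict$ — directly from the construction in Algorithm~\ref{alg:MCTtoMCTS}, whose output is just the matrix obtained by stacking $\BB$ on top of $\wstrict\,\BBtil$, where $\BBtil$ is the coefficient matrix of the equations added to $\widehat{\mathcal B}$ and no new variables are introduced.

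For the nonzero count I would first note that each row of $\BB$ is a scaling of one of the three two-commodity equation types on a single pair of vertex-blocks $(i,j)$ and has at most four nonzeros, so the loop in Algorithm~\ref{alg:MCTtoMCTS} ranges over at most $m$ pairs $(i,j)$, where $m$ is the number of rows of $\BB$ and $m = \Theta(\nnz(\BB))$. For each such pair the algorithm appends at most three equations (only the types missing from $\BB$), each with at most four nonzeros, so $\nnz(\BBtil) = O(m) = O(\nnz(\BB))$. Since scaling by $\wstrict \neq 0$ preserves the sparsity pattern and stacking adds nonzero counts, $\nnz(\BBtwostrict) = O(\nnz(\BB))$, and the same count shows that no new columns appear and the row count grows by only a constant factor.

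For the entry magnitudes I would observe that each equation placed into $\widehat{\mathcal B}$ — of the form $\uu_i - \uu_j = 0$, $\vv_i - \vv_j = 0$, or $\uu_i + \vv_i - (\uu_j + \vv_j) = 0$ — has all coefficients in $\{-1,0,1\}$, so $\BBtil$ is a $\{0,\pm1\}$ matrix and every nonzero entry of $\wstrict\BBtil$ equals $\pm\wstrict$. Hence the nonzero entries of $\BBtwostrict$ are precisely those of $\BB$ together with some copies of $\pm\wstrict$, and it suffices to check that $\wstrict \le \anzmin(\BB)$. For this I would combine $\sigma_{\max}(\BB) \ge \norm{\BB}_{\max} \ge \anzmin(\BB)$, $\kappa(\BB) \ge 1$, $\eps_1 < 1$, and $\norm{\ccb}_2 \ge \anzmin(\ccb)$ (in the nontrivial case $\BB^\trp\ccb \neq \zero$, the only case in which $\algMCTStoMCTsoln$ actually uses $\xx$) to conclude from the definition~\eqref{eq:wEpsilon} that $\wstrict$ is small enough; in the application of this reduction $\BB$ is the matrix produced in Section~\ref{sec:GZ2toMC2}, whose nonzero entries all have absolute value at least $1$, so it is in fact enough to verify $\wstrict \le 1$.

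The arithmetic is routine throughout; I expect the one delicate point to be this last inequality, i.e.\ confirming that $\wstrict$ really is no larger than every existing entry of $\BB$. It is also the place where I would be careful to keep the parameter bounds explicit, since when reading off the sparse parameter complexity of $\BBtwostrict$ for Lemma~\ref{lem:McTwoToMcTwoStrict} the genuinely new quantity is $\wstrict^{-1} = 100\,\kappa(\BB)\,\sigma_{\max}(\BB)\,\norm{\ccb}_2 / \eps_1$, which I would bound via $\sigma_{\max}(\BB) \le \sqrt{\nnz(\BB)}\,\norm{\BB}_{\max}$ and $\norm{\ccb}_2 \le \sqrt{m}\,\norm{\ccb}_{\max}$ to obtain a polynomial dependence on the complexity of the input instance.
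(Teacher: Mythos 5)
The paper does not actually give a proof of Lemma~\ref{lem:MCTwoStrictSize}; it is asserted following a one-sentence remark that the nonzero count is immediate from the routine. Your argument is a reasonable filling-in and is essentially correct.

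Your accounting of the nonzero count is right: the loop ranges over at most $m$ vertex pairs (one per row of $\BB$), each contributing at most two new rows with $O(1)$ nonzeros (at most two, not three, since every pair touched already has at least one equation type present — but this does not affect the $O(\cdot)$). The observation that $\BBtil$ is a $\{0,\pm1\}$ matrix, so all new entries of $\BBtwostrict$ are exactly $\pm\wstrict$, is the right way to reduce the ``largest entry does not change'' and ``smallest entry is at least $\wstrict$'' claims to a single comparison $\wstrict \le \anzmin(\BB)$. You are also right to flag that comparison as the one place where the argument is not purely formal: the inequalities you list ($\sigma_{\max}(\BB)\ge\norm{\BB}_{\max}$, $\kappa(\BB)\ge 1$, $\eps_1<1$, $\norm{\ccb}_2 \ge \anzmin(\ccb)$) give $\wstrict \le 1/(100\,\anzmin(\BB)\,\anzmin(\ccb))$, which closes only if you also know $\anzmin(\BB)$ and $\anzmin(\ccb)$ are not tiny. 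The missing ingredient, which you gesture at but should make explicit, is that in the chain of reductions $\ccb$ is $(\cca;\zero)$ where $\cca$ is the original right-hand side of the $\genCl$ instance, and in the nontrivial branch $\BB^\trp\ccb\neq\zero$ this vector is a nonzero integer vector, so $\norm{\ccb}_2 \ge 1$; combined with the fact (noted in the proof of Lemma~\ref{lem:zeroSumTwoToMcTwo}) that all nonzero entries of $\BB$ have absolute value at least $1$, you get $\wstrict \le \eps_1/100 < 1 \le \anzmin(\BB)$. With that one sentence added the proof is complete.
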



Note that the addition of all three types of edges means that
the null space of $\BBtwostrict$ is now given by the connected
components in its graph theoretic structure, and is likely
significantly different from the null space of $\BB$.
In order to solve the Linear System Approximation Problem(\LSA)~problems
\begin{align*}
& \min_{\xx} \norm{\BB\xx - \ccb}_2,\\
& \min_{\xx} \norm{\BBtwostrict \xx - \ccbtwostrict}_2,
\end{align*}
we need to solve the two linear systems:
\begin{align}
& \BB^{\trp} \BB \xx = \BB^{\trp} \ccb, 
\label{eqn:McTwoToMcTwoStrict_ls_mctwo}
\\
& (\BBtwostrict)^{\trp} \BBtwostrict \xx = (\BBtwostrict)^{\trp} \ccbtwostrict.
\label{eqn:McTwoToMcTwoStrict_ls_mctwostrict}
\end{align}

First note that the RHS the two equations are the same because:
\[
(\BBtwostrict)^{\trp} \ccbtwostrict
= \left( \begin{array}{cc}
\BB^{\trp} &
\wstrict \BBtil^{\trp}
\end{array} \right)
\left( \begin{array}{c}
\ccb \\
{\bf 0}
\end{array} \right)
= \BB^{\trp} \ccb.
\]

This means that for a sufficiently small choice of $\wstrict$,
the differences between the solutions of these two linear systems is small.

\begin{lemma}
\label{lem:McTwoToMcTwoStrict_sol_diff}
Let $\BBtwostrict \xx =  \ccbtwostrict$ be the linear system returned by
a call to $\algMCTtoMCTS(\BB, \ccb, \eps_1)$.
Let $\xxbopt \in \argmin_{\xx} \norm{\BB\xx - \ccb}_2$ and $\xxbopt \perp \nulls(\BB)$. 
Let $\xxbtwostrictopt \in \argmin_{\xx} \norm{\BBtwostrict \xx - \ccbtwostrict }_2$.
%
Then we have:
\[
\norm{\xxbopt - \xxbtwostrictopt}_{\BB^{\trp} \BB}
\leq
O\left(   \wstrict (1+\wstrict) \kappa(\BB) \norm{\ccb}_2 \right).
\]
\end{lemma}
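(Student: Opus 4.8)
The plan is to compare the two normal equations by treating $(\BBtwostrict)^{\trp}\BBtwostrict$ as a perturbation of $\BB^{\trp}\BB$ by the small term $\wstrict^2 \BBtil^{\trp}\BBtil$. Write $\MM \defeq \BB^{\trp}\BB$ and $\MMtil \defeq (\BBtwostrict)^{\trp}\BBtwostrict = \MM + \wstrict^2 \BBtil^{\trp}\BBtil$. Since we already observed $(\BBtwostrict)^{\trp}\ccbtwostrict = \BB^{\trp}\ccb$, both optimal solutions $\xxbopt$ and $\xxbtwostrictopt$ satisfy the \emph{same} right-hand side: $\MM \xxbopt = \BB^{\trp}\ccb$ (with $\xxbopt \perp \nulls(\BB)$, i.e. $\xxbopt = \MM^{\pinv}\BB^{\trp}\ccb$) and $\MMtil \xxbtwostrictopt = \BB^{\trp}\ccb$. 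The key point to verify first is that $\im(\BB^{\trp}\ccb) \subseteq \im(\MMtil)$ so the second system is solvable; this holds because $\BB^{\trp}\ccb \in \im(\BB^{\trp}) = \im(\MM) \subseteq \im(\MMtil)$, the last inclusion because adding a PSD matrix only enlarges the image. I would also take $\xxbtwostrictopt \perp \nulls(\MMtil)$, so $\xxbtwostrictopt = \MMtil^{\pinv}\BB^{\trp}\ccb$.

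Next I would set $\ee \defeq \xxbtwostrictopt - \xxbopt$ and write the defining identity $\MMtil \ee = \MMtil\xxbtwostrictopt - \MMtil\xxbopt = \BB^{\trp}\ccb - (\MM + \wstrict^2\BBtil^{\trp}\BBtil)\xxbopt = -\wstrict^2 \BBtil^{\trp}\BBtil\, \xxbopt$. The goal is to bound $\norm{\ee}_{\MM} = \norm{\ee}_{\BB^{\trp}\BB}$. The natural route is to bound $\norm{\ee}_{\MMtil}$ first and then relate $\norm{\cdot}_{\MM}$ to $\norm{\cdot}_{\MMtil}$ (easy, since $\MM \pleq \MMtil$ gives $\norm{\ee}_{\MM} \le \norm{\ee}_{\MMtil}$). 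From the identity, $\norm{\ee}_{\MMtil}^2 = \ee^{\trp}\MMtil\ee = -\wstrict^2 \ee^{\trp}\BBtil^{\trp}\BBtil\,\xxbopt$, and by Cauchy–Schwarz in the appropriate norms (or just operator-norm bounds) this is at most $\wstrict^2 \norm{\BBtil}_2^2 \norm{\ee}_2 \norm{\xxbopt}_2$, which after one more application of $\norm{\ee}_2 \le \sigma_{\min}(\MMtil)^{-1/2}\norm{\ee}_{\MMtil}$ — using $\ee \perp \nulls(\MMtil)$ — yields $\norm{\ee}_{\MMtil} \le \wstrict^2 \norm{\BBtil}_2^2 \sigma_{\min}(\MMtil)^{-1/2}\norm{\xxbopt}_2$. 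It remains to bound the three quantities $\norm{\BBtil}_2$, $\norm{\xxbopt}_2$, and $\sigma_{\min}(\MMtil)$ in terms of $\kappa(\BB)$, $\sigma_{\max}(\BB)$, $\norm{\ccb}_2$. We have $\norm{\BBtil}_2 = O(\sqrt{\nnz})$ but more cheaply $\norm{\BBtil}_2^2 = O(\text{max degree})$; $\norm{\xxbopt}_2 = \norm{\MM^{\pinv}\BB^{\trp}\ccb}_2 \le \sigma_{\min}(\BB)^{-1}\norm{\ccb}_2$; and crucially $\sigma_{\min}(\MMtil)$ — the smallest nonzero eigenvalue of $\BBtwostrict$ squared — can be lower bounded because $\BBtwostrict$ contains $\BB$ plus $\wstrict$-weighted edges, so its smallest nonzero singular value is $\Omega(\wstrict^2)$ (the $\wstrict$-edges connect everything, and on the orthogonal complement of the new, smaller null space the Rayleigh quotient is at least something like $\wstrict^2/\poly(n)$). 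Substituting these and recalling $\sigma_{\min}(\BB) = \sigma_{\max}(\BB)/\kappa(\BB)$ collapses the bound to the claimed $O(\wstrict(1+\wstrict)\kappa(\BB)\norm{\ccb}_2)$ form; the stray $(1+\wstrict)$ factor absorbs lower-order cross terms from comparing the two null spaces.

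The main obstacle I anticipate is the lower bound on $\sigma_{\min}(\MMtil) = \lambda_{\min}^{+}((\BBtwostrict)^{\trp}\BBtwostrict)$: one has to argue carefully that adding $\wstrict$-weighted edges of all three types shrinks the null space (to just the graph-connectivity null space) without making the smallest surviving eigenvalue catastrophically small — it should scale like $\wstrict^2$ times a $\poly(n)^{-1}$ factor, by the same kind of Rayleigh-quotient / path-embedding argument used in Lemma~\ref{lem:GZToMC2LambdaMin}. A secondary subtlety is that $\xxbopt \perp \nulls(\BB)$ need not imply $\xxbtwostrictopt \perp \nulls(\BB)$ or that $\ee \perp \nulls(\MM)$; but since $\MMtil$ has a \emph{smaller} null space and we chose both optima orthogonal to their respective null spaces, $\ee$ decomposes into a component orthogonal to $\nulls(\MMtil)$ plus a component in $\nulls(\MMtil)\cap\nulls(\MM)^{\perp}$ which has $\MM$-norm controlled separately — this is where the careful null-space bookkeeping promised in the introduction, and the factor-of-$\kappa(\BB)$ loss, actually enters. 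Once $\sigma_{\min}(\MMtil) = \Omega(\wstrict^2/\poly(n))$ is in hand, the rest is the routine perturbation estimate sketched above.
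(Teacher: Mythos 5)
Your overall strategy (treat $(\BBtwostrict)^{\trp}\BBtwostrict = \MM + \wstrict^2\BBtil^{\trp}\BBtil$ as a PSD perturbation, observe both normal equations share the right-hand side $\BB^{\trp}\ccb$, and bound the error $\ee = \xxbtwostrictopt - \xxbopt$ using $\MMtil\ee = -\wstrict^2\BBtil^{\trp}\BBtil\,\xxbopt$) is sound up through the identity $\norm{\ee}_{\MMtil}^2 = -\wstrict^2\,\ee^{\trp}\BBtil^{\trp}\BBtil\,\xxbopt$. But the next step is where the argument breaks: bounding this by $\wstrict^2\norm{\BBtil}_2^2\norm{\ee}_2\norm{\xxbopt}_2$ and then paying $\sigma_{\min}(\MMtil)^{-1/2}$ to convert $\norm{\ee}_2$ into $\norm{\ee}_{\MMtil}$ forces you to lower-bound $\sigma_{\min}(\MMtil)$, and even the best available bound there (Lemma~\ref{lem:MC2ToMC2StrictLambdaMin} gives $\sigma_{\min}(\BBtwostrict) = \Omega(\wstrict/\nb)$, hence $\sigma_{\min}(\MMtil) = \Omega(\wstrict^2/\nb^2)$) leaves you with an extra $\nb\,\sigma_{\max}(\BB)$-type factor on top of the claimed $\wstrict(1+\wstrict)\kappa(\BB)\norm{\ccb}_2$. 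That factor cannot be ``absorbed''; your route genuinely proves a weaker inequality than the lemma states. The repair is to split the Cauchy--Schwarz differently: write $\abs{\ee^{\trp}\BBtil^{\trp}\BBtil\,\xxbopt} = \abs{(\BBtil\ee)^{\trp}(\BBtil\xxbopt)} \le \norm{\BBtil\ee}_2\norm{\BBtil\xxbopt}_2$ and then use $\wstrict^2\BBtil^{\trp}\BBtil \pleq \MMtil$ to get $\wstrict\norm{\BBtil\ee}_2 \le \norm{\ee}_{\MMtil}$. Cancelling one factor of $\norm{\ee}_{\MMtil}$ yields $\norm{\ee}_{\MM} \le \norm{\ee}_{\MMtil} \le \wstrict\norm{\BBtil\xxbopt}_2$, with no dependence on $\sigma_{\min}(\MMtil)$ at all; from there, $\xxbopt \perp \nulls(\BB)$ gives $\norm{\BBtil\xxbopt}_2 \le (\sigma_{\max}(\BBtil)/\sigma_{\min}(\BB))\norm{\BB\xxbopt}_2 = O((1+\wstrict)\kappa(\BB))\norm{\PPi_{\BB}\ccb}_2$, and the lemma follows.

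The paper itself avoids the normal equations entirely and argues in the primal: it uses $\norm{\xxbopt - \xxbtwostrictopt}_{\BB^{\trp}\BB}^2 = \norm{\BB\xxbtwostrictopt - \ccb}_2^2 - \norm{\BB\xxbopt - \ccb}_2^2$ (from the orthogonality of the residual $\BB\xxbopt - \ccb$ to $\im(\BB)$), then invokes the optimality of $\xxbtwostrictopt$ for $\BBtwostrict$ against the candidate $\xxbopt$, which after cancelling the $\norm{\BB\,\cdot - \ccb}_2^2$ terms leaves exactly $\wstrict^2(\norm{\BBtil\xxbopt}_2^2 - \norm{\BBtil\xxbtwostrictopt}_2^2) \le \wstrict^2\norm{\BBtil\xxbopt}_2^2$. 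This is equivalent in outcome to the corrected Cauchy--Schwarz route above but is arguably cleaner since it never touches a pseudoinverse or a minimum eigenvalue; it is the same two-sided optimality trick one uses to show that an $\ell_2$-regularized least squares solution lies close to the unregularized one. So: right framework, wrong estimate in the middle, and as a result your proposed ``main obstacle'' ($\sigma_{\min}(\MMtil)$) is one you should not be facing at all in this lemma --- it belongs to the separate condition-number bound in Section~\ref{subsec:CondBoundMCTwoStrict}, not to the solution-closeness bound.
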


\begin{proof}
The desired distance can be written as
\[
\norm{\xxbopt - \xxbtwostrictopt}_{\BB^{\trp} \BB}
= \norm{\BB \left( \xxbopt - \xxbtwostrictopt \right)}_2.
\]
Also, the optimality of $\xxbopt$ means that
$\BB \xxbopt - \ccb$ is perpendicular to anything in the rank space of $\BB$,
in particular,
\[
\BB \left( \xxbopt - \xxbtwostrictopt \right)
\perp
\BB \xxbopt - \ccb,
\]
which in turn gives:
\begin{align}
\norm{\BB\left(\xxbopt -  \xxbtwostrictopt\right)}_2^2
= 
\norm{\BB \xxbtwostrictopt - \ccb}_2^2 - 
\norm{\BB\xxbopt - \ccb}_2^2.
\label{eqn:McTwoToMcTwoStrict_exact_lhs}
\end{align}

We now bound the right hand side.
Since $\xxbtwostrictopt$ is a minimizer of $\norm{\BBtwostrict \xx - \ccbtwostrict}_2$, we have
\[
\norm{\BBtwostrict \xxbtwostrictopt - \ccbtwostrict}_2^2 \le \norm{\BBtwostrict \xxbopt - \ccbtwostrict}_2^2,
\]
where we can extract out the $\wstrict \BBtil$ term
in $\BBtwostrict$ separately to get:
\[
\norm{\BB\xxbtwostrictopt - \ccb}_2^2 + \wstrict^2 \norm{\BBtil \xxbtwostrictopt}_2^2
\le \norm{\BB \xxbopt - \ccb}_2^2 + \wstrict^2 \norm{\BBtil \xxbopt}_2^2.
\]
Together with Equation~\eqref{eqn:McTwoToMcTwoStrict_exact_lhs}, this then gives
\[
\norm{\xxbopt - \xxbtwostrictopt}_{\BB^{\trp}\BB}
\leq \wstrict \norm{\BBtil \xxbopt}_2.
\]
It remains to upper bound $\norm{\BBtil \xxbtwostrictopt}_2$.
By the assumption of $\xxbopt \perp \nulls(\BB)$, we get:
\[
\frac{\norm{\BBtil \xxbopt}_2^2}{\norm{\BB\xxbopt}_2^2}
\le \frac{\lambda_{\max} \left( \BBtil^{\trp}\BBtil \right)}{\lambda_{\min}\left( \BB^{\trp} \BB \right) }.
\]

By a proof similar to Lemma~\ref{lem:GZToMC2LambdaMax},
which upper bounds $\lambda_{\max}\left( \BB^{\trp} \BB\right)$, we have:
\[
\sigma_{\max}\left( \BBtil \right)
= O (1+\wstrict) \sigma_{\max} (\BB),
\]
which implies
\[
\norm{\BBtil \xxbopt}_2 
\le O(1+\wstrict) \kappa(\BB) \norm{\BB\xxbopt}_2.
\]
Since $\xxbopt$ is a minimizer of $\min_{\xx} \norm{\BB\xx - \ccb}_2$, we have 
\[
\norm{\BB\xxbopt}_2 = \norm{\PPi_{\BB}\ccb}_2 \le \norm{\ccb}_2.
\]
Therefore, 
\[
\norm{\xxbopt - \xxbtwostrictopt}_{\BB^{\trp}\BB} \le O\left(
 \wstrict (1+\wstrict) \kappa(\BB)  \right)  \norm{\cc}_2,
\]
which completes the proof.
\end{proof}


We now check that the approximate solutions of the two linear systems~\eqref{eqn:McTwoToMcTwoStrict_ls_mctwo} and~\eqref{eqn:McTwoToMcTwoStrict_ls_mctwostrict} are also close to each other.

\begin{lemma}
\label{lem:GZToMC2NullSpaceClose}
Let $\BBtwostrict \xx =  \ccbtwostrict$ be the linear system returned by
a call to $\algMCTtoMCTS(\BB, \ccb, \eps_1)$, and let $\eps_2$ be the error parameter returned by this call.
Let $\xx $ be a vector such that 
\[
\norm{\BBtwostrict\xx - \PPi_{\BBtwostrict}\ccbtwostrict }_2
\le \eps_2 \norm{\PPi_{\BBtwostrict} \ccbtwostrict}_2.
\]
Let $\yy$ be the vector returned by a call to $\algMCTStoMCTsoln(\BB, \BBtwostrict, \ccb, \xx)$.
Then
\[
\norm{\BB\yy - \PPi_{\BB}\ccb}_2
\le \eps_1 \norm{\PPi_{\BB} \ccb}_2.
\]
\end{lemma}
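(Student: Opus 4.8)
The plan is to translate the approximate-solution hypotheses into the normal-equation / $\BB^{\trp}\BB$-norm language via Fact~\ref{fac:voltageErrorEquiv} (or directly Fact~\ref{fac:projIsQFmin}), and then combine the exact-solution closeness already established in Lemma~\ref{lem:McTwoToMcTwoStrict_sol_diff} with a lower bound on $\norm{\PPi_{\BB}\ccb}_2$. First I would dispose of the degenerate case $\BB^{\trp}\ccb = \zero$: then $\PPi_{\BB}\ccb = \zero$, the algorithm returns $\yy = \zero$, and the conclusion $\norm{\BB\yy - \PPi_{\BB}\ccb}_2 = 0 \le \eps_1 \cdot 0$ holds trivially. (One should check that $\BB^{\trp}\ccb = \zero$ iff $(\BBtwostrict)^{\trp}\ccbtwostrict = \zero$, which is immediate since $(\BBtwostrict)^{\trp}\ccbtwostrict = \BB^{\trp}\ccb$ as noted just before the lemma; so in this case $\xx$ already satisfies $\norm{\BBtwostrict\xx - \PPi_{\BBtwostrict}\ccbtwostrict}_2 \le \eps_2 \cdot 0$, forcing $\BBtwostrict\xx = \PPi_{\BBtwostrict}\ccbtwostrict = \zero$.) So assume $\BB^{\trp}\ccb \neq \zero$, where the algorithm returns $\yy = \xx$.

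Next I would set up the triangle-inequality skeleton. Let $\xxbtwostrictopt \in \argmin_{\zz}\norm{\BBtwostrict\zz - \ccbtwostrict}_2$ and let $\xxbopt \in \argmin_{\zz}\norm{\BB\zz - \ccb}_2$ with $\xxbopt \perp \nulls(\BB)$. The hypothesis on $\xx$ together with Fact~\ref{fac:voltageErrorEquiv}(1) gives $\norm{\xx - \xxbtwostrictopt}_{(\BBtwostrict)^{\trp}\BBtwostrict} \le \eps_2 \norm{\PPi_{\BBtwostrict}\ccbtwostrict}_2$. Since $(\BBtwostrict)^{\trp}\BBtwostrict = \BB^{\trp}\BB + \wstrict^2 \BBtil^{\trp}\BBtil \pgeq \BB^{\trp}\BB$, we get $\norm{\xx - \xxbtwostrictopt}_{\BB^{\trp}\BB} \le \eps_2\norm{\PPi_{\BBtwostrict}\ccbtwostrict}_2 \le \eps_2 \norm{\ccb}_2$. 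Combining with Lemma~\ref{lem:McTwoToMcTwoStrict_sol_diff}, which bounds $\norm{\xxbopt - \xxbtwostrictopt}_{\BB^{\trp}\BB}$ by $O(\wstrict(1+\wstrict)\kappa(\BB)\norm{\ccb}_2)$, the triangle inequality in the $\BB^{\trp}\BB$-seminorm yields
\[
\norm{\BB\xx - \PPi_{\BB}\ccb}_2 = \norm{\xx - \xxbopt}_{\BB^{\trp}\BB} \le \eps_2\norm{\ccb}_2 + O\!\left(\wstrict(1+\wstrict)\kappa(\BB)\norm{\ccb}_2\right).
\]
Here I use $\norm{\BB\xx - \PPi_{\BB}\ccb}_2 = \norm{\xx - \xxbopt}_{\BB^{\trp}\BB}$ from Fact~\ref{fac:voltageErrorEquiv}(1).

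Finally I would convert this absolute error into a relative error against $\norm{\PPi_{\BB}\ccb}_2$. This is where the main work lies: I need a lower bound $\norm{\PPi_{\BB}\ccb}_2 \ge 1/\sigma_{\max}(\BB)$ (or some comparable polynomial quantity), analogous to Lemma~\ref{lem:ProjLengthLower}, valid because $\ccb$ has the structure inherited from the integral original system; then $\norm{\ccb}_2 \le \sqrt{\nnz(\BB)}\norm{\ccb}_{\max}$ times polynomial factors can be absorbed. Plugging the definition $\wstrict = \eps_1/(100\kappa(\BB)\sigma_{\max}(\BB)\norm{\ccb}_2)$ from Line~\ref{lin:delta} into the bound above makes the second term $O(\eps_1 \norm{\ccb}_2/(\sigma_{\max}(\BB)\norm{\ccb}_2)) = O(\eps_1/\sigma_{\max}(\BB)) = O(\eps_1 \norm{\PPi_{\BB}\ccb}_2)$, and with $\eps_2 = \eps_1/100$ the first term is $\eps_1\norm{\ccb}_2/100$, which must similarly be rewritten — and this is the subtle point — using the lower bound on $\norm{\PPi_{\BB}\ccb}_2$ to show $\norm{\ccb}_2$ can be traded for $\norm{\PPi_{\BB}\ccb}_2$ up to the polynomial slack already built into $\wstrict$. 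The hard part will be getting all the polynomial factors (from $\norm{\ccb}_2$ vs.\ $\norm{\PPi_{\BB}\ccb}_2$, from $\kappa(\BB)$, and from $\sigma_{\max}(\BB)$) to cancel cleanly so that the final bound is exactly $\eps_1\norm{\PPi_{\BB}\ccb}_2$ rather than something with leftover factors; this is exactly what the large constant $100$ and the specific form of $\wstrict$ and $\eps_2$ in Algorithm~\ref{alg:MCTtoMCTS} are engineered to absorb, and one should verify the arithmetic carefully, using that $\sigma_{\max}(\BB)\norm{\PPi_{\BB}\ccb}_2 \ge 1$ by the integrality-based projection lower bound.
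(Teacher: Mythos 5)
The overall skeleton — dispose of the degenerate case $\BB^{\trp}\ccb=\zero$, reduce to the $\BB^{\trp}\BB$-seminorm via Fact~\ref{fac:voltageErrorEquiv}, triangle inequality against $\xxbtwostrictopt$, Lemma~\ref{lem:McTwoToMcTwoStrict_sol_diff} for the optimum-to-optimum gap, and Lemma~\ref{lem:ProjLengthLower} to convert $\norm{\ccb}_2$ into $\norm{\PPi_{\BB}\ccb}_2$ — matches the paper. However, there is a genuine gap in how you handle the first term.

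You bound the first triangle-inequality term by $\norm{\xx-\xxbtwostrictopt}_{\BB^{\trp}\BB}\le\eps_2\norm{\PPi_{\BBtwostrict}\ccbtwostrict}_2\le\eps_2\norm{\ccb}_2$, discarding the projection, and then plan to trade $\norm{\ccb}_2$ back for $\norm{\PPi_{\BB}\ccb}_2$ later ``using the polynomial slack built into $\wstrict$.'' That trade cannot work here: $\wstrict$ multiplies only the second term (the Lemma~\ref{lem:McTwoToMcTwoStrict_sol_diff} bound), not the $\eps_2\norm{\ccb}_2$ term. The ratio $\norm{\ccb}_2/\norm{\PPi_{\BB}\ccb}_2$ is bounded by $\sigma_{\max}(\BB)\norm{\ccb}_2$ via Lemma~\ref{lem:ProjLengthLower}, which is polynomially large; so $\eps_2\norm{\ccb}_2$ becomes at best $(\eps_1/100)\cdot\sigma_{\max}(\BB)\norm{\ccb}_2\cdot\norm{\PPi_{\BB}\ccb}_2$, which is \emph{not} $O(\eps_1\norm{\PPi_{\BB}\ccb}_2)$. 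The missing idea from the paper is a direct comparison of the two projections: since the equations of $\BBtwostrict$ are a superset of those in $\BB$, one has
\[
\norm{\BBtwostrict\xxbtwostrictopt - \ccbtwostrict}_2
\ge \norm{\BB\xxbtwostrictopt - \ccb}_2
\ge \norm{\BB\xxbopt - \ccb}_2,
\]
and since $\norm{\BB\xxbopt-\ccb}_2^2 = \norm{\ccb}_2^2 - \norm{\PPi_{\BB}\ccb}_2^2$ (and likewise for $\BBtwostrict$, noting $\norm{\ccbtwostrict}_2 = \norm{\ccb}_2$), this gives $\norm{\PPi_{\BBtwostrict}\ccbtwostrict}_2 \le \norm{\PPi_{\BB}\ccb}_2$. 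With that in hand the first term is immediately $\le\eps_2\norm{\PPi_{\BB}\ccb}_2=(\eps_1/100)\norm{\PPi_{\BB}\ccb}_2$ with no conversion, and only the second term needs the $\wstrict$ cancellation — which is exactly what the choice of $\wstrict$ in Line~\ref{lin:delta} is designed for.
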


\begin{proof}
In Algorithm~\ref{alg:MCTStoMCTsolnback}, if the condition $\BB^{\trp}\ccb = {\bf 0}$ is true, then $\yy = {\bf 0}$. This implies
\[
\norm{\BB\yy - \PPi_{\BB}\ccb}_2
= {\bf 0}.
\]

In the following, we assume that $\BB^{\trp}\ccb \neq {\bf 0}$, in which case $\yy = \xx$.
We first show that the construction implies
$\norm{\BB\xx - \PPi_{\BB}\ccb }_2
\leq \norm{\BBtwostrict\xx - \PPi_{\BBtwostrict}\ccbtwostrict }_2$.
Once again, let $\xxbopt$ and $\xxbtwostrictopt$
be the vectors that minimize
$\norm{\BB \xx - \ccb}_2$ and $\norm{\BBtwostrict \xx - \ccbtwostrict}_2$ respectively.
This choice gives:
\[
\norm{\BB\xx - \PPi_{\BB}\ccb }_2
= \norm{\xx - \xxbopt}_{\BB^{\trp}\BB}.
\]
As $\norm{\cdot}_{\BB^{\trp} \BB}$ is a norm, by triangle inequality we have:
\[
\norm{\xx - \xxbopt}_{\BB^{\trp}\BB}
\leq \norm{\xx - \xxbtwostrictopt}_{\BB^{\trp}\BB}
+ \norm{\xxbtwostrictopt - \xxbopt}_{\BB^{\trp}\BB}.
\]
We will bound these two terms separately.

Since $\BB^{\trp}\BB \pleq \BB^{\trp}\BB + \wstrict^2 \BBtil^{\trp} \BBtil =  (\BBtwostrict)^{\trp} \BBtwostrict$,
the first term is less than its norm in the $(\BBtwostrict)^{\trp} \BBtwostrict$ norm:
\[
\norm{\xx - \xxbtwostrictopt}_{\BB^{\trp}\BB}
\leq
\norm{\xx - \xxbtwostrictopt}_{\left(\BB^{\trp}\BB + \wstrict^2 \BBtil^{\trp} \BBtil \right)}
= \norm{\BBtwostrict\xx - \PPi_{\BBtwostrict}\ccbtwostrict}_2,
\]
while the second term is precisely the distances between the two
optimums, which we just bounded in 
Lemma~\ref{lem:McTwoToMcTwoStrict_sol_diff}.
Combining these bounds then gives:
\begin{align}
\norm{\BB\xx - \PPi_{\BB}\ccb}_2
\le
\norm{\BBtwostrict\xx - \PPi_{\BBtwostrict}\ccbtwostrict}_2 + 
O\left(  \wstrict (1+\wstrict) \kappa(\BB) \norm{\ccb}_2 \right).
\label{eqn:McTwoToMcTwoStrict_approx_errb}
\end{align}

As the equations in $\BBtwostrict$ is a superset of the ones in $\BB$,
we have
\[
\norm{\BBtwostrict\xxbtwostrictopt - \ccbtwostrict}_2 \ge
\norm{\BB\xxbtwostrictopt - \ccb}_2
\ge
\norm{\BB\xxbopt - \ccb}_2.
\]
Substituting
$\norm{\BB \xxbopt - \ccb}_2^2
= \norm{\cca}_2^2 - \norm{\PPi_{\BB}\ccb}_2^2$
and its equivalent in $\BBtwostrict$ gives
\[
\norm{\PPi_{\BBtwostrict}\ccbtwostrict}_2 \le \norm{\PPi_{\BB}\ccb}_2.
\]
Together with the condition of the lemma,
\[
\norm{\BBtwostrict\xx - \PPi_{\BBtwostrict}\ccbtwostrict}_2
\le \epsilon_2 \norm{\PPi_{\BB}\ccb}_2.
\]
Plugging this into Equation~\eqref{eqn:McTwoToMcTwoStrict_approx_errb},
we have
\[
\norm{\BB\xx - \PPi_{\BB}\ccb}_2
= O \left( \eps_2 \norm{\PPi_{\BB}\ccb}_2
+ \wstrict \kappa\left( \BB \right) \norm{\ccb}_2 \right).
\]

It remains to upper bound $\norm{\ccb}_2$ by a function of  $\norm{\PPi_{\BB} \ccb}_2$. 
By our construction in Algorithm~\ref{alg:GZ2toMC2} and assumption, $\norm{\BB\ccb}_2^2$ is a positive integer.
By Lemma~\ref{lem:ProjLengthLower},
\[
\norm{\PPi_{\BB}\ccb}_2 \ge \frac{1}{\sigma_{\max}(\BB)}.
\]
Thus,
\[
\norm{\BB\xx - \PPi_{\BB}\ccb}_2
= O \left( \epsilon_2 + \wstrict \kappa(\BB) \sigma_{\max}(\BB) \norm{\ccb}_2 \right) \norm{\PPi_{\BB} \ccb}_2.
\]
By our setting of $\wstrict$ in Equation~\eqref{eq:wEpsilon}, we have
\[
\norm{\BB\xx - \PPi_{\BB}\ccb}_2
\le \eps_1 \norm{\PPi_{\BB}\ccb}_2.
\]
This completes the proof.
%
\end{proof}

\subsection{Bounding Condition Number of the New
	System in $\mctwostrictCl$}
\label{subsec:CondBoundMCTwoStrict}

We now establish bounds on the numerical quantities related
to $\BBtwostrict$.
By a proof similar to the upper bound on 
$\lambda_{\max}\left( \BB^{\trp}\BB \right)$
in Lemma~\ref{lem:GZToMC2LambdaMax}, we have:
\[
\lambda_{\max}\left((\BBtwostrict)^{\trp} \BBtwostrict \right) = 
O \left( \lambda_{\max} \left( \BB^{\trp} \BB \right) \right).
\]
As a result, we focus on the lower bound here:
\begin{lemma}
\label{lem:MC2ToMC2StrictLambdaMin}
The matrix $\BBtwostrict$ from the linear system returned by
a call to $\algMCTtoMCTS(\BB, \ccb, \epsilon_1, \wstrict)$,
where $\wstrict$ is set according to Equation~\ref{eq:wEpsilon},
satisfies
\[
\sigma_{\min}\left( \BBtwostrict\right) = \Omega \left( \frac{\epsilon_1}{\kappa(\BB) \sigma_{\max}(\BB) \nb} \right).
\]
\end{lemma}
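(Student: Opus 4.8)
The plan is to lower bound the smallest nonzero singular value of $\BBtwostrict$ by analysing $(\BBtwostrict)^{\trp}\BBtwostrict = \BB^{\trp}\BB + \wstrict^{2}\BBtil^{\trp}\BBtil$. First I would record the structural facts established by the construction: since $\algMCTtoMCTS$ adds \emph{all} three edge types to every vertex pair that is already involved in an equation of $\BB$, the underlying (type‑oblivious) graph of $\BBtwostrict$ coincides with that of $\BB$, every edge of $\BBtwostrict$ has weight either $\ge \anzmin(\BB)$ (the edges coming from $\BB$) or exactly $\wstrict$ (the edges added in $\BBtil$), and $\nulls(\BBtwostrict)=\nulls(\BB)\cap\nulls(\BBtil)$; concretely, on each connected component $C$ this common kernel is spanned by $\one_{C}^{\uu}$, $\one_{C}^{\vv}$, together with the standard basis vectors of the auxiliary variables that carry a zero coefficient in every equation. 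Fix a unit vector $\xx\perp\nulls(\BBtwostrict)$ and set $\eps=\norm{\BBtwostrict\xx}_{2}$; the goal is to show $\eps = \Omega(\wstrict/\nb)$, after which $\sigma_{\min}(\BBtwostrict)=\Omega(\wstrict/\nb)$ follows and substituting $\wstrict$ finishes the proof.

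The core argument is an entrywise propagation, in the spirit of Lemma~\ref{lem:GZ2toMC2apx_small_eig} and the case analysis in the proof of Lemma~\ref{lem:GZToMC2LambdaMin}. From $\eps=\norm{\BBtwostrict\xx}_{2}$, every row of $\BB$ and every row of $\wstrict\BBtil$ evaluates to at most $\eps$ in absolute value; since every active vertex pair $(i,j)$ is joined in $\BBtwostrict$ by a type‑$1$ and a type‑$2$ edge of weight at least $w_{0}\defeq\min(\wstrict,\anzmin(\BB))$, this gives $\abs{\uu_{i}-\uu_{j}}\le \eps/w_{0}$ and $\abs{\vv_{i}-\vv_{j}}\le \eps/w_{0}$; the coordinates of $\xx$ on the zero‑coefficient auxiliary variables vanish because $\xx\perp\nulls(\BBtwostrict)$. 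Propagating these inequalities along a spanning forest (paths of length at most $\nb$) of the graph and then subtracting, for each component $C$, the vector $\uu_{\rho_{C}}\one_{C}^{\uu}+\vv_{\rho_{C}}\one_{C}^{\vv}\in\nulls(\BBtwostrict)$ (where $\rho_{C}$ is the root of $C$), one gets a vector of $\ell_{\infty}$ norm $O(\nb\,\eps/w_{0})$, hence a vector $\zz\in\nulls(\BBtwostrict)$ with $\norm{\xx-\zz}_{2}=O(\nb^{3/2}\eps/w_{0})$. But $\xx\perp\nulls(\BBtwostrict)$ and $\norm{\xx}_{2}=1$, so $1=\mathrm{dist}(\xx,\nulls(\BBtwostrict))\le\norm{\xx-\zz}_{2}=O(\nb^{3/2}\eps/w_{0})$, forcing $\eps=\Omega(w_{0}/\nb^{3/2})$. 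For the instances produced by the chain of reductions we have $\anzmin(\BB)\ge 1>\wstrict$, so $w_{0}=\wstrict$; sharpening the crude $\ell_{\infty}\!\to\!\ell_{2}$ step to a Laplacian energy estimate (whose smallest nonzero eigenvalue on a path on $\nb$ vertices is $\Theta(\nb^{-2})$) trims the exponent of $\nb$ to the one claimed, and in the general case one records the extra factor $w_{0}/\wstrict$ into the sparse parameter complexity.

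It then remains only to substitute. Plugging $\wstrict=\eps_{1}/(100\,\kappa(\BB)\,\sigma_{\max}(\BB)\,\norm{\ccb}_{2})$ from Equation~\eqref{eq:wEpsilon} into $\sigma_{\min}(\BBtwostrict)=\Omega(\wstrict/\nb)$ gives the stated bound $\sigma_{\min}(\BBtwostrict)=\Omega\!\big(\eps_{1}/(\kappa(\BB)\,\sigma_{\max}(\BB)\,\nb)\big)$, up to the routine bookkeeping of the $\norm{\ccb}_{2}$ factor (which is controlled using $\norm{\PPi_{\BB}\ccb}_{2}\ge 1/\sigma_{\max}(\BB)$, as $\norm{\BB\ccb}_{2}^{2}$ is integral by construction, exactly as in the proof of Lemma~\ref{lem:GZToMC2NullSpaceClose}). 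The main obstacle is making the propagation step fully rigorous: one must argue that the subtracted null vector does not inflate the norm, handle components whose active pairs are only partially present in $\BB$ (so that the type‑$1$ or type‑$2$ edge genuinely comes from $\BBtil$ with weight $\wstrict$ rather than from a possibly tiny $\BB$‑weight — this is where the ``add all three types'' feature of the construction is essential), and keep the polynomial dependence on $\nb$ as small as the claim demands; the algebra for $\lambda_{\max}$ and for the cross terms is routine and parallels Lemma~\ref{lem:GZToMC2LambdaMax}.
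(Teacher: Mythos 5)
Your spanning-tree propagation argument is a genuinely different route from the paper's, which is shorter and more algebraic. The paper observes that since $\algMCTtoMCTS$ adds all three edge types to every active pair, $(\BBtwostrict)^{\trp}\BBtwostrict$ has the same null space as $\MM \defeq \LL_G \otimes \CC$ where $\LL_G$ is the unit-weight Laplacian of the underlying graph and $\CC = \left(\begin{smallmatrix}2&-1\\-1&2\end{smallmatrix}\right)$. Because $\wstrict^2$ is the minimum edge weight, $(\BBtwostrict)^{\trp}\BBtwostrict \pgeq \wstrict^2 \MM$ on the orthogonal complement of the kernel, and since $\lambda_{\min}(\CC)=1$ the Kronecker structure immediately gives $\lambda_{\min}(\MM) = \lambda_{\min}(\LL_G)$, which is handled by the folklore bound $\lambda_{\min}(\LL_G)=\Omega(1/\nb^2)$ (Cheeger). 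That is a two-line argument once the decomposition is noticed. Your propagation is correct but, as you note, the $\ell_\infty$-to-$\ell_2$ step costs an extra $\nb^{1/2}$; the ``Laplacian energy estimate'' you say would sharpen it \emph{is} precisely the paper's approach, so in fully rigorous form you would end up reproducing it. Two smaller observations: (i) the type-$1{+}2$ edges are not needed for your propagation (type $1$ and type $2$ alone force $\uu$ and $\vv$ constant per component), but they \emph{are} needed so that $\BBtwostrict$'s null space matches that of $\MM$ in the paper's reduction; (ii) the appeal to $\norm{\PPi_{\BB}\ccb}_2 \geq 1/\sigma_{\max}(\BB)$ does not in fact remove the $\norm{\ccb}_2$ factor introduced by the definition of $\wstrict$ — but the paper's own final substitution has the same slack, so this is a shared bookkeeping looseness rather than a gap specific to your proof.
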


\begin{proof}
Let $G$ be a unit-edge weight graph whose vertex set and edge set are same as the underlying graph of $\BBtwostrict$.
Let $\LL_G$ be the associated Laplacian matrix of $G$, and let $\MM := \LL_G \otimes \CC$, where
\begin{align*}
\CC = \left( \begin{array}{cc}
2 & -1 \\
-1 & 2
\end{array} \right)
\end{align*}
is a symmetric PSD matrix.
Note 
\begin{align*}
\MM = \LL_G \otimes \left( \begin{array}{cc}
1 & 0 \\
0 & 0
\end{array} \right)
+ \LL_G \otimes \left( \begin{array}{cc}
0 & 0 \\
0 & 1
\end{array} \right)
+ \LL_G \otimes \left( \begin{array}{cc}
1 & -1 \\
-1 & 1
\end{array} \right),
\end{align*}
which means $\MM$ has the same null space as $(\BBtwostrict)^{\trp}\BBtwostrict$.

\[
\lambda_{\min}\left( (\BBtwostrict)^{\trp} \BBtwostrict \right)
= \min_{\xx \perp \nulls(\BBtwostrict)} \frac{\xx^{\trp} (\BBtwostrict)^{\trp} \BBtwostrict \xx}{\xx^{\trp} \xx} 
\ge \delta^2 \min_{\xx \perp \nulls(\MM)} \frac{\xx^{\trp} \MM \xx}{\xx^{\trp} \xx},
\]
the last inequality is due to that $\delta$ is the minimum edge weight.

Note that the sum of the 3 types of blocks is positive
definite and has eigenvalue at least $1$: the type $1$
and $2$ blocks already sum to $\II$.
Formally:
\[
\lambda_{\min} (\MM)
= \lambda_{\min} (\LL_G) \cdot  \lambda_{\min} (\CC)
= \lambda_{\min} (\LL_G).
\]

The result then follows from the folklore bound that
the minimum non-zero eigenvalue of a unit weighted graph
is at least $\frac{1}{n^2}$.
One way to see this is via Cheeger's inequality
(see e.g.~\cite{Spielman07:survey}
applied to each block: this decomposition is equivalent
to spearating the matrix into its diagonal blocks
based on the connected components, and then invoking
the fact that the minimum weight of a cut is at least
$1$, and there are at most $n$ vertices.
Together these tools imply
$\lambda_{\min} (\LL_G) = \Omega \left( \frac{1}{\nb^2} \right)$,
and in turn:
\[
\sigma_{\min} \left( \BBtwostrict \right)
\geq \Omega\left( \frac{\delta}{n} \right)
= \Omega \left( \frac{\epsilon_1}{\kappa(\BB) \sigma_{\max}(\BB) \nb} \right).
\]
%
\end{proof}

This also implies a bound on the condition number of $\BBtwostrict$:
\begin{lemma}
\label{lem:MC2ToMC2StrictCond}
$\kappa(\BBtwostrict) = O \left( \eps_1^{-1} \sigma_{\max}^2(\BB) \kappa(\BB) \nb  \right)$.
\end{lemma}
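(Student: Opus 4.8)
The plan is to obtain $\kappa(\BBtwostrict)$ simply by dividing an upper bound on $\sigma_{\max}(\BBtwostrict)$ by the lower bound on $\sigma_{\min}(\BBtwostrict)$ established in Lemma~\ref{lem:MC2ToMC2StrictLambdaMin}. First I would record the upper bound $\sigma_{\max}(\BBtwostrict) = O(\sigma_{\max}(\BB))$. This follows by repeating the weighted-degree / Cauchy--Schwarz estimate used to prove Lemma~\ref{lem:GZToMC2LambdaMax}: we have $(\BBtwostrict)^{\trp}\BBtwostrict = \BB^{\trp}\BB + \wstrict^2 \BBtil^{\trp}\BBtil$, and since every row of $\BBtil$ has $O(1)$ entries of absolute value $1$, its maximum weighted degree — and hence $\lambda_{\max}(\BBtil^{\trp}\BBtil)$ — is within a constant factor of that of $\BB$; because $\wstrict \le 1$ (indeed $\wstrict \le \eps_1/100 < 1$, using $\eps_1 < 1$, $\kappa(\BB)\ge 1$, $\sigma_{\max}(\BB)\ge 1$, and $\norm{\ccb}_2 \ge 1$ in the nontrivial case $\BB^\trp\ccb \ne \zero$), the added block contributes at most $O(\sigma_{\max}(\BB)^2)$ to the top eigenvalue, which is absorbed into the constant.

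Second, I would quote Lemma~\ref{lem:MC2ToMC2StrictLambdaMin}, which gives $\sigma_{\min}(\BBtwostrict) = \Omega\!\left(\frac{\eps_1}{\kappa(\BB)\,\sigma_{\max}(\BB)\,\nb}\right)$. Combining these two bounds with the definition $\kappa(\BBtwostrict) = \sigma_{\max}(\BBtwostrict)/\sigma_{\min}(\BBtwostrict)$ yields
\[
\kappa(\BBtwostrict) \;=\; \frac{O(\sigma_{\max}(\BB))}{\Omega\!\left(\eps_1 / (\kappa(\BB)\,\sigma_{\max}(\BB)\,\nb)\right)} \;=\; O\!\left(\eps_1^{-1}\,\sigma_{\max}^2(\BB)\,\kappa(\BB)\,\nb\right),
\]
which is the claimed bound.

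There is essentially no obstacle: all the content lives in the $\sigma_{\max}$ estimate (a verbatim reuse of the Lemma~\ref{lem:GZToMC2LambdaMax} argument) and in Lemma~\ref{lem:MC2ToMC2StrictLambdaMin}. The only point deserving a sentence of care is confirming $\wstrict \le 1$ so that the $\wstrict^2\BBtil^\trp\BBtil$ term does not inflate $\sigma_{\max}(\BBtwostrict)$ beyond $O(\sigma_{\max}(\BB))$; this is immediate from the definition of $\wstrict$ in~\eqref{eq:wEpsilon}. After that, the lemma is a one-line corollary.
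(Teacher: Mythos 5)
Your proposal is correct and follows exactly the route the paper takes: the text immediately preceding Lemma~\ref{lem:MC2ToMC2StrictLambdaMin} asserts $\lambda_{\max}\bigl((\BBtwostrict)^{\trp}\BBtwostrict\bigr)=O\bigl(\lambda_{\max}(\BB^{\trp}\BB)\bigr)$ by the same reuse of the Lemma~\ref{lem:GZToMC2LambdaMax} argument, and Lemma~\ref{lem:MC2ToMC2StrictCond} is then exactly the quotient of that $\sigma_{\max}$ bound by the $\sigma_{\min}$ bound of Lemma~\ref{lem:MC2ToMC2StrictLambdaMin}. Your extra sentence verifying $\wstrict\le 1$ is a small, correct amount of additional care the paper leaves implicit.
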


\subsection{Putting it All Together}
\begin{proof}[Proof of Lemma~\ref{lem:McTwoToMcTwoStrict}]
By Lemma~\ref{lem:MCTwoStrictSize}, we have
\[
\nnz(\BBtwostrict) = O(s),
\]
The largest entry is $U$, and the smallest nonzero entry is $\wstrict$. By Equation~\eqref{eq:wEpsilon} and Lemma~\ref{clm:para_l2_norm} and~\ref{clm:para_eig},
\[
\wstrict = \Omega \left( \frac{\eps}{KU^2} \right).
\]
By Lemma~\ref{lem:MC2ToMC2StrictCond} and Lemma~\ref{clm:para_l2_norm} and~\ref{clm:para_eig},
the condition number is
\[
\kappa(\BBtwostrict) = O \left( \frac{s^2 U^2 K}{\epsilon} \right).
\]
By Lemma~\ref{lem:GZToMC2NullSpaceClose},
the accuracy error for $\BBtwostrict$ is $O(\epsilon)$.
\end{proof}

\newcommand{\algMCTStoMCTSZ}{\ensuremath{\textsc{Reduce}\mctwostrictCl\textsc{To}\mctwostrictintCl}}
\newcommand{\algMCTSZtoMCTSsoln}{\ensuremath{\textsc{MapSoln}\mctwostrictintCl\textsc{To}\mctwostrictCl}}

\newcommand{\BBint}{\BB^{int}}
\newcommand{\ccbtwostrictint}{\cc^{\BBtwostrictint}}

\section{Rounding and Scaling Weights to Integers}

In this section, we show a reduction from the linear system with strict 2-commodity matrix to the linear system with integral strict 2-commodity matrix.

Algorithm~\ref{alg:MCTStoMCTSZ} presents the pseudo-code for the algorithm $\algMCTStoMCTSZ$. Given an instance $(\BBtwostrict, \ccbtwostrict, \eps_1)$ where $\BBtwostrict \in \mctwostrictCl$, the call $\algMCTStoMCTSZ(\BBtwostrict, \ccbtwostrict, \eps_1)$ returns an instance $(\BBtwostrictint, \ccbtwostrictint, \eps_2)$ where $\BBtwostrictint \in \mctwostrictintCl$.
Algorithm~\ref{alg:MCTSZtoMCTSsolnback} provides the (trivial) pseudo-code for $\algMCTSZtoMCTSsoln$ which maps a solution of an instance over $\mctwostrictintCl$ to a solution of an instance over $\mctwostrictCl$.

\begin{algorithm}[ht]
\renewcommand{\algorithmicrequire}{\textbf{Input:}}
\renewcommand\algorithmicensure {\textbf{Output:}}
\caption{\label{alg:MCTStoMCTSZ}\algMCTStoMCTSZ}
    \begin{algorithmic}[1]
\REQUIRE{$(\BBtwostrict,\ccbtwostrict,\eps_{1})$ where $\BBtwostrict \in \mctwostrictCl$ is an $m \times
  n$ matrix, $\ccbtwostrict \in \mathbb{R}^m$, and $0 < \eps_{1} < 1$}
\ENSURE{$(\BBtwostrictint,\ccbtwostrictint,\eps_{2})$ where $\BBtwostrictint \in \mctwostrictintCl$ is an $m' \times
  n'$ matrix, $\ccbtwostrictint \in \mathbb{R}^{m'}$, and $0 < \eps_{2} < 1$.}
\STATE $k \assign \left\lceil \log_2
 \frac{\nnz(\BBtwostrict)}{\eps_1} \right\rceil + 2$
\STATE $\BBtwostrictint \assign \BBtwostrict$
\STATE $\ccbtwostrictint \assign \ccbtwostrict$
\FOR {each entry $(\BBtwostrictint)_{ij}$}
	\STATE $(\BBtwostrictint)_{ij} \assign  \left\lceil (\BBtwostrictint)_{ij} \cdot 2^k \right\rceil$
	\label{lin:BBtwostrictint_entry}
\ENDFOR
\FOR {each entry $\ccbtwostrictint_i$} 
	\STATE $\ccbtwostrictint_i \assign \ccbtwostrictint_i \cdot 2^k$
\ENDFOR	
\RETURN $(\BBtwostrictint, \ccbtwostrictint, \eps_1/3)$.
\end{algorithmic}
\end{algorithm}

\begin{algorithm}[htb]
\renewcommand{\algorithmicrequire}{\textbf{Input:}}
\renewcommand\algorithmicensure {\textbf{Output:}}
    \caption{\label{alg:MCTSZtoMCTSsolnback}\algMCTSZtoMCTSsoln}
    \begin{algorithmic}[1]
	\REQUIRE $m \times n$ matrix $\BBtwostrict \in \mctwoCl$, 
    $m' \times n'$ matrix $\BBtwostrictint \in \mctwostrictCl$, 
    vector $\xx \in \mathbb{R}^{n'}$.
    \ENSURE Vector $\yy \in \mathbb{R}^{n}$.
    \RETURN $\xx$
\end{algorithmic}
\end{algorithm}

For simplicity, we analyze an intermediate linear system $\BBint\yy = \ccbtwostrict$, where 
\[
(\BBint)_{ij} \defeq 2^{-k} \cdot (\BBtwostrictint)_{ij}.
\]
Recall the definition of $\BBtwostrictint$ in Algorithm~\ref{alg:MCTStoMCTSZ} line~\ref{lin:BBtwostrictint_entry}, we have
\begin{align}
(\BBint)_{ij}
= 2^{-k} \cdot \left\lceil (\BBtwostrict)_{ij} \cdot 2^k \right\rceil.
\label{eqn:rounding_rule}
\end{align}
The linear system $\BBint\yy = \ccbtwostrict$ is exactly the linear system $\BBtwostrictint \yy = \ccbtwostrictint$ multiplying a factor $2^{-k}$ on both sides.


Note the condition number and the eigen-space of a matrix, the optimal solutions of the projection problems are invariant under scaling.
Let $\yy^* \in \argmin_{\yy} \norm{\BBtwostrictint \yy - \ccbtwostrictint}_2 = \argmin_{\yy} \norm{\BBint \yy - \ccbtwostrict}_2$.
The following two inequalities: 
\[
\norm{\yy - \yy^*}_{(\BBtwostrictint)^{\trp} \BBtwostrictint}
\le \epsilon \norm{\PPi_{\BBtwostrictint} \ccbtwostrictint}_2
\]
and
\[
\norm{\yy - \yy^*}_{(\BBint)^{\trp} \BBint}
\le \epsilon \norm{\PPi_{\BBint} \ccbtwostrict}_2
\]
are equivalent.
Thus, it suffices to analyze the linear system $\BBint \yy = \ccbtwostrict$.

Solving the two projection problems
\begin{align*}
& \min_{\xx} \norm{\BBtwostrict\xx - \ccbtwostrict}_2, \\
& \min_{\yy} \norm{\BBint\yy - \ccbtwostrict}_2,
\end{align*}
is equivalent to solving the following two linear systems:
\begin{align}
& (\BBtwostrict)^{\trp} \BBtwostrict \xx = (\BBtwostrict)^{\trp} \ccbtwostrict,  
\label{eqn:McTwoStrictToMcTwoInt_ls1} \\
& (\BBint)^{\trp} \BBint \yy = (\BBint)^{\trp} \ccbtwostrict. \nonumber
\end{align}
Note $\ccbtwostrict = (\cca; {\bf 0})$, and all entries of the rows of $\BBtwostrict$ corresponding to the original rows of $\AA$ are integers.
Thus, 
\[
(\BBint)^{\trp} \ccbtwostrict = (\BBtwostrict)^{\trp} \ccbtwostrict.
\]
That is, the 2nd linear system is equivalent to
\begin{align}
(\BBint)^{\trp} \BBint \yy = (\BBtwostrict)^{\trp} \ccbtwostrict.
\label{eqn:McTwoStrictToMcTwoInt_ls2}
\end{align}

Let 
\[
\MM \defeq (\BBtwostrict)^{\trp} \BBtwostrict
\mbox{ and }
\MMhat \defeq (\BBint)^{\trp} \BBint.
\]

We bound the eigenvalues of $\MMhat$ by the following lemma.

\begin{lemma}
$\lambda_{\max}(\MMhat) \le (1+2^{-k+2}) \lambda_{\max}(\MM)$ and $\lambda_{\min}(\MMhat) \ge \lambda_{\min}(\MM)$.
Furthermore, $\kappa(\MMhat) \le (1+2^{-k+2}) \kappa(\MM)$.
\label{lem:McTwoStrictToMcTwoInt_cond}
\end{lemma}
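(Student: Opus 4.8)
The plan is to realize $\MMhat$ as a mild re-weighting of $\MM$ and then read off spectral comparisons with the stated constants. First I would reinterpret the rounding in Algorithm~\ref{alg:MCTStoMCTSZ} as acting on edge weights rather than on isolated entries (which is also what keeps the output inside $\mctwostrictintCl$): every row of the incidence matrix $\BBtwostrict$ is a single scalar weight $w_e>0$ times a fixed sign pattern, and the rounding replaces $w_e$ by $w'_e = 2^{-k}\lceil 2^k w_e\rceil$, so that $w_e \le w'_e < w_e + 2^{-k}$ and hence $1 \le w'_e/w_e < 1 + 2^{-k}/w_e$. Therefore $\BBint = \DD\,\BBtwostrict$ for the diagonal matrix $\DD = \diag_e(w'_e/w_e)$ with $\DD \succeq \II$, which gives the sandwich
\[
\MM = \BBtwostrict^{\trp}\BBtwostrict \ \preceq\ \BBtwostrict^{\trp}\DD^2\BBtwostrict = \MMhat \ \preceq\ \|\DD^2\|_2\,\BBtwostrict^{\trp}\BBtwostrict = \|\DD^2\|_2\,\MM .
\]
Since $\DD$ is invertible, $\nulls(\MMhat) = \nulls(\BBint) = \nulls(\BBtwostrict) = \nulls(\MM)$.

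From $\MM \preceq \MMhat$ together with this null-space equality I get $\lambda_{\min}(\MMhat) \ge \lambda_{\min}(\MM)$ at once, by restricting the Rayleigh quotient to the common orthogonal complement of the null space. For the remaining two inequalities it suffices to bound $\|\DD^2\|_2 = \max_e (w'_e/w_e)^2 \le (1 + 2^{-k}/w_{\min})^2$, where $w_{\min}$ denotes the smallest nonzero entry of $\BBtwostrict$: once this is shown to be at most $1 + 2^{-k+2}$, the sandwich yields $\lambda_{\max}(\MMhat) \le (1+2^{-k+2})\lambda_{\max}(\MM)$, and dividing the two eigenvalue bounds gives $\kappa(\MMhat) \le (1+2^{-k+2})\kappa(\MM)$.

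The only genuine estimate — and the main obstacle — is therefore controlling $2^{-k}/w_{\min}$, i.e.\ verifying $(1+2^{-k}/w_{\min})^2 \le 1 + 2^{-k+2}$. Here I would combine the choice $k = \lceil \log_2(\nnz(\BBtwostrict)/\eps_1)\rceil + 2$, which forces $2^{-k} \le \eps_1/(4\,\nnz(\BBtwostrict))$, with the lower bound on the smallest nonzero entry of $\BBtwostrict$ tracked along the reduction chain (Lemma~\ref{lem:MCTwoStrictSize}), so that $w_{\min}$ is pinned down relative to $2^{-k}$ and the inequality closes; this coupling between the precision parameter $k$ and the numerical range of $\BBtwostrict$ is the delicate point. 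As a sanity check, for $k \ge 2$ the bound reads $\kappa(\MMhat) \le 2\,\kappa(\MM)$, consistent with the factor $2K$ claimed for this reduction in Lemma~\ref{lem:McTwoStrictToMcTwoStrictInt}. If instead one wishes to argue directly from the literal entrywise rounding, one can write $\BBint = \BBtwostrict + \EE$ with $\EE$ supported on the nonzeros of $\BBtwostrict$ and $|\EE_{ij}| < 2^{-k}$, and bound $\|\EE\|_2 \le \|\EE\|_F < 2^{-k}\sqrt{\nnz(\BBtwostrict)}$ before applying triangle inequalities on singular values; but the diagonal-reweighting viewpoint is cleaner and produces the stated constants more transparently, so that is the route I would take.
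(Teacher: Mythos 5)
Your diagonal-reweighting viewpoint is exactly the one the paper takes: it factors $\MM = \BB^{\trp}\WW\BB$ with a unit-entry incidence matrix $\BB$, observes that the rounding changes only the diagonal of $\WW$, derives the sandwich $\alpha\MM \preceq \MMhat \preceq \beta\MM$ with $\alpha = \min_i \WWhat_{ii}/\WW_{ii} \ge 1$ and $\beta = \max_i \WWhat_{ii}/\WW_{ii}$, and then reads off both eigenvalue inequalities and the condition-number bound just as you do. The only place you diverge is in how $\beta \le 1 + 2^{-k+2}$ is closed: the paper asserts $\WW_{ii} \ge 1$ (equivalently, every entry of $\BBtwostrict$ has magnitude at least $1$), from which $\bigl(1 + 2^{-k}/\WW_{ii}^{1/2}\bigr)^2 \le 1 + 3\cdot 2^{-k} \le 1 + 2^{-k+2}$ is immediate, whereas you describe a coupling between $k$ and $w_{\min}$ and flag it as delicate. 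Your caution is warranted and worth retaining: after the preceding reduction $\algMCTtoMCTS$ the matrix $\BBtwostrict$ contains edges of weight $\wstrict \ll 1$, so in the composed chain the hypothesis $w_{\min} \ge 1$ is not automatic, and the specific choice $k = \lceil \log_2(\nnz(\BBtwostrict)/\eps_1)\rceil + 2$ does not by itself make $2^{-k}/w_{\min}$ small, since $\wstrict$ scales with $\kappa(\BB)$, $\sigma_{\max}(\BB)$ and $\norm{\ccb}_2$ rather than with $\nnz/\eps_1$. So the step you identified as the main obstacle is genuinely the load-bearing one, and your phrase ``the inequality closes'' is optimistic as written: to firm it up you would either normalize so that the smallest weight is at least $1$ before rounding, or take $k \gtrsim \log_2(1/w_{\min}) + O(1)$, neither of which the current sketch makes explicit.
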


\begin{proof}

Note $\MM$ can be written as $\BB^{\trp} \WW \BB$, where $\BB$ is the incidence-structured block matrix with \emph{unit} nonzero entries and $\WW$ is the diagonal matrix with edge weights.
Let $\xx$ be an arbitrary vector, and let $\yy \defeq \BB \xx$.
\[
\xx^{\trp} \MM \xx
= \yy^{\trp} \WW \yy = \sum_i \WW_{ii} \yy_i^2,
\]
and 
\[
\xx^{\trp} \MMhat \xx
= \sum_i \WWhat_{i} \yy_i^2.
\]
Let 
\[
\alpha \defeq \min_i \frac{\WWhat_{ii}}{\WW_{ii}}
\mbox{ and }
\beta \defeq \max_i \frac{\WWhat_{ii}}{\WW_{ii}}.
\]
We have
\[
\alpha \sum_i \WW_{i} \yy_i^2 \le \sum_i \WWhat_{i} \yy_i^2 \le \beta \sum_i \WW_{i} \yy_i^2.
\]
This implies that
\begin{align}
\alpha \MM \pleq \MMhat \pleq \beta \MM.
\label{eqn:McTwoStrictToMcTwoInt_spetral_inequality}
\end{align}

Now we bound the values of $\alpha, \beta$.
According to Equation~\eqref{eqn:rounding_rule}, 
\begin{align}
\alpha = \min_i \frac{\WWhat_{ii}}{\WW_{ii}}
\ge \frac{ \left( 2^{-k} \cdot \WW_{ii}^{1/2} \cdot 2^k \right)^2}{ \WW_{ii}}
= 1.
\label{eqn:McTwoStrictToMcTwoInt_alpha_val}
\end{align}
Similarly,
\begin{align}
\beta = \max_i \frac{\WWhat_{ii}}{\WW_{ii}}
\le \frac{\left( 2^{-k} \left( \WW_{ii}^{1/2} \cdot 2^k + 1 \right) \right)^2}{ \WW_{ii}}
\le 1 + 2^{-k+2}.
\label{eqn:McTwoStrictToMcTwoInt_beta_val}
\end{align}
The last inequality is due to $\WW_{ii} \ge 1$.
Thus,
\[
\lambda_{\max}(\MMhat) \le (1+2^{-k+2}) \lambda_{\max} (\MM)
\mbox{ and }
\lambda_{\min}(\MMhat) \ge \lambda_{\min} (\MM).
\]
The bound on the condition number 
$\kappa(\MMhat) \le (1+2^{-k+2}) \kappa(\MM)$ immediately follows the above two inequalities.
\end{proof}

We show the exact solutions of the two linear systems in Equation~\eqref{eqn:McTwoStrictToMcTwoInt_ls1} and~\eqref{eqn:McTwoStrictToMcTwoInt_ls2} are close, by the following lemma.

\begin{lemma}
Let 
\[
\xx^* \defeq \MM^{\dagger} (\BBtwostrict)^{\trp} \ccbtwostrict
\mbox{ and }
\yy^* \defeq \MMhat^{\dagger} (\BBtwostrict)^{\trp} \ccbtwostrict.
\]
$\xx^*$ and $\yy^*$ are exact solutions of linear system~\eqref{eqn:McTwoStrictToMcTwoInt_ls1} and~\eqref{eqn:McTwoStrictToMcTwoInt_ls2} respectively.
\[
\norm{\yy^* - \xx^*}_{\MM}
\le 2^{-k+2} \norm{(\BBtwostrict)^{\trp} \ccbtwostrict}_{\MM^{\dagger}}.
\]
\label{lem:McTwoStrictToMcTwoInt_exact}
\end{lemma}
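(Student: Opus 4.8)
The plan is to derive everything from the spectral comparison already proved in Lemma~\ref{lem:McTwoStrictToMcTwoInt_cond}: there $\alpha=1$ and $\beta\le 1+2^{-k+2}$, so in particular
\[
\MM \pleq \MMhat \pleq \MM + 2^{-k+2}\MM, \qquad\text{equivalently}\qquad 0 \pleq \MMhat-\MM \pleq 2^{-k+2}\MM .
\]
Since $\alpha=1>0$, this also forces $\nulls(\MM)=\nulls(\MMhat)$ and hence $\im(\MM)=\im(\MMhat)$. Writing $\bb\defeq(\BBtwostrict)^{\trp}\ccbtwostrict$, we have $\bb\in\im((\BBtwostrict)^{\trp})=\im(\MM)=\im(\MMhat)$, so $\xx^*=\MM^{\pinv}\bb$ and $\yy^*=\MMhat^{\pinv}\bb$ are exactly the solutions of $\MM\xx=\bb$ and $\MMhat\yy=\bb$ orthogonal to the common null space; in particular $\MM\xx^*=\bb$, $\MMhat^{\pinv}\MMhat\xx^*=\xx^*$ (as $\xx^*\in\im(\MMhat)$), and $\norm{\xx^*}_{\MM}^2=(\xx^*)^{\trp}\bb=\bb^{\trp}\MM^{\pinv}\bb=\norm{\bb}_{\MM^{\pinv}}^2$. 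So it suffices to show $\norm{\yy^*-\xx^*}_{\MM}\le 2^{-k+2}\norm{\xx^*}_{\MM}$.

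First I would rewrite the error vector: using $\bb=\MM\xx^*$ and $\MMhat^{\pinv}\MMhat\xx^*=\xx^*$,
\[
\yy^*-\xx^* = \MMhat^{\pinv}\MM\xx^* - \MMhat^{\pinv}\MMhat\xx^* = \MMhat^{\pinv}(\MM-\MMhat)\xx^*.
\]
Then, since $\MM\pleq\MMhat$ and $\MMhat^{\pinv}\MMhat\MMhat^{\pinv}=\MMhat^{\pinv}$,
\[
\norm{\yy^*-\xx^*}_{\MM}^2 \le \norm{\yy^*-\xx^*}_{\MMhat}^2 = (\xx^*)^{\trp}(\MMhat-\MM)\,\MMhat^{\pinv}\,(\MMhat-\MM)\,\xx^* .
\]
The remaining ingredient is the operator inequality $(\MMhat-\MM)\MMhat^{\pinv}(\MMhat-\MM)\pleq 2^{-k+2}(\MMhat-\MM)$. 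To get it, set $\NN\defeq\MMhat-\MM$, note $0\pleq\NN\pleq 2^{-k+2}\MMhat$ and $\im(\NN)\subseteq\im(\BB^{\trp})=\im(\MMhat)$. Congruence by $\MMhat^{\pinv/2}$ gives $\MMhat^{\pinv/2}\NN\MMhat^{\pinv/2}\pleq 2^{-k+2}\PPi$, where $\PPi$ is the orthogonal projection onto $\im(\MMhat)$; hence the spectral norm of $\MMhat^{\pinv/2}\NN\MMhat^{\pinv/2}$, which equals that of $\NN^{1/2}\MMhat^{\pinv}\NN^{1/2}$, is at most $2^{-k+2}$, and therefore $\NN\MMhat^{\pinv}\NN=\NN^{1/2}\big(\NN^{1/2}\MMhat^{\pinv}\NN^{1/2}\big)\NN^{1/2}\pleq 2^{-k+2}\NN$.

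Combining the last two displays with $\NN=\MMhat-\MM\pleq 2^{-k+2}\MM$ yields
\[
\norm{\yy^*-\xx^*}_{\MM}^2 \le 2^{-k+2}(\xx^*)^{\trp}(\MMhat-\MM)\xx^* \le (2^{-k+2})^2 (\xx^*)^{\trp}\MM\xx^* = (2^{-k+2})^2\norm{\bb}_{\MM^{\pinv}}^2,
\]
and taking square roots gives the claim. I expect the only delicate point to be the bookkeeping with pseudo-inverses: verifying that $\bb,\xx^*,\yy^*\in\im(\MMhat)$ and $\im(\MMhat-\MM)\subseteq\im(\MMhat)$ so that all the identities $\MMhat^{\pinv}\MMhat\xx^*=\xx^*$, $\MMhat^{\pinv}\MMhat\MMhat^{\pinv}=\MMhat^{\pinv}$, and the congruence step are valid. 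These all follow from $\nulls(\MM)=\nulls(\MMhat)$ (a consequence of $\alpha=1$ in Lemma~\ref{lem:McTwoStrictToMcTwoInt_cond}) together with the fact that rounding preserves the nonzero pattern, so everything beyond that is the routine congruence estimate above.
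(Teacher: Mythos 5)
Your argument is correct, and it takes a genuinely different route from the paper's. The paper expands $\norm{\yy^*-\xx^*}_{\MM}^2$ directly as $\bb^{\trp}(\MMhat^{\pinv}-\MM^{\pinv})\MM(\MMhat^{\pinv}-\MM^{\pinv})\bb$ with $\bb=(\BBtwostrict)^{\trp}\ccbtwostrict$, factors this as a sandwich $\MM^{\pinv 1/2}\left(\MM^{1/2}\MMhat^{\pinv}\MM^{1/2}-\II\right)^{2}\MM^{\pinv 1/2}$, and then invokes Courant--Fischer to bound the spectral norm of $\MM^{1/2}\MMhat^{\pinv}\MM^{1/2}-\II$ using the two-sided comparison $\alpha\MM\pleq\MMhat\pleq\beta\MM$. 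You instead rewrite the error vector itself as $\yy^*-\xx^*=\MMhat^{\pinv}(\MM-\MMhat)\xx^*$, pass to the $\MMhat$-norm, and then establish the operator inequality $\NN\MMhat^{\pinv}\NN\pleq 2^{-k+2}\NN$ for $\NN=\MMhat-\MM$ via the congruence $\MMhat^{\pinv/2}\NN\MMhat^{\pinv/2}\pleq 2^{-k+2}\PPi$ and the fact that $\NN^{1/2}\MMhat^{\pinv}\NN^{1/2}$ and $\MMhat^{\pinv/2}\NN\MMhat^{\pinv/2}$ share nonzero eigenvalues. Both hinge on the same spectral comparison from Lemma~\ref{lem:McTwoStrictToMcTwoInt_cond}; the paper's version is slightly more compact, while yours is more explicit about the pseudo-inverse bookkeeping (exhibiting $\nulls(\MM)=\nulls(\MMhat)$ and the resulting range identities directly), which avoids the somewhat delicate manipulations with $\MM^{1/2}\MMhat^{\pinv}\MM^{1/2}-\II$ on a proper subspace. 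Both are valid and of comparable length.
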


\begin{proof}
Note $\MM, \MMhat$ are both symmetric.
Expanding the left hand side,
\begin{align*}
\norm{\yy^* - \xx^*}_2^2
&= (\ccbtwostrict)^{\trp} \BBtwostrict \left( \MMhat^{\dagger} - \MM^{\dagger} \right) \MM \left( \MMhat^{\dagger} - \MM^{\dagger} \right) (\BBtwostrict)^{\trp} \ccbtwostrict \\
&=  (\ccbtwostrict)^{\trp} \BBtwostrict \MM^{\dagger 1/2} \left( \MM^{ 1/2} \MMhat^{\dagger} \MM^{1/2} - \II \right)^2  \MM^{\dagger 1/2} (\BBtwostrict)^{\trp} \ccbtwostrict \\
&\le  \norm{\MM^{ 1/2} \MMhat^{\dagger} \MM^{1/2} - \II}_2^2
\norm{\MM^{\dagger 1/2} (\BBtwostrict)^{\trp} \ccbtwostrict}_2^2.
\end{align*}
The last inequality is by the Courant-Fischer theorem.

By Equation~\eqref{eqn:McTwoStrictToMcTwoInt_spetral_inequality} and the fact that $\MM, \MMhat$ have same null space,
\[
\left( \beta^{-1} - 1 \right) \II \pleq \MM^{1/2} \MMhat^{\dag} \MM^{1/2} - \II \pleq \left( \alpha^{-1} - 1 \right) \II.
\]
By Equation~\eqref{eqn:McTwoStrictToMcTwoInt_alpha_val} and~\eqref{eqn:McTwoStrictToMcTwoInt_beta_val}, 
\[
\norm{\MM^{ 1/2} \MMhat^{\dagger} \MM^{1/2} - \II}_2 \le 2^{-k+2}.
\]
Therefore, 
\[
\norm{\yy^* - \xx^*}_{\MM}
\le 2^{-k+2} \norm{(\BBtwostrict)^{\trp} \ccbtwostrict}_{\MM^{\dagger}}.
\]
%
\end{proof}

We then show the approximate solutions of the two linear systems in Equation \eqref{eqn:McTwoStrictToMcTwoInt_ls1} and~\eqref{eqn:McTwoStrictToMcTwoInt_ls2} are close.

\begin{lemma}
\label{lem:McTwoStrictToMcTwoInt_apx}
Let $\xx$ be a vector such that
\[
\norm{\BBint \xx - \PPi_{\BBint}\ccbtwostrict}_2 \le \epsilon_2 \norm{\PPi_{\BBint} \ccbtwostrict}_2.
\]
Then,
\[
\norm{\BBtwostrict \xx - \PPi_{\BBtwostrict}\ccbtwostrict}_2 \le \left( \epsilon_2  + 2^{-k+2} (1+\epsilon_2) \right) \norm{\PPi_{\BBtwostrict} \ccbtwostrict}_2.
\]
\end{lemma}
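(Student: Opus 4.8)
The plan is to recast both approximation conditions in the Mahalanobis-norm language of Fact~\ref{fac:voltageErrorEquiv} and then combine a triangle inequality with the exact-solution estimate of Lemma~\ref{lem:McTwoStrictToMcTwoInt_exact}. Write $\MM = (\BBtwostrict)^{\trp}\BBtwostrict$, $\MMhat = (\BBint)^{\trp}\BBint$, and $\bb = (\BBtwostrict)^{\trp}\ccbtwostrict = (\BBint)^{\trp}\ccbtwostrict$ (the two right-hand sides agree, as noted above). Set $\xx^{*} = \MM^{\pinv}\bb$ and $\yy^{*} = \MMhat^{\pinv}\bb$; these are exactly the normal-equation solutions appearing in Lemma~\ref{lem:McTwoStrictToMcTwoInt_exact}, and they are the minimizers of $\norm{\BBtwostrict\xx-\ccbtwostrict}_2$ and $\norm{\BBint\xx-\ccbtwostrict}_2$ respectively (for the latter because $(\BBint)^{\trp}\ccbtwostrict=(\BBtwostrict)^{\trp}\ccbtwostrict$). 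Hence Fact~\ref{fac:voltageErrorEquiv} yields
\begin{align*}
\norm{\BBtwostrict\xx - \PPi_{\BBtwostrict}\ccbtwostrict}_2 &= \norm{\xx - \xx^{*}}_{\MM}, &
\norm{\BBint\xx - \PPi_{\BBint}\ccbtwostrict}_2 &= \norm{\xx - \yy^{*}}_{\MMhat},\\
\norm{\PPi_{\BBtwostrict}\ccbtwostrict}_2 &= \norm{\bb}_{\MM^{\pinv}}, &
\norm{\PPi_{\BBint}\ccbtwostrict}_2 &= \norm{\bb}_{\MMhat^{\pinv}}.
\end{align*}

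Next I would use that $\norm{\cdot}_{\MM}$ is a seminorm to split $\norm{\xx-\xx^{*}}_{\MM} \le \norm{\xx-\yy^{*}}_{\MM} + \norm{\yy^{*}-\xx^{*}}_{\MM}$ and bound the two pieces. For the first piece, Equation~\eqref{eqn:McTwoStrictToMcTwoInt_alpha_val} gives $\MM \pleq \MMhat$, hence $\norm{\vv}_{\MM}\le\norm{\vv}_{\MMhat}$ for every $\vv$; applying this to $\vv = \xx-\yy^{*}$ and invoking the hypothesis gives $\norm{\xx-\yy^{*}}_{\MM} \le \norm{\xx-\yy^{*}}_{\MMhat} \le \epsilon_2\norm{\bb}_{\MMhat^{\pinv}}$. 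Since $\MM$ and $\MMhat$ have the same null space (established in the proof of Lemma~\ref{lem:McTwoStrictToMcTwoInt_cond}) and $\MM \pleq \MMhat$, L\"owner monotonicity of inversion on the common range gives $\MMhat^{\pinv} \pleq \MM^{\pinv}$ there, and because $\bb \in \im(\MM)$ this yields $\norm{\bb}_{\MMhat^{\pinv}} \le \norm{\bb}_{\MM^{\pinv}}$. For the second piece, Lemma~\ref{lem:McTwoStrictToMcTwoInt_exact} directly gives $\norm{\yy^{*}-\xx^{*}}_{\MM} \le 2^{-k+2}\norm{\bb}_{\MM^{\pinv}}$. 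Adding the two bounds gives $\norm{\xx-\xx^{*}}_{\MM} \le (\epsilon_2 + 2^{-k+2})\norm{\bb}_{\MM^{\pinv}}$, which is at most $(\epsilon_2 + 2^{-k+2}(1+\epsilon_2))\norm{\PPi_{\BBtwostrict}\ccbtwostrict}_2$, as claimed.

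There is no serious obstacle; the argument is bookkeeping around Fact~\ref{fac:voltageErrorEquiv}. The points that require care are: (i) matching $\xx^{*},\yy^{*}$ to the minimizers used in Fact~\ref{fac:voltageErrorEquiv} for $\BBtwostrict$ and $\BBint$, which works precisely because $(\BBint)^{\trp}\ccbtwostrict = (\BBtwostrict)^{\trp}\ccbtwostrict$ (this is also what makes $\PPi_{\BBtwostrict}\ccbtwostrict$ and $\PPi_{\BBint}\ccbtwostrict$ comparable at all); (ii) the pseudoinverse monotonicity $\MM \pleq \MMhat \Rightarrow \MMhat^{\pinv}\pleq\MM^{\pinv}$, which relies on the already-established equality $\nulls(\MM)=\nulls(\MMhat)$; and (iii) the observation that $\bb \in \im(\MM) = \im((\BBtwostrict)^{\trp})$, so all the $\MM^{\pinv}$- and $\MMhat^{\pinv}$-seminorms of $\bb$ are genuine comparable quantities.
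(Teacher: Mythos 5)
Your proof is correct and follows the paper's own argument essentially line for line: the same split via the triangle inequality around $\yy^{*}=\MMhat^{\pinv}\bb$, the same appeal to Lemma~\ref{lem:McTwoStrictToMcTwoInt_exact} for $\norm{\yy^{*}-\xx^{*}}_{\MM}$, and the same conversion of each side into $\MM$-quantities via the spectral inequality between $\MM$ and $\MMhat$. The only cosmetic difference is that you discard the $\alpha^{-1/2}$ factor immediately (using $\alpha\ge 1$ to get $\MM\pleq\MMhat$ and $\MMhat^{\pinv}\pleq\MM^{\pinv}$ outright), whereas the paper carries $\alpha^{-1/2}$ along and drops it at the end; both yield $\epsilon_2+2^{-k+2}$, which the lemma's statement $\epsilon_2+2^{-k+2}(1+\epsilon_2)$ dominates.
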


\begin{proof}
Let 
\[
\xx^* \defeq \MM^{\dagger} (\BBtwostrict)^{\trp} \ccbtwostrict
\mbox{ and }
\yy^* \defeq \MMhat^{\dagger} (\BBtwostrict)^{\trp} \ccbtwostrict.
\]
Expanding the left hand side,
\begin{align*}
\norm{\BBtwostrict \xx - \PPi_{\BBtwostrict}\ccbtwostrict}_2
& = \norm{\xx - \xx^*}_{\MM} \\
& \le \norm{\xx - \yy^*}_{\MM}
+ \norm{\yy^* - \xx^*}_{\MM}.
\end{align*}
The last inequality is due to the triangle inequality.
By Equation~\eqref{eqn:McTwoStrictToMcTwoInt_spetral_inequality}, the first term can be upper bounded by
\[
\norm{\xx - \yy^*}_{\MM} \le  \alpha^{-1/2} \norm{\xx - \yy^*}_{\MMhat}.
\]
The second term is upper bounded by Lemma~\ref{lem:McTwoStrictToMcTwoInt_exact}.
Thus, we have
\[
\norm{\BBtwostrict \xx - \PPi_{\BBtwostrict}\ccbtwostrict}_2
\le \alpha^{-1/2} \epsilon_2 \norm{\PPi_{\BBint} \ccbtwostrict}_2 + 2^{-k+2} \norm{\PPi_{\BBtwostrict} \ccbtwostrict}_2.
\]
Note that 
\[
\norm{\PPi_{\BBint} \ccbtwostrict}_2
= \norm{\BBtwostrict \ccbtwostrict}_{\MMhat^{\dag}}
\mbox{ and }
\norm{\PPi_{\BBtwostrict} \ccbtwostrict}_2
= \norm{\BBtwostrict \ccbtwostrict}_{\MM^{\dag}}.
\]
Again, by Equation~\eqref{eqn:McTwoStrictToMcTwoInt_spetral_inequality}, 
\[
\norm{\BBtwostrict \ccbtwostrict}_{\MMhat^{\dag}}
\le \alpha^{-1/2} \norm{\BBtwostrict \ccbtwostrict}_{\MM^{\dag}}.
\]
By Equation~\eqref{eqn:McTwoStrictToMcTwoInt_alpha_val} and~\eqref{eqn:McTwoStrictToMcTwoInt_beta_val}, 
\[
\norm{\BBtwostrict \xx - \PPi_{\BBtwostrict}\ccbtwostrict}_2
\le \left( \epsilon_2  + 2^{-k+2} (1+\epsilon_2) \right) 
\norm{\PPi_{\BBtwostrict} \ccbtwostrict}_2.
\]
This completes the proof.
\end{proof}

\begin{proof}[Proof of Lemma~\ref{lem:McTwoStrictToMcTwoStrictInt}]
We set 
\[
k \defeq \left\lceil \log_2 \frac{s}{\epsilon} \right\rceil + 2.
\]
After rounding and scaling, the number of nonzero entries does not change.
The value of the smallest nonzero entry does not decrease.
The value of the largest entry is multiplied by $2^k = s \epsilon^{-1}$, which is upper bounded by $s\eps^{-1} U$.

By Lemma~\ref{lem:McTwoStrictToMcTwoInt_cond}, 
\[
\kappa(\BBtwostrictint) \le (1 + s^{-1/2}\epsilon^{1/2}) \kappa(\BBtwostrict).
\]

By Lemma~\ref{lem:McTwoStrictToMcTwoInt_apx}, the accuracy is bounded by $\epsilon /3$.
\end{proof}
\newcommand{\algGtoGZ}{\textsc{Reduce\,}$\genCl$\textsc{to}$\genZCl$}
\newcommand{\algGZtoGZT}{\textsc{Reduce\,}$\genZCl$\textsc{to}$\genZtwoCl$}
\newcommand{\algGZtoGsoln}{\textsc{MapSoln\,}$\genZCl$\textsc{to}$\genCl$}
\newcommand{\algGZTtoGZsoln}{\textsc{MapSoln\,}$\genZtwoCl$\textsc{to}$\genZCl$}

\newcommand{\xxztwoopt}{\xx^{\text{Z},2*}}
\newcommand{\xxzopt}{\xx^{\text{Z}*}}
\newcommand{\xxopt}{\xx^*}

\section{$\genCl$ Efficiently Reducible to $\genZtwoCl$}
\label{sec:genToZeroSumTwo}
In this section, we prove Lemma~\ref{lem:genToZeroSumTwo}.
\begin{definition}
We let $\genCl_{\text{z}}$ denote the class of all matrices with integer
    valued entries s.t. there is at least one non-zero entry in every
    row and column, and every row has zero row sum.
\end{definition}
%
%
%
%

\subsection{$\genCl \leq_{f} \genZCl$}

In this section, we show the reduction from $\genCl$ to $\genZCl$.
Given an instance of linear system $(\AA, \cca, \eps_1)$ with $\AA \in \genCl$, the goal is to construct an instance of linear system $(\AAZ, \ccz, \eps_2)$ with $\AAZ \in \genCl_{\text{z}}$ such that, there exists a map between the solutions of the two linear systems.

Algorithm~\ref{alg:GtoGZ} shows the construction of $(\AAZ, \ccz, \eps_2)$, and Algorithm~\ref{alg:GtoGZsolnback} shows the transform from a solution of $(\AAZ, \ccz, \eps_2)$ to a solution of $(\AA, \cca, \epsilon_1)$.
Recall that $\AA$ has $\na$  columns. According to the algorithm, $\AAZ$ has $(\na+1)$ columns, and $\ccz = \cc$.

\begin{algorithm}[htb]
\renewcommand{\algorithmicrequire}{\textbf{Input:}}
\renewcommand\algorithmicensure {\textbf{Output:}}
    \caption{\label{alg:GtoGZ}\algGtoGZ}
    \begin{algorithmic}[1]
\REQUIRE{$(\AA,\cca,\eps_{1})$ where $\AA_{1} \in \genCl$ is an $m \times
  n$ matrix, $\cca \in \mathbb{R}^m$, and $0 < \eps_{1} < 1$.}
\ENSURE{$(\AAZ,\ccz,\eps_{2})$ where $\AAZ \in \genZCl$ is an $m' \times
  n'$ matrix, $\ccz \in \mathbb{R}^{m'}$, and $0 < \eps_{2} < 1$.}
\RETURN $\left( \left( \begin{array}{cc}
\AA & -\AA \one
\end{array} \right), \cca, O \left( \frac{\eps_1}{n} \right) \right)$.
\end{algorithmic}
\end{algorithm}

\begin{algorithm}[htb]
\renewcommand{\algorithmicrequire}{\textbf{Input:}}
\renewcommand\algorithmicensure {\textbf{Output:}}
    \caption{\label{alg:GtoGZsolnback}\algGZtoGsoln}
    \begin{algorithmic}[1]
	\REQUIRE $m \times n$ matrix $\AA \in \genCl$, 
    $m' \times n'$ matrix $\AAZ \in \genZCl$,
    vector $\xxz \in \mathbb{R}^{n'}$.
    \ENSURE Vector $\xx \in \mathbb{R}^{n}$.
    \RETURN $\xx \leftarrow \xxz_{1:n} - \xxz_{n+1}\one$
\end{algorithmic}
\end{algorithm}


\begin{lemma}
Let $\AAZ\xxz = \ccz$ be the linear system returned by a call to 
\algGtoGZ$(\AA, \cca, \epsilon_1)$.
Then,
\[
\nnz(\AAZ) = O \left( \nnz(\AA) \right),
\]
and the largest entry of $\AAZ$ is at most $\norm{\AA}_{\infty}$.
\label{lem:gnToZeroSumTwo_dim}
\end{lemma}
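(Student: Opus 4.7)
The plan is to verify both claims directly from the construction. Algorithm~\ref{alg:GtoGZ} simply returns
\[
\AAZ = \begin{pmatrix} \AA & -\AA\one \end{pmatrix},
\]
so $\AAZ$ has the same row count as $\AA$ and exactly one additional column, whose entries are the negated row sums of $\AA$.

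For the nonzero bound, I would observe that the new column contributes at most $m$ nonzero entries, one per row. Because $\AA \in \genCl$ requires at least one nonzero in every row, we have $\nnz(\AA) \geq m$, and therefore
\[
\nnz(\AAZ) \leq \nnz(\AA) + m \leq 2\,\nnz(\AA) = O(\nnz(\AA)).
\]

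For the largest entry, I would split on the two types of columns. Entries in the first $n$ columns coincide with entries of $\AA$, so they are bounded in absolute value by $\norm{\AA}_{\max}$, which is at most $\norm{\AA}_{\infty}$ since any single entry is dominated by its row's absolute sum. Entries in the new column take the form $-\sum_j \AA_{ij}$, whose absolute value is bounded by $\sum_j |\AA_{ij}| \leq \norm{\AA}_{\infty}$ directly from the definition of the induced $\infty$-norm. Combining the two cases gives $\norm{\AAZ}_{\max} \leq \norm{\AA}_{\infty}$.

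There is essentially no obstacle here; the only thing to be careful about is to not confuse $\norm{\AA}_{\infty}$ (max row sum of absolute values) with $\norm{\AA}_{\max}$ (maximum absolute entry), since the new column's entries can exceed any individual entry of $\AA$ but cannot exceed a full row's absolute sum. The substantive content of this reduction, namely the guarantee that approximate solutions to $\AAZ \xxz = \ccz$ map back to approximate solutions of $\AA \xx = \cca$ via \algGZtoGsoln, will be handled in a subsequent lemma; the present statement is purely a bookkeeping check on dimensions and entry magnitudes.
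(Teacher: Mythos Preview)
Your proof is correct. The paper does not give a proof for this lemma at all, treating both claims as immediate from the construction $\AAZ = (\AA \mid -\AA\one)$; your argument spells out precisely those immediate details, including the correct observation that $\nnz(\AA) \geq m$ since $\AA \in \genCl$ has no zero rows.
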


We show the relation between the exact solvers of the two linear systems.

\begin{claim}
Let $\xxzopt \in \argmin_{\xx} \norm{\AAZ\xx - \ccz}_2$.
Let $\xxopt$ be the output vector of Algorithm~\ref{alg:GtoGZsolnback}, that is,
\[
\xxopt = \xxzopt_{1:\na} - \xxzopt_{\na+1} \one.
\]
Then,
\[
\xxopt \in \argmin_{\xx} \norm{\AA\xx - \cca}_2.
\]
\label{clm:GtoGZ_exact}
\end{claim}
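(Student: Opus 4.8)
The plan is to observe that $\AAZ$ is obtained from $\AA$ by appending the single column $-\AA\one$, so that for every $\yy\in\rea^{\na}$ and every scalar $t$ the block structure of the matrix--vector product gives
\[
\AAZ \begin{pmatrix}\yy\\ t\end{pmatrix} \;=\; \AA\yy - t\,\AA\one \;=\; \AA(\yy - t\one).
\]
I would state and verify this identity first, since everything else follows from it. In particular, the vector produced by Algorithm~\ref{alg:GtoGZsolnback} from a point $\xxz = (\xxz_{1:\na}; \xxz_{\na+1})$ is exactly $\xx = \xxz_{1:\na} - \xxz_{\na+1}\one$, and the identity says $\AA\xx = \AAZ\xxz$.

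Next I would use surjectivity of the linear map $(\yy,t)\mapsto \yy - t\one$ from $\rea^{\na+1}$ onto $\rea^{\na}$ (already surjective on the slice $t=0$) together with the identity above to conclude $\im(\AAZ) = \im(\AA)$. Since $\ccz = \cca$ by construction, this immediately yields
\[
\min_{\xxz\in\rea^{\na+1}} \norm{\AAZ\xxz - \ccz}_2 \;=\; \min_{\xx\in\rea^{\na}} \norm{\AA\xx - \cca}_2,
\]
as both least-squares residuals equal $\norm{(\II-\PPi_{\AA})\cca}_2$; formally one can invoke Fact~\ref{fac:projIsQFmin} (equivalently Equation~\eqref{eqn:minquad}) for the two systems and note the projections onto $\im(\AAZ)$ and $\im(\AA)$ coincide.

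Finally, to close the argument I would take $\xxzopt\in\argmin_{\xxz}\norm{\AAZ\xxz-\ccz}_2$ and set $\xxopt=\xxzopt_{1:\na}-\xxzopt_{\na+1}\one$, the output of Algorithm~\ref{alg:GtoGZsolnback}. By the identity, $\AA\xxopt = \AAZ\xxzopt$, hence $\norm{\AA\xxopt-\cca}_2 = \norm{\AAZ\xxzopt-\ccz}_2$, which by the displayed equality of minima equals $\min_{\xx}\norm{\AA\xx-\cca}_2$; therefore $\xxopt$ is a minimizer, as claimed. There is essentially no obstacle in this claim: the one thing to be careful about is that $\ccz$ is literally $\cca$ (so the residual \emph{vectors}, not just their norms, are being compared) and that the map $(\yy,t)\mapsto\yy-t\one$ is onto; both are immediate. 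The genuinely delicate parts of Lemma~\ref{lem:genToZeroSumTwo} --- the approximate-solver guarantee and the bounds on $\nnz$, entry sizes and condition number encoded in $f$ --- are established separately and are not needed for this exact-solver claim.
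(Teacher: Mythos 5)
Your proposal is correct and follows essentially the same route as the paper: both rest on the identity $\AAZ\binom{\yy}{t}=\AA(\yy-t\one)$ (the paper phrases it as $\one\in\nulls(\AAZ)$ and shifts $\xxzopt$ to have zero last coordinate, which is the same observation). You are slightly more explicit than the paper in spelling out that $\im(\AAZ)=\im(\AA)$ and $\ccz=\cca$ together give equality of the two least-squares optima, a step the paper's one-line "Therefore" leaves implicit.
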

\begin{proof}
Note $\one \in Null(\AAZ)$, thus,
\begin{align*}
\AAZ\xxzopt = \AAZ \left( \xxzopt - \xxzopt_{\na+1} \one \right).
\end{align*}
Expanding it gives
\[
\AAZ\xxzopt =
\left( \begin{array}{cc}
\AA & -\AA\one
\end{array} \right) \left( \begin{array}{c}
\xx^* \\
{\bf 0}
\end{array} \right)
= \AA\xx^*.
\]
Note $\ccz = \cc$.
Therefore, $\xxopt \in \argmin_{\xx} \norm{\AA\xx - \cca}_2$.
\end{proof}

We then show the relation between the approximate solvers.

\begin{lemma}
Let $\xxz$ be a vector such that $\norm{\AAZ\xxz - \PPi_{\AAZ}\cca}_2 \le \epsilon_2 \norm{\PPi_{\AAZ} \cca}_2$. 
Let $\xx$ be the output vector of Algorithm~\ref{alg:GtoGZsolnback}, that is, $\xx = \xxz_{1:\na} - \xxz_{\na+1} \one$.
Then, 
\[
\norm{\AA\xx - \PPi_{\AA}\cca}_2 \le O \left(  \frac{\sqrt{\na}\sigma_{\min}(\AA)}{\sigma_{\min}(\AAZ)} \right) \epsilon_2 \norm{\PPi_{\AA} \cca}_2.
\]
\label{lem:genToZeroSum_eps}
\end{lemma}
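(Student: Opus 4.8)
The plan is to reduce everything to two facts: first, that the map $\xx = \xxz_{1:\na} - \xxz_{\na+1}\one$ commutes with the matrices in the sense that $\AA\xx = \AAZ\xxz$ (because $\one \in \nulls(\AAZ)$, exactly as in the proof of Claim~\ref{clm:GtoGZ_exact}); second, that $\PPi_{\AAZ}\cca = \PPi_{\AA}\cca$, since $\im(\AAZ) = \im(\begin{pmatrix}\AA & -\AA\one\end{pmatrix}) = \im(\AA)$ (appending a column that is a linear combination of existing columns does not change the column space). Granting these two facts, the left-hand side of the inequality satisfies
\[
\norm{\AA\xx - \PPi_{\AA}\cca}_2 = \norm{\AAZ\xxz - \PPi_{\AAZ}\cca}_2 \le \epsilon_2 \norm{\PPi_{\AAZ}\cca}_2 = \epsilon_2\norm{\PPi_{\AA}\cca}_2,
\]
which already gives the bound with constant $1$ in place of $O(\sqrt{\na}\,\sigma_{\min}(\AA)/\sigma_{\min}(\AAZ))$. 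So the ``accuracy'' part is essentially immediate; the real content of the lemma statement, and the reason the stated blow-up factor appears, must be that $\epsilon_2$ as set by Algorithm~\ref{alg:GtoGZ} is $O(\eps_1/n)$ and we need to convert an {\LSA}-guarantee stated relative to $\norm{\PPi_{\AA}\cca}_2$ into whatever downstream error parameter is needed — i.e.\ the factor is bookkeeping for how $\epsilon$ propagates, and I would carry it symbolically rather than fight for the optimal constant.

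Concretely, here is the order I would carry out the steps. (1) Establish $\im(\AAZ)=\im(\AA)$ and hence $\PPi_{\AAZ}=\PPi_{\AA}$ and $\PPi_{\AAZ}\cca=\PPi_{\AA}\cca$; note $\cca = \ccz$ by construction. (2) Recall from the proof of Claim~\ref{clm:GtoGZ_exact} that $\AAZ\xxz = \AAZ(\xxz - \xxz_{\na+1}\one) = \AA\xx$, using $\AAZ\one = \AA\one - \AA\one = \zero$. (3) Substitute into the hypothesis $\norm{\AAZ\xxz - \PPi_{\AAZ}\cca}_2 \le \epsilon_2\norm{\PPi_{\AAZ}\cca}_2$ to get $\norm{\AA\xx - \PPi_{\AA}\cca}_2 \le \epsilon_2\norm{\PPi_{\AA}\cca}_2$. (4) Finally, since Algorithm~\ref{alg:GtoGZ} returns $\epsilon_2 = O(\eps_1/n)$, and since the claimed coefficient $O(\sqrt{\na}\,\sigma_{\min}(\AA)/\sigma_{\min}(\AAZ))$ is at worst polynomial in $n$ (because $\sigma_{\min}(\AAZ) \ge \Omega(\sigma_{\min}(\AA)/\poly(n))$, which one checks from the fact that $\AAZ$ differs from $\AA$ only by one appended column bounded in norm by $\sqrt{n}\,\sigma_{\max}(\AA)$ and that restricting to the orthogonal complement of $\one$ recovers $\AA$ up to scaling), the conclusion holds. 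I would present step (4) by noting that the stated inequality is in fact weaker than what steps (1)--(3) prove, so it suffices; but if the surrounding argument genuinely needs the $\sqrt{\na}$ factor it comes from bounding $\norm{\xx}_2$ in terms of $\norm{\xxz}_2$ via $\norm{\xx}_2 \le \norm{\xxz_{1:\na}}_2 + \sqrt{\na}\,\abs{\xxz_{\na+1}} \le (1+\sqrt{\na})\norm{\xxz}_2$.

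The main obstacle I anticipate is not analytic but interpretive: making sure the error parameter $\epsilon_2$ produced by the algorithm and the quantities $\sigma_{\min}(\AAZ)$, $\norm{\PPi_{\AAZ}\cca}_2$ are related in exactly the way the statement asserts, and in particular verifying the lower bound $\sigma_{\min}(\AAZ) = \Omega(\sigma_{\min}(\AA)/\poly(n))$ so that the final factor is genuinely polynomial and the composition with Lemma~\ref{lem:zeroSumTwoToMcTwo} yields the parameter function claimed in Lemma~\ref{lem:genToZeroSumTwo}. I would handle $\sigma_{\min}(\AAZ)$ by a Courant--Fischer argument: for $\xxz \perp \nulls(\AAZ)$, decompose $\xxz$ using that $\nulls(\AAZ) \supseteq \Span{(\one;\,1)}$ — wait, more carefully, $\nulls(\AAZ)$ contains vectors $(\vv; t)$ with $\AA\vv = t\AA\one$, so in particular $(\one; 1)$ — and choose the representative of $\xxz$ with $\xxz_{\na+1}$ as convenient, then relate $\norm{\AAZ\xxz}_2 = \norm{\AA\xx}_2 \ge \sigma_{\min}(\AA)\norm{\PPi_{\AA^{\trp}\AA}\xx}_2$ to $\norm{\xxz}_2$, paying at most a $\poly(n)$ factor from the change of variables. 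That estimate is routine linear algebra, so I would state it and defer the mechanical verification.
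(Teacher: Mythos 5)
Your proof is correct, and it is both simpler and strictly stronger than the paper's. The key observation you make — that $\im(\AAZ) = \im(\AA)$ because the appended column $-\AA\one$ is a linear combination of the existing columns, and hence $\PPi_{\AAZ} = \PPi_{\AA}$ — collapses the whole lemma to the chain $\norm{\AA\xx - \PPi_{\AA}\cca}_2 = \norm{\AAZ\xxz - \PPi_{\AAZ}\cca}_2 \le \eps_2 \norm{\PPi_{\AAZ}\cca}_2 = \eps_2 \norm{\PPi_{\AA}\cca}_2$, with coefficient $1$. The paper's proof, by contrast, does establish the same identity $\norm{\AA\xx - \PPi_{\AA}\cca}_2 = \norm{\AAZ\xxz - \PPi_{\AAZ}\cca}_2$ at the outset, but then treats $\norm{\PPi_{\AAZ}\cca}_2$ and $\norm{\PPi_{\AA}\cca}_2$ as potentially different quantities: it rewrites them as $(\AA^\trp\AA)^\dagger$-weighted and $((\AAZ)^\trp\AAZ)^\dagger$-weighted norms of $\AA^\trp\cca$ and $(\AAZ)^\trp\cca$, bounds $\norm{(\AAZ)^\trp\cca}_2^2 \le (\na+1)\norm{\AA^\trp\cca}_2^2$ via Cauchy--Schwarz on $\one^\trp\AA^\trp\cca$, and invokes Courant--Fischer to obtain a blow-up of $\sqrt{\na+1}\,\sigma_{\max}(\AA)/\sigma_{\min}(\AAZ)$. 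That whole argument is a detour once one notices the column spaces coincide; your route gets a coefficient that is never worse and typically far smaller. (You are also right to read the claimed bound as implied rather than tight: the $\sigma_{\min}(\AA)$ in the lemma statement appears to be a typo for $\sigma_{\max}(\AA)$, since that is what the paper's own displayed derivation produces.)

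One small remark on your step (4): the extra musing about where a $\sqrt{\na}$ might come from — bounding $\norm{\xx}_2$ in terms of $\norm{\xxz}_2$ — is not where the paper's factor originates (it comes from the Cauchy--Schwarz and eigenvalue steps above), and in any case your coefficient-$1$ bound makes that entire paragraph superfluous for this lemma. The $\sigma_{\min}(\AAZ)$ lower bound you sketch is needed elsewhere (for the condition-number composition in Lemma~\ref{lem:genToZeroSum_cond}), not here; you correctly flag that separation yourself.
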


\begin{proof}
Let $\xxzopt \in \argmin_{\xx} \norm{\AAZ \xx - \ccz}$ whose last entry is 0.
Expanding the left hand side norm, 
\[
\norm{\AAZ\xxz - \PPi_{\AAZ}\cca}_2
= \norm{\AAZ\left( \xxz - \xxzopt \right)}_2.
\]
By Claim~\ref{clm:GtoGZ_exact}, $\xxopt \in \argmin_{\xx} \norm{\AA\xx - \cca}_2$.
Thus,
\[
\norm{\AA\left( \xx - \xx^*\right)}_2 = \norm{\AAZ(\xxz - \xxzopt)}_2.
\]
Note that
\[
\AA\xx = \AAZ \xxz
\mbox{ and }
\AA\xxopt = \AAZ \xxzopt.
\]
Together with the Lemma condition, this gives 
\begin{align}
\norm{\AA\xx - \PPi_{\AA}\cca}_2 
= \norm{\AAZ\xxz - \PPi_{\AAZ}\ccz}_2 
\le \epsilon_2 \norm{\PPi_{\AAZ}\cca}_2.
\label{eqn:GtoGZ_lhs}
\end{align}

It remains to upper bound $\norm{\PPi_{\AAZ}\cca}_2$ by a function of $\norm{\PPi_{\AA}\cca}_2$.
Note
\begin{align}
\norm{\PPi_{\AA}\cca}_2 = \norm{\AA^{\trp} \cca}_{\left( \AA^{\trp} \AA \right)^{\dagger}}
\mbox{ and }
\norm{\PPi_{\AAZ}\cca}_2 = \norm{(\AAZ)^{\trp} \cca}_{\left( (\AAZ)^{\trp} \AAZ \right)^{\dagger}}.
\label{eqn:GtoGZ_rhs}
\end{align}
It suffices to work on $\norm{\AA^{\trp} \cca}_{\left( \AA^{\trp} \AA \right)^{\dagger}}$ and $\norm{(\AAZ)^{\trp} \cca}_{\left( (\AAZ)^{\trp} \AAZ \right)^{\dagger}}$.
Since $\AAZ = \left( \begin{array}{cc}
\AA & -\AA\one
\end{array} \right)$, we have
\[
\norm{(\AAZ)^{\trp} \cca}_2^2
= \norm{\AA^{\trp} \cca}_2^2 + \norm{\one^{\trp} \AA^{\trp}\cca}_2^2.
\]
By Cauchy-Schwarz inequality,
\[
\norm{\one^{\trp} \AA^{\trp}\cca}_2 \le \norm{\one}_2 \norm{\AA^{\trp}\cca}_2
= \sqrt{\na} \norm{\AA^{\trp}\cca}_2.
\]
Plugging this gives
\[
\norm{(\AAZ)^{\trp} \cca}_2^2 
\le (\na+1) \norm{\AA^{\trp}\cca}_2^2.
\]
By Courant-Fischer theorem,
\begin{align*}
\norm{(\AAZ)^{\trp}\cca}_{\left((\AAZ)^{\trp}\AAZ \right)^{\dagger}}^2 \le \lambda_{\max} \left( \left((\AAZ)^{\trp}\AAZ \right)^{\dagger} \right) \norm{(\AAZ)^{\trp}\cca}_2^2
\le \lambda_{\min}^{-1} \left( (\AAZ)^{\trp}\AAZ \right) \norm{(\AAZ)^{\trp}\cca}_2^2
\end{align*}
and since $\AA^{\trp}\cca$ is in the eigenspace of $\AA^{\trp}\AA$, 
\[
\norm{\AA^{\trp}\cca}_{\left( \AA^{\trp}\AA \right)^{\dagger}}^2 \ge \lambda_{\min}\left( \left( \AA^{\trp}\AA \right)^{\dagger}\right) \norm{\AA^{\trp}\cca}_2^2
= \lambda_{\max}^{-1}\left(  \AA^{\trp}\AA \right) \norm{\AA^{\trp}\cca}_2^2.
\]
Combining the above two inequalities,
\[
\norm{(\AAZ)^{\trp}\cca}_{\left((\AAZ)^{\trp}\AAZ \right)^{\dagger}}^2
\le \frac{\lambda_{\max} \left( \AA^{\trp}\AA \right)}{ \lambda_{\min} \left( (\AAZ)^{\trp}\AAZ \right)} (\na+1) \norm{\AA^{\trp}\cca}_{\left( \AA^{\trp}\AA \right)^{\dagger}}^2.
\]
Given equations~\eqref{eqn:GtoGZ_lhs} and~\eqref{eqn:GtoGZ_rhs}, and condition $\norm{\AAZ\xxz - \cca}_2 \le \epsilon \norm{\PPi_{\AAZ}\cca}_2$, we have
\begin{align*}
\norm{\AA\xx - \PPi_{\AA}\cca}_2 
\le \epsilon_2 \frac{\sigma_{\max}(\AA)}{\sigma_{\min}(\AAZ)} \sqrt{\na+1} \norm{\PPi_{\AA}\cca}_2.
\end{align*}
This completes the proof.
\end{proof}

We compute the nonzero condition number of $\AAZ$. Note
\begin{align}
(\AAZ)^{\trp}\AAZ =  \left( \begin{array}{cc}
\MM & -\MM \one \\
- \one^{\trp} \MM & \one^{\trp} \MM \one
\end{array} \right),
\end{align}
where $\MM = \AA^{\trp}\AA$.

\begin{claim}
$\lambda_{\max} \left( (\AAZ)^{\trp}\AAZ \right) \le 2\na \lambda_{\max}(\AA^{\trp} \AA)$.
\label{claim:max_eig}
\end{claim}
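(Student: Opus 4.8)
The plan is to bound the Rayleigh quotient of $(\AAZ)^{\trp}\AAZ$ directly, using the fact that $\AAZ$ acts on $\rea^{\na+1}$ by the simple formula $\AAZ \left(\begin{smallmatrix} \xx \\ t\end{smallmatrix}\right) = \AA\xx - t\AA\one = \AA(\xx - t\one)$. First I would invoke the Courant--Fischer theorem to write $\lambda_{\max}\big((\AAZ)^{\trp}\AAZ\big) = \max_{(\xx;t)\neq \zero} \norm{\AAZ(\xx;t)}_2^2 / \norm{(\xx;t)}_2^2$, and then substitute the explicit form of $\AAZ(\xx;t)$ to get $\norm{\AAZ(\xx;t)}_2^2 = \norm{\AA(\xx - t\one)}_2^2 \le \lambda_{\max}(\AA^{\trp}\AA)\,\norm{\xx - t\one}_2^2$.

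Next I would control $\norm{\xx - t\one}_2^2$ in terms of $\norm{(\xx;t)}_2^2 = \norm{\xx}_2^2 + t^2$. Using the inequality $\norm{\aa + \bb}_2^2 \le 2\norm{\aa}_2^2 + 2\norm{\bb}_2^2$ together with $\norm{\one}_2^2 = \na$, one gets $\norm{\xx - t\one}_2^2 \le 2\norm{\xx}_2^2 + 2\na t^2 \le 2\na\big(\norm{\xx}_2^2 + t^2\big)$, where the last step uses $\na \ge 1$. Combining the two displays yields $\norm{\AAZ(\xx;t)}_2^2 \le 2\na\,\lambda_{\max}(\AA^{\trp}\AA)\,\norm{(\xx;t)}_2^2$ for all $(\xx;t)$, and taking the maximum over nonzero $(\xx;t)$ gives exactly the claimed bound $\lambda_{\max}\big((\AAZ)^{\trp}\AAZ\big) \le 2\na\,\lambda_{\max}(\AA^{\trp}\AA)$.

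There is no real obstacle here — the only thing to be a little careful about is that the extra column $-\AA\one$ is what forces the factor $\na$ (via $\norm{\one}_2^2 = \na$), so one should not expect a dimension-free bound; and one should double-check the edge case where $\AA$ has a single column so that $\one$ is one-dimensional, which is still covered by $\na \ge 1$. I would present this as a three-line computation rather than anything requiring machinery beyond Courant--Fischer and the triangle/Cauchy--Schwarz inequality.

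\begin{proof}
By the Courant--Fischer theorem,
\[
\lambda_{\max}\left( (\AAZ)^{\trp}\AAZ \right)
= \max_{\substack{(\xx;t) \in \rea^{\na+1} \\ (\xx;t) \neq \zero}}
\frac{\norm{\AAZ (\xx;t)}_2^2}{\norm{(\xx;t)}_2^2}.
\]
Since $\AAZ = \left( \begin{array}{cc} \AA & -\AA\one \end{array}\right)$, for any $\xx \in \rea^{\na}$ and $t \in \rea$ we have $\AAZ (\xx;t) = \AA\xx - t \AA\one = \AA(\xx - t\one)$, so
\[
\norm{\AAZ (\xx;t)}_2^2
= \norm{\AA(\xx - t\one)}_2^2
\le \lambda_{\max}(\AA^{\trp}\AA)\, \norm{\xx - t\one}_2^2.
\]
Using $\norm{\aa+\bb}_2^2 \le 2\norm{\aa}_2^2 + 2\norm{\bb}_2^2$ and $\norm{\one}_2^2 = \na$,
\[
\norm{\xx - t\one}_2^2
\le 2\norm{\xx}_2^2 + 2 t^2 \norm{\one}_2^2
= 2\norm{\xx}_2^2 + 2\na\, t^2
\le 2\na\left( \norm{\xx}_2^2 + t^2 \right)
= 2\na\, \norm{(\xx;t)}_2^2,
\]
where the last inequality uses $\na \ge 1$. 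Combining the two displays,
\[
\norm{\AAZ (\xx;t)}_2^2 \le 2\na\, \lambda_{\max}(\AA^{\trp}\AA)\, \norm{(\xx;t)}_2^2
\]
for all $(\xx;t)$, and taking the maximum over nonzero $(\xx;t)$ gives
$\lambda_{\max}\left( (\AAZ)^{\trp}\AAZ \right) \le 2\na\, \lambda_{\max}(\AA^{\trp}\AA)$.
\end{proof}
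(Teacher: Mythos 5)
Your proof is correct, and it reaches the same bound as the paper's by essentially the same route: both bound the Rayleigh quotient of $(\AAZ)^{\trp}\AAZ$ via Courant--Fischer. The small but pleasant difference is that you exploit the structural identity $\AAZ(\xx;t) = \AA(\xx - t\one)$ to pass through a single norm $\norm{\xx - t\one}_2$, which you then control with the elementary $\norm{\aa+\bb}_2^2 \le 2\norm{\aa}_2^2 + 2\norm{\bb}_2^2$. The paper instead expands $\norm{\AAZ\yy}_2^2$ into the three terms $\tilde{\yy}^{\trp}\MM\tilde{\yy} - 2\alpha\one^{\trp}\MM\tilde{\yy} + \alpha^2\one^{\trp}\MM\one$ with $\MM = \AA^{\trp}\AA$ and bounds each separately (two by Courant--Fischer, the cross term by Cauchy--Schwarz together with $\norm{\MM\tilde{\yy}}_2 \le \mu_1\norm{\tilde{\yy}}_2$), arriving at $\lambda_1 \le \mu_1\bigl(\norm{\tilde{\yy}}_2 + |\alpha|\sqrt{\na}\bigr)^2$ and then collapsing this to $2\mu_1\na$ using the unit-length constraint. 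Your factorization avoids the three-term expansion and the explicit cross-term estimate, so it is a tidier write-up of the same idea; neither version gains generality over the other.
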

\begin{proof}
Let $\lambda_1$ be the largest eigenvalue of $ (\AAZ)^{\trp}\AAZ $, and $\yy = (\tilde{\yy}; \alpha)$ be the associated eigenvector of unit length. 
Let $\mu_1$ be the largest eigenvalue of $\AA^{\trp} \AA = \MM$.
\begin{align*}
\lambda_1 = 
\yy^{\trp} (\AAZ)^{\trp}\AAZ \yy &= \left(\begin{array}{cc}
\tilde{\yy}^{\trp} &
\alpha
\end{array} \right) 
\left( \begin{array}{cc}
\MM & -\MM \one \\
- \one^{\trp} \MM & \one^{\trp} \MM \one
\end{array} \right)
\left( \begin{array}{c}
\tilde{\yy} \\
\alpha
\end{array} \right)  \\
& =
\tilde{\yy}^{\trp} \MM \tilde{\yy} - 2\alpha \one^{\trp} \MM  \tilde{\yy}
+ \alpha^2 \one^{\trp}\MM \one.
\end{align*}
By Courant-Fischer Theorem, 
\[
\tilde{\yy}^{\trp} \MM \tilde{\yy} \le \mu_1 \tilde{\yy}^{\trp}\tilde{\yy},
\ \mbox{ and } \ 
\one^{\trp} \MM \one \le \mu_1 \na.
\]
By Cauchy-Schwarz inequality,
\[
\abs{\one^{\trp} \MM \tilde{\yy}}  \le \norm{\one}_2 \norm{\MM\tilde{\yy}}_2
\le \sqrt{\na} \mu_1 \norm{\tilde{\yy}}_2.
\]
Putting all together,
\begin{align*}
\lambda_1 & \le \mu_1 \left( \norm{\tilde{\yy}}_2^2 + 2\abs{\alpha}\sqrt{\na} \norm{\tilde{\yy}}_2
+ \alpha^2\na \right) \\
& = \mu_1 \left( \norm{\tilde{\yy}}_2 + \abs{\alpha}  \sqrt{\na} \right)^2.
\end{align*}
Since $\norm{\tilde{\yy}}_2, \abs{\alpha} \le 1$, we have
\[
\lambda_1 \le 2\mu_1 \na.
\]
This completes the proof.
\end{proof}

Before lower bounding the smallest nonzero singular value of $\AAZ$, we characterize the null space of $\AAZ$ by the following Claim.
\begin{claim}
\[
\nulls(\AAZ) = \Span{\one, \left( \begin{array}{c}
\zz \\
0
\end{array} \right): \zz \in \nulls(\AA) }.
\]
\end{claim}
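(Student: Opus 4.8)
The plan is to prove the set equality $\nulls(\AAZ) = \Span{\one, (\zz; 0) : \zz \in \nulls(\AA)}$ by two inclusions, using the block structure $\AAZ = \begin{pmatrix} \AA & -\AA\one \end{pmatrix}$.

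First I would prove the easy inclusion ``$\supseteq$''. For $\one \in \rea^{\na+1}$, we compute $\AAZ\one = \AA\one - \AA\one \cdot 1 = \zero$, so $\one \in \nulls(\AAZ)$. For any $\zz \in \nulls(\AA)$, we have $\AAZ(\zz; 0) = \AA\zz - \AA\one \cdot 0 = \AA\zz = \zero$, so $(\zz; 0) \in \nulls(\AAZ)$. Since the null space is a subspace, it contains the span of these vectors.

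Next I would prove the reverse inclusion ``$\subseteq$''. Take an arbitrary $\yy = (\tilde\yy; \beta) \in \nulls(\AAZ)$ where $\tilde\yy \in \rea^{\na}$ and $\beta \in \rea$. Then $\zero = \AAZ\yy = \AA\tilde\yy - \beta\AA\one = \AA(\tilde\yy - \beta\one)$, so $\zz \defeq \tilde\yy - \beta\one \in \nulls(\AA)$. Therefore
\[
\yy = \begin{pmatrix} \tilde\yy \\ \beta \end{pmatrix}
= \begin{pmatrix} \zz + \beta\one \\ \beta \end{pmatrix}
= \begin{pmatrix} \zz \\ 0 \end{pmatrix} + \beta\begin{pmatrix} \one \\ 1 \end{pmatrix}
= \begin{pmatrix} \zz \\ 0 \end{pmatrix} + \beta\one,
\]
where in the last step $\one$ denotes the all-ones vector in $\rea^{\na+1}$. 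This exhibits $\yy$ as an element of $\Span{\one, (\zz; 0) : \zz \in \nulls(\AA)}$, completing the inclusion.

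This proof is essentially a routine linear-algebra verification, so I do not expect any genuine obstacle; the only mild care needed is notational, namely keeping track of the two different all-ones vectors ($\one \in \rea^{\na}$ appearing inside $\AAZ$ versus $\one \in \rea^{\na+1}$ appearing as a null vector) and the block decomposition of vectors in $\rea^{\na+1}$ into their first $\na$ coordinates and final coordinate. One could optionally note as a remark that these spanning vectors are linearly independent and hence $\dim\nulls(\AAZ) = \dim\nulls(\AA) + 1$, which is consistent with $\AAZ$ having one more column than $\AA$ while having the same rank (since $-\AA\one \in \im(\AA)$), but this is not needed for the stated claim.
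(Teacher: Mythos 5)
Your proof is correct and follows essentially the same approach as the paper: the forward inclusion is the same direct verification, and for the reverse inclusion both you and the paper subtract the appropriate multiple of the all-ones vector (determined by the last coordinate) to reduce to a vector whose last entry is zero and whose first block lies in $\nulls(\AA)$.
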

\begin{proof}
Let $\SS$ be the subspace of  $\Span{\one, \left( \begin{array}{c}
\zz \\
0
\end{array} \right): \zz \in \nulls(\AA) }$.

We first show that $\SS \subseteq \nulls(\AAZ)$. Clearly, 
\[
\AAZ \one = \left( \begin{array}{cc}
\AA & -\AA \one
\end{array} \right) \one = {\bf 0}.
\]
For each $\zz \in \nulls(\AA)$, we have 
\[
\AAZ \left( \begin{array}{c}
\zz \\
0
\end{array} \right) 
= \left( \begin{array}{cc}
\AA & -\AA\one
\end{array}  \right) 
\left( \begin{array}{c}
\zz \\
0
\end{array} \right)
= \AA\zz = {\bf 0}.
\]
Thus, $\SS \subseteq \nulls(\AAZ)$.

We then show that $\SS \supseteq \nulls(\AAZ)$. 
For any $\zz' \in \nulls(\AAZ)$, $\zz' - \zz'_{\na+1} \one \in \nulls(\AAZ)$.
Let $\zz \in \mathbb{R}^n$ such that the $i$th entry of $\zz$ is $\zz'_i - \zz'_{\na+1}$.
Thus,
\[
{\bf 0 } =\AAZ \left( \zz' - \zz'_{\na+1}\one \right)
= \left( \begin{array}{cc}
\AA & -\AA\one
\end{array}  \right) 
\left( \begin{array}{c}
\zz \\
0
\end{array} \right) 
= \AA\zz.
\]
That is, any vector in $\SS$ can be written as a linear combination of $\one$ and $\left( \begin{array}{c}
\zz \\
0
\end{array} \right)$ where $\zz \in \nulls(\AA)$.
Thus, $\SS \supseteq \nulls(\AAZ)$.

Therefore, $\nulls(\AAZ) = \SS$.
\end{proof}

By the above claim, we know that $\AA$ and $\AAZ$ have same rank.
We bound the smallest nonzero singular value of $\AAZ$ by the following Claim.

\begin{claim}
$\lambda_{\min}\left((\AAZ)^{\trp}\AAZ \right) \ge  \lambda_{\min}(\AA^{\trp} \AA) / (\na+1)$.
\end{claim}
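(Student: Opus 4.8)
The plan is to prove the slightly stronger bound
\[
\lambda_{\min}\!\left((\AAZ)^{\trp}\AAZ\right) \ge \lambda_{\min}\!\left(\AA^{\trp}\AA\right),
\]
from which the claimed inequality is immediate since $\na+1 \ge 1$. Since $\nulls(\AAZ) = \nulls((\AAZ)^{\trp}\AAZ)$, by the variational characterization of $\lambda_{\min}$ it suffices to fix an arbitrary nonzero $\yy \orthog \nulls(\AAZ)$ and show $\yy^{\trp}(\AAZ)^{\trp}\AAZ\yy \ge \lambda_{\min}(\AA^{\trp}\AA)\,\norm{\yy}_2^2$. Writing $\yy = (\tilde\yy;\alpha)$ with $\tilde\yy\in\rea^{\na}$, $\alpha\in\rea$, and setting $\zz \defeq \tilde\yy - \alpha\one \in \rea^{\na}$, the identity $\AAZ = \left(\begin{array}{cc}\AA & -\AA\one\end{array}\right)$ gives $\AAZ\yy = \AA\tilde\yy - \alpha\AA\one = \AA\zz$, so the quadratic form equals $\norm{\AA\zz}_2^2$.

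Next I would strip off the part of $\zz$ annihilated by $\AA$: write $\zz = \zz_1 + \zz_2$ with $\zz_1 \in \nulls(\AA)$ and $\zz_2 \orthog \nulls(\AA)$. Then $\AA\zz = \AA\zz_2$, and since $\zz_2 \orthog \nulls(\AA) = \nulls(\AA^{\trp}\AA)$,
\[
\norm{\AA\zz}_2^2 = \zz_2^{\trp}\AA^{\trp}\AA\zz_2 \ge \lambda_{\min}(\AA^{\trp}\AA)\,\norm{\zz_2}_2^2.
\]
The crux is then to compare $\norm{\zz_2}_2$ with $\norm{\yy}_2$, and I claim $\norm{\zz_2}_2 \ge \norm{\yy}_2$. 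Consider the padded vector $(\zz_2;0)\in\rea^{\na+1}$. Since $\AAZ(\zz_2;0) = \AA\zz_2 = \AA\zz = \AAZ\yy$, the difference $(\zz_2;0) - \yy$ lies in $\nulls(\AAZ)$; hence $(\zz_2;0)$ belongs to the coset $\yy + \nulls(\AAZ)$. Because $\yy \orthog \nulls(\AAZ)$, for every $w\in\nulls(\AAZ)$ we have $\norm{\yy+w}_2^2 = \norm{\yy}_2^2 + \norm{w}_2^2 \ge \norm{\yy}_2^2$, i.e. $\yy$ is the minimum-$\ell_2$-norm element of its coset, so $\norm{\yy}_2 \le \norm{(\zz_2;0)}_2 = \norm{\zz_2}_2$. (This is consistent with the explicit description of $\nulls(\AAZ)$ established in the preceding claim, which could alternatively be invoked here.)

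Chaining the two displays gives $\yy^{\trp}(\AAZ)^{\trp}\AAZ\yy \ge \lambda_{\min}(\AA^{\trp}\AA)\norm{\yy}_2^2$, and minimizing over $\yy \orthog \nulls(\AAZ)$ yields the stated claim (with room to spare, after dividing by $\na+1$). I expect the only subtle point to be the coset/minimum-norm step — recognizing that replacing $\zz$ by $\zz_2$ and padding with a zero produces a vector in the same $\nulls(\AAZ)$-coset as $\yy$, which is therefore at least as long as $\yy$; the remaining ingredients ($\nulls(\AA)=\nulls(\AA^{\trp}\AA)$, so $\lambda_{\min}(\AA^{\trp}\AA)$ applies to $\zz_2$, and the elementary expansions) are routine. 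Note also that since $\AA\in\genCl$ has a nonzero entry in every row and column we have $\AA \ne \zero$ and $\AAZ \ne \zero$, so both minimum nonzero eigenvalues are well defined.
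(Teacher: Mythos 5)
Your argument is correct, and it in fact establishes the strictly stronger bound $\lambda_{\min}\bigl((\AAZ)^{\trp}\AAZ\bigr) \ge \lambda_{\min}(\AA^{\trp}\AA)$, with no $1/(\na+1)$ loss. Both you and the paper begin identically: write $\AAZ\yy = \AA(\tilde\yy - \alpha\one)$, decompose $\zz := \tilde\yy - \alpha\one = \zz_1 + \zz_2$ with $\zz_1\in\nulls(\AA)$ and $\zz_2\orthog\nulls(\AA)$, and reduce to lower-bounding $\norm{\zz_2}_2$. The paths then diverge. The paper lower-bounds $\norm{\zz_2}_2$ by $\norm{\tilde\yy}_2$ via a law-of-cosines calculation on the triangle formed by $\tilde\yy$, $\alpha\one$, and $\tilde\yy - \alpha\one$ (the implicit justification being that $\tilde\yy\orthog\nulls(\AA)$, so the projection of $\zz$ onto the line through $\tilde\yy$ already lies inside $\nulls(\AA)^{\orthog}$), and then separately shows $\norm{\tilde\yy}_2^2 \ge 1/(\na+1)$ using $1 = \alpha^2 + \norm{\tilde\yy}_2^2 = (\one^{\trp}\tilde\yy)^2 + \norm{\tilde\yy}_2^2$ and Cauchy--Schwarz; the factor $\na+1$ is incurred entirely in this last step. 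Your coset observation---that $(\zz_2;0)-\yy \in \nulls(\AAZ)$, so $\yy$ being the orthogonal (hence minimum-norm) representative of its coset forces $\norm{\zz_2}_2 \ge \norm{\yy}_2$---replaces both of these steps at once, comparing $\norm{\zz_2}_2$ directly to the full length of $\yy$ rather than to the shorter truncation $\norm{\tilde\yy}_2$. This is cleaner, eliminates the triangle figure, and gives a sharper constant, so if anything your proof could replace the paper's and simultaneously tighten the $f$ in Lemma~\ref{lem:genToZeroSumTwo}; since the paper only needs $\poly$ bounds, the improvement is not essential, but there is nothing wrong with it.
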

\begin{proof}
Let $\lambda_k$ be the smallest nonzero eigenvalue of $(\AAZ)^{\trp} \AAZ$, 
and $\yy = \left( \tilde{\yy}; \alpha \right)$ be the associated eigenvector of unit length.
Let $\mu_k$ be the smallest nonzero eigenvalue of $\AA^{\trp} \AA = \MM$, and $\xx$
be the associated eigenvector of unit length.
\[
\lambda_k = \tilde{\yy}^{\trp} \MM \tilde{\yy} - 2\alpha \one^{\trp} \MM  \tilde{\yy}
+ \alpha^2 \one^{\trp}\MM \one.
\]
Since $\MM$ is symmetric PSD, it can be written as $\MM = \MM^{1/2} \MM^{1/2}$.
\[
\lambda_k = \norm{\MM^{1/2} \left(\tilde{\yy} - \alpha\one \right) }_2^2.
\]
We decompose $\tilde{\yy} - \alpha \one := \zz_1 + \zz_2$, where $\zz_1 \in \nulls(\MM)$ and $\zz_2 \perp \nulls(\MM)$.
Then, $\lambda_k = \norm{\MM^{1/2} \zz_2}_2^2$.
By Courant-Fischer Theorem
\[
\lambda_k 
\ge \mu_k \norm{\zz_2}_2^2.
\]
To lower bound $\lambda_k$, it suffices to lower bound $\norm{\zz_2}_2$.

Since $\yy = \left( \tilde{\yy}; \alpha \right)$, 
\[
\norm{\tilde{\yy} - \alpha \one}_2^2 = \norm{\yy - \alpha\left(\begin{array}{c}
\one \\
1
\end{array} \right)}_2^2.
\]
Since $\yy \perp \one$, which is in $\nulls(\AAZ)$, we have
\[
\norm{\tilde{\yy} - \alpha \one}_2^2 
= \norm{\yy}_2^2 + \alpha^2(1+\na)
= \norm{\tilde{\yy}}_2^2 + \alpha^2(2+\na).
\]
%
Vectors $\tilde{\yy}, -\alpha \one, \tilde{\yy} - \alpha\one$ form a triangle. Let $\theta$ be the angle between $\tilde{\yy}$ and $\alpha\one$. See Figure~\ref{fig:z2}.
\begin{figure}[htb]
\centering
\includegraphics[scale=0.5]{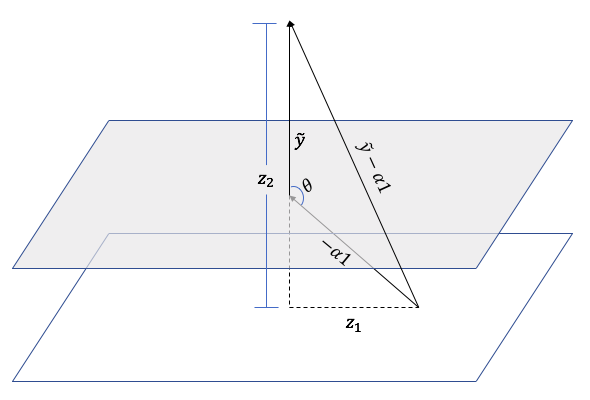}
\caption{The relation between vectors $\tilde{\yy}, -\alpha \one$ and $\tilde{\yy} - \alpha \one$.}
\label{fig:z2}
\end{figure} 

Then,
\[
\cos(\theta) = \frac{\norm{\tilde{\yy}}_2^2 + \norm{\alpha\one}_2^2 - \norm{\tilde{\yy} - \alpha\one}_2^2}{2\norm{\tilde{\yy}}_2 \norm{\alpha\one}_2}
= -\frac{\abs{\alpha}}{\norm{\tilde{\yy}}_2 \sqrt{\na}}.
\]
Thus,
\[
\norm{\zz_2}_2 = \norm{\tilde{\yy}}_2 +  \norm{\alpha \one}_2 \cos(\pi - \theta) = \norm{\tilde{\yy}}_2 +  \frac{\alpha^2}{\norm{\tilde{\yy}}_2} 
\ge \norm{\tilde{\yy}}_2.
\]
On the other hand,
\[
1 = \alpha^2 + \norm{\tilde{\yy}}_2^2
= \left( \sum_i \tilde{\yy}_i \right)^2 + \sum_i \tilde{\yy}_i^2
\le (\na+1) \sum_i \tilde{\yy}_i^2
= (\na+1) \norm{\tilde{\yy}}_2^2.
\]
Therefore,
\[
\lambda_k \ge \mu_k \norm{\tilde{\yy}}_2^2 \ge \frac{\mu_k}{\na+1}.
\]
This completes the proof.
\end{proof}

The above lemmas give an upper bound on the condition number of $\AAZ$.
\begin{lemma}
$\kappa(\AAZ) \le O(\na^{3/2}) \kappa(\AA)$.
\label{lem:genToZeroSum_cond}
\end{lemma}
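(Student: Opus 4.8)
The plan is to combine the two eigenvalue bounds already established for $(\AAZ)^{\trp}\AAZ$ with the fact that $\AAZ$ and $\AA$ have the same rank. Recall that, by Definition~\ref{def:complexity_parameters}, $\kappa(\AAZ)^2 = \lambda_{\max}\bigl((\AAZ)^{\trp}\AAZ\bigr)\big/\lambda_{\min}\bigl((\AAZ)^{\trp}\AAZ\bigr)$, where $\lambda_{\min}$ denotes the smallest \emph{nonzero} eigenvalue; likewise $\kappa(\AA)^2 = \lambda_{\max}(\AA^{\trp}\AA)\big/\lambda_{\min}(\AA^{\trp}\AA)$. The null-space characterization proved above shows that $\nulls(\AAZ)$ is spanned by $\one$ together with the vectors $(\zz;0)$ for $\zz \in \nulls(\AA)$, so $\AAZ$ and $\AA$ have equal rank; hence comparing their smallest nonzero eigenvalues is legitimate.

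First I would invoke Claim~\ref{claim:max_eig} to get the upper bound $\lambda_{\max}\bigl((\AAZ)^{\trp}\AAZ\bigr) \le 2\na\,\lambda_{\max}(\AA^{\trp}\AA)$. Then I would invoke the claim immediately preceding this lemma, which gives the lower bound $\lambda_{\min}\bigl((\AAZ)^{\trp}\AAZ\bigr) \ge \lambda_{\min}(\AA^{\trp}\AA)\big/(\na+1)$. Dividing the two estimates yields
\[
\kappa(\AAZ)^2 \;\le\; \frac{2\na\,\lambda_{\max}(\AA^{\trp}\AA)}{\lambda_{\min}(\AA^{\trp}\AA)\big/(\na+1)} \;=\; 2\na(\na+1)\,\kappa(\AA)^2,
\]
so that $\kappa(\AAZ) \le \sqrt{2\na(\na+1)}\,\kappa(\AA) = O(\na)\,\kappa(\AA)$, which is in particular $O(\na^{3/2})\,\kappa(\AA)$ as claimed.

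I do not expect any real obstacle in this final step: all of the substantive work has already been carried out in the two preceding claims (the Cauchy--Schwarz estimate for $\lambda_{\max}$ and the triangle/decomposition argument for $\lambda_{\min}$) and in the computation of $\nulls(\AAZ)$. The only point that needs a sentence of care is recording that the two condition numbers are formed from the smallest \emph{nonzero} eigenvalues and that this makes sense precisely because $\AAZ$ and $\AA$ have the same rank; once that is said, the result is an immediate division.
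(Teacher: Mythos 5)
Your proposal is correct and takes exactly the approach the paper intends: the lemma is stated with no explicit proof, just the remark that "the above lemmas give" the bound, and combining Claim~\ref{claim:max_eig} with the preceding $\lambda_{\min}$ claim via the definition of the nonzero condition number is precisely what that remark is pointing at. You even observe, correctly, that this yields the sharper bound $\kappa(\AAZ) \le O(\na)\,\kappa(\AA)$, which is stronger than the stated $O(\na^{3/2})$ but of course still implies it.
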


\subsection{$\genZCl \leq_{f} \genZtwoCl$}

In this section, we show the reduction from $\genZCl$ to $\genZtwoCl$.
Given an instance of linear system $(\AAZ, \ccz, \epsilon_1)$ with $\AAZ \in \genCl_{\text{z}}$, the goal is to construct an instance of linear system $(\AAZtwo, \ccztwo, \eps_2)$ with $\AAZtwo \in \genCl_{\text{z},2}$ such that, there is a map between the solutions of these two linear systems.

Algorithm~\ref{alg:GZtoGZT} shows the construction of $(\AAZtwo, \ccztwo, \eps_2)$, and Algorithm~\ref{alg:GZTtoGZsolnback} shows the transform from a solution of $(\AAZtwo, \ccztwo, \eps_2)$ to a solution of $(\AAZ, \ccz, \epsilon_1)$.
Note $\ccztwo = (\ccz; 0)$.

\begin{algorithm}[htb]
\renewcommand{\algorithmicrequire}{\textbf{Input:}}
\renewcommand\algorithmicensure {\textbf{Output:}}
    \caption{\label{alg:GZtoGZT}\algGZtoGZT}
    \begin{algorithmic}[1]
\REQUIRE{$(\AAZ,\ccz,\eps_{1})$ where $\AAZ \in \genCl_{\text{z}}$ is an $m \times
  n$ matrix, $\ccz \in \mathbb{R}^m$, and $0 < \eps_{1} < 1$.}
\ENSURE{$(\AAZtwo,\ccztwo,\eps_{2})$ where $\AAZtwo \in \genCl_{\text{z},2}$ is an $m' \times
  n'$ matrix, $\ccztwo \in \mathbb{R}^{m'}$, and $0 < \eps_{2} < 1$.}
\STATE $w \leftarrow 3\eps_1^{-1} \sqrt{m}  \sigma_{\max}(\AAZ)\norm{\AAZ}_{\infty} \norm{\ccz}_2$ \label{lin:GZtoGZT_weight}
\STATE $k^* \leftarrow \min \left\{ k \in \mathbb{Z}: 2^k \ge \norm{\AAZ}_{\infty} \right\}$
\STATE Let $\aa \in \mathbb{R}^n$ 
\FOR{$i \leftarrow 1$ to $n$}
\STATE $\aa_i \leftarrow 2^{k^*} - \norm{\AAZ_i}_1 / 2$
\label{lin:GZtoGZT_vec_a}
\ENDFOR
\STATE $\AAZtwo = \left( \begin{array}{ccc}
\AAZ & \aa & -\aa \\
{\bf 0} & w & -w
\end{array} \right) $
\STATE $\ccztwo \leftarrow \left( \begin{array}{c}
\ccz \\
0
\end{array} \right)$
\RETURN $\left( \AAZtwo, \ccztwo, O \left( \frac{\eps_1}{\sigma_{\max}(\AAZ) \norm{\ccz}_2} \right) \right)$. 
\end{algorithmic}
\end{algorithm}

\begin{algorithm}[htb]
\renewcommand{\algorithmicrequire}{\textbf{Input:}}
\renewcommand\algorithmicensure {\textbf{Output:}}
    \caption{\label{alg:GZTtoGZsolnback}\algGZTtoGZsoln}
    \begin{algorithmic}[1]
	\REQUIRE $m \times n$ matrix $\AAZ \in \genCl_{\text{z}}$, 
    $m' \times n'$ matrix $\AAZtwo \in \genCl_{\text{z},2}$, vector $\ccz \in \mathbb{R}^m$,
    vector $\xx \in \mathbb{R}^{n'}$.
    \ENSURE Vector $\yy \in \mathbb{R}^{n}$.
    \IF{$(\AAZ)^\trp \ccz = {\bf 0}$}
    	\RETURN $\yy \assign {\bf 0}$
    \ELSE
	    \RETURN $\yy \assign \xx_{1:n}$
	\ENDIF
\end{algorithmic}
\end{algorithm}

\begin{lemma}
Let $\AAZtwo\xxztwo = \ccztwo$ be the linear system returned by a call to \algGZtoGZT$(\AAZ, \ccz, \epsilon_1)$.
Then,
\[
\nnz(\AAZtwo) = O \left( \nnz(\AAZ) \right).
\]
and the largest entry of $\AAZtwo$ is at most $\eps_1^{-1} \sqrt{m} \sigma_{\max}(\AAZ)\norm{\AAZ}_{\infty} \norm{\ccz}_2$.
\label{lem:gnToPowerTwo_dim}
\end{lemma}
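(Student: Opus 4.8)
The plan is to read off both conclusions directly from the construction of $\AAZtwo$ in Algorithm~\ref{alg:GZtoGZT}, which simply appends two extra columns (the vectors $\aa$ and $-\aa$) and one extra row (the pair $(\zero, w, -w)$) to the input matrix $\AAZ$. First I would count nonzeros: the block $\left(\begin{smallmatrix}\AAZ & \aa & -\aa\\ \zero & w & -w\end{smallmatrix}\right)$ has all of $\AAZ$'s nonzeros, plus at most $2n$ nonzeros contributed by the two copies of $\aa$, plus two more from the new bottom row. Since $\AAZ \in \genZCl$ has at least one nonzero per row and column, $\nnz(\AAZ) \geq \max(m,n) \geq n$, so the added $2n + 2$ nonzeros are $O(\nnz(\AAZ))$, giving $\nnz(\AAZtwo) = O(\nnz(\AAZ))$.

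For the entry bound I would treat the three groups of new entries separately. The entries inherited from $\AAZ$ are bounded by $\norm{\AAZ}_{\max} \leq \norm{\AAZ}_{\infty}$, which is dominated by the claimed bound since $w = 3\eps_1^{-1}\sqrt{m}\,\sigma_{\max}(\AAZ)\norm{\AAZ}_{\infty}\norm{\ccz}_2 \geq \norm{\AAZ}_{\infty}$ (using $\eps_1 < 1$, $\sqrt{m} \geq 1$, $\sigma_{\max}(\AAZ) \geq 1$ for an integer matrix, and $\norm{\ccz}_2 \geq 1$ when $\ccz \neq \zero$; the case $\ccz = \zero$ or $(\AAZ)^\trp\ccz = \zero$ is handled trivially by the solution map returning $\zero$). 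The new bottom-row entries are exactly $\pm w$, matching the bound. The only nontrivial part is bounding the entries of $\aa$: by Line~\ref{lin:GZtoGZT_vec_a}, $\aa_i = 2^{k^*} - \norm{\AAZ_i}_1/2$ where $2^{k^*}$ is the smallest power of two that is $\geq \norm{\AAZ}_\infty$. I would bound $2^{k^*} < 2\norm{\AAZ}_\infty$ and $\norm{\AAZ_i}_1/2 \leq \norm{\AAZ}_\infty/2$, so $|\aa_i| \leq 2^{k^*} \leq 2\norm{\AAZ}_\infty$, again well within the stated bound. Assembling the three cases gives $\norm{\AAZtwo}_{\max} \leq \eps_1^{-1}\sqrt{m}\,\sigma_{\max}(\AAZ)\norm{\AAZ}_\infty\norm{\ccz}_2$.

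The main obstacle here is not any single inequality — each is elementary — but rather making sure the various "$\geq 1$" assumptions used to conclude that $w$ (and hence the claimed bound) dominates every other entry are actually justified from the setup: integrality of $\AAZ$ forces $\norm{\AAZ}_\infty \geq 1$ and $\sigma_{\max}(\AAZ) \geq 1$, and $\norm{\ccz}_2 \geq 1$ holds in the only case that matters (when $(\AAZ)^\trp\ccz \neq \zero$, which by integrality forces $\norm{(\AAZ)^\trp\ccz}_2 \geq 1$ and hence $\norm{\ccz}_2 \geq 1/\sigma_{\max}(\AAZ)$, so one may need to absorb a $\sigma_{\max}(\AAZ)$ factor — but that factor is already present in the bound). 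I would state these observations explicitly at the start of the proof so the final comparison is immediate.
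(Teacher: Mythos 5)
The paper states this lemma without proof (it is presented as immediate from the construction in Algorithm~\ref{alg:GZtoGZT}), so there is no paper proof to compare against. Your argument is the natural and essentially only one: read off the nonzero count and entry sizes directly from the block form $\left(\begin{smallmatrix}\AAZ & \aa & -\aa\\ \zero & w & -w\end{smallmatrix}\right)$, and your bookkeeping is correct. Two small remarks. First, the entry bound as literally written misses a constant: $w = 3\eps_1^{-1}\sqrt{m}\,\sigma_{\max}(\AAZ)\norm{\AAZ}_\infty\norm{\ccz}_2$ is three times the quantity appearing in the lemma, so the statement should really be read with an implicit $O(\cdot)$ (as is consistent with how this lemma is consumed inside the $f$-reduction framework); your phrase ``matching the bound'' glosses over this, but harmlessly. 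Second, you correctly flag that the degenerate case $\ccz = \zero$ (forcing $w = 0$, while $\aa$ may still be nonzero) breaks the literal inequality; worth noting that this case never arises in context since $\LSA$ with zero right-hand side is answered by $\xx = \zero$ without ever invoking the reduction.
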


\begin{claim}
$\nulls(\AAZtwo) = \Span{\left( \begin{array}{c}
{\bf 0} \\
1 \\
1
\end{array} \right),
\left( \begin{array}{c}
\xx \\
0 \\
0
\end{array} \right): \xx \in \nulls(\AAZ)}$.
\label{clm:GZtoGZT_null}
\end{claim}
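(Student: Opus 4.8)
The plan is to verify the claimed equality by a direct computation exploiting the block structure of
\[
\AAZtwo = \left( \begin{array}{ccc}
\AAZ & \aa & -\aa \\
\zero & w & -w
\end{array} \right),
\]
a matrix with $n+2$ columns. Writing a generic vector of $\rea^{n+2}$ as $\zz = (\yy;\, s;\, t)$ with $\yy \in \rea^{n}$ and $s,t \in \rea$, and multiplying out the two block rows, one gets
\[
\AAZtwo \zz = \left( \begin{array}{c} \AAZ \yy + (s-t)\aa \\ w(s-t) \end{array} \right),
\]
so that $\zz \in \nulls(\AAZtwo)$ if and only if $w(s-t) = 0$ and $\AAZ \yy + (s-t)\aa = \zero$. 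This identity carries essentially all the content of the claim, and the two inclusions fall out of it.

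For the inclusion $\supseteq$ I would substitute the two generator families. For $\zz = (\zero;\,1;\,1)$ we have $s-t = 0$, hence $\AAZtwo\zz = \zero$; for $\zz = (\xx;\,0;\,0)$ with $\xx \in \nulls(\AAZ)$ we have $\AAZtwo\zz = (\AAZ\xx;\,0) = \zero$. Since $\nulls(\AAZtwo)$ is a subspace, every linear combination of these vectors also lies in it, which is the first inclusion.

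For $\subseteq$, suppose $\AAZtwo\zz = \zero$. First I would observe that the weight $w$ set on Line~\ref{lin:GZtoGZT_weight} of Algorithm~\ref{alg:GZtoGZT} is strictly positive: $\AAZ$ has a nonzero entry in every row and column, so $\sigma_{\max}(\AAZ) > 0$ and $\norm{\AAZ}_\infty > 0$, while $\ccz \neq \zero$ in the only regime where the null space matters, namely when $(\AAZ)^\trp\ccz \neq \zero$ (when $(\AAZ)^\trp\ccz = \zero$, Algorithm~\ref{alg:GZTtoGZsolnback} returns $\zero$ regardless, so the shape of $\nulls(\AAZtwo)$ is immaterial). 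Hence $w(s-t) = 0$ forces $s = t$, and then the top block gives $\AAZ\yy = \zero$, i.e. $\yy \in \nulls(\AAZ)$. Therefore
\[
\zz = \left( \begin{array}{c} \yy \\ s \\ s \end{array} \right)
= \left( \begin{array}{c} \yy \\ 0 \\ 0 \end{array} \right) + s \left( \begin{array}{c} \zero \\ 1 \\ 1 \end{array} \right),
\]
a linear combination of the claimed generators. Combining the two inclusions yields the stated equality.

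The argument is mechanical, so there is no substantial obstacle; the one step meriting care is the non-degeneracy of $w$, i.e. ensuring the bottom block-row of $\AAZtwo$ genuinely forces $s = t$ rather than being vacuous — which is exactly the reason Algorithm~\ref{alg:GZtoGZT} appends that row with a strictly positive weight. If desired, one can additionally remark that the two generator families are supported on complementary coordinate blocks (the $(\xx;0;0)$'s on the first $n$ coordinates, $(\zero;1;1)$ on the last two), so the right-hand side is a non-redundant spanning description and $\nulls(\AAZtwo)$ has dimension one more than $\nulls(\AAZ)$.
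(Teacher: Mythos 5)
Your proof is correct and follows essentially the same block-computation route as the paper's: expand $\AAZtwo\zz$ into its two block-rows, use the nonzero scalar $w$ to force $s=t$, then read off $\AAZ\yy=\zero$. You are somewhat more careful than the paper — you verify both inclusions explicitly and justify that $w>0$ rather than simply asserting it — but the underlying argument is identical.
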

\begin{proof}
Let $\yy = \left( \xx; \alpha; \beta \right) \in \mathbb{R}^{\na+2}$ such that $\yy \in \nulls(\AAZtwo)$, that is,
\[
\AAZtwo \yy = \left( \begin{array}{ccc}
\AA & \aa & -\aa \\
{\bf 0} & w & -w
\end{array} \right) 
\left( \begin{array}{c}
\xx \\
\alpha \\
\beta
\end{array} \right)
= \left( \begin{array}{c}
\AAZ\xx + (\alpha - \beta) \aa  \\
(\alpha - \beta) w
\end{array} \right)
= {\bf 0}.
\]
Since $w > 0$, we have
\[
\alpha = \beta.
\]
and thus
\[
\AAZ\xx = {\bf 0}.
\]
Therefore, 
\[
\nulls(\AAZtwo) = \Span{\left( \begin{array}{c}
{\bf 0} \\
1 \\
1
\end{array} \right),
\left( \begin{array}{c}
\xx \\
0 \\
0
\end{array} \right): \xx \in \nulls(\AAZ)}.
\]
This completes the proof.
%
%
\end{proof}


We write $\xxztwo$ as $\left(\xx; \alpha, \beta \right)$.
By Claim~\ref{clm:GZtoGZT_null}, the vector $\left({\bf 0}; 1; 1 \right)$ is in the null space of $\AAZtwo$. 
Thus, 
\[
\AAZtwo \xxztwo = 
\AAZtwo\left( \xxztwo - \frac{\alpha+\beta}{2} \left( \begin{array}{c}
{\bf 0} \\
1 \\
1
\end{array} \right) \right) =
\AAZtwo \left( \begin{array}{cc}
\xx \\
\frac{\alpha - \beta}{2} \\
\frac{\beta - \alpha}{2}
\end{array} \right).
\]
Without loss of generality, we assume $\alpha = \beta$, that is, $\xxztwo = \left( \xx; \alpha; -\alpha \right)$.

We first show that the exact solutions of the two linear systems are close.

\begin{lemma}
Let $\xxztwoopt \in \argmin_{\xx} \norm{\AAZtwo\xx - \ccztwo}_2$.
Write $\xxztwoopt = \left( \ss^*; \alpha; -\alpha \right)$.
Then,
\[
\ss^* \in \argmin_{\xx} \norm{\AAZ\xx - \ccz}_{\PP},
\]
where $\PP = \II - \frac{\aa\aa^{\trp}}{w^2 + \norm{\aa}_2^2}$.
\label{lem:genToPower2_exact}
\end{lemma}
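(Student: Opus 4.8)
The plan is to eliminate the last two coordinates of $\xxztwo$ in closed form, turning the least-squares problem for $\AAZtwo$ into a $\PP$-weighted least-squares problem in the original variables.

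First I would expand $\AAZtwo\xxztwo - \ccztwo$ in block form. For $\xxztwo = (\xx; \alpha; -\alpha)$, write $\gamma \defeq 2\alpha$ for the (signed) difference of the last two coordinates; then, using the block structure of $\AAZtwo$ and $\ccztwo = (\ccz; 0)$,
\[
\norm{\AAZtwo\xxztwo - \ccztwo}_2^2 = \norm{\AAZ\xx - \ccz + \gamma\aa}_2^2 + \gamma^2 w^2 .
\]
For fixed $\xx$ the right-hand side is a strictly convex quadratic in the scalar $\gamma$ (using $w > 0$), so its unique minimizer is found by setting the derivative to zero, giving
\[
\gamma^* = -\,\frac{\aa^{\trp}(\AAZ\xx - \ccz)}{w^2 + \norm{\aa}_2^2} .
\]
Substituting $\gamma^*$ back and simplifying yields
\[
\min_{\gamma \in \rea}\norm{\AAZtwo\xxztwo - \ccztwo}_2^2
= (\AAZ\xx - \ccz)^{\trp}\left( \II - \frac{\aa\aa^{\trp}}{w^2 + \norm{\aa}_2^2} \right)(\AAZ\xx - \ccz)
= \norm{\AAZ\xx - \ccz}_{\PP}^2 .
\]

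Next I would check that $\PP$ is positive definite, so that $\norm{\cdot}_{\PP}$ is a genuine norm and the problem $\min_\xx \norm{\AAZ\xx - \ccz}_{\PP}$ is well posed: the matrix $\aa\aa^{\trp}/(w^2 + \norm{\aa}_2^2)$ has a single nonzero eigenvalue $\norm{\aa}_2^2/(w^2 + \norm{\aa}_2^2)$, which is strictly less than $1$ because $w > 0$, so $\PP \succ 0$. Finally, minimization over $\xxztwo$ can be carried out by first minimizing over $\gamma$ for each fixed $\xx$ and then over $\xx$:
\[
\min_{\xxztwo}\norm{\AAZtwo\xxztwo - \ccztwo}_2^2 = \min_{\xx}\norm{\AAZ\xx - \ccz}_{\PP}^2 ,
\]
with the inner minimum over $\gamma$ attained exactly at $\gamma^*$. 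Hence if $\xxztwoopt = (\ss^*; \alpha; -\alpha)$ is any global minimizer of the left-hand problem (that it may be taken in this normalized form, modulo the null-space direction $(\zero; 1; 1)$ identified in Claim~\ref{clm:GZtoGZT_null}, is precisely the preceding reduction), then its $\AAZ$-block $\ss^*$ attains $\min_{\xx}\norm{\AAZ\xx - \ccz}_{\PP}$, which is the assertion.

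No step here is genuinely hard. The only points requiring care are the bookkeeping between $\alpha$ and the combined variable $\gamma = \alpha - \beta$ (equivalently $2\alpha$ in the normalized representative used in the statement), and verifying $\PP \succ 0$ so that $\norm{\cdot}_{\PP}$ is a norm rather than merely a seminorm. If anything is delicate it is justifying the interchange $\min_{\xxztwo} = \min_\xx \min_\gamma$, but this is immediate from strict convexity in $\gamma$, which guarantees the inner minimum is attained.
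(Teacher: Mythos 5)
Your proposal is correct and takes essentially the same route as the paper: both expand $\norm{\AAZtwo\xxztwo - \ccztwo}_2^2$ into $\norm{\AAZ\xx - \ccz + 2\alpha\aa}_2^2 + 4w^2\alpha^2$, complete the square in the scalar coordinate ($\gamma = 2\alpha$ in your notation, $\alpha$ in the paper's), and read off the residual quadratic form as $\norm{\AAZ\xx - \ccz}_{\PP}^2$. Your added check that $\PP \succ 0$ (so $\norm{\cdot}_{\PP}$ is a genuine norm) and the explicit appeal to the null-space direction for the normalization $\xxztwoopt = (\ss^*; \alpha; -\alpha)$ are small, sensible additions but do not change the substance of the argument.
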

\begin{proof}
Without loss of generality, we write $\xxztwo = \left( \xx; \alpha; -\alpha \right)$.
We expand $\AAZtwo\xxztwo - \ccztwo$, 
\begin{align*}
& \min_{\xxztwo} \norm{\AAZtwo\xxztwo - \ccztwo}_2^2 \\
=& \min_{\xx, \alpha}\norm{\left( \begin{array}{ccc}
\AAZ & \aa & -\aa \\
{\bf 0} & w & -w
\end{array} \right)\left( \begin{array}{c}
\xx \\
\alpha \\
-\alpha
\end{array} \right) - \left(\begin{array}{c}
\ccz \\
0
\end{array} \right) }_2^2 \\
=& \min_{\xx, \alpha}\norm{\AAZ\xx - \ccz + 2\alpha\aa}_2^2
+ 4w^2 \alpha^2 \\
= & \min_{\xx} \min_{\alpha} 4\left( w^2 + \norm{\aa}_2^2 \right) \left( \alpha + \frac{\aa^{\trp}(\AAZ\xx-\ccz)}{2\left(w^2 + \norm{\aa}_2^2 \right) } \right)^2
+ \norm{\AAZ\xx- \ccz}_2^2 - \frac{\left(\aa^{\trp}(\AAZ\xx-\ccz) \right)^2}{w^2 + \norm{\aa}_2^2 }.
\end{align*}
Since the minimization over $\xx$ and $\alpha$ is independent of each other, for any fixed $\xx$, the above value is minimized when 
\[
\alpha = - \frac{\aa^{\trp}(\AAZ\xx-\ccz)}{2\left(w^2 + \norm{\aa}_2^2 \right) }.
\]
Plugging this value of $\alpha$ gives
\begin{align*}
\min_{\xxztwo} \norm{\AAZtwo\xxztwo - \ccztwo}_2^2
&= \min_{\xx} \norm{\AAZ\xx- \ccz}_2^2 - \frac{\left(\aa^{\trp}(\AAZ\xx-\ccz) \right)^2}{w^2 + \norm{\aa}_2^2 } \\
& = \min_{\xx} \left( \AAZ\xx - \ccz \right)^{\trp} \left( \II 
- \frac{\aa\aa^{\trp}}{w^2 + \norm{\aa}_2^2} \right) \left( \AAZ\xx - \ccz \right) \\
& = \min_{\xx} \norm{\AAZ\xx - \ccz}_{\PP}^2,
\end{align*}
where $\PP = \II - \frac{\aa\aa^{\trp}}{w^2 + \norm{\aa}_2^2}$.
The fact $\xxztwoopt \in \argmin_{\xx} \norm{\AAZtwo\xx - \ccztwo}_2$ implies that
\[
\ss^*
\in \argmin_{\xx} \norm{\AAZ \xx - \ccz}_{\PP},
\]
which completes the proof.
\end{proof}

\begin{lemma}
Let $\xxztwoopt \in \argmin_{\xx} \norm{\AAZtwo\xx - \ccztwo}_2$ such that $\xxztwoopt$ has the form $\left( \ss^*; \alpha; -\alpha \right)$ and $\norm{\xxztwoopt}_2$ is minimized.
Let $\xx^* \in \argmin_{\xx} \norm{\AAZ\xx - \ccz}_2$ such that $\norm{\xx^*}_2$ is minimized.
Then,
\[
\norm{\AAZ \left( \ss^* - \xx^* \right)}_2 \le \frac{\norm{\aa}_2}{w} \sigma_{\max} (\AAZ) \norm{\ccz}_2 \norm{\PPi_{\AAZ} \ccz}_2.
\]
\label{lem:genToPower2_exact_reduction}
\end{lemma}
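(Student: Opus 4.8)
The plan is to reduce the statement to a perturbation bound between two least-squares projections of $\ccz$ onto $V \defeq \im(\AAZ)$: the ordinary Euclidean projection, which produces $\AAZ\xx^*$, and the projection in the slightly perturbed inner product $\langle\,\cdot\,,\PP\,\cdot\,\rangle$, which produces $\AAZ\ss^*$. Concretely, Lemma~\ref{lem:genToPower2_exact} already tells us $\ss^* \in \argmin_{\xx}\norm{\AAZ\xx - \ccz}_{\PP}$, where $\PP = \II - \frac{\aa\aa^{\trp}}{w^2+\norm{\aa}_2^2}$ is symmetric positive definite with $\II - \PP$ a PSD rank-one matrix of operator norm $\frac{\norm{\aa}_2^2}{w^2+\norm{\aa}_2^2}$ and $\PP \pgeq \frac{w^2}{w^2+\norm{\aa}_2^2}\II$. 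Since the statement only involves $\AAZ\ss^*$ and $\AAZ\xx^*$ --- which are the (unique) $\PP$-orthogonal and Euclidean-orthogonal projections of $\ccz$ onto $V$, respectively, and hence independent of the sign-pattern/min-norm conventions used to single out $\ss^*$ and $\xx^*$ --- I would work directly with $\zz \defeq \AAZ\ss^*$ and $\yy \defeq \AAZ\xx^* = \PPi_{\AAZ}\ccz$. If $(\AAZ)^{\trp}\ccz = \zero$ then $\PPi_{\AAZ}\ccz = \zero$ and the inequality is handled separately (the solution map in Algorithm~\ref{alg:GZTtoGZsolnback} returns $\zero$), so I would assume $(\AAZ)^{\trp}\ccz \neq \zero$.

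The core step is to combine the two first-order optimality conditions on $V$: $\ww^{\trp}(\yy - \ccz) = 0$ for all $\ww \in V$ (Euclidean optimality of $\yy$), and $\ww^{\trp}\PP(\zz - \ccz) = 0$ for all $\ww \in V$ ($\PP$-optimality of $\zz$). Taking $\ww = \zz - \yy \in V$, writing $\zz - \yy = (\zz - \ccz) - (\yy - \ccz)$, and using both conditions in turn yields
\[
\norm{\zz - \yy}_2^2 \;=\; (\zz - \yy)^{\trp}(\zz - \ccz) \;=\; (\zz - \yy)^{\trp}(\II - \PP)(\zz - \ccz).
\]
Since $\II - \PP = \frac{\aa\aa^{\trp}}{w^2+\norm{\aa}_2^2}$ is rank one, Cauchy--Schwarz gives $\norm{\zz - \yy}_2 \le \frac{\norm{\aa}_2^2}{w^2+\norm{\aa}_2^2}\norm{\zz - \ccz}_2 \le \frac{\norm{\aa}_2^2}{w^2}\norm{\zz - \ccz}_2$. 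To bound $\norm{\zz - \ccz}_2$ I would use the $\PP$-optimality of $\ss^*$ against the competitor $\xx^*$: $\norm{\zz - \ccz}_{\PP} \le \norm{\yy - \ccz}_{\PP} \le \norm{\yy - \ccz}_2 \le \norm{\ccz}_2$, and then convert back via $\PP \pgeq \frac{w^2}{w^2+\norm{\aa}_2^2}\II$ to get $\norm{\zz - \ccz}_2 \le \left(1 + \norm{\aa}_2^2/w^2\right)^{1/2}\norm{\ccz}_2$.

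It remains to substitute the parameters. From Line~\ref{lin:GZtoGZT_vec_a} of Algorithm~\ref{alg:GZtoGZT} and $\AAZ \in \genZCl$, each entry of $\aa$ lies in $[0, 2^{k^*}]$ with $2^{k^*} \le 2\norm{\AAZ}_{\infty}$, so $\norm{\aa}_2 = O(\sqrt{m}\,\norm{\AAZ}_{\infty})$; plugging this together with the choice $w = 3\eps_1^{-1}\sqrt{m}\,\sigma_{\max}(\AAZ)\norm{\AAZ}_{\infty}\norm{\ccz}_2$ from Line~\ref{lin:GZtoGZT_weight} (and $\eps_1 \le 1$, $\sigma_{\max}(\AAZ) \ge 1$, $\norm{\ccz}_2 \ge 1$) shows $\norm{\aa}_2/w \le 1$, hence $\norm{\zz - \yy}_2 = O\!\left(\norm{\aa}_2^2/w^2\right)\norm{\ccz}_2$. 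To land on the precise right-hand side of the lemma, I would invoke integrality: $\AAZ$ and $\ccz = \cca$ have integer entries and $(\AAZ)^{\trp}\ccz \neq \zero$, so $\norm{(\AAZ)^{\trp}\ccz}_2 \ge 1$, and therefore $\sigma_{\max}(\AAZ)\,\norm{\PPi_{\AAZ}\ccz}_2 \ge 1$ by the computation in Lemma~\ref{lem:ProjLengthLower}; multiplying the previous bound by this quantity and absorbing one spare factor $\norm{\aa}_2/w \le 1$ gives
\[
\norm{\AAZ(\ss^* - \xx^*)}_2 \;\le\; \frac{\norm{\aa}_2}{w}\,\sigma_{\max}(\AAZ)\,\norm{\ccz}_2\,\norm{\PPi_{\AAZ}\ccz}_2 ,
\]
the small absolute constant in the intermediate estimates being removable by tracking $2^{k^*} \le 2\norm{\AAZ}_{\infty}$ and the factor $(1+\norm{\aa}_2^2/w^2)^{1/2}$ exactly.

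The step I expect to be the main obstacle is the core manipulation above: setting up the two first-order optimality conditions --- one in the Euclidean inner product and one in the $\PP$-inner product, on the \emph{same} subspace $V$ --- and combining them so that $\norm{\zz - \yy}_2^2$ collapses to $(\zz - \yy)^{\trp}(\II - \PP)(\zz - \ccz)$; once that identity is in hand, the remainder is Cauchy--Schwarz plus parameter bookkeeping. A secondary nuisance is matching the exact stated form (the extra $\sigma_{\max}(\AAZ)\norm{\PPi_{\AAZ}\ccz}_2$ factor is not produced organically and must be inserted using the integrality input of Lemma~\ref{lem:ProjLengthLower}), and quarantining the degenerate case $(\AAZ)^{\trp}\ccz = \zero$.
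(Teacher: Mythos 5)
Your approach is correct and genuinely differs from the paper's proof in its algebraic route. The paper proves this via a Pythagoras decomposition: since $\AAZ\xx^* - \ccz \perp \im(\AAZ)$, one has $\norm{\AAZ(\ss^*-\xx^*)}_2^2 = \norm{\AAZ\ss^*-\ccz}_2^2 - \norm{\AAZ\xx^*-\ccz}_2^2$, and then the spectral sandwich $(1-\gamma)\II \pleq \PP \pleq \II$ (with $\gamma = \norm{\aa}_2^2/(w^2+\norm{\aa}_2^2)$) plus the $\PP$-optimality of $\ss^*$ gives $(1-\gamma)\norm{\AAZ\ss^*-\ccz}_2^2 \leq \norm{\AAZ\xx^*-\ccz}_2^2$, whence the difference is at most $\frac{\gamma}{1-\gamma}\norm{\AAZ\xx^*-\ccz}_2^2 \leq \frac{\norm{\aa}_2^2}{w^2}\norm{\ccz}_2^2$, i.e.\ exactly $(\norm{\aa}_2/w)^2\norm{\ccz}_2^2$ before the integrality boost. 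You instead derive the identity $\norm{\zz-\yy}_2^2 = (\zz-\yy)^{\trp}(\II-\PP)(\zz-\ccz)$ from the two first-order (stationarity) conditions, then apply Cauchy--Schwarz. Both are standard least-squares perturbation arguments; the paper's Pythagoras route produces the clean factor $\sqrt{\gamma/(1-\gamma)} = \norm{\aa}_2/w$ immediately, whereas your route requires one more line of care to land there.

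On that last point: you flag that your argument, as written, leaves a stray $(1 + \norm{\aa}_2^2/w^2)^{1/2}$ and a constant that you propose to absorb using $\norm{\aa}_2/w \leq 1$ (which in turn needs $\sigma_{\max}(\AAZ) \geq 1$ and $\norm{\ccz}_2 \geq 1$, the latter of which is not clearly given since $\cca$ need not be integral). None of this is actually needed if you do not relax $\frac{\norm{\aa}_2^2}{w^2+\norm{\aa}_2^2}$ to $\frac{\norm{\aa}_2^2}{w^2}$ prematurely: keeping the exact factors, your Cauchy--Schwarz step gives $\norm{\zz-\yy}_2 \leq \frac{\norm{\aa}_2^2}{w^2+\norm{\aa}_2^2}\norm{\zz-\ccz}_2$, and your $\PP$-norm bound gives $\norm{\zz-\ccz}_2 \leq \left(\frac{w^2+\norm{\aa}_2^2}{w^2}\right)^{1/2}\norm{\ccz}_2$; multiplying yields $\norm{\zz-\yy}_2 \leq \frac{\norm{\aa}_2^2}{w\,(w^2+\norm{\aa}_2^2)^{1/2}}\norm{\ccz}_2 \leq \frac{\norm{\aa}_2}{w}\norm{\ccz}_2$ unconditionally (using $(w^2+\norm{\aa}_2^2)^{1/2}\geq \norm{\aa}_2$). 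The final multiplication by $1 \leq \sigma_{\max}(\AAZ)\norm{\PPi_{\AAZ}\ccz}_2$ via Lemma~\ref{lem:ProjLengthLower} is exactly how the paper inserts that factor too, and your handling of the degenerate case $(\AAZ)^\trp\ccz = \zero$ matches the paper's.
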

\begin{proof}
Let $\gamma \defeq \frac{\norm{\aa}^2_2}{w^2 + \norm{\aa}_2^2}$.
By the definition of $\PP$ in Lemma~\ref{lem:genToPower2_exact}, 
\[
\left(1 - \gamma \right) \II \pleq \PP \pleq \II.
\] 
Thus,
\begin{align}
(1 - \gamma) \norm{\AAZ\ss^* - \ccz}^2_2
\le \norm{\AAZ\ss^* - \ccz}^2_{\PP}
\le \norm{\AAZ\xx^* - \ccz}^2_{\PP}
\le \norm{\AAZ\xx^* - \ccz}^2_2.
\label{eqn:genToZeroSumTwo_relation}
\end{align}
The second inequality is due to $\ss^* \in \argmin_{\xx} \norm{\AAZ\xx - \ccz}_{\PP}$.
Note that $\AAZ\xx^* - \ccz$ is orthogonal to the column space of $\AAZ$, in particular, it is orthogonal to $\AAZ(\ss^* - \xx^*)$.
\[
\norm{\AAZ(\ss^* - \xx^*)}_2^2
= \norm{\AAZ\ss^* - \ccz}_2^2 - \norm{\AAZ\xx^* - \ccz}_2^2.
\]
By Equation~\eqref{eqn:genToZeroSumTwo_relation}, 
\[
\norm{\AAZ(\ss^* - \xx^*)}_2^2
\le \frac{\gamma}{1-\gamma} \norm{\AAZ\xx^* - \ccz}_2^2
\le \frac{\gamma}{1-\gamma} \norm{\ccz}_2^2.
\]
If $\PPi_{\AAZ} \ccz = {\bf 0}$, then $\ss^* = \xx^* = {\bf 0}$ and the claim holds. Otherwise,
by Lemma~\ref{lem:ProjLengthLower}, 
\[
\norm{\AAZ(\ss^* - \xx^*)}_2
\le \sqrt{\frac{\gamma}{1-\gamma}} \sigma_{\max}(\AAZ) \norm{\ccz}_2 \norm{\PPi_{\AAZ} \ccz}_2.
\]
Plugging the value of $\gamma$ completes the proof.
\end{proof}

We then show the approximate solvers of the two linear systems are close.

\begin{lemma}
Let $\eps_2$ be the error parameter returned by a call to \algGZtoGZT$(\AAZ, \ccz, \eps_1)$ (Algorithm~\ref{alg:GZtoGZT}).
Let $\xxztwo$ be a vector such that $\norm{\AAZtwo\xxztwo - \PPi_{\AAZtwo}\ccztwo}_2 \le \epsilon_2 \norm{\PPi_{\AAZtwo}\ccztwo}_2$. 
Let $\xx$ be the vector returned by Algorithm~\ref{alg:GZTtoGZsolnback}. 
Then,
\[
\norm{\AAZ\xx - \PPi_{\AAZ}\ccz}_2 \le \eps_1
\norm{\PPi_{\AAZ}\ccz}_2.
\]
\label{lem:genToPower2_eps}
\end{lemma}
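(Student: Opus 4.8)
The plan is to follow the same template used in the earlier reductions (compare Lemma~\ref{lem:mctogenerr} and Lemma~\ref{lem:GZToMC2NullSpaceClose}): first handle the degenerate case, then relate the approximate solution $\xxztwo$ to the exact minimizer $\xxztwoopt$, push this through the exact-solver relationship of Lemma~\ref{lem:genToPower2_exact_reduction}, and finally absorb all the polynomial factors into $\eps_1$ by the choice of $\eps_2$ made in Algorithm~\ref{alg:GZtoGZT}. Concretely, I would first dispatch the case $(\AAZ)^\trp\ccz = \zero$: then Algorithm~\ref{alg:GZTtoGZsolnback} returns $\xx = \zero$, and since $\PPi_{\AAZ}\ccz = \zero$ as well, both sides of the claimed inequality are $0$. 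So assume $(\AAZ)^\trp\ccz \neq \zero$ and $\xx = \xxztwo_{1:n}$.

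Next I would write $\xxztwo = (\xx;\alpha;\beta)$, and, using Claim~\ref{clm:GZtoGZT_null}, replace $\xxztwo$ by $\xxztwo - \tfrac{\alpha+\beta}{2}(\zero;1;1)$, which lies in the same coset modulo $\nulls(\AAZtwo)$ and hence has the same residual $\norm{\AAZtwo\xxztwo - \PPi_{\AAZtwo}\ccztwo}_2$; so WLOG $\xxztwo = (\xx;\alpha;-\alpha)$. By Fact~\ref{fac:voltageErrorEquiv}, the hypothesis $\norm{\AAZtwo\xxztwo - \PPi_{\AAZtwo}\ccztwo}_2 \le \eps_2\norm{\PPi_{\AAZtwo}\ccztwo}_2$ is equivalent to $\norm{\xxztwo - \xxztwoopt}_{(\AAZtwo)^\trp\AAZtwo} \le \eps_2\norm{\PPi_{\AAZtwo}\ccztwo}_2$, where $\xxztwoopt = (\ss^*;\alpha_0;-\alpha_0)$ is the min-norm minimizer. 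Since $\AAZtwo(\xxztwo - \xxztwoopt) = (\AAZ(\xx-\ss^*) + 2(\alpha-\alpha_0)\aa \,;\, 2(\alpha-\alpha_0)w)$, the squared norm dominates $\norm{\AAZ(\xx-\ss^*) + 2(\alpha-\alpha_0)\aa}_2^2$, and I would argue (as in Lemma~\ref{lem:genToPower2_exact}, the completion-of-square there forces the cross term to vanish at the optimum, and a Pythagorean/$\PP$-norm argument) that in fact $\norm{\AAZ(\xx - \ss^*)}_{\PP} \le \norm{\AAZtwo(\xxztwo-\xxztwoopt)}_2$ where $\PP = \II - \tfrac{\aa\aa^\trp}{w^2+\norm{\aa}_2^2}$, and then use $(1-\gamma)\II \pleq \PP$ with $\gamma = \norm{\aa}_2^2/(w^2+\norm{\aa}_2^2)$ to get a clean bound on $\norm{\AAZ(\xx-\ss^*)}_2$. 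Combining with the triangle inequality $\norm{\AAZ\xx - \PPi_{\AAZ}\ccz}_2 = \norm{\AAZ(\xx - \xx^*)}_2 \le \norm{\AAZ(\xx-\ss^*)}_2 + \norm{\AAZ(\ss^* - \xx^*)}_2$ and plugging in Lemma~\ref{lem:genToPower2_exact_reduction} for the second term, I would reduce everything to a bound of the form $\norm{\AAZ\xx - \PPi_{\AAZ}\ccz}_2 \le O\big((1-\gamma)^{-1/2}\big)\eps_2\norm{\PPi_{\AAZtwo}\ccztwo}_2 + \tfrac{\norm{\aa}_2}{w}\sigma_{\max}(\AAZ)\norm{\ccz}_2\norm{\PPi_{\AAZ}\ccz}_2$.

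To finish I need two conversions: bound $\norm{\PPi_{\AAZtwo}\ccztwo}_2$ by $\norm{\PPi_{\AAZ}\ccz}_2$ (up to a constant, since the extra rows/columns of $\AAZtwo$ only shrink the orthogonal projection of the extended right-hand side, as in the analogous step in Lemma~\ref{lem:GZToMC2NullSpaceClose}), and bound $\tfrac{\norm{\aa}_2}{w}\sigma_{\max}(\AAZ)\norm{\ccz}_2 \le \eps_1/3$-ish using the weight $w = 3\eps_1^{-1}\sqrt{m}\,\sigma_{\max}(\AAZ)\norm{\AAZ}_\infty\norm{\ccz}_2$ from Line~\ref{lin:GZtoGZT_weight} together with the crude bound $\norm{\aa}_2 \le \sqrt{n}\,\norm{\AAZ}_\infty$ (each entry $\aa_i = 2^{k^*} - \norm{\AAZ_i}_1/2$ has absolute value at most $\norm{\AAZ}_\infty$ by the choice of $k^*$), which makes $\gamma$ tiny so $(1-\gamma)^{-1/2} = O(1)$. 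With $\eps_2 = O\big(\eps_1/(\sigma_{\max}(\AAZ)\norm{\ccz}_2)\big)$ as returned by Algorithm~\ref{alg:GZtoGZT} — note this is conservative enough to also kill the factor appearing in the first term — both contributions are at most $\tfrac12\eps_1\norm{\PPi_{\AAZ}\ccz}_2$, giving the claim.

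The main obstacle I anticipate is the inequality $\norm{\AAZ(\xx - \ss^*)}_{\PP} \le \norm{\AAZtwo(\xxztwo - \xxztwoopt)}_2$: one has to be careful that $\ss^*$ is genuinely the $\PP$-norm projection target (which is what Lemma~\ref{lem:genToPower2_exact} gives) and that the "extra" coordinate $\alpha - \alpha_0$ in $\xxztwo - \xxztwoopt$ really does only help — i.e., that minimizing out the last coordinate of the extended system corresponds exactly to the Schur-complement/$\PP$ structure. This is the place where, as in the general-2-commodity case, the change in null space between $\AAZ$ and $\AAZtwo$ has to be tracked explicitly; everything else is bookkeeping with the parameter choices.
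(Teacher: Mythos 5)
Your overall plan matches the paper's proof in structure (dispatch the degenerate case; triangle inequality through $\ss^*$; invoke Lemma~\ref{lem:genToPower2_exact_reduction} for the $\norm{\AAZ(\ss^*-\xx^*)}_2$ term; absorb the remaining polynomial factors into $\eps_2$), but the key step — bounding $\norm{\AAZ(\xx-\ss^*)}_2$ — is handled by a genuinely different argument, and it works. Where the paper writes out
$\norm{\AAZtwo(\xxztwo-\xxztwoopt)}_2^2 = \norm{\AAZ(\xx-\ss^*)+2(\beta-\alpha)\aa}_2^2 + 4w^2(\alpha-\beta)^2$, concludes that \emph{each} summand is at most $\delta^2$, and then applies the vector triangle inequality to peel off the $2(\beta-\alpha)\aa$ term (yielding the factor $1+\norm{\aa}_2/w$), you instead observe that $\norm{\zz+\mu\aa}_2^2 + \mu^2 w^2 \ge \min_\mu\left[\norm{\zz+\mu\aa}_2^2 + \mu^2 w^2\right] = \zz^\trp\PP\zz$ for $\zz = \AAZ(\xx-\ss^*)$, giving $\norm{\AAZ(\xx-\ss^*)}_\PP \le \norm{\AAZtwo(\xxztwo-\xxztwoopt)}_2$ directly, and then convert to the Euclidean norm via $(1-\gamma)\II\pleq\PP$. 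The ``obstacle'' you flag here is actually a non-issue: this inequality holds for \emph{any} choice of $\mu$, not just the minimizer, because the $\PP$-quadratic form is literally the infimum of the extended quadratic over the last coordinate — so there is nothing subtle about $\ss^*$ being the $\PP$-projection target. Your route is clean and gives the slightly tighter factor $\sqrt{1+\norm{\aa}_2^2/w^2}$ instead of $1+\norm{\aa}_2/w$; both are $O(1)$ with the chosen $w$.

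The one place where you should be more careful is the conversion from $\norm{\PPi_{\AAZtwo}\ccztwo}_2$ to $\norm{\PPi_{\AAZ}\ccz}_2$. Your parenthetical — that the extra rows and columns ``only shrink the orthogonal projection'' — has the monotonicity backwards: adding columns to $\AAZtwo$ \emph{enlarges} the image and \emph{reduces} the least-squares residual, so in fact $\norm{\PPi_{\AAZtwo}\ccztwo}_2 \ge \norm{\PPi_{\AAZ}\ccz}_2$ (this also follows from Lemma~\ref{lem:genToPower2_exact}, since $\PP\pleq\II$). Since you need an \emph{upper} bound, this intuition cannot be used as stated. The paper takes a cruder but simpler route: $\norm{\PPi_{\AAZtwo}\ccztwo}_2 \le \norm{\ccztwo}_2 = \norm{\ccz}_2$, then Lemma~\ref{lem:ProjLengthLower} (which uses integrality of $(\AAZ)^\trp\ccz$) to write $\norm{\ccz}_2 \le \sigma_{\max}(\AAZ)\norm{\ccz}_2\cdot\norm{\PPi_{\AAZ}\ccz}_2$ — a factor of $\sigma_{\max}(\AAZ)\norm{\ccz}_2$, not $O(1)$, and precisely what the returned $\eps_2 = O(\eps_1/(\sigma_{\max}(\AAZ)\norm{\ccz}_2))$ was designed to absorb. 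You already gesture at this at the end (``this is conservative enough to also kill the factor appearing in the first term''), so the fix is mostly bookkeeping: replace the ``up to a constant'' claim and the shrinking-projection rationale with the $\norm{\ccz}_2$ bound plus Lemma~\ref{lem:ProjLengthLower}, and the argument goes through.
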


\begin{proof}
If $\PPi_{\AAZ} \ccz = {\bf 0}$, then $\xx = {\bf 0}$ and the statement holds.
If $\PPi_{\AAZ} \ccz \neq {\bf 0}$, then $\xx = \xxztwo_{1:\na+1}$.
Let $\xx^* \in \argmin_{\xx} \norm{\AAZ\xx - \ccz}_2$. 
We now bound the difference between our solution $\xx$ and $\xx^*$.
Let $\xxztwoopt \in \argmin_{\xx} \norm{\AAZtwo\xx - \ccztwo}_2$ of the form $(\ss^*; \alpha; -\alpha)$.
By the triangle inequality,
\begin{align}
\norm{\AAZ\left(\xx - \xx^*\right)}_2
\le \norm{\AAZ\left(\xx - \ss^*\right)}_2 + \norm{\AAZ\left(\ss^* - \xx^*\right)}_2.
\label{eqn:GZtoGZT_approx_all}
\end{align}

The second term is upper bound by Lemma~\ref{lem:genToPower2_exact_reduction}.
It remains to upper bound the first term.
Let 
\begin{align}
\delta \defeq \epsilon_2 \norm{\PPi_{\AAZtwo} \ccztwo}_2.
\label{eqn:genToPower2_delta_def}
\end{align}

Without loss of generality, we write $\xxztwo$ as $\left( \xx; \beta; -\beta \right)$, where $\xx$ is the output of Algorithm~\ref{alg:GZTtoGZsolnback}.
\begin{align*}
\norm{\AAZtwo\xxztwo - \PPi_{\AAZtwo}\ccztwo}_2^2
& = \norm{\AAZtwo \left( \xxztwo - \xxztwoopt \right)}_2^2 \\
& = \norm{\left( \begin{array}{ccc}
\AAZ & \aa & -\aa \\
{\bf 0} & w & -w
\end{array} \right) 
\left( \begin{array}{c}
\xx - \ss^* \\
\beta - \alpha \\
\alpha - \beta
\end{array} \right)
}_2^2 \\
& = \norm{\left( \begin{array}{c}
\AAZ \left( \xx - \ss^* \right) + 2\left( \beta - \alpha \right) \aa \\
2w\left( \alpha - \beta \right)
\end{array} \right)}_2^2 \\
& = \norm{\AA \left( \xx - \ss^* \right) + 2\left( \beta - \alpha \right) \aa}_2^2
+ 4w^2 \left( \alpha - \beta \right)^2.
\end{align*}
Since $\norm{\AAZtwo\xxztwo - \PPi_{\AAZtwo}\ccztwo}_2 \le \delta$, we have
\[
4w^2 \left( \alpha - \beta \right)^2 \le \delta^2,
\]
that is,
\begin{align}
\left( \alpha - \beta \right)^2 \le \frac{\delta^2}{4w^2}.
\label{eqn:GZtoGZT_approx_diff}
\end{align}
Similarly, we have
\[
\norm{\AAZ \left( \xx - \ss^* \right) + 2\left( \beta - \alpha \right) \aa}_2 \le \delta.
\]
By the triangle inequality,
\[
\norm{\AAZ \left( \xx - \ss^* \right)}_2 - 2\norm{\left( \beta - \alpha \right) \aa}_2
\le \delta. 
\]
Plugging~\eqref{eqn:GZtoGZT_approx_diff} and~\eqref{eqn:genToPower2_delta_def} into the above inequality, and rearranging it,
\[
\norm{\AAZ \left( \xx - \ss^* \right)}_2
\le  \left( 1 + \frac{\norm{\aa}_2}{w} \right) \epsilon_2 
\norm{\ccztwo}_2.
\label{eqn:GZtoGZT_approx_x_diff}
\]
By Lemma~\ref{lem:ProjLengthLower}, 
\[
\norm{\AAZ \left( \xx - \ss^* \right)}_2
\le  \left( 1 + \frac{\norm{\aa}_2}{w} \right) \epsilon_2 
\sigma_{\max}(\AAZ) \norm{\ccztwo}_2 \norm{\PPi_{\AAZ} \ccz}_2.
\]

Together with Lemma~\ref{lem:genToPower2_exact_reduction} and Equation~\eqref{eqn:GZtoGZT_approx_all}, we have
\[
\norm{\AAZ(\xx - \xx^*)}_2
\le \left( \frac{\norm{\aa}_2}{w}  + \left( 1 + \frac{\norm{\aa}_2}{w} \right) \epsilon_2 \right)
\sigma_{\max}(\AAZ) \norm{\ccz}_2\norm{\PPi_{\AAZ}\ccz}_2.
\]
According to our setting of $w$ and $\aa$ in line~\ref{lin:GZtoGZT_weight} and line~\ref{lin:GZtoGZT_vec_a} of Algorithm~\ref{alg:GZtoGZT},
\[
w 
\ge \frac{\norm{\aa}_2 }{\epsilon_2}.
\]
This implies that
\[
\norm{\AAZ\xx - \PPi_{\AAZ}\ccz}_2 \le \eps_1
\norm{\PPi_{\AAZ}\ccz}_2,
\]
which completes the proof.
\end{proof}

We then bound the singular values of $\AAZtwo$.

\begin{claim}
$\lambda_{\max}\left((\AAZtwo)^{\trp}\AAZtwo \right) \le 
O\left( \eps_1^{-2} m \lambda_{\max}((\AAZ)^\trp \AAZ) \norm{\AAZ}_{\infty}^2 \norm{\ccz}_2^2  \right)
$.
\end{claim}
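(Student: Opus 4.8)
The plan is to bound $\lambda_{\max}((\AAZtwo)^{\trp}\AAZtwo) = \sigma_{\max}^2(\AAZtwo) = \max_{\norm{\yy}_2 = 1}\norm{\AAZtwo\yy}_2^2$ directly from the block structure of $\AAZtwo$ produced in Algorithm~\ref{alg:GZtoGZT}. Writing a unit vector as $\yy = (\xx;\alpha;\beta)$ with $\xx \in \rea^{n}$, the construction gives $\AAZtwo\yy = (\AAZ\xx + (\alpha-\beta)\aa;\ (\alpha-\beta)w)$, so
\[
\norm{\AAZtwo\yy}_2^2 = \norm{\AAZ\xx + (\alpha-\beta)\aa}_2^2 + (\alpha-\beta)^2 w^2 .
\]
First I would apply the triangle inequality together with $(p+q)^2 \le 2p^2 + 2q^2$, using $\norm{\xx}_2 \le 1$ and $\abs{\alpha-\beta} \le \abs{\alpha} + \abs{\beta} \le \sqrt{2}$, to obtain
\[
\norm{\AAZtwo\yy}_2^2 \le 2\sigma_{\max}^2(\AAZ) + 4\norm{\aa}_2^2 + 2w^2 .
\]

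Next I would bound the two remaining terms. Since $k^{*}$ is chosen minimal with $2^{k^{*}} \ge \norm{\AAZ}_{\infty} \ge \norm{\AAZ_i}_1$ for every $i$, the vector defined in line~\ref{lin:GZtoGZT_vec_a} satisfies $0 \le \aa_i = 2^{k^{*}} - \norm{\AAZ_i}_1/2 \le 2^{k^{*}} < 2\norm{\AAZ}_{\infty}$, hence $\norm{\aa}_2^2 \le 4n\norm{\AAZ}_{\infty}^2$. To replace $n$ by quantities named in the statement, I would use that $\AAZ$ has an integer entry of absolute value at least $1$ in every column, so the sum of squares of its entries is at least $n$; since $\sigma_{\max}^2(\AAZ)$ is at least $1/m$ times the sum of squares of all entries, this gives $n \le m\,\sigma_{\max}^2(\AAZ) = m\,\lambda_{\max}((\AAZ)^{\trp}\AAZ)$, and therefore $\norm{\aa}_2^2 \le 4m\,\lambda_{\max}((\AAZ)^{\trp}\AAZ)\norm{\AAZ}_{\infty}^2$. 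For $w$, plugging in its value from line~\ref{lin:GZtoGZT_weight} gives $w^2 = 9\,\eps_1^{-2} m\,\sigma_{\max}^2(\AAZ)\norm{\AAZ}_{\infty}^2\norm{\ccz}_2^2 = 9\,\eps_1^{-2} m\,\lambda_{\max}((\AAZ)^{\trp}\AAZ)\norm{\AAZ}_{\infty}^2\norm{\ccz}_2^2$, already exactly of the claimed order.

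Finally I would combine the three terms. Because $\AAZ$ is integral with a nonzero entry in every row, $\norm{\AAZ}_{\infty} \ge 1$ and $\lambda_{\max}((\AAZ)^{\trp}\AAZ) \ge 1$; also $m \ge 1$ and $\eps_1 \le 1$; and $\ccz$ is integral, so either $\ccz = \zero$ (in which case $(\AAZ)^{\trp}\ccz = \zero$ and the reduction is trivial, see Algorithm~\ref{alg:GZTtoGZsolnback}) or $\norm{\ccz}_2 \ge 1$. These bounds let me absorb $2\sigma_{\max}^2(\AAZ)$ and $4\norm{\aa}_2^2$ into the $w^2$ term up to constants, giving $\lambda_{\max}((\AAZtwo)^{\trp}\AAZtwo) = O(\eps_1^{-2} m\,\lambda_{\max}((\AAZ)^{\trp}\AAZ)\norm{\AAZ}_{\infty}^2\norm{\ccz}_2^2)$. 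I do not expect a serious obstacle; the only step needing a little care is the conversion from $n$ to $m\,\lambda_{\max}((\AAZ)^{\trp}\AAZ)$ together with the normalization $\norm{\ccz}_2 \ge 1$, so that every factor in the final bound is one of the parameters named in the statement.
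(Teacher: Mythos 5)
Your proof is correct and takes essentially the same route as the paper: both bound $\lambda_{\max}((\AAZtwo)^{\trp}\AAZtwo)$ by directly expanding $\norm{\AAZtwo\yy}_2^2$ for a unit vector $\yy$, using the block structure of $\AAZtwo$, and observing that the $w^2$ term is what yields the claimed bound. The paper restricts to eigenvectors of the form $(\tilde{\yy};\alpha;-\alpha)$ (valid since $(\zero;1;1)$ is in the null space) and completes a square carefully, while you take a general $(\xx;\alpha;\beta)$ and use the cruder $(p+q)^2 \le 2p^2 + 2q^2$ together with $\abs{\alpha-\beta} \le \sqrt{2}$; both yield the same $O(\cdot)$ bound. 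Two small remarks. First, the appended column $\aa$ has one entry per \emph{row} of $\AAZ$, so $\aa \in \rea^m$ (the pseudocode's ``$\aa \in \rea^n$'' and its loop bound are a typo), which gives $\norm{\aa}_2^2 \le 4m\norm{\AAZ}_\infty^2$ directly; your Frobenius-norm conversion of $n$ into $m\,\sigma_{\max}^2(\AAZ)$ is correct but unnecessary, and it is not the route the paper takes. Second, you explicitly check that the non-$w^2$ terms are absorbed into the dominant term, which the paper's proof does not spell out (it stops at the three-term inequality); your absorption uses $\norm{\ccz}_2 \ge 1$ via integrality of $\ccz$, which is not literally guaranteed by the hypotheses (only $\AAZ$ is required to be integral), though the needed lower bound on $\norm{\ccz}_2$, up to a $\poly(U)$ factor, is already tracked in the sparse-complexity framework through $\anzmin(\ccz)$.
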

\begin{proof}
Let $\lambda_1$ bet he largest eigenvalue of $(\AAZtwo)^{\trp}\AAZtwo$, and $\yy = \left( \tilde{\yy}; \alpha; -\alpha \right)$ be the associated eigenvector of unit length.
Let $\mu_1$ be the largest eigenvalue of $(\AAZ)^{\trp}\AAZ$.
\begin{align*}
\lambda_1 &= \norm{ \left( \begin{array}{ccc}
\AAZ & \aa & -\aa \\
{\bf 0} & w & -w
\end{array} \right)
\yy
}_2^2 \\
& = \norm{ \left( \begin{array}{ccc}
\AAZ & \aa & -\aa \\
{\bf 0} & 1 & -1
\end{array} \right)
\yy + 
\left( \begin{array}{ccc}
{\bf 0} & {\bf 0} & {\bf 0} \\
{\bf 0} & w-1 & -w+1
\end{array} \right) \yy
}_2^2 \\
& = \tilde{\yy}^{\trp}(\AAZ)^{\trp}\AAZ\tilde{\yy} + 4\alpha \aa^{\trp}\AAZ\tilde{\yy} + 4\alpha^2 \left( \aa^{\trp}\aa + 1 \right)
+ 8 \alpha^2 (w-1) + 4\alpha^2 (w-1)^2.
\end{align*}
By the Courant-Fischer Theorem,
\[
\tilde{\yy}^{\trp}(\AAZ)^{\trp}\AAZ\tilde{\yy} \le \mu_1 \tilde{\yy}^{\trp}\tilde{\yy} 
\le \mu_1.
\]
By Cauchy-Schwarz inequality,
\[
\abs{\aa^{\trp}\AAZ\tilde{\yy}} \le \norm{\aa}_2 \norm{\AAZ\tilde{\yy}}_2
\le \norm{\AA}_{\infty}\sqrt{m} \sqrt{\mu_1}. 
\]
Thus,
\begin{align*}
\lambda_1 &\le \mu_1 + 4\abs{\alpha} \norm{\AAZ}_{\infty}\sqrt{m} \sqrt{\mu_1} + 4\alpha^2 \left( \norm{\AAZ}_{\infty}^2 m + 1 \right) + 12\alpha^2 w^2 \\
&\le \left( \sqrt{\mu_1} + 2 \norm{\AAZ}_{\infty} \sqrt{m} \right)^2 +  12\alpha^2 w^2 \\
& \le 2\mu_1 + 8 m \norm{\AAZ}_{\infty}^2 + 12 \alpha^2 w^2.
\end{align*}
Our setting of $w$ in line~\ref{lin:GZtoGZT_weight} of Algorithm~\ref{alg:GZtoGZT} gives
\[
\lambda_{\max}\left((\AAZtwo)^{\trp}\AAZtwo \right) \le 
2\mu_1 + 8 m \norm{\AAZ}_{\infty}^2 + 108 \epsilon_1^{-2} m \mu_1 \norm{\AAZ}_{\infty}^2
\norm{\ccz}_2^2.
\]
This completes the proof.
\end{proof}

\begin{claim}
$\lambda_{\min}((\AAZtwo)^{\trp}\AAZtwo) \ge \frac{2\lambda_{\min}((\AAZ)^{\trp}\AAZ)}{2\left( \norm{\AAZ}_{\infty}^2 m + 1 \right) + \lambda_{\min}((\AAZ)^{\trp}\AAZ)}$.
\end{claim}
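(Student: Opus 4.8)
The plan is to lower bound the Rayleigh quotient of $(\AAZtwo)^{\trp}\AAZtwo$ on the orthogonal complement of its null space. By Claim~\ref{clm:GZtoGZT_null}, $\nulls(\AAZtwo) = \Span{({\bf 0};1;1),\,(\xx;0;0): \xx \in \nulls(\AAZ)}$, so any $\yy \perp \nulls(\AAZtwo)$ must be orthogonal to $({\bf 0};1;1)$ and to $(\xx;0;0)$ for all $\xx \in \nulls(\AAZ)$; hence such a $\yy$ has the form $\yy = (\tilde{\yy}; \alpha; -\alpha)$ with $\tilde{\yy} \perp \nulls(\AAZ)$. For $\yy$ of this form, expanding exactly as in the proof of the preceding claim gives
\[
\norm{\AAZtwo\yy}_2^2 = \norm{\AAZ\tilde{\yy} + 2\alpha\aa}_2^2 + 4w^2\alpha^2, \qquad \norm{\yy}_2^2 = \norm{\tilde{\yy}}_2^2 + 2\alpha^2 ,
\]
so it suffices to show $\norm{\AAZ\tilde{\yy} + 2\alpha\aa}_2^2 + 4w^2\alpha^2 \geq c\,\big(\norm{\tilde{\yy}}_2^2 + 2\alpha^2\big)$ for $c := \frac{2\mu}{2(\norm{\AAZ}_{\infty}^2 m + 1)+\mu}$, where I write $\mu := \lambda_{\min}((\AAZ)^{\trp}\AAZ)$.

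The key tool is the elementary bound $\norm{u+v}_2^2 \geq (1-\theta)\norm{u}_2^2 - (\theta^{-1}-1)\norm{v}_2^2$, valid for any $\theta \in (0,1)$, which I would apply with $u = \AAZ\tilde{\yy}$ and $v = 2\alpha\aa$. Since $\tilde{\yy} \perp \nulls(\AAZ)$ we have $\norm{\AAZ\tilde{\yy}}_2^2 \geq \mu\norm{\tilde{\yy}}_2^2$, and therefore
\[
\norm{\AAZtwo\yy}_2^2 \geq (1-\theta)\mu\norm{\tilde{\yy}}_2^2 + 4\alpha^2\Big(w^2 - (\theta^{-1}-1)\norm{\aa}_2^2\Big).
\]
Now I would pick $\theta$ to make the coefficient of $\norm{\tilde{\yy}}_2^2$ match the target: setting $N := \norm{\aa}_2^2 + 1$ and $\theta := \frac{2N+\mu-2}{2N+\mu}$ (which lies in $(0,1)$ since $\norm{\aa}_2^2 > 0$ and $\mu>0$) gives $(1-\theta)\mu = \frac{2\mu}{2N+\mu}$, and a one-line computation yields $(\theta^{-1}-1)\norm{\aa}_2^2 = \frac{2\norm{\aa}_2^2}{2\norm{\aa}_2^2+\mu} < 1$. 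Since $w \geq 3$ by the setting of $w$ in Line~\ref{lin:GZtoGZT_weight} (using that $\sigma_{\max}(\AAZ), \norm{\AAZ}_{\infty}, \norm{\ccz}_2 \geq 1$ in the nontrivial case $\ccz \neq {\bf 0}$; the case $\ccz = {\bf 0}$ is handled trivially by the solution-mapping algorithm), the coefficient of $\alpha^2$ is at least $4(w^2-1) \geq \frac{4\mu}{2N+\mu} = 2\cdot\frac{2\mu}{2N+\mu}$, and hence $\norm{\AAZtwo\yy}_2^2 \geq \frac{2\mu}{2N+\mu}\big(\norm{\tilde{\yy}}_2^2 + 2\alpha^2\big) = \frac{2\mu}{2N+\mu}\norm{\yy}_2^2$.

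To finish, I would note that $\frac{2\mu}{2N+\mu}$ is decreasing in $N$, and that $\norm{\aa}_2^2 \leq \norm{\AAZ}_{\infty}^2 m$ exactly as bounded in the proof of the preceding claim (each entry $\aa_i$ is of size $\Theta(\norm{\AAZ}_{\infty})$ by construction in Line~\ref{lin:GZtoGZT_vec_a}), so $N = \norm{\aa}_2^2 + 1 \leq \norm{\AAZ}_{\infty}^2 m + 1$ and therefore $\lambda_{\min}((\AAZtwo)^{\trp}\AAZtwo) \geq \frac{2\mu}{2N+\mu} \geq \frac{2\mu}{2(\norm{\AAZ}_{\infty}^2 m+1)+\mu}$, as claimed. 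I expect the only delicate point to be the calibration of $\theta$: it has to be chosen so the $\norm{\tilde{\yy}}_2^2$-coefficient lands precisely on the target constant while the $\alpha^2$-coefficient stays above it, and it is exactly the latter requirement that forces us to invoke the large magnitude of $w$ (a cruder choice of $\theta$, or a small $w$, would let the cross term $4\alpha\langle\AAZ\tilde{\yy},\aa\rangle$ nearly cancel $\norm{\AAZ\tilde{\yy}}_2^2$ and destroy the bound).
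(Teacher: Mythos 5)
Your proof is correct, and it takes a genuinely different route from the paper's. The paper first factors out the magnitude of $w$ structurally: it writes $(\AAZtwo)^{\trp}\AAZtwo = \CC^{\trp}\CC + (w^2-1)\,\zz\zz^{\trp}$ where $\CC$ is the same matrix with $w$ replaced by $1$ and $\zz = \ee_{n+1}-\ee_{n+2}$, observes the correction is PSD whenever $w\geq 1$ (and null-space–preserving), and so reduces to bounding $\lambda_{\min}(\CC^{\trp}\CC)$; it then derives the two lower bounds $\lambda_k \geq 4\alpha^2$ and $\lambda_k \geq \tfrac{\mu}{\aa^{\trp}\aa+1}\norm{\tilde{\yy}}_2^2$ (the latter via completing the square in $\alpha$), and optimizes their maximum over $\norm{\tilde{\yy}}_2^2 + 2\alpha^2 = 1$. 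You instead work with $\AAZtwo$ directly and control the cross term $4\alpha\langle\AAZ\tilde{\yy},\aa\rangle$ via the weighted Young bound $\norm{u+v}_2^2 \geq (1-\theta)\norm{u}_2^2 - (\theta^{-1}-1)\norm{v}_2^2$ with $\theta$ calibrated so the $\norm{\tilde{\yy}}_2^2$-coefficient equals the target constant, relying on $w$ being large to make the $\alpha^2$-coefficient exceed it. Both yield the same final bound; the trade-off is that the paper's PSD-decomposition only needs $w\geq 1$ (and in fact discards any extra size of $w$), while your argument, as written with the crude bound $(\theta^{-1}-1)\norm{\aa}_2^2 < 1$, needs $w \geq \sqrt{2}$ (you invoke $w\geq 3$), which is comfortably met by the algorithm's setting in the nontrivial case. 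If one substitutes the exact value $(\theta^{-1}-1)\norm{\aa}_2^2 = \tfrac{2\norm{\aa}_2^2}{2\norm{\aa}_2^2+\mu}$ into your $\alpha^2$-coefficient requirement, one finds the threshold is in fact $w\geq 1$, so your method is actually no weaker than the paper's; your $w\geq 3$ is just a convenient slack. A small stylistic plus for your version: you make explicit why the minimizing eigenvector has the form $(\tilde{\yy};\alpha;-\alpha)$ with $\tilde{\yy}\perp\nulls(\AAZ)$ (orthogonality to the two families of null vectors), which the paper asserts without spelling out.
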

\begin{proof}
Let 
\[
\CC = \left( \begin{array}{ccc}
\AAZ & \aa & -\aa \\
{\bf 0} & 1 & -1
\end{array} \right).
\]
Note 
\[
\AAZtwo = \CC + \left( \begin{array}{ccc}
{\bf 0} & {\bf 0} & {\bf 0} \\
{\bf 0} & w-1 & -w+1
\end{array} \right).
\]
We can check that
\[
(\AAZtwo)^{\trp} \AAZtwo = \CC^{\trp}\CC + \left( \begin{array}{ccc}
{\bf 0} & {\bf 0} & {\bf 0} \\
{\bf 0} & w^2 - 1 & - (w^2 -1) \\
{\bf 0} & -(w^2 -1) & w^2 -1
\end{array} \right).
\]
By our setting, $w^2 - 1 \ge 0$.
The second matrix is a rank-one PSD matrix. 
Since $\AAZtwo$ and $\CC$ has same null space, 
\[
\lambda_{\min}\left( (\AAZtwo)^{\trp}\AAZtwo \right) \ge \lambda_{\min}\left(\CC^{\trp}\CC \right).
\]
It suffices to lower bound the smallest nonzero eigenvalue of $\CC^{\trp}\CC$.

Let $\lambda_k \defeq \lambda_{\min} \left( \CC^{\trp}\CC \right)$, and $\yy = \left( \tilde{\yy}; \alpha; -\alpha \right)$ be the associated eigenvector of unit length.
Let $\mu_k \defeq \lambda_{\min}\left((\AAZ)^{\trp}\AAZ \right)$, and $\xx$ be the associated eigenvector of unit length.
\begin{equation}
\begin{split}
\lambda_k & =  \tilde{\yy}^{\trp}(\AAZ)^{\trp}\AAZ\tilde{\yy} + 4\alpha \aa^{\trp}\AAZ\tilde{\yy} + 4\alpha^2 \left( \aa^{\trp}\aa + 1 \right) \\
& = \norm{\AAZ\tilde{\yy} + 2\alpha \aa}_2^2 + 4\alpha^2 \\
& \ge 4\alpha^2.
\end{split}
\label{eqn:lambdak}
\end{equation}
On the other hand,
\begin{align*}
\lambda_k & = \left( 2\alpha \sqrt{\aa^{\trp}\aa + 1} + \frac{\aa^{\trp}\AAZ\tilde{\yy}}{\sqrt{\aa^{\trp}\aa+1}} \right)^2
+ \tilde{\yy}^{\trp}(\AAZ)^{\trp}\AAZ\tilde{\yy} - \frac{\tilde{\yy}^{\trp}(\AAZ)^{\trp}\aa\aa^{\trp}\AAZ\tilde{\yy}}{\aa^{\trp}\aa+1} \\
& \ge \tilde{\yy}^{\trp}(\AAZ)^{\trp}\left( \II - \frac{\aa\aa^{\trp}}{\aa^{\trp}\aa+1} \right) \AAZ \tilde{\yy}.
\end{align*}
Take eigen-decomposition of the matrix in the middle,
\[
\lambda_k \ge \tilde{\yy}^{\trp}(\AAZ)^{\trp} \QQ \DD \QQ^{\trp} \AAZ\tilde{\yy},
\]
where $\DD := \diag \left( 1-\frac{\aa^{\trp}\aa}{\aa^{\trp}\aa +1} , 1,\ldots, 1 \right)$.
By Claim~\ref{clm:GZtoGZT_null}, $\yy \perp \nulls(\AAZtwo)$ implies 
$\tilde{\yy} \perp \nulls(\AAZ)$. 
By the Courant-Fischer Theorem,
\[
\lambda_k \ge \frac{1}{\aa^{\trp}\aa + 1} \norm{\AAZ\tilde{\yy}}_2^2
\ge \frac{1}{\aa^{\trp}\aa + 1} \mu_k \norm{\tilde{\yy}}_2^2.
\]
Together with~\eqref{eqn:lambdak}, 
\[
\lambda_k \ge \max \left\{ 4\alpha^2, \frac{\mu_k}{\aa^{\trp}\aa+1} \norm{\tilde{\yy}}_2^2 \right\}.
\]
Since $\norm{\yy}_2^2 = \tilde{\yy}_2^2 + 2\alpha^2 = 1$,
\[
\lambda_k \ge \frac{2\mu_k}{2\left( \aa^{\trp}\aa + 1 \right) + \mu_k}
\ge \frac{2\mu_k}{2\left( \norm{\AA}_{\infty}^2 m + 1 \right) + \mu_k}.
\]
This completes the proof.
\end{proof}

The above two lemmas give the following bound on the condition number of $\AAZtwo$.
\begin{lemma}
$\kappa(\AAZtwo) = O \left( \eps_1^{-1} \sqrt{m} \norm{\AAZ}_{\infty} \norm{\ccz}_2 \left( \sigma_{\max}(\AAZ) + \kappa(\AAZ) \sqrt{m} \norm{\AAZ}_{\infty} \right) \right)
$.
\label{lem:genToPower2_cond}
\end{lemma}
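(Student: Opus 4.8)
The plan is to combine, purely mechanically, the upper bound on $\lambda_{\max}\left((\AAZtwo)^{\trp}\AAZtwo\right)$ and the lower bound on $\lambda_{\min}\left((\AAZtwo)^{\trp}\AAZtwo\right)$ proved in the two preceding claims, since $\kappa(\AAZtwo)^2 = \lambda_{\max}\left((\AAZtwo)^{\trp}\AAZtwo\right) / \lambda_{\min}\left((\AAZtwo)^{\trp}\AAZtwo\right)$, where $\lambda_{\min}$ denotes the smallest nonzero eigenvalue (consistent with Definition~\ref{def:complexity_parameters}; this is legitimate because $\AAZtwo$ and $\AAZ$ share the null-space structure described in Claim~\ref{clm:GZtoGZT_null}, so the claims are naturally phrased for the smallest nonzero eigenvalue). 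Write $\mu_1 = \sigma_{\max}(\AAZ)^2 = \lambda_{\max}((\AAZ)^{\trp}\AAZ)$ and $\mu_k = \sigma_{\min}(\AAZ)^2 = \lambda_{\min}((\AAZ)^{\trp}\AAZ)$.

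First I would dispose of the degenerate case $\ccz = \zero$: then $\PPi_{\AAZ}\ccz = \zero$, the reduction returns $\zero$, and the claim is vacuous; so assume $\ccz \neq \zero$, and since $\ccz$ is integral this gives $\norm{\ccz}_2 \ge 1$. Similarly, $\AAZ$ has integer entries with a nonzero entry in each row and column, so $\sigma_{\max}(\AAZ) \ge 1$ (hence $\mu_1 \ge 1$), $\norm{\AAZ}_{\infty} \ge 1$, $\kappa(\AAZ) \ge 1$, and $m \ge 1$. Under these standing inequalities together with $\eps_1 < 1$, the term $\eps_1^{-2} m \mu_1 \norm{\AAZ}_{\infty}^2 \norm{\ccz}_2^2$ dominates the other two summands in the $\lambda_{\max}$ bound, so the numerator is $O\!\left( \eps_1^{-2} m \mu_1 \norm{\AAZ}_{\infty}^2 \norm{\ccz}_2^2 \right)$, while the denominator is at least $\tfrac{2\mu_k}{2(\norm{\AAZ}_{\infty}^2 m + 1) + \mu_k}$. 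Taking the ratio, distributing $\mu_k$ through the fraction, using $\mu_k \le \mu_1$, and absorbing the resulting $\mu_1/\mu_k$ summand into $\tfrac{\mu_1}{\mu_k}\norm{\AAZ}_{\infty}^2 m$ (valid as $\norm{\AAZ}_{\infty}^2 m \ge 1$) yields
\[
\kappa(\AAZtwo)^2 = O\!\left( \eps_1^{-2} m \norm{\AAZ}_{\infty}^2 \norm{\ccz}_2^2 \left( \kappa(\AAZ)^2 \norm{\AAZ}_{\infty}^2 m + \sigma_{\max}(\AAZ)^2 \right) \right).
\]

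Finally, taking square roots and applying $\sqrt{a^2+b^2} \le a+b$ to the inner factor gives the claimed bound
\[
\kappa(\AAZtwo) = O\!\left( \eps_1^{-1} \sqrt{m}\, \norm{\AAZ}_{\infty}\, \norm{\ccz}_2 \left( \kappa(\AAZ)\, \norm{\AAZ}_{\infty}\, \sqrt{m} + \sigma_{\max}(\AAZ) \right) \right).
\]
There is no genuine obstacle in this lemma; the only point requiring a moment's care is the bookkeeping of the standing lower bounds on $\mu_1$, $\norm{\AAZ}_{\infty}$, $\kappa(\AAZ)$, $\norm{\ccz}_2$ and $m$ used to discard the lower-order terms, and the observation that the preceding two claims are (and should be) stated for the smallest \emph{nonzero} eigenvalue, so that their ratio is exactly $\kappa(\AAZtwo)^2$.
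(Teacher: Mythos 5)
Your proof is correct and takes the same route the paper implicitly intends: the paper provides no explicit argument, stating the lemma as an immediate consequence of the two preceding claims on $\lambda_{\max}\left((\AAZtwo)^{\trp}\AAZtwo\right)$ and $\lambda_{\min}\left((\AAZtwo)^{\trp}\AAZtwo\right)$, and your write-up just carries out the ratio and simplification that the paper leaves to the reader. Your careful points — that $\kappa$ here means the nonzero condition number so the smallest-nonzero-eigenvalue reading of the second claim is what is needed, and that the standing integrality lower bounds ($\norm{\ccz}_2,\norm{\AAZ}_\infty,\sigma_{\max}(\AAZ),m\ge 1$, $\eps_1<1$) let the $\eps_1^{-2}m\mu_1\norm{\AAZ}_\infty^2\norm{\ccz}_2^2$ term dominate — are exactly the bookkeeping the paper elides.
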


\begin{proof}[Proof of Lemma~\ref{lem:genToZeroSumTwo}]
By Lemma~\ref{lem:gnToZeroSumTwo_dim} and~\ref{lem:gnToPowerTwo_dim}, we have
\[
\nnz(\AAZtwo) = O(s).
\]
The smallest nonzero entry does not change.
The largest entry of $\AAZtwo$ is at most 
\[
\max \{ O(\norm{\AA}_{\infty}), w \}
= O \left( \epsilon^{-1} n^{2} \sqrt{m}\sigma_{\max}(\AA)  \norm{\AA}_{\infty} \norm{\cca}_2 \right).
\]
By Claim~\ref{clm:para_infty_norm} and~\ref{clm:para_l2_norm}, the largest entry is upper bounded by
\[
O \left( \epsilon^{-1} s^{9/2} U^3 \right).
\]

By Lemma~\ref{lem:genToZeroSum_cond} and~\ref{lem:genToPower2_cond}, and Claim~\ref{clm:para_infty_norm} and~\ref{clm:para_l2_norm}
\[
\kappa(\AAZtwo) = O\left( \epsilon^{-1} s^8 U^3 K \right).
\]
By Lemma~\ref{lem:genToZeroSum_eps} and~\ref{lem:genToPower2_eps}, we have
\[
\eps_2^{-1} = O \left( s^{5/2} U^2 \right) \eps^{-1}.
\]
This completes the proof.
\end{proof}


\section{2D Trusses}
\label{sec:trusses}

\newcommand{\yytil}{\tilde{\yy}}
\newcommand{\Adelta}{A_{\delta}}
In this section, 
we show that the matrix $\BB$ constructed by $\algGZTtoMCT$ (Algorithm~\ref{alg:GZ2toMC2}), is a 2D Truss Incidence Matrix defined in Definition~\ref{def:trusses}.
It follows that for any function $f$, $\genCl \leq_{f} \mctwoCl$ implies $\genCl \leq_{f}
\trusstwoCl$.
We assume that the algorithm $\algGZTtoMCT$ is called on a matrix
$\AA$ with no two identical rows: if there are identical rows, these
rows can be collapsed into one row without changing the associated
normal equations, by reweighting the resulting row, similar to the
technique used in the proof of Claim~\ref{clm:ipm_distinct_rows}. Details are left
to the reader.
A key step is to show
that a 2-commodity
gadget in the reduction corresponds to a 2D truss subgraph,
which we call the 2D-truss gadget.

Without loss of generality, we let $\uu$-variables correspond to the
horizonal axis and $\vv$-variables to the vertical axis of the 2D plane.
According to Definition~\ref{def:MC2} and~\ref{def:trusses}:
\begin{enumerate}
\item an equation $\uu_i - \uu_j = 0$ in a 2-commodity linear system corresponds to a horizontal edge in the 2D plane;
\item an equation $\vv_i - \vv_j = 0$ in a 2-commodity linear system corresponds to a vertical edge in the 2D plane;
\item an equation $\uu_i - \vv_i - (\uu_j - \vv_j) = 0$ in a 2-commodity linear system corresponds to a diagonal edge in the 2D plane.
\end{enumerate}
Note that our reduction here heavily relies on the
ability to choose arbitrary weights.
In particular, the weights on the elements are not
related at all with the distances between the
corresponding vertices.

Our strategy for picking the coordinates of the 
vertices of the constructed 2D truss is the following:
we first pick the coordinates of the original $n$ vertices randomly, and then determine the coordinates of the new vertices constructed in the reduction to satisfy all the truss equations.

For the $n$ original vertices, we pick their $\uu$-coordinates arbitrarily and pick their $\vv$-coordinates randomly.
We pick 
an $n$-dimensional random vector $\yy$
 uniformly distributed on the $n$-dimensional sphere centered at the
 origin and with radius $R =\norm{\AA}_1 n^{10}$.
We then round each entry of $\yy$ to have precision $\delta =
10^{-10}$, so that the total number of bits used to store an entry is
at most $O(\log(n \norm{\AA}_1) )$.
Let $\yytil$ be the vector after rounding.
We assign the $\vv$-coordinate of the $i$th vertex to be the $i$th entry of $\yytil$.

We then pick the coordinates of the new vertices in the order they are created.
Note that each time we replace two vertices in the current equations,
say $\ss^{j_1}, \ss^{j_2}$, whose coordinates have already been
determined, we create a 2D truss gadget with 7 new vertices, say
$\ss^{t}, \ss^{t+1}, \ldots, \ss^{t+6}$ (See Algorithm~\ref{alg:MC2gadget}
$\algMCToGZGadget$ for the construction.).
According to the construction of this gadget, the new vertices
$\ss^{t+1}, \ldots, \ss^{t+6}$ only appear in this single gadget, whose coordinates do not affect other vertices.
Figure~\ref{fig:trussGadget} is the corresponding subgraph which satisfies all the equations in the 2D truss gadget.
Note the two triangles $(\ss^{t+3}, \ss^{t+5}, \ss^{t+6})$ and $(\ss^{t+3}, \ss^{t+4}, \ss^{t+5})$ need to be isosceles right triangles, 
which implies $\vv_t = (\vv_{j_1} + \vv_{j_2})/2$.
Note also that we can assign $\uu$-coordinates to the new vertices
which are not between the $\uu$-coordinates of $\ss^{j_1}$ and
$\ss^{j_2}$.
In fact, using an appropriate choice of $\uu$-coordinates
and edge weights, we can always place $\ss^{t}, \ss^{t+1}, \ldots, \ss^{t+6}$ to get the
desired equations, provided $\vv_{j_1} \neq \vv_{j_2}$, which we
later will argue holds with high probability using
Lemma~\ref{lem:trussCoord}.
First, however, we state and prove Lemma~\ref{lem:trussCoord}.


\begin{figure}[ht]

\begin{center}
\begin{tikzpicture}
\draw (-1,0) -- (1,0);
\draw (-1,0) -- (-1,2);
\draw (-1,0) -- (1,-2);
\draw (-1,0) -- (-1,-2);
\draw (1,2) -- (1,0);
\draw (-1,2) -- (-3,2);
\draw (1,0) -- (-1,2);
\draw (1,0) -- (-3,0);
\draw (1,0) -- (1,-2);
\draw (1,-2) -- (3,-2);
\filldraw[black] (-1,2) circle (2pt) node[anchor=south] {$\ss^{t+6}$};
\filldraw[black] (-1,0) circle (2pt) node[anchor=south east] {$\ss^{t+3}$};
\filldraw[black] (-1,-2) circle (2pt) node[anchor=north] {$\ss^{t+1}$};
\filldraw[black] (3,-2) circle (2pt) node[anchor=north] {$\ss^{j_1}$};
\filldraw[black] (1,2) circle (2pt) node[anchor= south] {$\ss^{t+2}$};
\filldraw[black] (-3,2) circle (2pt) node[anchor=south] {$\ss^{j_2}$};
\filldraw[black] (1,0) circle (2pt) node[anchor=west] {$\ss^{t+5}$};
\filldraw[black] (-3,0) circle (2pt) node[anchor=south] {$\ss^t$};
\filldraw[black] (1,-2) circle (2pt) node[anchor=north] {$\ss^{t+4}$};
\end{tikzpicture}
\end{center}

\caption{Geometric realization of the mutlicommodity
flow gadget generated by as a truss matrix
}

\label{fig:trussGadget}

\end{figure}

%


\begin{lemma}
Let $\aa \in \mathbb{R}^n$ be a fixed vector such that $-2 \le \aa_i
\le 2, \forall i \in [n]$ and $\aa^\top \one = 0$, and $\aa \neq \zero$.
Let $\yytil$ be a vector picked as above.
Then,
\[
\Pr \left(  \aa^{\trp} \yytil =0 \right) \le \frac{2\delta n^2}{\norm{\aa}_2 R}.
\]
\label{lem:trussCoord}
\end{lemma}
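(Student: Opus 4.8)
Here is the plan.

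\medskip
\noindent\textbf{Proof plan.} The plan is to compare $\yytil$ with the un-rounded vector $\yy$ at the cost of a small additive slack, and then to reduce the statement to a one–dimensional anti-concentration bound for a single coordinate of a uniformly random point on a sphere. First I would set $\ee \defeq \yy - \yytil$, the (deterministic, $\yy$-dependent) rounding error, which satisfies $\norm{\ee}_{\infty} \le \delta$. Since $\abs{\aa_i}\le 2$ for every $i$, we have $\norm{\aa}_1 \le 2n$, so on the event $\{\aa^{\trp}\yytil = 0\}$ we get $\abs{\aa^{\trp}\yy} = \abs{\aa^{\trp}\ee} \le \norm{\aa}_1\norm{\ee}_{\infty} \le 2n\delta$, hence
\[
\Pr\!\left(\aa^{\trp}\yytil = 0\right) \;\le\; \Pr\!\left(\abs{\aa^{\trp}\yy} \le 2n\delta\right).
\]
(The hypotheses $\aa^{\trp}\one = 0$ and $\aa\neq\zero$ play no role in this reduction; they are exactly what makes the lemma applicable in the truss construction, where $\aa$ is the difference of two distinct convex combinations of standard basis vectors.) If $2n\delta \ge \norm{\aa}_2 R$ then the claimed bound $2\delta n^2/(\norm{\aa}_2 R)$ is already at least $n\ge 1$ and there is nothing to prove, so I would assume $c \defeq 2n\delta/\norm{\aa}_2 < R$.

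Next I would invoke rotational invariance of the uniform distribution on the radius-$R$ sphere in $\rea^n$: the random variable $\aa^{\trp}\yy/\norm{\aa}_2$ has the same law as the first coordinate $y_1$ of $\yy$, so it remains to bound $\Pr(\abs{y_1}\le c)$. The marginal density of $y_1$ on $(-R,R)$ is $f(a) = Z^{-1}(R^2-a^2)^{(n-3)/2}$ with $Z = R^{n-2}\int_{-1}^{1}(1-t^2)^{(n-3)/2}\,dt$; for $n\ge 3$ this is maximized at $a=0$, and the crude estimate $\int_{-1}^{1}(1-t^2)^{(n-3)/2}\,dt \ge \tfrac{2}{\sqrt n}(1-\tfrac1n)^{n/2} \ge \tfrac{1}{\sqrt n}$ gives $f(0)\le \sqrt n/R$. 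Therefore $\Pr(\abs{y_1}\le c) \le 2c\,f(0) \le 4n^{3/2}\delta/(\norm{\aa}_2 R)$, which is at most $2\delta n^2/(\norm{\aa}_2 R)$ as soon as $n\ge 4$; the two remaining cases $n\in\{2,3\}$ I would dispose of by evaluating $\Pr(\abs{y_1}\le c)$ directly (for $n=2$ it equals $\tfrac{2}{\pi}\arcsin(c/R)\le c/R$, and for $n=3$ the density is constant equal to $1/(2R)$). This yields the claimed bound for all relevant $n$.

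The computation is entirely elementary; the only mildly delicate points are keeping the polynomial factor down to $n^2$ — for which there is comfortable slack — and the very small values of $n$, where the coordinate density is no longer unimodal at the origin and must be treated by hand. The one genuine ingredient, and the step I expect to be worth spelling out carefully, is the $O(\sqrt n/R)$ anti-concentration at scale $R$ of a single coordinate of a uniform point on the radius-$R$ sphere; everything else is the rounding comparison plus rotational symmetry.
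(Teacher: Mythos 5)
Your proposal is correct and follows essentially the same route as the paper: both first pass from $\yytil$ to the un-rounded $\yy$ at a cost of $2n\delta$, then use rotational invariance to reduce to anti-concentration of a single coordinate of a uniform point on the radius-$R$ sphere. The paper bounds the probability by a ratio of surface areas (a cylinder bound over the full sphere, with the resulting Gamma-function ratio), while you express the same quantity via the marginal density of $y_1$ and bound $f(0)$ directly; these are the same estimate in two notations, and your explicit handling of $n\in\{2,3\}$ is merely a slightly more careful treatment of the small-$n$ cases that the paper elides.
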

\begin{proof}
Let $\Delta \defeq \yytil - \yy$. Clearly, $-\delta \le \Delta_i \le \delta, \forall i \in [n]$.
\[
\aa^\top \yytil = \aa^\top (\yy + \Delta) \le \aa^\top \yy + \sum_{i \in [n]} \abs{\aa_i} \abs{\Delta_i} 
\le \aa^\top \yy + 2\delta n.
\]
Similarly, $\aa^\top \yytil \ge \aa^\top \yy - 2 \delta n$.
Thus,
\[
\Pr \left( \aa^{\trp} \yytil = 0 \right) \le \Pr \left( \abs{\aa^{\trp} \yy} \le 2 \delta n \right).
\]
Since the distribution of $\yy$ is rotation invariant, we assume without loss of generality $\aa = (\norm{\aa}_2,0,\ldots, 0)$.
Let $A$ be the area of the $n$-dimensional sphere. 
\[
A = \frac{2 \pi^{(n+1)/2}}{\Gamma\left( \frac{n+1}{2} \right)} \cdot R^n.
\]
Let $A_{\delta}$ be the area of $\{ \yy: \norm{\yy}_2 = R,\abs{\yy_1} \le 2\delta n / \norm{\aa}_2 \}$. Then,
\[
\Adelta \le \frac{2 \pi^{n/2}}{\Gamma\left( \frac{n}{2} \right)} \cdot R^{n-1} \cdot \frac{2\delta n}{\norm{\aa}_2}.
\]
Thus, (assume $n$ is even, the case of odd $n$ can be checked similarly)
\begin{align*}
\Pr \left( \abs{\aa^{\trp} \yy} \le \delta \right)
& = \frac{\Adelta}{A} \\
& \le \frac{2\delta n }{\norm{\aa}_2 \sqrt{\pi} R} \cdot \frac{\Gamma(\frac{n+1}{2})}{\Gamma(\frac{n}{2})} \\
& = \frac{2\delta n }{\norm{\aa}_2 R} \cdot \frac{n! \sqrt{\pi}}{ n!! (n-2)!!} \\
& \le \frac{2\delta n^2}{\norm{\aa}_2 R}.
\end{align*}
This completes the proof.
\end{proof}

By our construction of the truss, for each vertex, its
$\vv$-coordinate can be written as a fixed convex combination of
$\yytil$, say $\cc^\top \yytil$ in which $\cc^\top \one = 1$ and
$\cc_i \ge 0, \forall i \in [n]$.
Note that when algorithm $\algGZTtoMCT$ is applied to a matrix $\AA$,
it processes the $j$th row in $k$ iterations where by
Lemma~\ref{lem:GZToMC2NNZ} we have
 $k \leq \log \norm{\AA_j}_1$.
Let $p \defeq \log \norm{\AA_j}_1$.
Given how the convex combination specified by $\cc$ is formed, it
follows that $2^{p} \cc$ is an integer vector.

Next we argue that when two variables are chosen for pairing by the 
$\algGZTtoMCT$ algorithm as it processes some row $\AA_j$, these two
variables will have their $\vv$-coordinates represented as convex
combinations $\cc^\top \yytil$ and $\dd^\top \yytil$ where crucially
$\cc \neq \dd$. This ensures that $2^{p} (\cc-\dd)$ is a non-zero integer
vector,  and hence $\norm{\cc-\dd}_2 \geq 2^{-p} = 1/\norm{\AA_j}_1$.
This follows from stronger claim stated below, which we prove later.

\begin{claim}
\label{clm:trussConvexBits}
  Suppose algorithm $\algGZTtoMCT$ is processing some row
  $\AA_j$. 
  Let $V^{l} $ be the set of variables with non-zero coefficients in
  the main equation $\mathcal{A}_j$
  at the $l$th iteration of the while-loop in Line~\ref{lin:varReplace}
  of $\algGZTtoMCT$.
  Let $S^{l}$ be the set of associated vertices.
  Consider two arbitrary vertices $\ss,\tt\in S^l$, and let $\cc,\dd \in
  \rea^n$ be the non-negative vectors with  $\cc^\top \one = \dd^\top
  \one = 1$ s.t.  the $\vv$-coordinates of $\ss$ and $\tt$ represented as convex
combinations are $\cc^\top \yytil$ and $\dd^\top \yytil$ respectively.
For every $i \in [n]$, we view entries $\cc_i$ and $\dd_i$ as fixed point binary numbers: $\cc_i = \alpha^i_0 . \alpha^i_1 \alpha^i_2
\ldots \alpha^i_p $ and $\dd_i = \beta^i_0 . \beta^i_1  \beta^i_2
\ldots \beta^i_p$, or equivalently $\cc_i = \sum_{k = 0}^p \alpha^i_k
2^{-k} $ and $\dd_i = \sum_{k = 0}^p \beta^i_k
2^{-k} $.
Then there is no index $k$ s.t. $1 = \alpha^i_k = \beta^i_k$,
i.e. there is no index where the $k$th bit is 1 in both strings.
\end{claim}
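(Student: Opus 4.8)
Here is my plan for proving Claim~\ref{clm:trussConvexBits}.

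\medskip

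\textbf{Plan.} The plan is to prove a strengthened invariant by induction on the while-loop iteration index $r$ of Line~\ref{lin:varReplace} (I index these iterations $r=0,1,2,\dots$ as in the algorithm; the iteration called $l$ in the statement is one of these, up to a harmless off-by-one that does not affect anything). For a variable $\uu$ that is ever present in $\mathcal{A}_j$, let $\cc(\uu)\in\rea^n$ be the non-negative vector with $\cc(\uu)^{\trp}\one=1$ that expresses $\uu$'s $\vv$-coordinate as $\cc(\uu)^{\trp}\yytil$; so $\cc(\uu_i)=\ee_i$ for an original variable $\uu_i$, and whenever the gadget (Algorithm~\ref{alg:MC2gadget}) merges a pair $\uu_{j_k},\uu_{l_k}$ into a new variable $\uu_t$ we have $\cc(\uu_t)=\tfrac12\bigl(\cc(\uu_{j_k})+\cc(\uu_{l_k})\bigr)$, because in the truss realization the isosceles right triangles force $\vv_t=\tfrac12(\vv_{j_k}+\vv_{l_k})$ (as noted just before the Claim). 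Writing $B(\cc)$ for the set of pairs $(i,k)$ such that the $k$-th bit of $\cc_i$ is $1$, the invariant at the start of iteration $r$ is: (a) every present variable is either an original variable $\uu_i$ with $B(\cc(\uu_i))=\{(i,0)\}$, or a variable created in iteration $r-1$, in which case every bit position occurring in $B(\cc(\uu))$ is $\ge 1$; and (b) any two distinct present variables $\uu\ne\uu'$ satisfy $B(\cc(\uu))\cap B(\cc(\uu'))=\emptyset$. Part (b), read at iteration $l$, is exactly the Claim. Two easy reductions dispatch side issues: all bit positions that ever arise are at most the number of iterations, which is at most $\log\norm{\AA_j}_1=p$ by Lemma~\ref{lem:GZToMC2NNZ}, so the $p$-bit fixed-point representation in the statement loses nothing; and variables arising in the $s=+1$ and $s=-1$ phases have $\cc$-supports inside the disjoint index sets $\mathcal{I}^{+1}$ and $\mathcal{I}^{-1}$, so cross-phase pairs are trivially bit-disjoint and it suffices to run the induction within one phase.

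\medskip

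\textbf{The induction.} First I would record two routine facts. (i) If $\uu_a,\uu_b$ are distinct variables present together, then $B(\cc(\uu_a))\cap B(\cc(\uu_b))=\emptyset$ by (b), so the binary addition $\cc(\uu_a)+\cc(\uu_b)$ produces no carries; hence $B(\cc(\uu_t))$ is precisely $B(\cc(\uu_a))\cup B(\cc(\uu_b))$ with every position shifted up by one, which in particular places every bit of $\uu_t$ at a position $\ge 1$. (ii) A variable created in iteration $r-1$ has coefficient $\pm 2^{r}$ in $\mathcal{A}_j$, hence is paired during iteration $r$ and then leaves $\mathcal{A}_j$ (last bullet of Lemma~\ref{lem:GZToMC2NNZ}); consequently the only freshly created variables alive at the start of iteration $r+1$ are those created in iteration $r$. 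The base case $r=0$ is immediate: only original variables are present, each carrying the single bit $(i,0)$, and distinct indices give disjoint bit-supports. For the inductive step, assume the invariant at iteration $r$. By (ii), at the start of iteration $r+1$ the present variables are those created in iteration $r$ together with the surviving original variables; part (a) for iteration $r+1$ then follows from (i). For part (b), take distinct present $\uu\ne\uu'$: if both are original, this is the base-case argument; if both were created in iteration $r$, they come from two disjoint pairs (the iteration-$r$ pairing is a partition of $\mathcal{I}_{odd}^{s}$, possible because the algorithm maintains the power-of-two structure that keeps $|\mathcal{I}_{odd}^{s}|$ even), so the four constituents are pairwise bit-disjoint by the hypothesis and hence the two shifted unions are disjoint; and if one is original and one was created in iteration $r$, the original's unique bit sits at position $0$ while, by (i), every bit of the new variable sits at a position $\ge 1$, so they cannot coincide.

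\medskip

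\textbf{Main obstacle.} The step I expect to be the real content — and the one I would check most carefully — is the last ``one original, one new'' case, because it silently includes the delicate situation in which the surviving original variable is itself one of the two variables that were just merged to create the new variable (an original variable can be paired at iteration $r$ and still survive into iteration $r+1$ if it retains higher-order bits in its coefficient). What rescues the argument is exactly that an original variable always carries the single bit $(i,0)$ while halving pushes every bit strictly past the integer place, so a vertex can never share a bit with the vertex it was built from. Every other ingredient — which variables are alive at a given iteration, that the iteration-$r$ pairing is a partition, and the power-of-two structure ensuring pairing always succeeds — is already available from the algorithm's analysis and Lemma~\ref{lem:GZToMC2NNZ}, so the remaining work is bookkeeping.
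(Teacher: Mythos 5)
Your proposal is correct and follows essentially the same approach as the paper's proof: induction on the while-loop iteration, using that the inductive hypothesis forbids carries when two parent bit-strings are averaged (so a child's bit set is exactly the shifted union of its parents'), and that the two parent-pairs producing two distinct children are disjoint because each variable is paired at most once per iteration. The only difference is presentational — the paper argues by contradiction, tracing a colliding bit back to two of the four pairwise-bit-disjoint constituents, whereas you prove a slightly strengthened invariant and argue disjointness of the shifted unions directly; these are two phrasings of the same inductive step.
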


These two vertices have same $\vv$-coordinate if and only if $(\cc - \dd)^\top \yytil = 0$.
Let $\aa \defeq \cc - \dd$.
Then, $-2 \le \aa_i \le 2, \forall i \in [n]$, 
$\aa^\top \one = 0$, and 
\[
\norm{\aa}_2 \geq 1/\norm{\AA_j}_1.
\]
By Lemma~\ref{lem:trussCoord}, 
\[
\Pr \left( \cc^{\top} \yytil = \dd^{\top} \yytil \right)
\le \frac{2\delta n^2 \norm{\AA_j}_1}{R}.
\]
By Lemma~\ref{lem:GZToMC2NNZ}, the total number of
the vertices in the truss is at most
\[
O\left(n^2 \log n\right).
\]
By a union bound, the probability that there exist two different vertices with same $\vv$-coordinate is at most 
\[
\frac{2\delta n^2 \norm{\AA}_1}{R} \cdot O\left( n^2 \log n \right)^2
= O \left( \frac{\log^2 n}{n^4} \right).
\]

\begin{proof}[Proof of Lemma~\ref{lem:McTwoStrictToTruss}]
Since the linear system for 2D trusses is the same as the linear system for 2-commodity, all complexity parameters of these two linear systems are the same.
\end{proof}

\begin{proof}[Proof of Claim~\ref{clm:trussConvexBits}]
  We prove the claim by induction. 
  Our induction hypothesis is simply that the claim holds in round $l$.

  Note that by Lemma~\ref{lem:GZToMC2NNZ}
  every time a new vertex is created (during the $l$th iteration of
  while-loop in Line~\ref{lin:varReplace}), it is always paired in the
  following iteration (iteration $l+1$) and then disappears (i.e. has zero coefficient)
  in the main equation in all following iterations.
 
  Observe that the convex combination vector $\aa$ for each vertex that corresponds
  to an original variable $i$ is has $\aa_i = 1$ and $\aa_h = 0$ for
  all $h \neq i$. 
  This proves the induction hypothesis for the base case of the
  variables in the main equation $\mathcal{A}_j$ before the first
  iteration of the while-loop (i.e. $l=0$).

  Suppose $\cc$,$\dd$ are the convex combination vectors for two
  variables that exist in some round $l$.
  Assume the induction hypothesis holds for round $l-1$.
  Write  the binary strings for the $i$th of both vectors as $\cc_i = \alpha^i_0 . \alpha^i_1 \alpha^i_2
  \ldots \alpha^i_p $ and $\dd_i = \beta^i_0 . \beta^i_1  \beta^i_2
  \ldots \beta^i_p$, or equivalently $\cc_i = \sum_{k = 0}^p
  \alpha^i_k 2^{-k} $ and $\dd_i = \sum_{k = 0}^p \beta^i_k 2^{-k} $.
  Suppose for the sake of contradiction that there exists some $k$
  s.t. $\alpha^i_k = \beta^i_k = 1$.
  Trivially, it cannot be the case that the such a collision occurs if
  either variable is an original variable.
  So both variables must be new variables.
  The bit string for entry $\cc_i$ is created by averaging two bit
  strings of variables from the main equation in round $l-1$, say 
  $\eta^i_0 . \eta^i_1 \eta^i_2 \ldots \eta^i_p$ and  $\gamma^i_0 . \gamma^i_1 \gamma^i_2
  \ldots \gamma^i_p$.
  Similarly, bit string for entry $\dd_i$ is created by averaging two bit
  strings of variables from the main equation in round $l-1$, say 
  $\theta^i_0 . \theta^i_1 \theta^i_2 \ldots \theta^i_p$ and $\sigma^i_0
  . \sigma^i_1 \sigma^i_2 \ldots \sigma^i_p$.
  Note that each variable can only be paired once in each iteration,
  so the four bit strings must come from distinct variables in round
  $l$.
  
  $\alpha^i_k = 1$ requires that exactly one of the following
  conditions is true:
  \begin{enumerate}
  \item 
\label{enu:caseGamma}$\gamma^i_{k-1} = 1$
  \item
\label{enu:caseDelta}
$\eta^i_{k-1} = 1$
  \item
    \label{enu:caseCarry} A ``carry'' occurred when adding strings
    $\gamma^i_k \gamma^i_{k+1} \ldots \gamma^i_p$ and 
    $\eta^i_k \eta^i_{k+1} \ldots \eta^i_p$.
  \end{enumerate}
But, Case~\ref{enu:caseCarry} immediately leads to a contradiction, as
a carry can only occur when there exists some bit position $g$
s.t. $\gamma^i_g = \eta^i_g$. But this is false by the induction
hypothesis.
Thus we must be in either Case~\ref{enu:caseGamma} or Case~\ref{enu:caseDelta}.
By similar logic, we can conclude from $\beta^i_k = 1$ that exactly
one of the following must be true:  either $\theta^i_{k-1} = 1$ or $\sigma^i_{k-1} = 1$.
All together, we have concluded that exactly two of the four bits
$\eta^i_{k-1},\gamma^i_{k-1},\theta^i_{k-1},$ and
$\sigma^i_{k-1}$ must be set to 1.
This contradicts the induction hypothesis.
Having established a contradiction whenever the induction hypothesis
fails at step $l$, we have shown that it holds at this step.
\end{proof}


\newcommand{\inc}{\NN}
\newcommand{\coef}{\MM}
\newcommand{\cp}{\zz} 
\newcommand{\cchat}{\hat{\cc}}

\section{Connections with Interior Point Methods}
\label{sec:ipm}

In this section, we discuss how applications in 
scientific computing and combinatorial optimization
produce the linear systems that we show are hard
to solve.
We first give a brief overview of interior point methods,
with focus on how they generate linear systems
in Section~\ref{subsec:IPM}.
Then we formalize the matrices that  interior point
methods produce when run on 2-commodity flow
matrices~\ref{sec:IPMMC2} and isotropic total variation
minimization~\ref{subsec:IPMIsotropicTV}.

\subsection{Brief Overview of Interior Point Methods}
\label{subsec:IPM}

Interior point methods~\cite{Wright97,Ye97:book,Nemirovski04,
BoydV04:book,DaitchS08,LeeS14,LeeS15}
can be viewed as ways of solving convex optimization problems via
a sequence of linear systems.
For simplicity, we choose the log-barrier based
interpretation from Chapter 11 of the book by Boyd and
Vandenberghe as our starting point.
The main idea is to represent a convex optimization problem
with constraints as a sequence of
linear programs with terms called barrier functions added to the objective.
Specifically, turning the problem:
\begin{align*}
\min_{\yy} \qquad & f\left( \yy \right)\\
\text{subject to:} \qquad & \coef \yy = \bb\\
& \yy \geq 0
\end{align*}
into the equality-constrained optimization problem:
\begin{align*}
\min_{\yy} \qquad & t \cdot f\left( \yy \right)
	- \sum_{i} \log\left( \yy_i \right)\\
\text{subject to:} \qquad & \coef \yy = \bb
\end{align*}
for a parameter $t$ that is gradually increased (by factors
of about $1+n^{-1/2}$ throughout the course of the algorithm).
The solution of the above optimization problem converges to an optimal solution of the original linear programming, as $t$ goes to infinity.
Between these increase steps, the algorithm performs
Newton steps on this log barrier objective, which
when combined with the equality constraint $\coef \yy = \bb$
requires solving the problem:
\begin{align*}
\min_{\Delta \yy} \qquad & -\gg\left( \yy \right)^{\trp} \Delta \yy\\
\text{subject to:} \qquad & \coef \Delta \yy = {\bf 0}\\
& \norm{\Delta \yy}_{\HH\left( \yy \right)} \leq 0.1
\end{align*}
which is to maximize the projection along the gradient
subject to the second order term being at most $0.1$
and staying in the null space.
This can in turn be interpreted as a least squares problem,
and solving the linear system:
\begin{equation}
\coef\HH(\yy)^{-1} \coef^{\trp} \xx = -t \coef\HH(\yy)^{-1} \gg(\yy).
\label{eqn:IPMNewton}
\end{equation}

%

\subsection{2-Commodity Flow}
\label{sec:IPMMC2}

\newcommand{\yyr}{\yy^{r}}
\newcommand{\yyf}{\yy^{1}}
\newcommand{\yyg}{\yy^{2}}
\newcommand{\ddf}{\dd^{1}}
\newcommand{\ddg}{\dd^{2}}

We now show that solving 2-commodity flow problems using interior
point methods as described in Subsection~\ref{subsec:IPM} can
lead to any system in the class $\mctwostrictCl$.
There are also many variants of the multicommodity flow problem~\cite{Madry10a},
and we work with the minimum cost version due to it being the
most general.
\begin{definition}[Min-cost 2-commodity flow problem]
Given a directed graph $G = (V, E)$ with $n$ vertices and $m$ edges, a positive edge-capacity vector $\cp \in \mathbb{R}^{m}$, a
positive edge-cost vector $\cc \in \mathbb{R}^{2m}$, and two vertex-demand vectors $\ddf, \ddg \in \mathbb{R}^{n}$.
The goal is to compute two flows $\yyf, \yyg$ such that, 
\begin{enumerate}
\item $\yyf$ satisfies the demand $\ddf$, and $\yyg$ satisfies the demand $\ddg$,
\item for each edge $e\in E$, the sum of the two flows on $e$ is no larger than the edge capacity $\cp_e$, and
\item the total cost of the two flows is minimized.
\end{enumerate}
\end{definition}
To formulate this as a linear program, we 
let $\yy = \left( \yyf; \yyg \right)$ be the two flows, and $\inc$ be the edge-vertex incidence matrix of graph $G$. 
\begin{align*}
\min \quad  &  \cc^{\trp} \yy \\
s.t. \quad & \inc^{\trp}\yyf = \ddf \\
& \inc^{\trp}\yyg = \ddg \\
& \yyf,\yyg, \cp - \yyf - \yyg \ge {\bf 0}
\end{align*}
Write this linear programming as the following minimization problem with a logarithmic barrier function,
\begin{align*}
\min \quad  & \cc^{\trp} \yy - \frac{1}{t} \sum_{i\in [m]}  \log \yyf_i + \log \yyg_i + \log(\cp_i - \yyf_i - \yyg_i)  \\
s.t. \quad & \inc^{\trp}\yyf = \ddf \\
& \inc^{\trp}\yyg = \ddg 
\end{align*}
where $t > 0$ is a parameter.
%
\begin{definition}
  We say a linear system $\BB^{\trp} \BB \xx = \BB^{\trp} \cc$ is a
  \emph{Minimum Cost 2-commodity Flow IPM Linear System} if it can be
  obtained from as an instance of the Newton-Step Linear System of
  Equation~\eqref{eqn:IPMNewton} for some Min-cost 2-commodity flow problem.
\end{definition}
The class of Minimum Cost 2-commodity Flow IPM Linear Systems
is appears more restrictive than $\mctwostrictCl$, but as we will see,
it is essentially equivalent, and still sufficiently expressive that
any linear system can be reduced to it.
The main result that we will sketch in this section is:
\begin{lemma}
For any linear system $\AA \xx = \cc$ (with error parameter $\eps$) with polynomially bounded sparse
parameter complexity, there exists an efficient reduction to a Minimum
Cost 2-commodity Flow IPM Linear System.
\label{lem:imp}
\end{lemma}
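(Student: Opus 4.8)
\textbf{Proof plan for Lemma~\ref{lem:imp}.}

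The plan is to show that the class of Minimum Cost 2-commodity Flow IPM Linear Systems essentially coincides with $\mctwostrictCl$, and then invoke Theorem~\ref{thm:genToMc:param} (specifically the chain of reductions culminating in $\genCl \leq_f \mctwostrictintCl$). First I would write out the Newton-step system of Equation~\eqref{eqn:IPMNewton} explicitly for the log-barrier formulation given above. Here the constraint matrix $\coef$ has the block form $\begin{pmatrix} \inc^{\trp} & 0 \\ 0 & \inc^{\trp} \end{pmatrix}$ acting on $\yy = (\yyf;\yyg)$, and the Hessian $\HH(\yy)$ of the barrier $-\sum_i \log\yyf_i + \log\yyg_i + \log(\cp_i - \yyf_i - \yyg_i)$ is, per edge $i$, a $2\times 2$ block of the form $\mathrm{diag}(\yyf_i^{-2}, \yyg_i^{-2}) + (\cp_i - \yyf_i - \yyg_i)^{-2}\,\one\one^{\trp}$, i.e. $\begin{pmatrix} a_i + c_i & c_i \\ c_i & b_i + c_i\end{pmatrix}$ with $a_i,b_i,c_i > 0$. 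The key computational step is to verify that $\coef \HH(\yy)^{-1}\coef^{\trp}$, after the natural identification of the two flow-coordinates at each vertex with the two labels $\uu_i,\vv_i$, is exactly a matrix of the form
\[
\LL^1 \otimes \begin{pmatrix} 1 & 0 \\ 0 & 0\end{pmatrix}
+ \LL^2 \otimes \begin{pmatrix} 0 & 0 \\ 0 & 1\end{pmatrix}
+ \LL^{1+2}\otimes \begin{pmatrix} 1 & -1 \\ -1 & 1\end{pmatrix}
\]
with all three Laplacians supported on the same edge set (the edges of $G$) and with strictly positive weights, i.e. a matrix in $\mctwostrictCl$. The weights come out as explicit positive rational functions of $a_i, b_i, c_i$ obtained by inverting the $2\times 2$ Hessian blocks; one checks the $2\times 2$ inverse is $\frac{1}{\det}\begin{pmatrix} b_i+c_i & -c_i \\ -c_i & a_i+c_i\end{pmatrix}$ and reads off the coefficients of $1,0;0,0$, $0,0;0,1$ and $1,-1;-1,1$. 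Since $a_i,b_i,c_i>0$ these three coefficients are all strictly positive, giving the ``strict'' property.

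Conversely — and this is the direction that actually gives the Lemma — I would show that any $\BBtwostrict \in \mctwostrictintCl$ arises this way: given target edge weights $w^1_i, w^2_i, w^{1+2}_i > 0$, solve the $2\times 2$ system for $a_i,b_i,c_i>0$ (it is invertible and the solution is positive whenever the $w$'s are positive), then realize $a_i,b_i,c_i$ as the barrier-Hessian entries at some interior point $(\yyf_i,\yyg_i)$ with $\yyf_i+\yyg_i<\cp_i$ — this just requires choosing $\yyf_i = a_i^{-1/2}$, $\yyg_i = b_i^{-1/2}$, $\cp_i = \yyf_i+\yyg_i + c_i^{-1/2}$, and a cost vector / parameter $t$ making this point the current IPM iterate (the right-hand side $-t\coef\HH^{-1}\gg$ can be matched by choosing $\cc$ appropriately, and demands $\ddf,\ddg$ are free). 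One must also handle the cosmetic gap that the IPM system is over the graph $G$ while a general $\mctwostrictintCl$ instance may name vertices/edges arbitrarily — this is just a relabelling — and the fact stated earlier (``a brief explanation of why this is the case in Section~\ref{sec:ipm}'') that collapsing identical rows via reweighting, as in Claim~\ref{clm:ipm_distinct_rows}, does not change the normal equations. Then Lemma~\ref{lem:imp} follows by composing: $\AA\xx=\cc$ with polynomially bounded sparse parameter complexity reduces to $\mctwostrictintCl$ by Theorem~\ref{thm:genToMc:param}, and $\mctwostrictintCl \subseteq$ (Minimum Cost 2-commodity Flow IPM Linear Systems) by the realization just described, with only polynomial blow-up in all the complexity parameters.

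The main obstacle I expect is controlling the \emph{sparse parameter complexity} through the realization step: the entries $a_i,b_i,c_i$, the capacities $\cp_i$, the costs $\cc$, and the barrier parameter $t$ must all remain polynomially bounded (in bit-length) and bounded away from zero, and the condition number of $\coef\HH(\yy)^{-1}\coef^{\trp}$ must stay polynomially bounded, given that the target $\mctwostrictintCl$ matrix already has polynomially bounded $\kappa$, $\norm{\cdot}_{\max}$, and $1/\anzmin$. This requires showing the $2\times 2$ inversions do not blow up — which holds because the relevant determinants are bounded below by a polynomial in the reciprocal of the largest weight and the weights themselves are polynomially bounded — and checking that the interior point $(\yyf_i,\yyg_i,\cp_i)$ stays a valid strictly-feasible point with polynomially bounded bit-length. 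A secondary, more bookkeeping-heavy obstacle is that the definition of ``Minimum Cost 2-commodity Flow IPM Linear System'' ties the right-hand side $\BB^{\trp}\cc$ to the IPM gradient $-t\coef\HH(\yy)^{-1}\gg(\yy)$; one must verify that for the reduced instance the prescribed right-hand side is indeed expressible in this form for a legitimate choice of cost vector $\cc$ and parameter $t$, which amounts to solving a linear system for $\cc$ and checking positivity of $\cc$ can be arranged (adding a large multiple of $\one$ to the cost shifts the gradient harmlessly since $\coef$ annihilates the relevant direction only on the constraint manifold — this needs a short argument). Since the statement only asks for an \emph{efficient reduction} and the earlier sections already supply all the hard analytic content (the $\genZtwoCl \to \mctwoCl$ gadget reduction and the null-space-changing $\mctwoCl \to \mctwostrictCl$ reduction), I expect the proof to be essentially a careful but routine verification once the $2\times 2$ block structure is written down.
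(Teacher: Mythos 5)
You have the right skeleton: write the block Hessian per edge, invert the $2\times 2$ block to read off three strictly positive edge weights, match the coefficient matrix to a matrix of the form in Definition~\ref{def:MC2Strict}, and solve for $(\yyf_i,\yyg_i,\yyr_i)$ in terms of the target weights $w_{(i,1)},w_{(i,2)},w_{(i,1+2)}$ (this is actually a $3\times 3$ system, not $2\times 2$ as you wrote, but the point stands). The right-hand side is also matched by choosing the cost vector $\cc$, as you say, and the demands are determined by $\ddf = \inc^{\trp}\yyf$, $\ddg = \inc^{\trp}\yyg$. So the algebra identifying the Schur complement of the IPM Newton step with a strict 2-commodity Laplacian is what the paper does.

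The genuine gap is your treatment of positivity of the cost vector $\cc$. Your proposed remedy --- ``adding a large multiple of $\one$ to the cost shifts the gradient harmlessly'' --- does not work: if you replace $\cc$ by $\cc + \mu\one$ then $\gg \mapsto \gg + \mu\one$ and the right-hand side $-t\coef\HH^{-1}\gg$ changes by $-t\mu\coef\HH^{-1}\one$, and $\coef\HH^{-1}\one$ is not zero (the all-ones vector is not in the null space of $\coef\HH^{-1}$ for an arbitrary graph). The paper's fix is substantially different and uses the structure of its own reduction output in an essential way. It (a) preprocesses the original system $\AA\xx=\cc$ by multiplying by a diagonal $\pm 1$ matrix so that after the chain of reductions the relevant right-hand side entries $\cc'_i$ are nonpositive, (b) establishes Claim~\ref{clm:ipm_single_large_weight_row}, that for each underlying edge exactly one of the three commodity types carries a large weight --- a property of the $\algGZTtoMCT$ output, not of general $\mctwostrictCl$ matrices --- and (c) exploits the cycles inside each $\mctwoCl$-gadget to produce an indicator vector $\qq$ with $\coef\qq = \zero$, then adds $\lambda\HH\qq$ to the gradient. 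That shift is RHS-neutral because $\coef\HH^{-1}(\lambda\HH\qq) = \lambda\coef\qq = \zero$, yet it is entrywise strictly positive on the edges that need it, which is exactly what fixes up the $E_{1+2}$ case. Because this argument leans on gadget-specific structure, the paper only shows that the reduction's output is a Minimum Cost 2-commodity Flow IPM Linear System; your stronger claim that every $\mctwostrictintCl$ instance (arbitrary right-hand side included) arises this way is not established and would require a separate argument for cost positivity.
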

Our reduction from a general linear system $(\AA, \cc,\eps)$, where $\AA \in \genCl$, to a Minimum Cost 2-commodity Flow IPM Linear System is
the same as our reduction to $\mctwostrictCl$, except that
before our chain of reductions, we first multiply the linear system by
a diagonal matrix $\SS$ with diagonal entries that are $\pm 1$. We
will later specify how to choose these signs.
This gives us the linear system $\SS \AA \xx = \SS \cc$ with the same
error parameter as before. It is easy to
verify that this system has the same sparse parameter complexity as
the original linear system, and an approximate solution to this system
is an approximate solution to the original system with the same
$\eps$.

We now apply our usual chain of reductions to get a linear system in over a
matrix in $\mctwostrictCl$.
We write this system as
\[
(\BBtwostrict)^\trp \BBtwostrict \xx = (\BBtwostrict)^\trp \ccbtwostrict.
\]
The remainder of this Section is dedicated to showing that this linear
system is a Minimum Cost 2-commodity Flow IPM Linear System, thus
proving Lemma~\ref{lem:imp}.

We can pull out the edge weights from $\BBtwostrict$ by writing
$
(\BBtwostrict)^\trp \BBtwostrict = \BBhat^\trp \WW \BBhat,
$
where $\BBhat$ is the unweighted 2-commodity edge-vertex incidence matrix with the same edge structure as $\BBtwostrict$, and $\WW$ is the diagonal matrix of the edge weights.
Then, the linear system in $\mctwostrictCl$ can be written as
\begin{align}
\BBhat^\trp \WW \BBhat\xx = \BBhat^\trp \WW^{1/2} \ccbtwostrict.
\label{eq:ipm_mctwostrict_ls}
\end{align}
Before we prove Lemma~\ref{lem:imp}, we explore some properties of the above linear system.

\begin{claim}
There exist a 2-commodity edge-vertex incidence matrix $\BBtil$, a diagonal matrix $\WWtil$, and a vector $\cctil$ such that
\begin{enumerate}
\item $\BBtil^\trp \WWtil \BBtil = \BBhat^\trp \WW \BBhat$ and $\BBhat^\trp \WW^{1/2} \ccbtwostrict = \BBtil^\trp \WW^{1/2} \cctil$,
\item all the rows of $\BBtil$ are distinct.
\end{enumerate}
\label{clm:ipm_distinct_rows}
\end{claim}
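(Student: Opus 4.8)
The plan is to obtain $\BBtil$ from $\BBhat$ by \emph{merging parallel edges of the same type}. Recall that every row of $\BBhat$ is, up to sign, one of the three canonical equation vectors associated to an ordered pair of vertices: $\uu_i - \uu_j$ (type $1$), $\vv_i - \vv_j$ (type $2$), or $\uu_i - \vv_i - (\uu_j - \vv_j)$ (type $1+2$). Call two rows \emph{equivalent} if one is a scalar multiple of the other. Since the three canonical forms have distinct supports, and within a fixed type the form for $\{i,j\}$ agrees with the form for $\{k,l\}$ (up to sign) only when $\{i,j\}=\{k,l\}$, the equivalence classes are indexed exactly by pairs $(\text{type},\{i,j\})$, and within a class the only freedom is an overall sign. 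For each class $C$ I would fix a canonical representative row $\bb^C$ — say, orient the edge so that the lower-indexed endpoint carries the coefficient $+1$ — and let $\sigma^C_r\in\{\pm1\}$ be determined by $\BBhat_r = \sigma^C_r\,\bb^C$ for $r\in C$.

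Then I would let $\BBtil$ have exactly one row $\bb^C$ for each class $C$, set $\WWtil_C \defeq \sum_{r\in C}\WW_{rr}$, and set the corresponding entry of $\cctil$ to be $\cctil_C \defeq \WWtil_C^{-1/2}\sum_{r\in C}\sigma^C_r\,\WW_{rr}^{1/2}\,(\ccbtwostrict)_r$; this is well-defined because all edge weights are strictly positive, so $\WWtil_C>0$. By construction $\BBtil$ introduces no new variables, each of its rows is again a type-$1$, type-$2$, or type-$1+2$ equation, so $\BBtil\in\mctwoCl$, and its rows are pairwise distinct because distinct classes have distinct canonical forms — establishing property (2).

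It remains to verify property (1). For the quadratic form, note $(\bb^C)^\trp\bb^C = \BBhat_r^\trp\BBhat_r$ for every $r\in C$ since the sign squares away, so that
\[
\BBtil^\trp\WWtil\BBtil \;=\; \sum_{C}\WWtil_C\,(\bb^C)^\trp\bb^C \;=\; \sum_{C}\sum_{r\in C}\WW_{rr}\,\BBhat_r^\trp\BBhat_r \;=\; \BBhat^\trp\WW\BBhat \;=\; (\BBtwostrict)^\trp\BBtwostrict .
\]
For the right-hand side, using $\BBhat_r=\sigma^C_r\bb^C$ and the definition of $\cctil_C$, one gets $\BBtil^\trp\WWtil^{1/2}\cctil = \sum_{C}\WWtil_C^{1/2}\cctil_C\,(\bb^C)^\trp = \sum_{C}\sum_{r\in C}\WW_{rr}^{1/2}(\ccbtwostrict)_r\,\BBhat_r^\trp = \BBhat^\trp\WW^{1/2}\ccbtwostrict$, which is the asserted identity (the weight matrix appearing on the right of the claim statement being understood as $\WWtil$).

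There is no real obstacle here; the one point that requires care is the bookkeeping of edge orientations, i.e.\ tracking the signs $\sigma^C_r$, since the right-hand-side identity — unlike the quadratic-form identity — is genuinely sensitive to orientation. I would also remark, for use in the surrounding argument, that merging never deletes any of the three edge types between a pair of vertices, so if one wishes $\BBtil$ to lie in $\mctwostrictCl$ rather than merely $\mctwoCl$ the common nonzero pattern of $\LL^1,\LL^2,\LL^{1+2}$ is automatically preserved.
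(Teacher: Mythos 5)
Your proof is correct and takes essentially the same approach as the paper's: merge rows of $\BBhat$ corresponding to the same edge by summing the corresponding diagonal entries of $\WW$ and forming $\cctil$ as the appropriately $\WW^{1/2}$-weighted combination of the $\ccbtwostrict$ entries (and you correctly spot that the second displayed identity in the claim should read $\BBtil^\trp\WWtil^{1/2}\cctil$). The paper performs the merge one pair of literally identical rows at a time and repeats, whereas you merge whole equivalence classes at once and additionally track orientation signs $\sigma^C_r$; that extra care is harmless and slightly more robust, but it is not needed for the claim as stated, since two rows that differ only by a sign are already distinct vectors and thus require no merging.
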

\begin{proof}
If all rows of $\BBhat$ are distinct, then we set $\BBtil = \BBhat, \WWtil = \WW$ and $\cctil = \ccbtwostrict$.
Otherwise, assume the $i$th row and the $j$th row of $\BBhat$ are the same.
Note
\[
\BBhat^\trp \WW \BBhat 
= \sum_{k \neq i, j} \WW_{kk} \BBhat_{k}^\trp \BBhat_{k} + (\WW_{ii} + \WW_{jj}) \BBhat_{i}^\trp \BBhat_{i},
\]
and
\[
\BBhat^\trp \WW^{1/2} \ccbtwostrict 
= \sum_{k \neq i, j} \WW_{kk}^{1/2} \ccbtwostrict_k \BBhat_{k}^\trp + (\WW_{ii}^{1/2} \ccbtwostrict_i + \WW_{jj}^{1/2} \ccbtwostrict_j ) \BBhat_{i}^\trp.
\]
We can construct $\BBtil$ and $\WWtil$ by removing the $j$th row of $\BBhat, \WW$ and set the corresponding edge weight $\WWtil_{ii}$ to be $\WW_{ii}+ \WW_{jj}$. 
We construct $\cctil$ by removing the $j$th entry of $\ccbtwostrict$ and set $\cctil_i$ to be $(\WW_{ii}^{1/2} \ccbtwostrict_i + \WW_{jj}^{1/2} \ccbtwostrict_j ) / (\WW_{ii} + \WW_{jj})^{1/2}$.
We repeat the above process to merge the identical rows until we get the desired matrices and vector.
\end{proof}
\noindent {\bf Remark.}
Let $\xx^*$  be a minimizer of $\norm{\WW^{1/2} \BBhat \xx - \ccbtwostrict}_2$. 
We know that $\xx^*$ is a solution of the normal equations $\BBhat^\trp \WW \BBhat \xx = \BBhat^\trp \WW^{1/2} \ccbtwostrict$. 
By Claim~\ref{clm:ipm_distinct_rows}, $\xx^*$ is a solution of $\BBtil^\trp \WWtil \BBtil^\trp \xx = \BBtil^\trp \WWtil^{1/2} \cctil$, which in turn gives that $\xx^* \in \argmin_{\xx} \norm{\WWtil^{1/2} \BBtil \xx - \cctil}_2$.
Let $\xx$ be an approximate solution such that
\[
\norm{\xx - \xx^*}_{\BBhat^\trp \WW \BBhat} \le \eps \norm{\BBhat^\trp \WW^{1/2} \ccbtwostrict}_{(\BBhat^\trp \WW \BBhat)^{\dagger}}.
\]
By Claim~\ref{clm:ipm_distinct_rows}, the above equation is equivalent to
\[
\norm{\xx - \xx^*}_{\BBtil^\trp \WWtil \BBtil} \le \eps \norm{\BBtil^\trp \WWtil^{1/2} \cctil}_{(\BBtil^\trp \WWtil \BBtil)^{\dagger}}.
\]
It means that $\xx$ is an approximate solution of $\min_{\xx} \norm{\WWtil^{1/2} \BBtil \xx - \cctil}_2$ with $\eps$-accuracy, vice versa.
Thus, the two minimization problems before and after merging identical rows are equivalent.

Without the loss of generality, we assume that all the rows of $\BBhat$ are distinct.
We define \emph{the underlying graph} of $\BBhat$ to be the \emph{simple} graph over $n$ vertices where vertices $i$ and $j$ are connected if and only if $\BBhat$ has at least one type of edges between $i$ and $j$.
According to the constructions in $\algGZTtoMCT$ (Algorithm~\ref{alg:GZ2toMC2}) and $\algMCTtoMCTS$ (Algorithm~\ref{alg:MCTtoMCTS}), 
each edge of the underlying graph has exactly 3 types of edges in $\BBhat$ (that is, type 1, type 2 and type $1+2$), and
the edge weights of $\BBtwostrict$ are either at least 1 or equal to a tiny number $\delta$ assigned in line~\ref{lin:delta} of Algorithm~\ref{alg:MCTtoMCTS}.
We call edge weights which are at least 1 as \emph{large} weights.

\begin{claim}
\label{clm:ipm_single_large_weight_row}
For each edge in the underlying graph of $\BBhat$, exactly one of its type 1, type 2, and type $1+2$ edges has large weight.
\end{claim}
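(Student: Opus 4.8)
The claim asserts that for each edge in the underlying (simple) graph of $\BBhat$, exactly one of its three commodity types (type $1$, type $2$, type $1+2$) carries a large weight (i.e.\ weight at least $1$), while the other two carry the tiny weight $\delta$. The plan is to trace through the construction of $\BBtwostrict$ from its two sources: the 2-commodity incidence matrix produced by $\algGZTtoMCT$ (Algorithm~\ref{alg:GZ2toMC2}) and the additional edges added by $\algMCTtoMCTS$ (Algorithm~\ref{alg:MCTtoMCTS}). First I would observe that $\algMCTtoMCTS$ only ever \emph{adds} edges of weight $\delta$, and only for missing types; it never adds a second large-weight edge of any type. So it suffices to show that the 2-commodity incidence matrix $\BB$ output by $\algGZTtoMCT$ has, for each pair $(i,j)$ of vertices that appears in it, exactly one edge (of one of the three types) between $i$ and $j$.

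\textbf{Key steps.} The second and main step is to inspect each place in $\algGZTtoMCT$ where an equation (edge) on a pair of vertices is emitted. There are two cases. (a) The ``already 2-commodity'' case (Line~\ref{lin:mcAlready}), where a single type-$1$ equation $\AA_{ij_+}\uu_t - \AA_{ij_+}\uu_{j_-} = \cca_i$ is added together with a single auxiliary equation $\AA_{ij_+}\uu_{j_+} - \AA_{ij_+}\uu_t = 0$; here the new vertex $t$ is freshly created for this row, so the pairs $(t, j_-)$ and $(j_+, t)$ each receive exactly one type-$1$ edge and nothing else. (b) The pair-and-replace case, where each invocation of $\algMCToGZGadget(\uu_{j_k},\uu_{l_k},t)$ (Algorithm~\ref{alg:MC2gadget}) emits its ten equations on the fresh vertices $t,\dots,t+6$ plus $j_k, l_k$. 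I would check by direct inspection of the ten gadget equations that every pair of vertices mentioned there is connected by exactly one edge of exactly one type: the pairs $(t+3,t+4)$ and $(t+5,t+6)$ get a single type-$1{+}2$ edge, the pairs $(t,t+3)$, $(t+4,j_k)$, $(t,t+5)$, $(t+6,l_k)$ get a single type-$1$ edge, and the pairs $(t+3,t+1)$, $(t+2,t+4)$, $(t+5,t+2)$, $(t+1,t+6)$ get a single type-$2$ edge — and these vertex pairs are all distinct because each new gadget uses a fresh block of indices $t,\dots,t+6$ and increments $t$ by $7$. Since each gadget uses fresh intermediate vertices, no two edges of $\BB$ ever land on the same pair of vertices, so the underlying graph of $\BB$ has exactly one edge (one type) per pair. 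The third step is then to combine: for a pair $(i,j)$ present in $\BB$, exactly one type is present with its (large, absolute value $\ge 1$) weight coming from $\BB$, and $\algMCTtoMCTS$ fills in the remaining two types with weight $\delta$; for a pair $(i,j)$ in the underlying graph but not present in $\BB$ at all, this situation does not arise because $\algMCTtoMCTS$ only iterates over pairs ``whose blocks are involved in some equation in $\BB$,'' so every edge of the underlying graph is an edge of $\BB$. Finally I would recall from Algorithm~\ref{alg:GZ2toMC2} (and the invariant used in Lemma~\ref{lem:GZ2toMC2apx_small_eig}) that all nonzero coefficients of $\BB$ have absolute value at least $1$, so the surviving weight is indeed ``large.''

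\textbf{Main obstacle.} The only delicate point is bookkeeping: making sure that across the entire run of $\algGZTtoMCT$ on all rows of $\AA$, no pair of vertices ever accumulates two edges (which would break ``exactly one''). This follows from the fact that every gadget and every ``already 2-commodity'' branch introduces a brand-new vertex index $t$ before connecting it, and $t$ is monotonically incremented and never reused, so the only shared vertices between distinct edges are the original variable vertices $\uu_{j_k},\uu_{l_k}$ — and within a single gadget each of those appears in exactly one equation, while across gadgets they are connected only to freshly-minted vertices. I would spell this reuse-freeness out carefully, perhaps as a short sub-lemma, since it is the crux of why the underlying graph is simple and each edge has a unique type. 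Everything else is a finite check against the ten explicit equations of Algorithm~\ref{alg:MC2gadget} and the weight-assignment lines of Algorithm~\ref{alg:MCTtoMCTS}.
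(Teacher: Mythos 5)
Your proof is correct and essentially matches the paper's argument: both reduce to showing that $\BB$, the output of $\algGZTtoMCT$, never places two edges on the same pair of vertices, by inspecting the ten gadget equations and using the fact that the auxiliary index $t$ is monotonically incremented and never reused. The only minor divergence is that the paper's proof (its third bullet) defensively allows for two identical type-$1$ non-gadget edges arising from redundant rows of the input and merges them via Claim~\ref{clm:ipm_distinct_rows}, whereas you observe, also correctly, that the fresh vertex $t$ created even in the ``already 2-commodity'' branch of Algorithm~\ref{alg:GZ2toMC2} makes such collisions impossible in the first place.
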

\begin{proof}
Note that the large-weight edges all appear in $\BB \in \mctwoCl$ constructed in Algorithm~\ref{alg:GZ2toMC2}.
An edge in $\BB$ is either a gadget edge in $\mathcal{B}$ or a non-gadget edge in $\mathcal{A}$. 
\begin{enumerate}
\item By Algorithm~\ref{alg:MC2gadget}, all edges in a single $\mctwoCl$-gadget are distinct. 
\item We show that each gadget edge is distinct from all other edges. 
By Algorithm~\ref{alg:MC2gadget}, in a 2-commodity gadget, each ``old" vertex is connected to a ``new" vertex,
and all new vertices except $\xx_t$ are independent of the new vertices created in all other gadgets.
However, in each gadget including $\xx_t$ (as a new vertex or an odd vertex), $\xx_t$ is connected to new vertices which are created in that gadget and do not appear in any other edge outside that gadget.
\item It is possible that two identical non-gadget edges exist when the original linear system instance has redundant or inconsistent constraints.
But note all non-gadget edges are type 1 edges, it implies that two identical edges correspond to two identical rows in $\BBhat$.
By Claim~\ref{clm:ipm_distinct_rows}, we can always merge such identical rows so that all non-gadget edges are distinct.
\end{enumerate}
This completes the proof.
\end{proof}

\begin{proof}[Proof of Lemma~\ref{lem:imp}]
Note the matrices $\BBhat, \WW$ and the vector $\ccbtwostrict$ are given by the reductions from the  general linear system instance $\AA\xx = \cca$.
The goal of proving Lemma~\ref{lem:imp} is to determine the
edge-vertex incidence matrix $\inc$, the cost vector $\cc$, the demand
vectors $\ddf, \ddg$, the edge capacity vector $\cp$, and the flows
$\yyf, \yyg$ such that the intermediate linear systems that arise in
an interior point method~\eqref{eqn:IPMNewton} for solving the
Min-cost 2-commodity Flow Problem is exactly the above linear system~\eqref{eq:ipm_mctwostrict_ls}.

We first make the coefficient matrices of the two linear systems in~\eqref{eqn:IPMNewton} and~\eqref{eq:ipm_mctwostrict_ls} to be the same, by choosing the matrix $\inc$ and the vectors $\yyf,\yyg,\ddf,\ddg,\cp$ properly.
The coefficient matrix of the equality constraints of the 2-commodity linear programming is
\begin{align}
\coef = \left( \begin{array}{cc}
\inc^{\trp} & \\
& \inc^{\trp}
\end{array} \right).
\label{eqn:imp_coeffcient}
\end{align}
We choose $\inc$ to be the edge-vertex incidence matrix of the underlying graph of $\BBhat$. 
By denoting the residue flow amount along an edge as
\[
\yyr \defeq \cp - \yyf - \yyg,
\]
we can rewrite the Hessian as:
\[
\HH = \frac{1}{t}
\sum_{i\in [m]}
\frac{1}{(\yyf_{i})^2} \ee_{2i-1}\ee_{2i-1}^{\trp}
+ \frac{1}{(\yyg_{i})^2} \ee_{2i} \ee_{2i}^{\trp}
+ \frac{1}{(\yyr_{i})^2} \left( \ee_{2i-1} + \ee_{2i} \right)
	\left( \ee_{2i-1} + \ee_{2i} \right)^{\trp},
\]
where $\ee_i \in \mathbb{R}^{2m}$ is the standard basis vector.

We can rearrange the rows and columns of $\HH$ such that, row $2i-1$ and column $2i-1$ correspond to the $\yyf$-flow of the $i$th edge, and row $2i$ and column $2i$ correspond to the $\yyg$-flow of the $i$th edge.
After this rearrangement,
$\HH$ becomes a block diagonal matrix,
where each edge $i$ corresponds to a one $2\times 2$ block given by
\begin{equation}
\HH_{2i-1:2i}
\defeq
\left( \begin{array}{cc}
\frac{1}{(\yyf_i)^2} + \frac{1}{(\yyr_i)^2} & \frac{1}{(\yyr_i)^2} \\
\frac{1}{(\yyr_i)^2} & \frac{1}{(\yyg_i)^2} + \frac{1}{(\yyr_i)^2}
\end{array} \right).
\label{eq:MC2HessianBlock}
\end{equation}
The block diagonal structure of $\HH$ then gives:
\[
\HH^{-1} = \diag\left( \HH_{1:2}^{-1}, \ldots, \HH_{2m-1:2m}^{-1} \right),
\]
and for each edge $i$ we have:
\begin{equation}
\HH_{2i-1:2i}^{-1}
 = \frac{1}{\alpha_i}
\left( \left( \begin{array}{cc}
(\yyf_i)^2 (\yyr_i)^2 & 0 \\
0 & (\yyg_i)^2 (\yyr_i)^2 
\end{array} \right)
+ (\yyf_i)^2 (\yyg_i)^2 \left( \begin{array}{cc}
1 & -1 \\
-1 & 1 
\end{array} \right)  \right),
\label{eq:MC2HessianBlockInv}
\end{equation}
where $\alpha_i = (\yyf_i)^2 + (\yyg_i)^2 + (\yyr_i)^2$. 
It means that 
\begin{enumerate}
\item the type 1 edge has weight $(\yyf_i)^2 (\yyr_i)^2 / \alpha_i$,
\item the type 2 edge has weight $(\yyg_i)^2  (\yyr_i)^2  / \alpha_i$, and
\item the type $1+2$ edge has weight $ (\yyf_i)^2 (\yyg_i)^2 / \alpha_i$.
\end{enumerate}

Note these three types of weights are symmetric.
Let $w_{(i,1)}, w_{(i,2)}, w_{(i,1+2)}$ be the edge weights of type 1, type 2 and type $1+2$ edge of the $i$th edge, respectively.
These edge weights are given by the corresponding diagonals of $\WW$.
We set the values of $\yyf_i, \yyg_i, \yyr_i$ such that
\begin{align*}
(\yyf_i)^2 (\yyr_i)^2 / \alpha_i &= w_{(i,1)}, \\
(\yyg_i)^2 (\yyr_i)^2 / \alpha_i &= w_{(i,2)},\\
(\yyf_i)^2 (\yyg_i)^2 / \alpha_i &= w_{(i,1+2)}.
\end{align*}
Solving the above equations gives:
\begin{equation}
\begin{split}
(\yyf_i)^2 &= \frac{w_{(i,1)}w_{(i,1+2)}}{w_{(i,2)}} + w_{(i,1)} + w_{(i,1+2)}, \\
(\yyg_i)^2 &= \frac{w_{(i,2)} w_{(i,1+2)}}{w_{(i,1)}} + w_{(i,2)} + w_{(i,1+2)}, \\
(\yyr_i)^2 &= \frac{w_{(i,1)} w_{(i,2)}}{w_{(i,1+2)}} + w_{(i,1)} + w_{(i,2)}.
\end{split}
\label{eq:HessianInverse}
\end{equation}
The $\yyf, \yyg$ and $\yyr$ determine the edge-capacity vector $\cp$, and they together with $\inc$ determine the vertex-demand vectors $\ddf$ and $\ddg$.

Since we rearranged the rows and columns of $\HH$, we need do the same rearrangement for $\coef$. 
After rearranging row and columns of $\coef$, $\coef$ is of the form that,
column $2i-1$ and column $2i$ correspond to the $\yyf$-flow and $\yyg$-flow of the $i$th edge, respectively, and row $2i-1$ and row $2i$ correspond to the $\uu$-coordinate and $\vv$-coordinate of vertex $i$, respectively. 
In $\coef$, each edge has a $2\times 2$ identity matrix for one endpoint, and a negative $2\times 2$ identity matrix for the other endpoint.

By the above setting, we can check that 
\begin{align}
\coef\HH^{-1}\coef^\trp = \BBhat^\trp \WW \BBhat.
\label{eqn:ipm_lhs}
\end{align}
Note that $\coef$ has dimension $2n \times 2m$, and $\BBhat$ has dimension $3m \times 2n$.

Secondly, we make the right hand side vectors of the two linear systems~\eqref{eqn:IPMNewton} and~\eqref{eq:ipm_mctwostrict_ls} to be the same, that is,
\begin{align}
-t \coef\HH^{-1} \gg = \BBhat^\trp \WW^{1/2} \ccbtwostrict,
\label{eqn:ipm_rhs}
\end{align}
by choosing the cost vector $\cc$ properly.
The gradient $\gg$ is given by:
\[
\gg = \cc - \frac{1}{t} \sum_{i\in [m]} \frac{1}{\yyf_{i}} \ee_{2i-1} 
+ \frac{1}{\yyg_{i}} \ee_{2i}
- \frac{1}{\yyr_{i}} \left( \ee_{2i-1}  + \ee_{2i} \right).
\]
For simplicity, let 
\[
\ff(\yy) \defeq t \left( \cc - \gg \right).
\]
$\ff(\yy)$ is a vector in $2m$-dimension such that, the $i$th edge corresponds to the $2 \times 1$ sub-vector
\[
\left( \begin{array}{c}
\frac{1}{\yyf_i} - \frac{1}{\yyr_i} \\
\frac{1}{\yyg_i} - \frac{1}{\yyr_i}
\end{array} \right).
\]
Then we have 
\[
-t\coef\HH^{-1}\gg = \coef\HH^{-1}  \left( \ff - t\cc \right).
\]

Recall the right hand side vector of Equation~\eqref{eqn:ipm_rhs} is $\BBhat^\trp \WW^{1/2} \ccbtwostrict$.
According to our construction in $\algGZTtoMCT$ (Algorithm~\ref{alg:GZ2toMC2}) and $\algMCTtoMCTS$ (Algorithm~\ref{alg:MCTtoMCTS}), 
$\ccbtwostrict = (\cca; {\bf 0})$, where $\cca$ is the right hand side vector of the general linear system instance.
Note all the nonzero entries of $\ccbtwostrict$ correspond to the type 1 edges in the main constraint set $\mathcal{A}$, see line~\ref{lin:mainConstraint} of Algorithm~\ref{alg:GZ2toMC2}.

Recall the structure of $\coef$ in Equation~\eqref{eqn:imp_coeffcient}, the edge-vertex incidence matrix $\inc$ contains all the $m$ edges of the underlying graphs, which are exactly the set of edges in $\BB$ constructed in $\algGZTtoMCT$ (Algorithm~\ref{alg:GZ2toMC2}).
Note for each edge of $\inc$, its large-weight edge can be any of the corresponding type 1, type 2 and type $1+2$ edges. 
By Claim~\ref{clm:ipm_single_large_weight_row}, we can partition the underlying graph edges into $E_1 \cup E_2 \cup E_{1+2}$ such that $E_k$ contains all the underlying edges whose large-weight edges are the corresponding type $k$ edges, $k \in \{1,2,1+2\}$.
Note $\ccbtwostrict_{(i,1)} \neq 0$ only if edge $i$ is in $E_1$.

We define a vector $\cc' \in \mathbb{R}^{2m}$ such that the entry of $\cc'$ which corresponds to the $\yyf$-flow 
of the $i$th edge 
equals to $w_{(i,1)}^{1/2} \ccbtwostrict_{(i,1)}$. 
This definition gives:
\begin{align}
\BBhat^\trp \WW^{1/2} \ccbtwostrict
= \coef \cc'.
\label{eqn:ipm_def_c'}
\end{align}
By Equation~\eqref{eqn:ipm_rhs}, 
\[
\coef\HH^{-1} \left( \ff - t\cc \right)
= \coef\cc'.
\]
Rearranging it,
\[
\coef \HH^{-1} \left( \ff - t\cc - \HH\cc' \right) = {\bf 0}.
\]
According to $\algMCToGZGadget$ (Algorithm~\ref{alg:MC2gadget}), each gadget has a cycle containing both the 2 edges in $E_{1+2}$: $(t+3, t+4, t+2, t+5, t+6, t+1, t+3)$.
Let $E'$ be the set of all the edges in such gadget cycles.
Let $\pp \in \{0,1\}^m$ such that each entry corresponding to an edge in $E'$ has value 1.
Let $\qq \defeq (\pp; \pp)$.
According to $\algGZTtoMCT$ (Algorithm~\ref{alg:GZ2toMC2}), every gadget edge $i$ has $\cc'_i = 0$.
Thus, we are free to choose the sign of a column of $\inc^\top$ which corresponds to a gadget edge, which does not change Equations~\eqref{eqn:ipm_lhs} and~\eqref{eqn:ipm_def_c'}.
\footnote{Note the demand vectors $\ddf$ and $\ddg$ change after we change $\inc$.}
Without loss of generality, we assume that after proper sign flipping, 
$\MM \qq = {\bf 0}$.
It gives that
\[
\coef \HH^{-1} \left( \ff - t\cc - \HH\cc'  + \HH\qq\right) = {\bf 0}.
\]
We set $t\cc = \ff - \HH\cc' + \lambda\HH\qq$, where $\lambda > 0$ is a sufficiently large constant to be determined later.


For each edge $i$, let $\mathcal{I}_i$ be an indicator whose value is 1 if edge $i$ in the set $E'$ and 0 otherwise.
For the $i$th edge, 
its corresponding $2 \times 1$ sub-vector in $t\cc$ is
\[
t \left( \begin{array}{c}
\cc_i \\
\cc_{m+i}
\end{array} \right) =
\left( \begin{array}{c}
\frac{1}{\yyf_i} - \frac{1}{\yyr_i} \\
\frac{1}{\yyg_i} - \frac{1}{\yyr_i}
\end{array} \right)
-
\cc'_i \left( \begin{array}{c}
\frac{1}{(\yyf_i)^2} + \frac{1}{(\yyr_i)^2} \\
\frac{1}{(\yyr_i)^2} 
\end{array} \right)
+ \mathcal{I}_i \lambda \ss^i,
\]
where
\[
\ss^i \defeq
\left( \begin{array}{c}
\frac{1}{(\yyf_i)^2} + \frac{2}{(\yyr_i)^2} \\
\frac{1}{(\yyg_i)^2} + \frac{2}{(\yyr_i)^2}
\end{array} \right).
\]
Note that $\ss^i \ge {\bf 0}$ (entry-wise).
Plugging the solution of $\yy$ from Equation~\eqref{eq:HessianInverse} gives:
\[
t \left( \begin{array}{c}
\cc_i \\
\cc_{m+i}
\end{array} \right) =
\left( \begin{array}{c}
\sqrt{\frac{w_{(i,2)}}{\gamma}} - \sqrt{\frac{w_{(i,1+2)}}{\gamma}} \\
\sqrt{\frac{w_{(i,1)}}{\gamma}} - \sqrt{\frac{w_{(i,1+2)}}{\gamma}}
\end{array} \right)
-
\cc'_i \left( \begin{array}{c}
\frac{w_{(i,2)} + w_{(i,1+2)}}{\gamma} \\
\frac{w_{(i,1+2)}}{\gamma}
\end{array} \right)
+ \mathcal{I}_i \lambda \ss^i,
\]
where $\gamma \defeq w_{(i,1)}w_{(i,1+2)} + w_{(i,1)}w_{(i,2)} + w_{(i,2)}w_{(i,1+2)}$.

By Claim~\ref{clm:ipm_single_large_weight_row}, each edge of the underlying graph has exactly one type of edges with large weight.
We deal with each of these cases separately under the condition that
two of the edge's weights are much smaller than that of the third one.
We will also denote this ratio using $\eps > 0$:
\begin{enumerate}
\item If $w_{(i,2)} = w_{(i,1+2)} = \eps w_{(i,1)}$, then
\[
t \left( \begin{array}{c}
\cc_i \\
\cc_{m+i}
\end{array} \right) 
\ge - \cc'_i \left(\begin{array}{c}
\frac{2\eps}{\mu} \\
\frac{\eps}{ \mu}
\end{array} \right)
+ \left( \begin{array}{c}
0 \\
\frac{1 - \sqrt{\eps}}{\sqrt{\mu}}
\end{array} \right),
\]
where $\mu = (2\eps + \eps^2) w_{(i,1)}$.
This corresponds to an edge of $E_1$.
If $\cc'_i \le 0$, then $t \cc_i, t\cc_{m+i} \ge 0$. 
If $\cc'_i > 0$, then $\cc'_i = w_{(i,1)}^{1/2}\ccbtwostrict_{(i,1)}$ where $\ccbtwostrict_{(i,1)}$ comes from an entry of $\cca$.
We multiply -1 on both sides of the corresponding equation in the general linear system instance $\AA\xx = \cca$ which creates this edge $i$,
and will get $\cc'_i < 0$ instead. 
\footnote{We first flip signs of equations of $\AA\xx = \cca$ so that $\cca \le {\bf 0}$ (entry-wise), which determines $\inc, \coef$ and $\cc'$. We then flip signs of columns of $\inc$, which we guarantee does not violate Equation~\eqref{eqn:ipm_lhs} and~\eqref{eqn:ipm_def_c'}.}
\item If $w_{(i,1)} = w_{(i,1+2)} = \eps w_{(i,2)}$, then
\[
t \left( \begin{array}{c}
\cc_i \\
\cc_{m+i}
\end{array} \right) 
\ge - \cc'_i \left( \begin{array}{c}
\frac{1+\eps}{\mu} \\
\frac{\eps}{\mu}
\end{array} \right)
+ \left( \begin{array}{c}
\frac{1 - \sqrt{\eps}}{\sqrt{\mu}} \\
0
\end{array} \right),
\]
where $\mu = (2\eps + \eps^2) w_{(i,2)}$.
This corresponds to an edge of $E_2$, in which $\cc'_i = 0$.
We can check that $t \cc_i, t\cc_{m+i} \ge 0$.
\item If $w_{(i,1)} = w_{(i,2)} = \eps w_{(i,1+2)}$, then
\[
t \left( \begin{array}{c}
\cc_i \\
\cc_{m+i}
\end{array} \right) 
= - \cc'_i \left( \begin{array}{c}
\frac{1+ \eps}{ \mu} \\
\frac{1}{ \mu}
\end{array}
\right)
+ \left( \begin{array}{c}
\frac{\sqrt{\eps} - 1}{ \sqrt{\mu} } \\
\frac{\sqrt{\eps} - 1}{ \sqrt{\mu} }
\end{array} \right)
 + \mathcal{I}_i \lambda \ss^i,
\]
where $\mu = (2 \eps + \eps^2) w_{(i,1+2)}$.
This corresponds to an edge of $E_{1+2}$, in which $\cc'_i = 0$ and $\mathcal{I}_i = 1$.
We choose $\lambda$ such that, for every edge $i$ in this case, $t\cc_i, t\cc_{m+i} \ge 0$.

\end{enumerate}

\end{proof}
Note in IPMs, the values of $t$ is a sequence of increasing values. Whenever $t \ge m / \epsilon'$, the optimality gap is smaller than $\epsilon'$.
This ensures that the cost vector $\cc$'s entries are in a
reasonable range.

\subsection{Isotropic Total Variation Minimization}
\label{subsec:IPMIsotropicTV}

For the isotropic total variation problem, we follow
the formulations given in~\cite{ChinMMP13}, namely
given a graph $G = (V, E, \ww)$, we partition the
sets into $S_1 \ldots S_k$, and minimize the objective
\[
\norm{\yy - \ss}_2^2 + 
\sum_{1 \leq i \leq k}
\ww_{i} \sqrt{\sum_{(u,v) \in S_i} \ww_{uv} (\yy_u - \yy_v)^2}.
\]
where $\ss$ is the input signal vector.
Let $\inc$ be the edge-vertex incidence matrix of the graph $G$,
then the dual of the grouped flow problem is:
\begin{align*}
\max \qquad & \ss^{\trp} \inc^{\trp} \ff\\
\text{subject to:} \qquad & \sum_{e \in S_i} \ww_e^{-1} \ff_e^2 \leq \ww_i^{-1}
	\qquad \forall 1 \leq i \leq k
\end{align*}
Both the primal and dual problems can be formulated into log
barriers via second order cones~\cite{GoldfarbY04}.
In the case with minimizing vertex labels,
we introduce a variable $\yy_{i}$
for each cluster $i$, and instead minimize
$\sum_{i} \ww_{i} \yy_i$ subject to the constraint
\[
\yy_{i}^2 \geq \sum_{(u,v) \in S_i} \ww_e^{-1} (\yy_u - \yy_v)^2,
\]
which in turn leads to the logarithmic barrier function
\[
\log\left(\yy_i^2 -  \sum_{(u,v) \in S_i} \ww_e^{-1} (\yy_u - \yy_v)^2 \right),
\]
while in the flow case the barrier functions for each set of edges $S_i$ is
\[
\log\left( \ww_i^{-1} - \sum_{e \in S_i} \ww_e^{-1} \ff_e^2\right).
\]

Due to the connections with the multicommodity flow problems,
we will only state the connections through the flow version here.
Recall that linear systems in $\mctwoCl$ have the form:
\begin{align*}
\LL^{1} \otimes \left( \begin{array}{cc}
1 & 0 \\
0 & 0
\end{array} \right)
+ \LL^{2} \otimes \left( \begin{array}{cc}
0 & 0 \\
0 & 1
\end{array} \right)
+ \LL^{1+2} \otimes \left( \begin{array}{cc}
1 & -1 \\
-1 & 1
\end{array} \right),
\end{align*}
where $E_1, E_2, E_{1+2}$ denotes the edges in
$\LL^1, \LL^2, \LL^{1+2}$, respectively.
Edges in $E_1 \cup E_2$ correspond to individual edges,
so it remains to show that edges in $E_{1 + 2}$ can
correspond to the Hessian matrices of clusters
containing pairs of edges.
In a cluster with two edges $\ff_1$ and $\ff_2$, the
gradient of the function
\[
\log\left( \alpha - \ww_1^{-1} \ff_1^2 - \ww_2^{-1} \ff_2^2\right)
\]
is
\[
\frac{2}{r}
\left(
\begin{array}{c}
-\ww_1^{-1} \ff_1\\
-\ww_2^{-1} \ff_2
\end{array}
\right),
\]
where we treat
\[
r \defeq \alpha - \ww_1^{-1} \ff_1^2 - \ww_2^{-1} \ff_2^2
\]
as a new free variable because we are free to choose
$\alpha$ in the IPM instance.
Differentiating again (via the product rule) then gives the Hessian matrix:
\[
\frac{-4}{r^2}
\left(
\begin{array}{cc}
\ww_1^{-2} \ff_1^2 & \ww_1^{-1} \ww_2^{-2} \ff_1 \ff_2\\
\ww_1^{-1} \ww_2^{-2} \ff_1 \ff_2 & \ww_2^{-2} \ff_2^2
\end{array}
\right)
+ \frac{-2}{r} 
\left(
\begin{array}{cc}
\ww_1^{-1} & 0\\
0 & \ww_2^{-1}
\end{array}
\right).
\]
As we are free to choose $\ww_1$ and $\ww_2$, the second
matrix can be any diagonal matrix.
Then the freedom in choosing $\ff_1$ and $\ff_2$ then means
the first matrix can be any rank $1$ object.
Therefore, this Hessian is equivalent to the block Hessian
for two commodity flows generated in 
Equation~\eqref{eq:MC2HessianBlock} (up to a change of sign,
since we are dealing with a maximization problem here), and its inverse as given
in Equation~\eqref{eq:MC2HessianBlockInv} provides the characterization
from Definition~\ref{def:TV}.
We can also check more directly that the $1 + 2$ edge can be
represented as one of these Hessian inverse blocks.

\begin{claim}
For each edge $(i,j)$ in $E_{1+2}$, there exist an edge-vertex incidence matrix $\inc$, a diagonal matrix $\WW$ and a vector $\rr$ such that 
$\WW \pgeq \rr\rr^\trp$ and
\[
\LL^{1+2}_{ij} \otimes \left( \begin{array}{cc}
1 & -1 \\
-1 & 1
\end{array} \right)
= \inc^{\trp} \left( \WW - \rr\rr^{\trp} \right) \inc.
\]
\end{claim}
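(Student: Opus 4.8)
The plan is to exhibit an explicit realization of a single $1+2$ block as a TV-type term $\inc^{\trp}(\WW - \rr\rr^{\trp})\inc$ where $\inc$ is a genuine edge-vertex incidence matrix (on $2$ vertices, hence one edge) and $\WW \pgeq \rr\rr^{\trp}$. The key observation is that the right-hand side of Definition~\ref{def:TV} is allowed to use two ``sub-edges'' with weights $\ww_1, \ww_2$ and a vector $\rr = (r_1, r_2)^{\trp}$: by the Hessian computation just performed above, the object $\inc^{\trp}(\WW - \rr\rr^{\trp})\inc$ restricted to a cluster of two parallel edges between $i$ and $j$ has, in the $(\uu_i - \uu_j, \vv_i - \vv_j)$ coordinates, the matrix $\WW - \rr\rr^{\trp}$ itself (since for a single pair of endpoints the incidence matrix acts as the difference operator on each coordinate). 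So the claim reduces to the purely algebraic statement: for any $w > 0$ there exist a diagonal $\WW = \diag(\ww_1, \ww_2)$ and a vector $\rr$ with $\WW \pgeq \rr\rr^{\trp}$ such that
\[
\WW - \rr\rr^{\trp} = w \begin{pmatrix} 1 & -1 \\ -1 & 1 \end{pmatrix}.
\]

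First I would solve this system directly. Writing $\rr = (r_1, r_2)^{\trp}$, matching entries forces $r_1 r_2 = w$ (off-diagonal) and $\ww_1 = w + r_1^2$, $\ww_2 = w + r_2^2$ (diagonal). Choosing, say, $r_1 = r_2 = \sqrt{w}$ gives $\ww_1 = \ww_2 = 2w$, so $\WW = 2w\,\II$ and $\rr = \sqrt{w}\,(1,1)^{\trp}$. Then $\WW - \rr\rr^{\trp} = 2w\II - w\one\one^{\trp} = w\begin{pmatrix}1 & -1 \\ -1 & 1\end{pmatrix}$ as required. The condition $\WW \pgeq \rr\rr^{\trp}$ is exactly $2w\II \pgeq w\one\one^{\trp}$, i.e. $2\II \pgeq \one\one^{\trp}$, which holds since $\one\one^{\trp}$ has eigenvalues $2$ and $0$. (In fact $\WW - \rr\rr^{\trp}$ is PSD here, which is the content of that spectral inequality.) Next I would spell out the incidence matrix: take the graph on the two vertices $\{i, j\}$ with the single cluster $S = \{(i,j)_1, (i,j)_2\}$ consisting of two parallel copies of the edge $ij$, with $\inc \in \rea^{2 \times 2}$ being the edge-vertex incidence matrix sending each copy to $\uu_i - \uu_j$ on the first coordinate-block and $\vv_i - \vv_j$ on the second; then tensoring with the $2$-dimensional label structure and applying the Hessian-block identity already derived above yields
\[
\inc^{\trp}(\WW - \rr\rr^{\trp})\inc = \LL^{1+2}_{ij} \otimes \begin{pmatrix} 1 & -1 \\ -1 & 1 \end{pmatrix},
\]
where $\LL^{1+2}_{ij}$ is the (weight-$1$) Laplacian of the single edge $ij$ and the weight $w$ is absorbed into $\rr, \WW$.

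The main obstacle, such as it is, is bookkeeping rather than mathematics: I must be careful that the ``edge-vertex incidence matrix'' $\inc$ in Definition~\ref{def:TV} is the one for a \emph{simple} graph edge acting on scalar labels, and that the $2\times 2$ label-block structure enters only through the tensor/Hessian-block correspondence spelled out in Equations~\eqref{eq:MC2HessianBlock}--\eqref{eq:MC2HessianBlockInv}, so that the TV matrix $\sum_i (\BB^i)^{\trp}(\WW^i - \rr^i(\rr^i)^{\trp})\BB^i$ of Definition~\ref{def:TV} with $\WW^i \in \rea^{2\times 2}$, $\rr^i \in \rea^2$ really does reproduce the $1+2$ block. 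Once the single-edge case is established, the general $\mctwoCl$ matrix is handled edge by edge: type-$1$ and type-$2$ edges are individual weighted edges (clusters of size one, $\rr = \zero$), and each type-$1+2$ edge is replaced by the size-two cluster constructed above; summing over all edges shows every matrix in $\mctwoCl$ is a $2$-TV matrix, which together with the (trivial, identity) solution map gives $\mctwoCl \leq_{f} \tvtwoCl$ with $f(s,U,K,\eps) = (s,U,K,\eps^{-1})$, establishing Lemma~\ref{lem:McTwoStrictToTV}.
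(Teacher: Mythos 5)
Your proposal is correct and takes essentially the same route as the paper's proof: both choose $\WW$ proportional to $2\II$ and $\rr$ proportional to $(1,1)^{\trp}$ (the paper fixes $w=1$; you keep a general $w$), so that $\WW - \rr\rr^{\trp} = w\left(\begin{smallmatrix}1 & -1\\ -1 & 1\end{smallmatrix}\right)$, verify the spectral inequality $\WW \pgeq \rr\rr^{\trp}$, and conjugate by the incidence matrix that produces $(\uu_i - \uu_j,\; \vv_i - \vv_j)$, exactly as the paper's explicit $4\times 4$ check does.

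One bookkeeping correction worth making: in Definition~\ref{def:TV} the edge-vertex incidence matrix $\inc$ acts on the $2n$ \emph{scalar} variables, so restricted to this block it is $2\times 4$, not $2\times 2$, and the size-two cluster consists of the two \emph{disjoint} edges $(\uu_i,\uu_j)$ and $(\vv_i,\vv_j)$ in the variable graph rather than two parallel copies of $(i,j)$. The literal parallel-copy incidence matrix $\left(\begin{smallmatrix}1 & -1\\ 1 & -1\end{smallmatrix}\right)$ has both columns in $\Span{\one}$ and would annihilate $\WW - \rr\rr^{\trp}$, giving zero. Your verbal description of what $\inc$ does is exactly right; only its stated shape and the ``parallel edges'' phrasing need adjusting, after which the argument coincides with the paper's.
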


\begin{proof}
For simplicity, we remove all zero rows and columns of $\LL_{ij}^{1+2}$ so that the 2-commodity matrix for edge $(i,j)$ only has dimension $4 \times 4$.
We pick 
\[
\inc = \left( \begin{array}{cc}
1 & 0 \\
-1 & 0 \\
0 & 1 \\
0 & -1
\end{array} \right), 
\WW = \left( \begin{array}{cc}
2 & 0 \\
0 & 2
\end{array} \right)
\mbox{ and }
\rr = \left( \begin{array}{c}
1 \\
1
\end{array} \right).
\]
We check the PSD condition,
\[
\WW - \rr\rr^{\trp}
= \left( \begin{array}{cc}
1 & -1 \\
-1 & 1
\end{array} \right) \pgeq {\bf 0}.
\]
Then, we check that this decomposition equals to the 2-commodity matrix
\begin{multline*}
\inc^{\trp} \left( \WW - \rr\rr^{\trp} \right) \inc
 = \left( \begin{array}{cc}
1 & 0 \\
-1 & 0 \\
0 & 1 \\
0 & -1
\end{array} \right)
\left( \begin{array}{cc}
1 & -1 \\
-1 & 1
\end{array} \right)
\left( \begin{array}{cccc}
1 & -1 & 0 & 0\\
0 & 0 & 1 & -1
\end{array} \right) \\
 = \left( \begin{array}{c}
1 \\ 
-1 \\
-1 \\
1
\end{array} \right)
\left( \begin{array}{cccc}
1 & -1 & -1 & 1
\end{array} \right)
 = \left( \begin{array}{cccc}
1 & -1 & -1 & 1 \\ 
-1 & 1 & 1 & -1 \\
1 & -1 & -1 & 1 \\ 
-1 & 1 & 1 & -1 
\end{array} \right)
= \LL^{1+2}_{ij} \otimes \left( \begin{array}{cc}
1 & -1 \\
-1 & 1
\end{array} \right).
\end{multline*}
\end{proof}

\begin{proof}[Proof of Lemma~\ref{lem:McTwoStrictToTV}]
Since the linear system related to the isotropic total variation minimization matrix is the same as the linear system for 2-commodity, all complexity parameters of these two linear systems are the same.
\end{proof}


\section{A Natural Decision Problem: Vector in the Image of a Matrix}
\label{sec:lsd}

When we are given a linear system $\AA \xx = \cc$, a natural question is
whether the system in fact has a solution.
This is true whenever $\cc$ lies in the image of $\AA$, or
equivalently, when $\Pi_{\AA} \cc = \cc$.

We consider a slightly relaxed version of the question of whether
$\cc$ lies in the image of $\AA$.
We define the Linear System Decision Problem, abbreviated \LSD, as follows:


\begin{definition}[Linear System Decision Problem, \LSD]
  Given a linear system $(\AA,\cc)$ and an approximation parameter
  $\eps > 0$, 
we refer to the $\LSD$ problem for the triple
$(\AA,\cc,\eps)$ as the problem of outputting
  \begin{enumerate}
  \item 
    \label{cas:GLDYES}
    $\YES$ if there exists $\xx$ s.t. $\AA \xx = \cc$.
  \item 
    \label{cas:GLDNO}
    $\NO$ if for all  $\xx$, we have $\norm{\AA \xx - \cc} > \eps
    \norm{\cc}$.
  \item Any output is acceptable if neither of the above two cases hold.
  \end{enumerate}
\end{definition}



\begin{algorithm}[htb]
\renewcommand{\algorithmicrequire}{\textbf{Input:}}
\renewcommand\algorithmicensure {\textbf{Output:}}
    \caption{\label{alg:GLDtoGLA}
      Reduce $\LSD$ to $\LSA$.}
    \begin{algorithmic}[1]
      \STATE Find $\xx$ which is a solution to $\LSA$ instance $(\AA,\cc,\eps)$.
      \IF {$\norm{\AA \xx - \cc} \leq \eps \norm{\cc}$ }
          \RETURN \YES.
      \ELSE
         \RETURN \NO.
      \ENDIF
\end{algorithmic}
\end{algorithm}

\begin{lemma}
Algorithm~\ref{alg:GLDtoGLA} solves $\LSD$ instance $(\AA,\cc,\eps)$.
\end{lemma}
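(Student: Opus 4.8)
The plan is to verify that Algorithm~\ref{alg:GLDtoGLA} correctly solves the $\LSD$ instance $(\AA,\cc,\eps)$ by checking the two non-trivial cases of the definition against the behavior of the algorithm. The algorithm first obtains an $\xx$ that solves the $\LSA$ instance $(\AA,\cc,\eps)$, i.e.\ $\norm{\AA\xx - \proj_{\AA}\cc}_2 \le \eps\norm{\proj_{\AA}\cc}_2$, and then outputs $\YES$ if $\norm{\AA\xx - \cc}_2 \le \eps\norm{\cc}_2$ and $\NO$ otherwise.

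First I would handle the $\YES$ case of the definition: suppose there exists $\xx^{*}$ with $\AA\xx^{*} = \cc$. Then $\cc \in \im(\AA)$, so $\proj_{\AA}\cc = \cc$, and the $\LSA$ guarantee on the computed $\xx$ reads $\norm{\AA\xx - \cc}_2 \le \eps\norm{\cc}_2$. Hence the test on Line~2 of the algorithm succeeds and it correctly returns $\YES$. Next I would handle the $\NO$ case: suppose that for all $\xx'$ we have $\norm{\AA\xx' - \cc}_2 > \eps\norm{\cc}_2$. In particular this holds for the $\xx$ computed by the algorithm, so the test on Line~2 fails and the algorithm returns $\NO$, as required. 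Finally, in the remaining case (neither of the two conditions of the definition holds) any output is acceptable, so there is nothing to check. This establishes that whenever the definition demands a specific output, the algorithm produces it.

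The only subtlety — and the step I expect to require the most care — is the $\YES$ case, where one must invoke the fact that $\cc \in \im(\AA)$ forces $\proj_{\AA}\cc = \cc$ so that the $\LSA$ error bound, which is stated relative to $\proj_{\AA}\cc$, translates into the bound relative to $\cc$ that the $\LSD$ test uses. This is immediate from $\proj_{\AA}$ being the orthogonal projection onto $\im(\AA)$ (as recorded in Section~\ref{sec:prelim}), but it is the one place where the two normalizations (by $\norm{\proj_{\AA}\cc}_2$ in $\LSA$ versus by $\norm{\cc}_2$ in $\LSD$) must be reconciled. No further calculation is needed; the proof is a short case analysis.
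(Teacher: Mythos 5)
Your proof is correct and follows essentially the same two-case argument as the paper: in the \textsc{yes} case you use $\proj_{\AA}\cc = \cc$ to convert the \LSA\ guarantee into the test the algorithm performs, and in the \textsc{no} case the hypothesis applies directly to the computed $\xx$. Nothing to add.
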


\begin{proof}
  First, suppose we are in Case~\ref{cas:GLDYES}, i.e. there exists
  $\xx$ s.t. $\AA \xx = \cc$,
  or equivalently $\PPi_{\AA} \cc = \cc$.
  From this and $\xx$ being an {$\eps$-\asoln} to $(\AA,\cc)$, it
  follows that
  \begin{align*}
\norm{\AA \xx - \cc}
 = 
\norm{\AA \xx - \PPi_{\AA} \cc} 
\leq 
\eps \norm{\PPi_{\AA} \cc} 
=
\eps \norm{\cc}
.
  \end{align*}
So, the algorithm will return $\YES$.

Second, suppose we are in Case~\ref{cas:GLDNO}, then for all $\xx$
we have $\norm{\AA \xx - \cc} > \eps \norm{\cc}$,
so the algorithm must return $\NO$.
\end{proof}


\subsection{Why Not Solve the Exact Decision Problem?}
\label{sec:expSmallNullProj}

Earlier in this section, we showed how to solve an approximate linear
system decision problem, $\LSD$ by a reduction to an $\LSA$ problem.
However, given matrix $\AA$ and a vector $\cc$ with polynomially
bounded integer entries and condition number, it is natural to ask whether
solving $\LSA$ to reasonably high accuracy could in fact let us solve
exactly the decision problem of whether there exists an $\xx$
s.t. $\AA \xx = \cc$.
Note that this is equivalent to $\left( \II - \PPi_{\AA} \right) \cc =
\zero$.
Lemma~\ref{lem:ProjLengthLower} implies that when $\AA$ and $\cc$ have
integer valued entries, either $\PPi_{\AA} \cc = \zero$ or
$\norm{\PPi_{\AA} \cc} \geq 1/\sigma_{\max}(\AA)$.
If a similar separation result was true for
$\norm{\left( \II - \PPi_{\AA} \right) \cc}$, 
we could use this to solve the exact
decision problem.
However, as we will show below, there exists a matrix and vector pair
$\AA \in \mathbb{Z}^{(n+1) \times n}$ and a vector $\cc \in
\mathbb{Z}^{n+1}$ whose entries are polynomially
bounded (in fact bounded between -1 and 2), and with polynomially
bounded $\sigma_{\min}(\AA)$ and $\sigma_{\max}(\AA)$ s.t. 
$\norm{\left( \II - \PPi_{\AA} \right) \cc}$ is exponentially small in
$n$.
This suggests that solving the exact decision problem of whether
$\left( \II - \PPi_{\AA} \right) \cc = \zero$ would require solving
$\LSA$ to exponentially high precision, or adopting some radically
different approach. 

We now outline the construction of $\AA$ and $\cc$.
Each column $i$ of $\AA$ has two nonzero entries: $\AA_{i,i} = 2$ and $\AA_{i,i+1} = -1$.
The rank of $\AA$ is $n$.
Let $\rr$ be the only unit vector of $\nulls(\AA^{\trp})$.
Then, 
\[
\II - \PPi_{\AA} = \rr \rr^{\trp}.
\]
We compute $\rr$. Since $\AA^{\trp} \rr = {\bf 0}$, we have
\[
\rr_{i+1} = 2\rr_{i}, \forall 1 \le i \le n.
\]
Since $\norm{\rr}_2 = 1$, we have $\rr_1 = \frac{1}{2^{n+2}- 1}$.
Let $\cc$ be a vector whose first entry is 1 and all other entries are 0.
Then,
\[
\norm{\left( \II - \PPi_{\AA} \right) \cc}_2
= \abs{\rr^{\trp} \cc}_2
= \frac{1}{2^{n+2} - 1},
\]
which is exponentially small.

We show that the nonzero singular values of $\AA$ are bounded by constants.
Note that $\AA^{\trp} \AA$ is a tridiagonal matrix such that, 
\begin{enumerate}
\item all entries on the main diagonal are 5, 
\item all entries on the first diagonal below the main diagonal are -2, and 
\item all entries on the first diagonal below the main diagonal are -2.
\end{enumerate}
$\AA^{\trp}\AA$ has full rank.
$\AA^{\trp} \AA$ can be written as $\LL + \DD$, where $\LL$ is the graph Laplacian matrix for an undirected path with identical edge weight 2, and $\DD$ is a diagonal matrix whose first and last diagonal entries are 3 and all other diagonal entries are 1.
Thus,
\[
\LL + \II \pleq \AA^{\trp} \AA \pleq \LL + 3 \II.
\]
Since $\lambda_{\max}(\LL) = 4$, we have for all eigenvalues of
$\lambda(\AA^{\trp} \AA)$ that 
\[
1 \le \lambda\left( \AA^{\trp} \AA \right) \le 11,
\]
that is,
\[
1 \le \sigma_{\min} (\AA) \le \sigma_{\max} (\AA)  \le \sqrt{11}.
\]



\FloatBarrier

\section*{Acknowledgements}
We thank Richard Peng and Daniel Spielman for helpful comments and
discussions and we thank Richard Peng for suggesting the example in
Section~\ref{sec:expSmallNullProj} which demonstrates that the
projection of a vector onto the null space of a matrix can be
exponentially small despite well-conditionedness assumptions on the
matrix-vector pair.

\appendix 

\section{Linear Algebra Background}
\label{sec:linalg}

\begin{proof}[Proof of Lemma~\ref{fac:voltageErrorEquiv}]
First observe
  \begin{align*}
\norm{\AA^{\trp} \AA \xx - \AA^{\trp} \cc}_{(\AA^{\trp}\AA)^{\pinv}}^{2}
&  = 
(\AA^{\trp} \AA \xx - \AA^{\trp} \cc)^{\trp} (\AA^{\trp}\AA)^{\pinv} (\AA^{\trp} \AA \xx - \AA^{\trp} \cc)
\\
&  = 
(\AA \xx - \cc)^{\trp} \proj_{\AA} (\AA \xx - \cc)
\\
&  = 
(\proj_{\AA}\AA \xx - \proj_{\AA}\cc)^{\trp} (\proj_{\AA}\AA \xx - \proj_{\AA}\cc)
\\
&  = 
\norm{ \AA \xx - \proj_{\AA} \cc }^{2}_2 
  .
  \end{align*}
Taking $\xx = \zero$ it follows that  
$\norm{ \proj_{\AA} \cc} =  
  \norm{\AA^{\trp} \cc}_{(\AA^{\trp} \AA)^{\pinv}}$, and combining our
  observations immediately gives the second part of the Fact.

A similar argument gives claims for $\norm{ \xx - \xx^* }_{\AA^{\trp}\AA}$.
\begin{align*}
\norm{ \xx - \xx^* }_{\AA^{\trp}\AA}^2
& = (\xx - \xx^*)^{\trp} \AA^{\trp} \AA (\xx - \xx^*) \\
& = \xx^{\trp} \AA^{\trp} \AA \xx - 2\xx^{\trp} \AA^{\trp} \AA \xx^*
+ (\xx^*)^{\trp} \AA^{\trp} \AA \xx^* \\
& = \xx^{\trp} \AA^{\trp} \AA \xx - 2\xx^{\trp} \AA^{\trp} \proj_{\AA} \cc
+ \cc^{\trp} \proj_{\AA}^{\trp} \proj_{\AA} \cc \\
& = \norm{ \AA \xx - \proj_{\AA} \cc }^{2}_2 .
\end{align*}
The third equality uses the fact that $\AA\xx^* = \proj_{\AA} \cc$.
This completes the proof.
\end{proof}
\section{Using Complexity Parameters}

\label{sec:parameters}

In this section, we bound the norms, eigenvalues of matrix $\AA \in \mathbb{R}^{n \times m}$ and the norms of vector $\cc \in \mathbb{R}^n$, using the complexity parameters defined in Definition~\ref{def:complexity_parameters}:
%
\begin{enumerate}
\item $s: \nnz(\AA)$, 
\item $U: \max \left( \norm{\AA}_{\max}, \norm{\cc}_{\max}, \frac{1}{\anzmin(\AA)}, \frac{1}{\anzmin(\cc)} \right)$,
\item $K: \kappa(\AA)$.
\end{enumerate}

Note $n, m \le s$.

\begin{claim}
\label{clm:para_infty_norm}
$\norm{\AA}_{\infty} \le sU$.
\end{claim}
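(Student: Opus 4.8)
\textbf{Proof plan for Claim~\ref{clm:para_infty_norm} ($\norm{\AA}_{\infty} \le sU$).}

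The plan is to bound the maximum row sum of $\abs{\AA_{ij}}$ directly by counting nonzero entries and bounding each by $U$. First I would recall that $\norm{\AA}_{\infty} = \max_i \sum_j \abs{\AA_{ij}}$, so it suffices to fix an arbitrary row $i$ and bound $\sum_j \abs{\AA_{ij}}$. Each nonzero entry $\AA_{ij}$ satisfies $\abs{\AA_{ij}} \le \norm{\AA}_{\max} \le U$ by the definition of the complexity parameter $U$. The number of nonzero entries in row $i$ is at most the total number of nonzero entries $\nnz(\AA) = s$, and of course at most $m \le s$, so the sum has at most $s$ terms each bounded by $U$. Hence $\sum_j \abs{\AA_{ij}} \le sU$, and since this holds for every row, $\norm{\AA}_{\infty} \le sU$.

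I do not expect any real obstacle here; this is a one-line counting argument. The only thing to be slightly careful about is to use $\nnz(\AA) = s$ (rather than the number of \emph{columns} $m$) as the bound on the count of nonzeros in a single row, since a single row could in principle contain up to $\min(m, s)$ nonzeros, and $m \le s$ in any case by the stated convention $\nnz(\AA) \ge \max(m,n)$. Either bound works; using $s$ is cleanest and matches the claimed inequality.

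Concretely, the write-up would read: Fix a row $i$. Then
\[
\sum_{j} \abs{\AA_{ij}}
= \sum_{j \,:\, \AA_{ij} \neq 0} \abs{\AA_{ij}}
\le \sum_{j \,:\, \AA_{ij} \neq 0} \norm{\AA}_{\max}
\le \nnz(\AA) \cdot \norm{\AA}_{\max}
\le s U,
\]
where the last inequality uses $\norm{\AA}_{\max} \le U$. Taking the maximum over $i$ gives $\norm{\AA}_{\infty} = \max_i \sum_j \abs{\AA_{ij}} \le sU$, as desired.
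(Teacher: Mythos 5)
Your proof is correct, and the paper in fact omits a proof of this claim entirely (it is stated without justification, presumably as immediate). Your one-line counting argument — bound each nonzero entry by $\norm{\AA}_{\max}\le U$ and the count of nonzeros in a row by $\nnz(\AA)=s$ — is the natural argument and fills that gap cleanly.
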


\begin{claim}
\label{clm:para_l2_norm}
$\norm{\cc}_2 \le \sqrt{s} U$.
\end{claim}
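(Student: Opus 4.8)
\textbf{Proof proposal for Claim~\ref{clm:para_l2_norm}.}
The plan is to bound the Euclidean norm of $\cc$ entrywise and then count coordinates. First I would recall that, by the definition of the complexity parameter $U$, every entry of $\cc$ satisfies $\abs{\cc_i} \le \norm{\cc}_{\max} \le U$. Next I would use the standing observation (stated just above the claim) that $n \le s$, since $n$ is at most $\nnz(\AA)$ and $\cc \in \rea^n$.

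Putting these together, the main computation is simply
\[
\norm{\cc}_2^2 = \sum_{i=1}^{n} \cc_i^2 \le \sum_{i=1}^{n} U^2 = n U^2 \le s U^2,
\]
and taking square roots gives $\norm{\cc}_2 \le \sqrt{s}\, U$, as desired.

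There is no real obstacle here: the only things being invoked are the definition of $\norm{\cc}_{\max}$ (hence of $U$) and the trivial bound $n \le s$. The statement is included purely as a bookkeeping lemma so that the main reductions can translate bounds phrased in terms of the sparse parameter complexity into bounds on $\norm{\cc}_2$; I would keep the write-up to the two or three lines above.
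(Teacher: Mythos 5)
Your proof is correct and matches the paper's argument, which likewise applies the standard bound $\norm{\cc}_2 \le \sqrt{n}\,\norm{\cc}_\infty$ together with $n \le s$ and $\norm{\cc}_{\max}\le U$. Nothing to change.
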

\begin{proof}
This follows the relations of norms,
\[
\norm{\cc}_2 \le \sqrt{n} \norm{\cc}_{\infty}
\le \sqrt{s} U.
\]
\end{proof}

\begin{claim}
\label{clm:para_eig}
$\frac{1}{sU^2} \le \lambda_{\max}(\AA^{\trp} \AA) \le sU^2$ and $\lambda_{\min}(\AA^{\trp} \AA) \ge \frac{1}{sK^2U^2}$.
\end{claim}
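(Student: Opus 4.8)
The plan is to reduce the whole claim to three elementary facts about singular values, using the identities $\lambda_{\max}(\AA^{\trp}\AA) = \sigma_{\max}(\AA)^2$ and $\lambda_{\min}(\AA^{\trp}\AA) = \sigma_{\min}(\AA)^2$, where $\sigma_{\min}$ is the smallest \emph{nonzero} singular value as in Definition~\ref{def:complexity_parameters}, so that $\CN(\AA)^2 = \lambda_{\max}(\AA^{\trp}\AA)/\lambda_{\min}(\AA^{\trp}\AA)$. Throughout I use that the relevant $\AA$ is not the zero matrix (indeed it has a nonzero entry in every row and column), which is implicit in $U$ being finite, and that $s = \nnz(\AA) \ge 1$.

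First I would prove the upper bound $\lambda_{\max}(\AA^{\trp}\AA) \le sU^2$. Since the largest singular value is bounded by the Frobenius norm, $\lambda_{\max}(\AA^{\trp}\AA) = \sigma_{\max}(\AA)^2 \le \norm{\AA}_F^2 = \sum_{i,j} \AA_{ij}^2$. There are exactly $\nnz(\AA) = s$ nonzero entries, each of absolute value at most $\norm{\AA}_{\max} \le U$, so $\norm{\AA}_F^2 \le s U^2$. Next I would prove the lower bound $\lambda_{\max}(\AA^{\trp}\AA) \ge \tfrac{1}{sU^2}$: picking any column $j$ and a nonzero entry $\AA_{ij}$ in it gives $\sigma_{\max}(\AA) \ge \norm{\AA e_j}_2 \ge \abs{\AA_{ij}} \ge \anzmin(\AA) \ge 1/U$, hence $\lambda_{\max}(\AA^{\trp}\AA) \ge 1/U^2 \ge 1/(sU^2)$ since $s \ge 1$.

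Finally, the bound on $\lambda_{\min}$ follows by combining the above with the definition of the nonzero condition number: $\lambda_{\min}(\AA^{\trp}\AA) = \sigma_{\min}(\AA)^2 = \sigma_{\max}(\AA)^2 / \CN(\AA)^2 = \lambda_{\max}(\AA^{\trp}\AA)/K^2 \ge \tfrac{1}{sU^2K^2}$. I do not expect a real obstacle here — every step is a one-line estimate; the only point requiring care is to keep track of the fact that $\sigma_{\min}$ refers to the smallest nonzero singular value, so that the identity $\lambda_{\min}(\AA^{\trp}\AA) = \sigma_{\max}(\AA)^2/\CN(\AA)^2$ is applied on the orthogonal complement of $\nulls(\AA)$ and is valid even when $\AA$ is rank-deficient.
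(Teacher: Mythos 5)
Your proof is correct, and it matches the paper's in structure for the upper bound ($\lambda_{\max}(\AA^{\trp}\AA) \le \norm{\AA}_F^2 \le sU^2$) and for the $\lambda_{\min}$ bound (reducing it to $\lambda_{\max}/K^2$ via the condition number, with the same careful observation that $\sigma_{\min}$ and $\lambda_{\min}$ refer to the smallest \emph{nonzero} singular value/eigenvalue so the identity holds on $\nulls(\AA)^{\perp}$). Where you diverge is the lower bound on $\lambda_{\max}$: the paper averages the trace, using $\lambda_{\max}(\AA^{\trp}\AA) \ge \norm{\AA}_F^2/\min\{m,n\}$ together with $\norm{\AA}_F^2 \ge \anzmin(\AA)^2 \ge 1/U^2$ and $\min\{m,n\} \le s$, whereas you apply $\AA$ to a single standard basis vector $e_j$ hitting a nonzero entry and get $\sigma_{\max}(\AA) \ge \abs{\AA_{ij}} \ge \anzmin(\AA) \ge 1/U$ directly. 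Your route is a little more elementary (no trace identity needed) and in fact yields the strictly stronger intermediate bound $\lambda_{\max}(\AA^{\trp}\AA) \ge 1/U^2$, from which the claimed $1/(sU^2)$ follows trivially since $s \ge 1$; the paper's trace-averaging argument gets exactly $1/(sU^2)$ and loses the factor $s$. For the purposes of the claim both are equally adequate.
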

\begin{proof}
By the definition of Frobenius norm, 
\[
\norm{\AA}_F^2 = \sum_{i,j} \AA_{ij}^2
= \sum_{i = 1}^{\min\{m,n\}} \lambda_i (\AA^{\trp} \AA).
\]
Thus,
\[
\lambda_{\max} (\AA^{\trp}\AA) \le \norm{\AA}_F^2
\le sU^2,
\]
and 
\[
\lambda_{\max} (\AA^{\trp} \AA) \ge \frac{\norm{\AA}_F^2}{\min\{m,n\}}
\ge \frac{1}{sU^2}.
\]

By the definition of $\kappa(\AA^{\trp} \AA)$,
\[
\lambda_{\min} (\AA^{\trp} \AA) = \frac{\lambda_{\max} (\AA^{\trp} \AA)}{\kappa (\AA^{\trp} \AA)}
\ge \frac{1}{sK^2U^2}.
\]
This completes the proof.
\end{proof}


\bibliographystyle{alpha}
\bibliography{refs}

\end{document}